\numberwithin{figure}{section}
\titleformat{\subsubsection}[runin]
{\normalfont\bfseries}{\thesubsubsection}{1em}{}
\DeclareMathOperator*{\argmin}{arg\!\min}
\newcommand{\calA}{\mathcal{A}}
\newcommand{\calB}{\mathcal{B}}
\newcommand{\calF}{\mathcal{F}}
\newcommand{\calG}{\mathcal{G}}
\newcommand{\calH}{\mathcal{H}}
\newcommand{\calI}{\mathcal{I}}
\newcommand{\calL}{\mathcal{L}}
\newcommand{\calN}{\mathcal{N}}
\newcommand{\calP}{\mathcal{P}}
\newcommand{\calQ}{\mathcal{Q}}
\newcommand{\calT}{\mathcal{T}}
\newcommand{\calZ}{\mathcal{Z}}
\newcommand{\Co}{\mathcal{o}}
\def\wh{\widehat}
\def\wt{\widetilde}
\def\wc{\widecheck}
\def\rmd{{\rm d}}
\def\RR{\mathbb{R}}
\def\EE{\mathbb{E}}
\def\bu{\mathbf{u}}
\def\bv{\mathbf{v}}
\def\bw{\mathbf{w}}
\def\ba{\bm{a}}
\def\bb{\bm{b}}
\def\bx{\bm{x}}
\def\bw{\bm{w}}
\def\bz{\bm{z}}
\def\bA{\mathbf{A}}
\def\bB{\mathbf{B}}
\def\bH{\mathbf{H}}
\def\bW{\mathbf{W}}
\def\bX{{\bm{X}}}
\def\bfF{{\bm{F}}}
\def\bff{{\bm{f}}}
\def\ua{\textup{a}}
\def\ub{\textup{b}}
\def\us{\textup{s}}
\def\ut{\textup{t}}
\def\bmu{\bm{u}}
\def\bXi{\bm{\Xi}}
\def\bxi{\bm{\xi}}
\def\ind{\mathbbm{1}}
\def\Co{{\scriptstyle{\mathcal{O}}}}
\def\NN{\mathbb{N}}
\def\PP{\mathbb{P}}
\def\PPn{\PP_n}
\def\calFn{\calF_n}
\def\calX{\mathcal{X}}
\def\rbar{{\overline{r}}}
\def\dbar{{\overline{d}}}
\def\sigmabar{{\overline{\sigma}}}
\def\bz{\bm{z}}
\def\uf{{\textup{f}}}
\def\nun{\nu_n}
\def\nuan{\nu_{\ua,n}}
\def\nusn{\nu_{\us,n}}
\def\deltaf{\delta_\uf}
\def\deltat{\delta_\ut}
\def\deltathat{\widehat{\delta}_\ut}
\def\cp{c_{\textup{p}}}
\def\pn{p_n}
\def\Rnull{R_n^{\textup{null}}}
\def\Rnullhat{\widehat{R}_n^{\textup{null}}}
\def\alphanull{\alpha_n^{\textup{null}}}
\def\rnull{r_{\alpha,n}^{\textup{null}}}
\def\rinf{r_{\textup{inf},n}}
\newtheorem{theorem}{Theorem}[section]
\newtheorem{lemma}[theorem]{Lemma}
\newtheorem{proposition}[theorem]{Proposition}
\newtheorem{assumption}{Assumption}[section]
\newtheorem{definition}{Definition}[section]
\newtheorem{remark}{Remark}[section]
\newcommand{\blind}{0}
\def\wh{\widehat}
\def\wt{\widetilde}
\def\wc{\widecheck}
\def\rmd{{\rm d}}
\def\RR{\mathbb{R}}
\def\EE{\mathbb{E}}
\def\NN{\mathbb{N}}
\def\bu{\mathbf{u}}
\def\bw{\mathbf{w}}
\def\ba{\bm{a}}
\def\bx{\bm{x}}
\def\bB{\mathbf{B}}
\def\bH{\mathbf{H}}
\def\bW{\mathbf{W}}
\def\bX{{\bm{X}}}
\def\bfF{{\bm{F}}}
\def\bff{{\bm{f}}}
\def\ua{\textup{a}}
\def\ub{\textup{b}}
\def\ud{\textup{d}}
\def\un{\textup{n}}
\def\us{\textup{s}}
\def\ut{\textup{t}}
\def\bmu{{\bm{u}}}
\def\bXi{{\bm{\Xi}}}
\def\bxi{{\bm{\xi}}}
\def\buc{\bu_{\bXi,\bz,d}}
\def\ind{\mathbbm{1}}
\def\Co{{\scriptstyle{\mathcal{O}}}}
\def\PP{\mathbb{P}}
\def\PPn{\PP_n}
\def\calFn{\calF_n}
\def\calFnbar{\overline{\calF}_n}
\def\calGn{\calG_n}
\def\calX{\mathcal{X}}
\def\rbar{{\overline{r}}}
\def\dbar{{\overline{d}}}
\def\bz{\bm{z}}
\def\uf{{\textup{f}}}
\def\nun{\nu_n}
\def\nuan{\nu_{\ua,n}}
\def\nusn{\nu_{\us,n}}
\def\deltaf{\delta_\uf}
\def\deltat{\delta_\ut}
\def\deltathat{\widehat{\delta}_\ut}
\def\cp{c_{\textup{p}}}
\def\pn{p_n}
\def\Rnull{R_n^{\textup{null}}}
\def\Rnullhat{\widehat{R}_n^{\textup{null}}}
\def\alphanull{\alpha_n^{\textup{null}}}
\def\rnull{r_{\alpha,n}^{\textup{null}}}
\def\rinf{r_{\textup{inf},n}}
\newcommand{\calE}{\mathcal{E}}
\newcommand{\calS}{\mathcal{S}}
\begin{document}

\def\spacingset#1{\renewcommand{\baselinestretch}
{#1}\small\normalsize} \spacingset{1.0}
\def\r#1{\textcolor{red}{\bf #1}}
\def\b#1{\textcolor{blue}{\bf #1}}

\if0\blind
{
  \title{\bf
 Conditional nonparametric variable screening by neural  factor regression}
  \author{Jianqing Fan\\
  Operations Research and Financial Engineering, Princeton University\\
    Weining Wang \\
    Department of Economics, Econometrics and Finance, University of Groningen\\
    Yue Zhao\\
    Department of Mathematics, University of York}
  \maketitle
} \fi

\if1\blind
{
  \bigskip
  \begin{center}
    {\LARGE\bf Conditional nonparametric variable screening via neural network factor regression}
\end{center}
  \medskip
} \fi

\bigskip

\bigskip
\begin{abstract}
High-dimensional covariates often admit linear factor structure. To effectively screen correlated covariates in high-dimension, we propose a conditional variable screening test based on non-parametric regression using neural networks due to their representation power.  We ask the question whether individual covariates have additional contributions given the latent factors or more generally a set of variables.
Our test statistics are based on the estimated partial derivative of the regression function of the candidate variable for screening  and  a observable proxy for the latent factors.  Hence, our test reveals how much predictors contribute additionally to the non-parametric regression after accounting for the latent factors.  Our derivative estimator is the convolution of a deep neural network regression estimator and a smoothing kernel.  We demonstrate that when the neural network size diverges with the sample size, unlike estimating the regression function itself, it is necessary to smooth the partial derivative of the neural network estimator to recover the desired convergence rate for the derivative.  Moreover, our screening test achieves asymptotic normality under the null after finely centering our test statistics that makes the biases negligible, as well as consistency for local alternatives under mild conditions.  We demonstrate the performance of our test in a simulation study and two real world applications.
\end{abstract}

\noindent
{\it Keywords:}  Neural networks, factor model,  non-parametric regression, non-parametric tests, functional of derivatives, high-dimensionality.
\vfill

\newpage
\spacingset{1.2}

\section{Introduction}
\label{sec:intro}

\subsection{Background}
\label{sec:background}

Variable screening is a  powerful tool to expeditiously identify  the set of predictors that potentially affect the regression outcome \citep{FanLv2008}.  It can reduce a very large number of predictors to a smaller and more manageable set. Then, on this reduced set, one can apply some more refined but computationally demanding variable selection methods such as Lasso, SCAD, Danzig selector \citep{Tibshirani1996,FanLi2001,CandesTao2007,FanLiZhangZou2020} and their non-parametric counterpart FAST-NN \citep{FanGu2023factor}.  Conditional marginal screening in parametric regression \citep{barut2016conditional} further augments the screening by conditioning on a \textit{known} set of useful predictors to reduce the impact of the correlations among the predictors, thus making the important {predictors more visible and reducing the false positive and false negative rates in the vanilla screening method}.  Despite the aforementioned advances, conducting variable screening or selection in non-parametric regression with high-dimensional inputs remains a challenge due to curse of dimensionality, and
deep learning offers a promising solution thanks to its ability to adapt to unknown low-dimensional structures in multivariate non-parametric regression.

Deep learning has achieved tremendous empirical successes in numerous applications \citep{LeCunBengioHinton2015deep, GoodfellowBengioCourville2016deep}, for instance, in high-dimensional problems such as image recognition \citep{SimonyanZisserman2015, KrizhevskySutskeverHinton2017imagenet}, deep reinforcement learning \citep{mnih2015human}, and large language models \citep{kasneci2023chatgpt,thirunavukarasu2023large}.  There is now also a growing literature justifying theoretically the benefit of depth in deep neural networks \citep{Telgarsky2016benefits, Yarotsky2017, ElbrachterPerekrestenkoGrohsBolcskei2021} and their power in alleviating the curse of dimensionality in non-parametric regression via algorithmic learning of unknown low-dimensional structures within complex functions \citep{schmidt2020nonparametric, KohlerLanger2021}. Now being a component of the standard toolbox for statisticians, deep neural networks may nevertheless not be efficient if the dimension of the predictors is very high due to the fundamental limit of  multivariate non-parametric regression.  Recently \cite{FanGu2023factor} proposed a factor-augmented sparse throughput regression model that simultaneously leveraged the aforementioned adaptivity of deep neural networks and a factor model on the predictors to facilitate variable selection. To complement the variable selection effort in \cite{FanGu2023factor}, \cite{DinhHo2020}, and \cite{HoRichardsonTran2023}, in this paper, we investigate the issue of conditional variable screening with deep neural networks when facing potentially very high-dimensional inputs.

We will assess conditional contribution of a candidate variable for screening by examining its partial derivative in the multivariate non-parametric regression function with a given set of variables and construct our screening test statistics on the moment generating function (MGF) of the smoothed partial derivative of the  regression function estimator. We chose a deep neural network as our regression estimator due to its aforementioned algorithmic adaptation to the low-dimensional structure.
Thus, quantifying and improving the performance of derivatives of deep neural networks are integral to our study, and also form an interesting topic of its own right given the importance of derivative estimation in non-parametric regression across diverse research domains and practical applications \citep{GijbelsGoderniaux2005, RondonottiMarronPark2007, horel2020significance}.  Traditional non-parametric derivative estimation in general follows one of the following three methods: empirical derivative-type estimation \citep{MullerStadtmullerSchmitt1987, DeBrabanterDeBrabanterGijbelsDeMoor2013, LiuDeBrabanter2020}, kernel/local polynomial-type estimation \citep{GasserMuller1984,fan1996local}, and series/spline-type estimation \citep{Stone1985,ZhouWolfe2000}.  However, the first method heavily relies on the existence of an order among the predictor samples and hence naturally applies when the predictor dimension is just one, and the last two methods, just like their original regression estimation counterparts, suffer from the curse of dimensionality problem {when facing predictors of a moderate  dimension}.
	
Despite their deteriorated performances posed by high dimensionality, the traditional derivative estimation methods are relatively amenable to theoretical analysis due to their closed-form solutions.  Excluding the empirical derivative method, which avoids explicitly fitting a regression function, the closed-form solutions for the last two methods can be attributed to the {close} connection between estimating the original regression functions and their accompanying derivatives.  For instance, in the kernel and the spline methods, a {closed-form} derivative estimator can simply be obtained as the derivative of the original regression function estimator (see, for instance, the discussion between Eqs.~(2) and (3) in \cite{ZhouWolfe2000}).  Thus, unsurprisingly, in these methods the quality of the derivative estimation closely follows the quality of the original regression function estimation.  However, for deep neural networks, estimating the \textit{derivative} of a regression function can be quite different from the task on the regression function itself due to its smoothness.

Take for example any candidate regression function estimator $m$ within the canonical neural network class, precisely defined in \eqref{eq:base_NN_class} later, built from the popular ReLU activation function.  The first order partial derivatives of $m$ are necessarily piecewise constant due to the piecewise linear nature of the ReLU function, and the second order partial derivatives of $m$ are necessarily \textit{zero} almost everywhere, \textit{irrespective} of the underlying truth that the function $m$ may attempt to recover.  Numerically, this is easily demonstrable through standard software packages such as PyTorch.  In addition, the first order partial derivatives of $m$ could exhibit convergence behaviors qualitatively different from the convergence behavior of $m$ itself as we will explain in Section~\ref{sec:test_stat}.  Such phenomena are in sharp contrast to the traditional non-parametric estimators, and can in part be attributed to the features of deep neural networks: they are highly non-linear and lack easily interpretable closed-form expressions.  {Moreover, existing asymptotic results on functionals acting on deep neural networks almost invariably assume some continuity of the functionals with respect to their inputs.  However, when such functionals in effect act on the derivatives of neural networks, the assumed continuities {could} break down due to the aforementioned different convergence behaviors of the said derivatives.}  We will explain and address this discrepancy as we progress through the paper.

\subsection{Our method and contribution}
\label{sec:our_method}
Our main contribution in this paper is a non-parametric conditional variable screening test using deep neural networks when the ambient dimension of the inputs is potentially very high.  In addition to complementing the variable selection methods, our contribution is also an advance over the aforementioned paper by \cite{barut2016conditional} in that we work with non-parametric regression and our conditioning variables are not known in advance but instead are extracted from the inputs with the help of the factor model as in \cite{fan2022learning} and \cite{FanGu2023factor}.  We note that \cite{horel2020significance} has also conducted screening test based on partial derivatives (in low dimensions).  However their study focused on single-layer neural networks, and hence does not benefit from the power nor reveal the intricacies of deep neural networks in derivative estimation.

In addition to the general procedures outlined above, we also make the following contributions which could be of independent interest:
\begin{enumerate}[wide, labelwidth=!, labelindent=0pt, label=(\alph*)]
	\item[1)]
	We rigorously derive the size and power of our proposed {test statistics employing non-linear functionals of} truly deep neural networks whose architecture can become more complex as the sample size increases.  For computational ease, we propose a simple variance estimator whose properties we properly characterize.  Our resulting test statistics are straightforward to compute and sufficiently precise to accommodate local alternatives.  {Moreover, we address the aforementioned continuity issue of functionals acting on the derivatives of neural networks by proposing improved estimation of the said derivatives which naturally leads to our next contribution.}
	\item[2)]
We exploit the regression function algorithmically learned by deep neural networks in order to recover high-quality derivative estimation.  To achieve this goal, we employ a smoothing technique on deep neural network estimators to regularize their derivatives.  Moreover, out method is applicable to
regularize the derivatives of alternative machine learning techniques, thus paving the way for their application in derivative estimation.  Last but not least, although we focus on (conditional) marginal screening in the present paper, our method can be generalized easily to test higher-order effect and variable interaction using higher-order and mixed derivatives respectively.
\end{enumerate}

Through simulation studies, we illustrate the favorable size and power performance of our test statistics, and the benefit of the smoothing operation in generating accurate derivative estimators.  We also highlight the potential of our test statistics as a viable tool for model specification in nonlinear factor models via two empirical applications.

\subsection{Notations, conventions, and manuscript organization}
\label{sec:notation}

Let $\NN$ denote the set of positive integers.  For a vector $\bv = (v_1, \ldots, v_d)^\top \in \RR^d$, we let $\|\bv\|_p = (\sum_{k=1}^d |v_i|^p)^{1/p}$ be the $\ell_p$ norm of $\bv$.  For a matrix $\bA = (a_{j,k})_{1\le j\le l, 1\le k \le m}\in\RR^{l\times m}$, we define the operator norm $\|\bA\|_{\textup{op}} = \max_{\bv\in\RR^m: \|\bv\|_2 = 1} \|\bA\bv\|_2$.  We let $C$ denote an (absolute) constant that may change for each occurrence, and let ``$\lesssim$'' denote an inequality that holds up to such a multiplicative factor $C$; moreover, let $c$, $C$ and $M$ with super/subscripts denote constants with particular (though often non-specified) values.
Limits are taken as $n\to\infty$ unless otherwise stated.  For positive number sequences $(a_n: n\ge 1)$ and $(b_n: n\ge 1)$, we denote $a_n\lesssim b_n$ if there exists a positive constant $C$ such that $a_n/b_n\le C$ (for all $n$), and denote $a_n=\Co(b_n)$ (resp. $a_n\sim b_n$) if $a_n/b_n\to 0$ (resp. $a_n\lesssim b_n$ and $b_n\lesssim a_n$).  We use ``$\rightarrow_{d}$'' to denote convergence in distribution.  Finally, let $\|\cdot\|_{L_2}=\|\cdot\|_{L_2(\PP)}$ denote the $L_2$ norm of the argument function with respect to the measure $\PP$ to be formally introduced in Section~\ref{sec:reg_model}, so $\|f\|_{L_2}= \{ \int f^2 \rmd\PP \}^{1/2}$.  Sections in the supplement are labelled alphabetically.

We organize the remainder of our paper as follows.  Section~\ref{sec:testhypothesis} specifies our regression model and screening test, and develop our derivative estimators and test statistics.  Section~\ref{sec:regression_function} depicts the accompanying theoretical properties of the derivative estimators and test statistics from Section~\ref{sec:testhypothesis}. Section~\ref{sec:simulations} presents a simulation study. Section~\ref{sec:emp_app} applies our test to an empirical example in asset pricing and another in macroeconomics time series.  Section~\ref{sec:conclusion} concludes and suggests several extensions.  Additional results for the empirical application, proofs and supporting details are deferred to the supplementary materials.

\section{Derivative estimator and test statistics}
\label{sec:testhypothesis}

\subsection{A conditional screening test through latent factors}
\label{sec:reg_model}

Our screening test aims to tackle the high-dimensional regime where the ambient dimension $d$ of our observed predictors $\bX\in\RR^d$ can increase with the sample size $n$. Even with the remarkable capacity of deep neural networks to represent complex functions, our task is still infeasible if $d$ grows too fast.
To address this issue, we assume that $\bX$ admits the following factor model \citep{FanGu2023factor}:
\begin{align}
\label{eq:factor_observation}
\bX= \bB \bfF+\bmu,\quad \EE(\bmu|\bfF)=\mathbf{0} ,
\end{align}
where $\bfF \in \RR^r$ is a vector of latent factors, $\bB\in\RR^{d\times r}$ (usually $d\gg r$ for the high-dimensional $\bX$) is the factor loading matrix, and  $\bmu$ is the vector of the idiosyncratic noises.   We let $(Y,\bfF,\bX)$ have joint distribution $\PP$ which  also determines the distribution of $\bmu$ by \eqref{eq:factor_observation}.

The conditional marginal screening is to see whether a component  $X_j$ of  $\bX$ has additional contributions to the response variable $Y$ given $\bfF$.  Then, the problem involves the working regression function $m_0(\cdot):\RR^{r+1}\rightarrow\RR$ defined through
\begin{equation}
\begin{gathered}
 Y = m_0(\bfF,X_j) + \epsilon , \quad \text{where}~m_0(\bfF,X_j) \equiv \EE[Y|\bfF,X_j] .
\end{gathered}
\label{eq:reg_model}
\end{equation}
The additional contribution of variable $X_j$ is measured through the partial derivative
\begin{align*}
    \textstyle m_{0,j} = m_{0,j}(\bff,x_j) \equiv \frac{\partial}{\partial x_j} m_0(\bff,x_j) .
\end{align*}
We impose a blanket notational convention that a function with subscript $j$ denotes the partial derivative with respect to the last argument, which will almost always be $x_j$, while keeping the other argument (in this case, $\bff$) fixed.  Then, to examine whether $X_j$ has additional contribution, we propose the null and alternative  hypotheses
\begin{gather}\label{hypo_null}
  H_0: \text{for all}~\bff, x_j\in\RR^{r+1},~m_{0,j}(\bff,x_j) = 0,~\text{against} \\
  \label{hypo_alt}
  H_A: \text{for some}~\bff, x_j\in\RR^{r+1},~\text{we have}~m_{0,j}(\bff,x_j) \neq 0 .
\end{gather}

We will assume that $(Y,\bfF,\bX)$ is a $\RR\times[-b,b]^{r+d}$ valued random vector.  The function $m_0$ is akin to $\EE[Y|X_j]$ in condition~C in \cite{FanFengSong2011} on \textit{un}conditional non-parametric screening, but here we exercise finer control over the potential role of $X_j$ by introducing $\bff$ in $m_0=m_0(\bff,x_j)$.  Note that {working with} $m_0$ involves estimating the latent factors $\bfF$, which will be discussed in Section~\ref{sec:highd}.  By Remark~\ref{rmk:specification_test} in the supplement, the screening hypotheses for $X_j$ in \eqref{hypo_null} and \eqref{hypo_alt} are also equivalent to a significance test for the $j$-th idiosyncratic term $u_j$.  We deliberately avoid working with the full conditional expectation of $Y$, namely $m_0^*(\bfF,\bX) \equiv \EE[Y|\bfF,\bX]$, because this is likely infeasible if the dimension of $\bX$ is too high {without} additional structure.  In contrast, the total number of variables $r+1$ in our working regression function $m_0$ and its derivative $m_{0,j}$ is much less than $d$, which allows us to circumvent the problem of testing a potentially very large number of coordinates of $\bX$ simultaneously \citep{FanGu2023factor}.  Nevertheless, our model retains its validity even when the regression outcome $Y$ involves more coordinates of $\bX$, as we will explain in Remark~\ref{rmk:misspecification}.

Although we shall focus on the high-dimensional $\bX$ case with growing $d$, our screening test also easily accommodates the low-dimensional $\bX$ case, as we will comment in Section~\ref{sec:lowd}.  At the other end of the spectrum, our findings can be generalized to testing $\bX$ over a fixed or expanding set of coordinates that form a subset of $\{1,\cdots,d\}$, as we will briefly outline in Section~\ref{sec:conclusion}.

\subsection{Initial regression estimator through diversified projection}
\label{sec:highd}

We focus on multi-layer feed-forward neural networks that are fully connected between adjacent layers \cite[p.~75]{AnthonyBartlett1999}.  In this paper we exclusively consider the Rectified Linear Unit (ReLU) activation function defined as $\sigma=\sigma_{\text{ReLU}}(x)=\max \{x, 0\}$ due to its widespread popularity \citep[Section~6.1]{GoodfellowBengioCourville2016deep}.

For now, we consider neural network functions with a generic input dimension $\dbar$.  We finalize the structure of our neural networks by specifying the tuple $(L, \mathbf{k})$ where $L\in\NN$ represents the number of hidden layers and the width vector $\mathbf{k}=(\dbar, k_1, \ldots, k_L, k_{L+1}=1) \in \NN^{L+2}$ specifies the number of nodes (i.e., neurons) in each hidden layer and the input/output dimensions.  More precisely, such a neural network function $f:\RR^\dbar\rightarrow\RR$ with architecture $(L, \mathbf{k})$ is given by
\begin{align}
    \label{eq:NN_function_output}
    \textstyle f(\bx)= \calL_{L+1} \circ \sigmabar_L \circ \calL_L \circ \sigmabar_{L-1} \circ \dots  \circ \calL_2 \circ \sigmabar_1 \circ \calL_1(\bx) ;
\end{align}
here $L_\ell(\bz)=\bw_\ell \bz +\bb_\ell$ is an affine map with weight matrix {$\bw_{\ell}\in\RR^{k_\ell \times k_{\ell-1}}$ if $\ell\geq 2$, $\bw_{1}\in\RR^{k_1 \times \overline{d}}$} and bias vector $\bb_\ell\in\RR^{k_\ell}$, and the function $\sigmabar_\ell:\RR^{k_\ell}\rightarrow\RR^{k_\ell}$ applies the ReLU activation function $\sigma$ entry-wise.  Additionally, we will truncate the (univarite) output of $f$ at a pre-specified constant level $M>0$ with the truncation operator $T_M(x) \equiv \textup{sgn}(x) \min\{|x|, M\}$.  We denote such a collection of truncated $f$ by $\calFn(\dbar)$ where for brevity of notation we only retain the dependence on the input dimension $\dbar$:
\begin{align}
\calFn(\dbar) = \big\{ T_M(f): f\text{ is of the form \eqref{eq:NN_function_output}}~
\text{with $L$ hidden layers and width vector $\mathbf{k}$} \big\},
\label{eq:base_NN_class}
\end{align}
where both $L$ and $\mathbf{k}$ can scale with the sample size $n$.

Next, we address the  latency issue of the unobserved factor $\bfF$.  For a given diversified weight $\bw \in\RR^d$,  \cite{fan2022learning} note  from \eqref{eq:factor_observation} that within the projection $\bw^T \bX$, the projected idiosyncratic terms $\bw^T \bmu$ will be negligible in high-dimension due to the law of the averages, so $\bw^T \bX$ yields an approximate linear combination of $\bfF$.  To yield $r$ factors, we need at least $r$ projections.  Since $r$ is {unknown}, one specifies an upper bound $\rbar$ {of} $r$.

Let $\bW\in\RR^{d\times \rbar}$ be a pre-trained \textit{diversified projection matrix} as termed by Definition~3 in \cite{FanGu2023factor}, and let $\bH\equiv d^{-1}\bW^\top \bB\in\RR^{\rbar\times r}$.
Then, by \eqref{eq:factor_observation}, we have
$$
\wt\bfF \equiv d^{-1}\bW^\top \bX = \bH \bfF + d^{-1}\bW^\top \bmu \approx \bH \bfF  \in\RR^{\rbar},
$$
by the law of the averages.  Hence, $\bH^\dagger \wt\bfF \approx \bfF$ under some appropriate conditions, where $\bH^\dagger$ is the pseudo-inverse of $\bH$.  We call $\wt\bfF$ as the \textit{diversified factor}, which is observable and a proxy of the latent factor $\bfF$.
In practice, acquiring $\bW$ in advance is necessary, either through domain expertise or data-driven methods. For instance, Proposition~1 in \cite{FanGu2023factor} proposes estimating $\bW$ via pretraining, where a tiny portion of size approximately $\log(n)$ of the samples is reserved to {extract the top $\rbar$ principal components in order to} construct a $\bW$ that will satisfy Assumption~\ref{ass:W} with high probability. Henceforth we shall assume that $\bW$ is pre-determined, exogenous and fixed.

With the proxies of the latent factors, for each given $j$, the coordinate of the candidate variable for screening, we can compute $\{Y_i, \wt\bfF_i, X_{i,j}\}_{i=1}^n$ based on the observed data, where $\wt\bfF_i= d^{-1}\bW^\top \bX_i$.  Then we fit the neural network regression
\begin{align}
	\label{eq:def_wh_m_n_highd}
	\textstyle \wh g_n = \argmin_{g\in\calFn(\rbar+1)} \frac{1}{n}\sum_{i=1}^n \{Y_i-g(\wt\bfF_i,X_{i,j})\}^2 = \argmin_{g\in\calFn(\rbar+1)} \PPn \ell(\cdot;g)
\end{align}
where $\ell(y,\wt\bff,x_j;g)=\{y-g(\wt\bff,x_j)\}^2$ is the square loss and $\PPn$ is the empirical distribution.
Intuitively, $\wh g_n$ is a neural network estimation of
\begin{align}
\label{eq:def_g0}
    g_0 = g_0(\wt\bff,x_j) \equiv m_0(\bH^\dagger \bff, x_j).
\end{align}
Indeed, Lemma~\ref{lemma:factor} shows that $g_0$ approximates $m_0$ well.

\subsection{Initial test statistic and its improvements}
\label{sec:test_stat}

To test our screening hypotheses \eqref{hypo_null} and \eqref{hypo_alt}, we start from $\wh g_{n,j}$, which by the convention in Section~\ref{sec:reg_model} is the partial derivative of $\wh g_n$ in \eqref{eq:def_wh_m_n_highd} with respect to $x_j$: $\wh g_{n,j}(\wt\bff,x_j)=\partial\wh g_n(\wt\bff,x_j) / \partial x_j$.  Now, $\wh g_{n,j}$ can be regarded as an estimator of the derivative $m_{0,j}$.  We then consider the following \textit{initial} MGF/exponentially tilted test statistic:
\begin{align}
\label{eq:eta_t_naive}
\eta_t(\wh g_n)=\PPn\{ \exp(t\,\wh g_{n,j}) - 1 \}.
\end{align}
Partly owing to the fact that a MGF uniquely characterizes the distribution of the underlying random variable (specifically when considering the MGF over an interval including zero), MGF-based tests are popular in the literature \citep{EppsSingletonPulley1982,BaringhausEbnerHenze2017}.  For instance, they have been extensively employed in testing (multivariate) normality \citep{EbnerHenze2020}.  Under the null hypothesis $H_0: m_{0,j}=0$, we expect $\wh g_{n,j}$ to be close to zero and so $\eta_t(\wh g_n)$ is also centered around zero.

To enhance the quality of both the derivative estimator $\wh g_n$ and the test statistic $\eta_t(\wh g_n)$, we will further conduct the following refinements sequentially:
\begin{enumerate}[wide, labelwidth=!, labelindent=0pt, label=(\alph*), topsep=0pt,itemsep=-1ex,partopsep=1ex,parsep=1ex,leftmargin =0.2 in]
    \item {\bf Smoothing}.
    \label{improve:smooth}
    As already alluded to in the introduction, in derivative estimation, the straightforward plugin estimator $\wh g_{n,j}$ derived from $\wh g_n$ in \eqref{eq:def_wh_m_n_highd} may perform poorly due to irregularities of the neural network functions.  For example, for the $L$-time iterated sawtooth function $\zeta_L$ (see Lemma~2.4 in \cite{Telgarsky2015} and Figure~\ref{fig:sawtooth}) which is implementable by a neural network of no more than $L$ hidden layers and three nodes per layer,  we have {$\|\dot\zeta_L\|_{L_2}\ge C 2^L \|\zeta_L\|_{L_2}$}, which implies that the quality of derivative estimation can become significantly worse than that of regression function estimation as the depth $L$ increases, a regime precisely of interest for deep neural networks.  This phenomenon results from the increasingly oscillatory behaviour of the neural network functions as the depth increases, and hence is not tied specifically to the ReLU activation function.  To address this issue, we will refine the initial estimator $\wh g_{n,j}$ to obtain a smoothed derivative estimator $\wh g_{n,j}^\us$, which will allow us to recover a faster convergence rate; see Theorem~\ref{thm:m_hat_est_master} and Theorem~\ref{thm:m_check_est_master}; this will in turn address the continuity issue {arising from second-order terms in our MGF test statistics} acting on the derivatives of neural networks.  Throughout this paper, we adhere to the convention that the superscript ``$\us$'' in upright font denotes the smoothed variant of the preceding function.  We will provide more detailed calculation for our observation on the sawtooth function and more comprehensive rationale behind the smoothing operation in Section~\ref{sec:sawtooth}.
    \item{\bf Centering}.
    \label{improve:tweaking}
    We must further refine the initial $\wh g_{n,j}^\us$ {to arrive at $\wc g_{n,j}^\us$ in \eqref{eq:f_check} in order} to center our test statistics.  This debiasing step is analogous to the concept of targeted machine learning in the literature \citep{van2011targeted, VanderLaanRose2018}, and also relates to other research on debiased machine learning \citep{quintas2021riesznet, kennedy2022semiparametric}.  However, we will  justify this refinement step independently without directly referencing the targeted/debiased machine learning literature.
    \item{\bf Truncation}.
    \label{improve:truncate}
   To treat the technical possibility of the unboundedness of $\wc g_{n,j}^\us$, we will  truncate $\wc g_{n,j}^\us$ appropriately in our final test statistics.
    \item{\bf Uniformity over $t$}.
    \label{improve:uniform}
  We will extend the fixed-$t$ statistic to the statistics aggregated over a range of $t$ to enhance the power of the test \citep{fan2001generalized, fan2015power}.
\end{enumerate}

We focus solely on the smoothing step~\ref{improve:smooth} in this section, and will introduce step~\ref{improve:tweaking} in Section~\ref{sec:tweaking}, and both \ref{improve:truncate} and \ref{improve:uniform} in Section~\ref{sec:uniform}.

We first describe the smoothing operation applied to a generic function $g:\mathbb{R}^{\bar{r}+1}\rightarrow\mathbb{R}$ (or analogously $m:\mathbb{R}^{r+1}\rightarrow\mathbb{R}$). Let $K$ be a univariate continuously differentiable kernel  supported on $[-1,1]$ with derivative $\dot K$.
Then, for the generic function $g$, denote its smoothed version in the variable $x_j$  by $g^\us=g^\us(\wt\bff, x_j)
= \int g(\wt \bff, z) K_h(x_j - z)dz$, where $K_h(\cdot)=K(\cdot/h) / h$ with a bandwidth parameter $h$.  Now, the smoothed function $g^\us$ becomes differentiable in $x_j$ everywhere with the partial derivative given by
\begin{align}
\label{eq:smooth_master}
\textstyle g_j^\us(\wt\bff, x_j) = \frac{\partial}{\partial x_j} g^\us(\wt\bff, x_j) = \int_{x_j-h}^{x_j+h} g(\wt\bff,z) \dot K_h(x_j-z)\rmd z = \frac{1}{h} \int_{-1}^1 g(\wt\bff,x_j-ah) \dot K(a) \rmd a.
\end{align}
Accordingly, we let our initial smoothed derivative estimator based on $\wh g_n$ be $\wh g_{n,j}^\us$.  In practice, we can select the bandwidth $h$ through cross-validation; see our Remark~\ref{rmk:bandwidth}.

\subsection{Estimating the score function}
\label{sec:Riesz_est}
To test the screening hypotheses \eqref{hypo_null} and \eqref{hypo_alt}, we proceed to estimate the score function corresponding to the statistic $\eta_t^\us(\wh g_n)\equiv\PPn\{ \exp(t\,\wh g_{n,j}^\us) - 1 \}$ now refined over \eqref{eq:eta_t_naive}.  This score function $\alpha_{t,n}^*:\RR^{\rbar+1}\rightarrow\RR$ satisfies, under the null and for any $L_2(\PP)$-integrable $\alpha$, the equality $\int \alpha\alpha_{t,n}^* \rmd\PP = \int_{\Omega_h} \alpha_j^\us(\wt\bff,x_j) \rmd\PP$, where $\alpha_j^\us$ is the smoothed derivative of $\alpha$ and $\Omega_h=\{\omega\in\Omega: X_j(\omega)\in\calB_h\}$ {is the interior sample space corresponding to} the interior set $\calB_h=[-b+h,b-h]$.  (Such an $\alpha_{t,n}^*$ exists because it is the Riesz representer of the directional derivative functional of our test statistic, as we will formally explain in Section~\ref{sec:variance_prep}.)  Therefore, the mean-square error (MSE) can be expressed as
\begin{align}
	&\EE(\alpha-\alpha_{t,n}^*)^2(\wt\bfF,X_j) = \EE \alpha^2(\wt\bfF,X_j) - 2 \EE (\alpha \alpha_{t,n}^*)(\wt\bfF,X_j) + \EE \alpha_{t,n}^{*2}(\wt\bfF,X_j) \nonumber \\
	& \textstyle {\stackrel{\text{under}~H_{0}}{=}} \textstyle \EE \alpha^2(\wt\bfF,X_j) - 2 \int_{\Omega_h} \alpha_j^\us(\wt\bff,x_j) \rmd\PP + \EE \alpha_{t,n}^{*2}(\wt\bfF,X_j).
	\label{eq:Rnull}
\end{align}
Discarding the last term $\EE[\alpha_{t,n}^{*2}(\widetilde{\mathbf{F}},X_j)]$, which does not depend on $\alpha$, the MSE can be estimated by the empirical loss
\begin{align}
    \label{eq:Rhat}
    \textstyle \Rnullhat(\alpha) = \frac{1}{n} \sum_{i=1}^n \alpha^2(\wt\bfF_i, X_{i,j}) - 2 \frac{1}{n} \sum_{i\in\calI_h} \alpha_j^\us(\wt\bfF_i, X_{i,j}),
\end{align}
where $\calI_h=\{i\in\{1,\dots,n\}: X_{i,j}\in\calB_h\}$ is the interior index set.
Hence, we estimate $\alpha_{t,n}^*$ by
\begin{align}
\label{eq:alpha_hat}
    \textstyle \wh\alpha_n = \argmin_{\alpha\in\calFn(\rbar+1)} \Rnullhat(\alpha),
\end{align}
which is a neural network estimator of the score function.

We tailor the loss function $\Rnullhat$ and the estimator $\wh\alpha_n$ to the null hypothesis \eqref{hypo_null}.  One notable computational advantage is that $\Rnullhat$ and hence $\wh\alpha_n$ are independent of both the response and the value of $t$: A single optimization suffices to yield $\wh\alpha_n$ for all $t$.  However, the potential trade-off is a bias of $\wh\alpha_n$ with respect to $\alpha_{t,n}^*$ induced under the alternative hypothesis, which will be characterized in Theorem~\ref{thm:Riesz_est}.  Numerically, computing the right-hand side in \eqref{eq:Rhat} is straightforward, as detailed in Lemma~\ref{lemma:alpha_j_numerical} in the supplement.

\subsection{Centering test statistics}
\label{sec:tweaking}

Our refined statistic $\eta_t^\us(\wh g_n)$ needs to be further centered to ensure asymptotic normality.  To achieve this, ideally we adjust $\wh g_n$ in the direction of the score function $\alpha_{t,n}^*$ to attain a smaller loss than in \eqref{eq:def_wh_m_n_highd}.  With the estimate $\wh\alpha_n$ of $\alpha_{t,n}^*$ given by \eqref{eq:alpha_hat}, one naturally uses
\begin{align}
	\label{eq:f_check}
	\wc g_n = \wh g_n + \deltathat \wh \alpha_n
\end{align}
to arrive at a debiased statistic $\eta_t^\us(\wc g_n)$, where
 \begin{align}
	\textstyle \deltathat & \textstyle = \frac{1}{\sum_{i=1}^n \wh\alpha_n^2(\wt\bfF_i,X_{i,j})}  \sum_{i=1}^n \left\{ Y_i - \wh g_n(\wt\bfF_i,X_{i,j}) \right\} \wh\alpha_n(\wt\bfF_i,X_{i,j})
	\label{eq:k_check}
\end{align}
is chosen to minimize $\PPn \ell(\cdot;\wh g_n + \deltat \wh\alpha_n)$ in $\deltat$, as derived in Section~\ref{sec:proof_prop:tweaking}.   We can interpret this update step as a form of the targeted machine learning \citep{VanderLaanRose2018} specialized to our deep neural network context.
In Section~\ref{sec:tweaking_proof} we will show that $\wc g_n$ indeed satisfies an approximate minimization condition: for an infinitesimal stepsize $\rinf$,
\begin{align}
	\label{eq:min_con}
	\PPn\ell(\cdot; \wc g_n)-\PPn\ell(\cdot; \wc g_n \pm\rinf\alpha_{t,n}^*) \le \rinf b_n
\end{align}
for a small tolerance $b_n$.  This will in turn center our test statistics.

\subsection{Final test statistic and its uniform extensions}
\label{sec:uniform}

To alleviate the technical possibility of the unboundnessed of our derivative estimator, we let $\Psi: \mathbb{R} \rightarrow \mathbb{R}$ be a bounded truncation function that also satisfies Assumption~\ref{ass:truncation}; we provide an example of such a truncation function below the assumption.  We then define the final (fixed-$t$) test statistic to be $\wc\eta_t^\us(\wc g_n)$ where the operator $\wc\eta_t^\us(\cdot)$ acts on functions $g:\RR^{\rbar+1}\rightarrow\RR$ as
\begin{gather}
\textstyle \wc\eta_t^\us(g) = \frac{1}{n}\sum_{i\in\calI_h} \left[ \exp\{t\,\Psi(g_j^\us(\wt\bfF_i, X_{i,j})) \} - 1 \right] .
\label{eta_t_s_wc}
\end{gather}
Its standard error is estimated as $t \|\wh\epsilon \wh\alpha_n\|_{L_2(\PPn)}= t [ \sum_{i=1}^n \{ \wh\epsilon_i \wh\alpha_n(\wt\bfF_i,X_{i,j}) \}^2 / n ]^{1/2}$, where $\wh\epsilon_i=Y_i-\wc g_n(\wt\bfF_i,X_i)$ is the $i$-th residual.
This leads to a level-$\alpha$ test of the null hypothesis~\eqref{hypo_null} with the critical region
\begin{align}
    \label{test:fixed_t}
    \left| \frac{ \sqrt{n} }{t} \frac{\wc\eta_t^\us(\wc g_n) }{ \|\wh\epsilon \wh\alpha_n\|_{L_2(\PPn)} } \right| > z_{1-\alpha/2}
\end{align}
where $z_{1-\alpha/2}$ is the $1-\alpha/2$ quantile of a standard normal distribution; the choice of this critical value is supported by the first part of \eqref{eq:H0_normality_Ln} in Theorem~\ref{thm:main_null_master}.

To increase the power of our test, we aggregate the ensemble of test statistics $\wc\eta_t^\us(\wc g_n)$ over a range of $t$. We consider two types of aggregation: a sup-statistic and an integrated weighted square-statistic (or simply a square-statistic).  In both cases, let $\delta > 0$ be a small but fixed constant, let $T > \delta$ be a potentially large but also fixed constant, and define  $\calT_\delta= [-T, -\delta] \cup [\delta, T]$.  The two aforementioned aggregations of test statistics and their associated critical regions are as follows.
\begin{enumerate}[wide, labelwidth=!, labelindent=0pt, label=(\alph*), leftmargin =0.2 in]
\item[1)] { \bf Sup-statistic}:
\begin{equation}\label{sup}
 \wh Z  = \sqrt{n} \sup_{t\in\calT_\delta} \left| \dfrac{ \wc\eta_t^\us(\wc g_n ) }{t} \right|, \quad \text{with critical region:}~\dfrac{ \wh Z }{ \|\wh\epsilon \wh\alpha_n\|_{L_2(\PPn)} } > z_{1-\alpha/2} .
\end{equation}

\item[2)] { \bf Square statistic}:
\begin{equation}\label{square}
\wh\chi^2  = \dfrac{n}{ \int_{\calT_\delta} w(t) \rmd t } \int_{\calT_\delta} \dfrac{1}{t^2} \left\{ \wc\eta_t^\us(\wc g_n) \right\}^2 w(t) \rmd t, \quad \text{with critical region}~\dfrac{ \wh\chi^2 }{ \|\wh\epsilon \wh\alpha_n\|_{L_2(\PPn)}^2 } > \chi^2_{1,1-\alpha}
\end{equation}
where $w$ is an integrable weight function such as  $w(t)=\exp(-\beta t^2)$ for a given $\beta>0$, and
$\chi^2_{1,1-\alpha}$ is the $1-\alpha$ quantile of a chi-square distribution with one degree of freedom.
\end{enumerate}
The critical values are supported by the last two parts of \eqref{eq:H0_normality_Ln} in Theorem~\ref{thm:main_null_master}.

\subsection{Summary of proposed methods}

The major implementation steps of our partial derivative estimation and conditional screening test procedures in Section~\ref{sec:testhypothesis} are summarized in Algorithm~\ref{algorithmhighd}.
{\small \spacingset{1.5}
\begin{algorithm}
\hrulefill
\caption{Partial derivative estimation \& conditional screening test by deep neural network}
\SetKwInput{KwData}{Input}
\KwData{Observed samples $Y_i, \bX_i \in \RR^d$, $i\in \{1,\cdots, n\}$, and an exogenous diversified projection matrix $\bW\in\RR^{d\times\rbar}$; specification of the coordinate $j$ for screening.}
\KwResult{Preliminary and refined regression estimators $\wh g_n(\cdot)$ and $\wc g_n(\cdot)$; their  smoothed derivative estimators $\wh g_{n,j}^\us(\cdot)$ and $\wc g_{n,j}^\us(\cdot)$; $\wh\alpha_n(\cdot)$, and test results.}
\begin{enumerate} \itemsep -0.15in
\item
Compute $\wt\bfF_i = d^{-1} \bW^\top \bX_i$, for all $i \in \{1,\dots,n\}$.
\item
Estimate the regression function $m_0$ in \eqref{eq:reg_model} by the neural network estimator $\wh g_n$ from \eqref{eq:def_wh_m_n_highd}.  {Conduct cross validation (see Remark~\ref{rmk:bandwidth}), or otherwise, to find a smoothing bandwidth $h$.} \\
\item
Define the loss function $\Rnullhat(\cdot)$ as
\begin{align*}
    \textstyle \Rnullhat(\alpha) = \frac{1}{n} \sum_{i=1}^n \alpha^2(\wt\bfF_i,X_{i,j}) - 2 \frac{1}{n} \sum_{i\in\calI_h} \alpha_j^\us(\wt\bfF_i,X_{i,j}) ,
\end{align*}
which can be approximated via Lemma~\ref{lemma:alpha_j_numerical},  and let the estimator $\wh\alpha_n$ of $\alpha_{t,n}^*$ be
\begin{align*}
    \textstyle \wh\alpha_n = \argmin_{\alpha\in\calFn(\rbar+1)} \Rnullhat(\alpha).
 \end{align*}
\item
   {Return $ \wc g_n= \wh g_n + \deltathat {\wh\alpha_n}$, where}
\begin{align*}
\textstyle \deltathat = \frac{1}{\frac{1}{n} \sum_{i=1}^n {\wh\alpha_n}^2(\wt\bfF_i,X_{i,j})} \frac{1}{n} \sum_{i=1}^n \left\{ Y_i - \wh g_n(\wt\bfF_i,X_{i,j}) \right\} {\wh\alpha_n}(\wt\bfF_i,X_{i,j}) .
\end{align*}
\item
Let $\|\wh\epsilon \wh\alpha_n\|_{L_2(\PPn)}$ be the standard deviation estimator of our test statistic $\wc\eta_t^\us(\wc g_n)$ (with $\wc\eta_t^\us$ from \eqref{eta_t_s_wc}); specifically, $\|\wh\epsilon \wh\alpha_n\|_{L_2(\PPn)}= [ \frac{1}{n} \sum_{i=1}^n \{ \wh\epsilon_i \wh\alpha_n(\wt\bfF_i,X_{i,j}) \}^2 ]^{1/2}$ where $\wh\epsilon_i=Y_i-\wc g_n(\wt\bfF_i,X_i)$ is the $i$-th residual.  Then, for a prescribed significance level $0<\alpha<1$, reject the null hypothesis \eqref{hypo_null} in our conditional screening hypotheses: \vspace*{-0.15in}
\begin{itemize}\itemsep -0.15in
    \item[a)]
    For the fixed-$t$ test, if \eqref{test:fixed_t} holds;
\item[b)]
For the sup test, if the decision rule in \eqref{sup} holds;
\item[c)]
For the square test, if the decision rule in \eqref{square} hold. \vspace*{-0.2in}
\end{itemize}
\end{enumerate}
\label{algorithmhighd}
\hrulefill
\end{algorithm}
}

\section{Theoretical properties of estimators and tests}
\label{sec:regression_function}

\subsection{Initial regression and smoothed derivative estimators}
\label{sec:init_est_theory}

We first collect the assumptions most relevant to the construction and rates of our estimators $\wh g_n$ and $\wh g_{n,j}^\us$; we defer the more technical assumptions to Section~\ref{sec:additional_ass_sec_3}.
In particular, Assumption~\ref{ass:DGP} imposes mild conditions on the data generating processes, and Assumption~\ref{ass:NN_scaling} specifies the function class for $m_0$ and its associated neural networks.

\begin{assumption}[Data generating processes]\label{ass:DGP}
\begin{enumerate*}[label=(\roman*)]
\item\label{ass:DGP:con_1}
$(\bfF, X_j)$ take value on the bounded support $[-b,b]^r\times[-b,b]$;
\item\label{ass:DGP:con_2}
{The dimension $\rbar$ of the diversified factor $\wt\bfF$ is a fixed constant and satisfies $\rbar\ge r$;}
\item\label{ass:DGP:con_3}
Regression function $m_0^*(\bff,\bx)$, defined below \eqref{hypo_alt}, is bounded in magnitude by $M_\infty$ that satisfies $M_\infty\le M$ for $M$ the truncation level in \eqref{eq:base_NN_class}.
\end{enumerate*}
\end{assumption}

\begin{assumption}[Function class for $m_0$ and neural network scaling]
\label{ass:NN_scaling}
Regression function $m_0$ belongs to the hierarchical composition model $\calH(r+1,l,\calP,C_\calH)$ in Definition~\ref{def:Hier_comp} \citep{schmidt2020nonparametric,KohlerLanger2021}.  The tuple $(L, \mathbf{k}=(\rbar+1, k_1, \ldots, k_L,1))$ for the structure of $\calFn(\rbar+1)$ in \eqref{eq:base_NN_class} satisfies $k_1=\dots=k_L\equiv k_{\textup{0}}$ for some $k_{\textup{0}}\in\NN$ and $L \cdot k_{\textup{0}} \sim n^{\frac{1}{4\kappa+2}} \log^{\frac{4\kappa-1}{2\kappa+1}}(n)$ where $\kappa$ is the dimension-adjusted degree of smoothness defined above Definition~\ref{def:Hier_comp}.
\end{assumption}

\begin{assumption}[Idiosyncratic terms]\label{asumidio}
The idiosyncratic terms $\bmu\in\RR^d$ satisfy $\sum_{j=1}^d\EE[u_j^2]\lesssim d_u$ for $d_u\leq d$ and a weak dependence condition $\sum_{j\neq{j}^{\prime}}|\EE\left[u_{j}u_{j^{\prime}}\right]|\lesssim d_u$.
\end{assumption}

In Theorem~\ref{thm:m_hat_est_master} we establish the convergence of $\wh g_n-m_0$ for regression function estimation and of $\wh g_{n,j}^\us - m_{0,j}$ for derivative estimation. The following rates will appear:
\begin{align}
    \label{eq:nu_n_concrete}
    \pn = \cp n^{\frac{1}{2\kappa+1}} \log^{2\frac{4\kappa-1}{2\kappa+1}+{2}}(n), \quad \nun = n^{-\frac{\kappa}{2\kappa+1}} \log^{\frac{6\kappa}{2\kappa+1}}(n) , \quad  \deltaf= \{\rbar d_u/d^2\}^{1/2}
\end{align}
where $\cp$ is the constant appearing in \eqref{eq:V_calF_bound}, $d_u$ appears in Assumption~\ref{asumidio}, and $\deltaf \to 0$ as $d\to \infty$.  Moreover, because the target of the \textit{smoothed} derivative estimator $\wh g_{n,j}^\us$ is the \textit{smoothed} derivative $m_{0,j}^\us$, not $m_{0,j}$, naturally a bias
\begin{align*}
    \textstyle r_{\textup{b},m,j} \equiv \{ \int_{\Omega_h} ( m_{0,j}^\us-m_{0,j})^2\,\rmd\PP\}^{1/2} = \left\{ \int_{\Omega_h} [ \int_{-h}^h \{ m_{0,j}(\bff,x_j-z) - m_{0,j}(\bff, x_j)\} K_h(z) \rmd z ]^2\rmd\PP \right\}^{1/2}
\end{align*}
occurs.  However, this bias vanishes under the null in \eqref{hypo_null} because there $m_{0,j}=0$ everywhere.

\begin{theorem}
\label{thm:m_hat_est_master}
Under Assumptions~\ref{ass:DGP} to \ref{asumidio} and \ref{assum_fun_class} to \ref{ass:kernel}, for $\pn$, $\nun$ and $\deltaf$ in \eqref{eq:nu_n_concrete}, on an event $\calA_m'$ with $\PP(\calA_m') \ge 1 - C {\exp(- \pn )}$, and for a constant $c_{m,1}' $, the initial regression function estimator $\wh g_n$ from \eqref{eq:def_wh_m_n_highd} satisfies
\begin{align}
\label{eq:rn1}
\textstyle \left[ \int \{ \wh g_n(\wt\bff,x_j ) - m_0(\bff, x_j) \}^2 \rmd\PP \right]^{1/2} \le c_{m,1}' ( \nun + \deltaf ) \equiv r_{m,n} .
\end{align}
In addition, its associated smoothed derivative estimator $\wh g_{n,j}^\us$ satisfies, on the same event $\calA_m'$ and with a possibly different constant $c_{m,2}'$,
\begin{align}
\textstyle \left[ \int_{\Omega_h} \{ \wh g_{n,j}^\us(\wt\bff, x_j)- m_{0,j}(\bff, x_j) \}^2 \rmd\PP \right]^{1/2} \le c_{m,2}'  h^{-1} ( \nun + \deltaf ) + r_{\textup{b},m,j} .
\label{eq:rn1_deriv}
\end{align}
\end{theorem}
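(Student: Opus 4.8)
The plan is to prove the two displays \eqref{eq:rn1} and \eqref{eq:rn1_deriv} in sequence, deriving the derivative bound from the regression-function bound plus the smoothing operator. First I would establish \eqref{eq:rn1} via the standard oracle-inequality route for least-squares over a sieve. Write $\wh g_n - m_0 = (\wh g_n - g_0) + (g_0 - m_0)$, where $g_0 = m_0(\bH^\dagger\cdot,\cdot)$ as in \eqref{eq:def_g0}. The second piece is controlled by Lemma~\ref{lemma:factor}, which shows $g_0$ approximates $m_0$ (this is the source of the $\deltaf = \{\rbar d_u/d^2\}^{1/2}$ term, driven by the diversified-factor error $d^{-1}\bW^\top\bmu$ under Assumptions~\ref{ass:W} and~\ref{asumidio}). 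For the first piece, since $\wh g_n$ minimizes the empirical square loss over $\calFn(\rbar+1)$, a basic inequality gives $\|\wh g_n - g_0\|_{L_2}^2 \lesssim$ (approximation error of $\calFn(\rbar+1)$ for $g_0$) $+$ (a stochastic/empirical-process term). The approximation error is where Assumption~\ref{ass:NN_scaling} enters: $m_0$ (hence $g_0$) lies in the hierarchical composition class $\calH(r+1,l,\calP,C_\calH)$, for which \cite{schmidt2020nonparametric,KohlerLanger2021} supply approximation rates by ReLU nets of the prescribed width/depth $L\cdot k_0 \sim n^{1/(4\kappa+2)}\log^{(4\kappa-1)/(2\kappa+1)}(n)$. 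The empirical-process term is handled by a chaining/local-complexity bound using the VC-type covering number of $\calFn(\rbar+1)$ (the constant $\cp$ in \eqref{eq:V_calF_bound} and the rate $\pn$ govern the exponential tail of the event $\calA_m'$), plus a truncation argument since outputs are capped at $M$ and $m_0^*$ is bounded by $M_\infty\le M$ under Assumption~\ref{ass:DGP}\ref{ass:DGP:con_3}. Balancing approximation against stochastic error yields the rate $\nun = n^{-\kappa/(2\kappa+1)}\log^{6\kappa/(2\kappa+1)}(n)$, and combining with the factor term gives $r_{m,n} = c_{m,1}'(\nun + \deltaf)$ on $\calA_m'$ with $\PP(\calA_m')\ge 1 - C\exp(-\pn)$.

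For the derivative bound \eqref{eq:rn1_deriv}, the key observation is that smoothing converts an $L_2$ bound on a function into an $L_2$ bound on the derivative of its smoothed version, at the cost of a factor $h^{-1}$. Concretely, using the representation \eqref{eq:smooth_master}, for any two functions $g,\tilde g$ one has $(g_j^\us - \tilde g_j^\us)(\wt\bff,x_j) = h^{-1}\int_{-1}^1 \{g(\wt\bff,x_j-ah)-\tilde g(\wt\bff,x_j-ah)\}\dot K(a)\,\rmd a$, so by Jensen/Cauchy--Schwarz and a change of variables (the bandwidth $h$ being small enough that the shifted argument stays in $[-b,b]$, which is why the domain is restricted to $\Omega_h$), $\|g_j^\us - \tilde g_j^\us\|_{L_2(\Omega_h)} \lesssim h^{-1}\|\dot K\|_{\text{(suitable norm)}}\,\|g-\tilde g\|_{L_2}$. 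Applying this with $g=\wh g_n$ and $\tilde g=m_0$ gives $\|\wh g_{n,j}^\us - m_{0,j}^\us\|_{L_2(\Omega_h)} \lesssim h^{-1} r_{m,n} = c_{m,2}' h^{-1}(\nun+\deltaf)$ on the same event $\calA_m'$. Finally, the triangle inequality $\|\wh g_{n,j}^\us - m_{0,j}\|_{L_2(\Omega_h)} \le \|\wh g_{n,j}^\us - m_{0,j}^\us\|_{L_2(\Omega_h)} + \|m_{0,j}^\us - m_{0,j}\|_{L_2(\Omega_h)}$ absorbs the deterministic smoothing bias $r_{\textup{b},m,j}$ (defined just above the theorem as exactly the second term), yielding \eqref{eq:rn1_deriv}.

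I expect the main obstacle to be the empirical-process / oracle-inequality step for $\|\wh g_n - g_0\|_{L_2}$: getting the exponential tail $\exp(-\pn)$ with the precise logarithmic powers in $\nun$ requires a careful localized complexity analysis over the sieve $\calFn(\rbar+1)$ whose architecture $(L,\mathbf{k})$ grows with $n$, including correctly tracking how the covering-number bound (hence $\cp$) propagates into the rate. A secondary subtlety is that the target of $\wh g_{n,j}^\us$ is $m_{0,j}^\us$ rather than $m_{0,j}$, so one must be careful to state the intermediate bound against $m_{0,j}^\us$ and only introduce $r_{\textup{b},m,j}$ at the very end; note that under $H_0$ this bias is zero since $m_{0,j}\equiv 0$, which is the remark made right before the theorem and will matter for the null-distribution theory later. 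The change-of-variables restriction to $\Omega_h$ (equivalently the interior set $\calB_h=[-b+h,b-h]$) is what makes the smoothing-to-derivative inequality clean; I would flag that $h\to 0$ so this restriction is asymptotically negligible for the downstream arguments.
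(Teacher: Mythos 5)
Your overall plan---an oracle inequality for neural-network least squares to prove \eqref{eq:rn1}, then the smoothing plus Cauchy--Schwarz argument to transfer to the derivative bound \eqref{eq:rn1_deriv}---matches the paper's route, and the derivative half is exactly the paper's Lemma~\ref{lemma:derivative_bound_via_original} (itself a corollary of Lemma~\ref{lemma:derivative_bound_via_original_specific}). There is, however, one step in your sketch that would fail if executed as written: the line ``since $\wh g_n$ minimizes the empirical square loss over $\calFn(\rbar+1)$, a basic inequality gives $\|\wh g_n - g_0\|_{L_2}^2 \lesssim$ (approximation error) $+$ (stochastic term).'' A basic inequality of that form requires the noise to be conditionally centered at the regression target over the input space of the sieve, i.e.\ $\EE[Y - g_0(\wt\bfF, X_j)\mid\wt\bfF, X_j] = 0$. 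But $g_0(\wt\bff, x_j) = m_0(\bH^\dagger\wt\bff, x_j)$ is \emph{not} the conditional mean of $Y$ given $(\wt\bfF, X_j)$, and the model noise $\epsilon = Y - m_0(\bfF, X_j)$ is centered given $(\bfF, X_j)$, not given $(\wt\bfF, X_j)$. So when you try to chain or symmetrize the cross term $\frac{1}{n}\sum_i \epsilon_i (\wh g_n - \wt g_n)(\wt\bfF_i, X_{i,j})$, the process is not mean-zero and a centered local-complexity bound cannot be applied directly.

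The paper's resolution is a two-layer noise decomposition. It introduces $\wt\epsilon_i = Y_i - \EE[Y_i\mid\wt\bfF_i, X_{i,j}]$, which \emph{is} conditionally centered over $(\wt\bfF, X_j)$, and splits the cross term into a $\wt\epsilon$-part (handled by the weighted empirical-process bound Proposition~\ref{prop:m_n_hat_ratio_rate_master}, stated precisely for $\wt\epsilon$) and an $(\epsilon-\wt\epsilon)$-part. The latter is controlled only thanks to the ``stability of conditional expectation'' condition \eqref{eq:cond_exp_stability} in Assumption~\ref{ass:W}, which bounds $\EE[\{\EE[Y\mid\wt\bfF,X_j] - m_0(\bfF,X_j)\}^2]\lesssim\deltaf^2$; this is a second, separate source of $\deltaf$ in the rate, distinct from the $\deltaf$ you correctly identify via $\|g_0-m_0\|_{L_2}$ in Lemma~\ref{lemma:factor}. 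You identify every other ingredient correctly (the diversified-factor bias, the hierarchical-composition approximation rate under Assumption~\ref{ass:NN_scaling}, the VC-index bound feeding $\pn$, the $\Omega_h$ restriction, and introducing the bias $r_{\textup{b},m,j}$ only at the final triangle-inequality step), so the noise-centering decomposition and the role of \eqref{eq:cond_exp_stability} is the one structural idea missing from your sketch.
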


The proof of Theorem~\ref{thm:m_hat_est_master} is deferred to Section~\ref{sec:proof_thm:m_hat_est_master}. With the rate $r_{m,n}$ in \eqref{eq:rn1} for the regression estimator, the rate \eqref{eq:rn1_deriv} for the smoothed \textit{derivative} estimator follows from Lemma~\ref{lemma:derivative_bound_via_original} that applies to the smoothed derivatives of all estimators.  Next, the rate $r_{m,n}$ consists of two parts: $\nun$ represents the combined effect of stochastic error and neural network approximation bias, while $\deltaf$ represents the error induced by relying on the diversified factors in place of the latent factors. Theorem~\ref{thm:m_hat_est_master} also covers as a special example the low-dimensional $\bX$ case in Section~\ref{sec:lowd} by simply setting $\deltaf=0$.

\subsection{Estimating the asymptotic variance}
\label{sec:variance}

\subsubsection{Definition of the score function}
\label{sec:variance_prep}

In this section, to complete our discussion in Section~\ref{sec:Riesz_est}, we formally introduce the score function $\alpha_{t,n}^*$ of our test statistics and the bias of $\wh\alpha_n$ relative to $\alpha_{t,n}^*$.  Define the population version of the operator $\wc\eta_t^\us(g)$ in \eqref{eta_t_s_wc} as
\begin{align}
\label{labeltildeg}
\textstyle \wt\eta_t^\us(g) = \int_{\Omega_h} [ \exp\{ t\,\Psi(g_j^\us(\wt\bff, x_j)) \} - 1 ] \rmd\PP ,
\end{align}
and its associated directional derivative (in the direction $v=v(\wt\bff,x_j)$)
\begin{align*}
    \textstyle \frac{\partial \wt\eta_t^\us(g_0)}{\partial g}[v] = \frac{\partial \wt\eta_t^\us(g_0+\tau v)}{\partial\tau}\Big|_{\tau=0} .
\end{align*}
By \eqref{eq:target_functional_derivative_master} in the proof of Lemma~\ref{lem:In2_2_master}, $\frac{\partial \wt\eta_t^\us(g_0)}{\partial g}[v] / t$ is a bounded linear functional of $v$.  Hence, there exists a \textit{Riesz representer} that becomes our $\alpha_{t,n}^*$ and that satisfies $\frac{\partial \wt\eta_t\left(g_0\right)}{\partial g}[v] / t = {\int v  \alpha_{t,n}^* \rmd\PP}$ for all $v$ \cite[Theorem~3.4, Chapter~1]{Conway1990}.  In particular, by \eqref{eq:target_functional_derivative_master}, the effect of $\alpha_{t,n}^*$ is captured explicitly as
\begin{align}
& \textstyle \int v \alpha_{t,n}^* \rmd\PP = \frac{1}{t} \frac{\partial \wt\eta_t^\us(g_0)}{\partial g}[v] = \int_{\Omega_h} \exp\{t \,g_{0,j}^\us(\wt\bff,x_j) \} v_j^\us(\wt\bff,x_j) \rmd\PP \textstyle {\stackrel{\text{under}~H_{0}}{=}} \int_{\Omega_h} v_j^\us(\wt\bff,x_j) \rmd\PP.
\label{eq:Riesz_H0_master}
\end{align}
Because the loss $\Rnullhat$ relies on the last step of \eqref{eq:Riesz_H0_master}, $\wh\alpha_n$ consistently estimates $\alpha_{t,n}^*$ under the null, as confirmed by Theorem~\ref{thm:Riesz_est}.  However $\wh\alpha_n$ may not be consistent for $\alpha_{t,n}^*$ under the alternative.  Instead, it is not hard to show that
it is consistent for a population limit $\alphanull$ which exists and satisfies $\int v \alphanull \rmd\PP = \int_{\Omega_h} v_j^\us(\wt\bff,x_j) \rmd\PP$ for all $v$. See the remark below Eq.~\eqref{eq:target_functional_derivative_master} for details.

To obtain an intuitive idea of what the Riesz representer $\alpha_{t,n}^*$ could look like, assume that $\wt\bfF, X_j$ admit a joint density {$p(\wt\bff,x_j)$} that is differentiable in $x_j$ with the derivative being {$p_j(\wt\bff,x_j)$}.  Then, by Lemma~\ref{lem:Riesz_analytic}, under the null hypothesis and in the limit $h\rightarrow 0$, $\alpha_{t,n}^*(\wt\bff,x_j)= - p_j(\wt\bff,x_j) / p(\wt\bff,x_j)$; in this case, if further $\wt\bfF, X_j$ are jointly Gaussian {on their support}, then $\alpha_{t,n}^*(\wt\bff,x_j)$ is simply a linear function in $\wt\bff$ and $x_j$.

\subsubsection{Convergence rate for the score function estimator}
\label{sec:variance_thm}

To ensure the rate of $\wh\alpha_n$, our Assumption~\ref{ass:u_t_n} mirrors our earlier Assumptions~\ref{ass:DGP} and \ref{ass:NN_scaling} for estimating the regression function $m_0$ with deep neural networks and in particular imposes a hierarchical composition model on $\alphanull$ that takes the diversified predictors $(\wt\bfF,X_j)$ as arguments.  Then, Assumption~\ref{ass:alpha_h} places a mild condition on the smoothing bandwidth.
As in Section~\ref{sec:init_est_theory}, we defer the more technical assumptions to Section~\ref{sec:additional_ass_sec_3}.

\begin{assumption}[Function class and neural network scaling for estimating $\alphanull$ and $\alpha_{t,n}^*$]\label{ass:u_t_n}
The function $\alphanull$ is bounded in magnitude by $M_\infty$, and when we restrict the support to $[-c_b b, c_b b]^{\rbar}\times[-b,b]$ for a constant $c_b>0$, belongs to the hierarchical composition model $\calH(\rbar+1,l,\calP,C_\calH)$ on the same support and (without loss of generality) with the same $l,\calP,C_\calH$ as in Assumption~\ref{ass:NN_scaling}.  The class $\calFn(\rbar+1)$ in \eqref{eq:alpha_hat} satisfies the same structural scaling as in Assumption~\ref{ass:NN_scaling}.
\end{assumption}

\begin{assumption}[Rate of bandwidth]
\label{ass:alpha_h}
The bandwidth $h$ satisfies $h\ge 1/(\sqrt{n}\nun)$.
\end{assumption}

To characterize the bias of $\alphanull$ with respect to $\alpha_{t,n}^*$ under the alternative, define the random variable that represents the signal of the alternative as
\begin{align}
\label{eq:Z_t}
Z_{t,j,h} = [ \exp\{t\,g_{0,j}^\us(\wt\bfF,X_j)\} - 1 ] \ind\{X_j\in\calB_h\} .
\end{align}
Note that $Z_{t,j,h}$ always equals zero under the null hypothesis.

\begin{theorem}
\label{thm:Riesz_est}
Under Assumptions~\ref{ass:DGP} to \ref{ass:alpha_h} and \ref{assum_fun_class} to \ref{ass:truncation}, for $\pn$, $\nun$ and $\deltaf$ in \eqref{eq:nu_n_concrete}, on an event $\calA_\alpha$ with $\PP(\calA_\alpha) \ge 1 - C \exp(-\pn)$, and for a constant $c_{\alpha,1}$,
\begin{align}
&\textstyle \left[ \int (\wh\alpha_n - \alphanull)^2(\wt\bff,x_j ) \rmd\PP \right]^{1/2} \le c_{\alpha,1} \{ ( h^{-1} \nun + \deltaf ) \wedge M \} \equiv \rnull .
\label{eq:r_n2_null}
\end{align}
Moreover, for a constant $c_{\alpha,2}$ not dependent on $t$, the bias of $\alphanull$ relative to $\alpha_{t,n}^*$ is
\begin{align}
    \textstyle \forall t\in\RR, \quad \left[ \int (\alphanull - \alpha_{t,n}^*)^2(\wt\bff,x_j ) \rmd\PP \right]^{1/2} \le c_{\alpha,2} ( h^{-1} \| Z_{t,j,h} \|_{L_2} \wedge M ) \equiv  r_{\alpha,t,\ub} .
    \label{eq:r_alpha}
\end{align}
Consequently, on the event $\calA_\alpha$,
\begin{align}
&\textstyle \forall t\in\RR, \quad \left[ \int (\wh\alpha_n - \alpha_{t,n}^*)^2(\wt\bff,x_j ) \rmd\PP \right]^{1/2} \le \rnull + r_{\alpha,t,\ub} \equiv r_{\alpha,t,n}.
\label{eq:r_n2}
\end{align}
\end{theorem}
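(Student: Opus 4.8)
Of the three displays, \eqref{eq:r_n2} is immediate from \eqref{eq:r_n2_null} and \eqref{eq:r_alpha} by the triangle inequality on $\calA_\alpha$, so the substance is \eqref{eq:r_n2_null} (stochastic, on $\calA_\alpha$) and \eqref{eq:r_alpha} (a purely population statement). For \eqref{eq:r_n2_null}, the plan is to run the empirical-risk-minimization analysis underlying Theorem~\ref{thm:m_hat_est_master}, now with the loss $\Rnullhat$ of \eqref{eq:Rhat} in place of the square loss. The key algebraic identity is that, by the defining property of the Riesz representer $\alphanull$ (namely $\int v\,\alphanull\,\rmd\PP=\int_{\Omega_h}v_j^\us\,\rmd\PP$ for every $v$, established around \eqref{eq:target_functional_derivative_master}), the population risk $\Rnull(\alpha)=\EE[\alpha^2(\wt\bfF,X_j)]-2\int_{\Omega_h}\alpha_j^\us\,\rmd\PP$ satisfies $\Rnull(\alpha)-\Rnull(\alphanull)=\|\alpha-\alphanull\|_{L_2}^2$, so the excess risk is exactly the squared $L_2$ distance to the target, just as $m_0$ plays that role in \eqref{eq:def_wh_m_n_highd}. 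I would then decompose $\|\wh\alpha_n-\alphanull\|_{L_2}^2$ into (i) an approximation error $\inf_{\alpha\in\calFn(\rbar+1)}\|\alpha-\alphanull\|_{L_2}^2$, bounded by ReLU-network approximation theory for the hierarchical composition model under Assumption~\ref{ass:u_t_n} exactly as $m_0$ is treated under Assumption~\ref{ass:NN_scaling}, and (ii) the empirical fluctuations $(\PPn-\PP)[\alpha^2]$ and $(\PPn-\PP)[\alpha_j^\us\ind\{X_j\in\calB_h\}]$ over $\calFn(\rbar+1)$ and over localized shells $\{\|\alpha-\alphanull\|_{L_2}\le\rho\}$, controlled by the covering-number bound behind \eqref{eq:V_calF_bound} and a Bernstein/peeling argument, which supplies the $C\exp(-\pn)$ tail defining $\calA_\alpha$. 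The one new ingredient relative to Theorem~\ref{thm:m_hat_est_master} is the second fluctuation term: by \eqref{eq:smooth_master} one has $|\alpha_j^\us|\le h^{-1}\|\dot K\|_{L_1}M$ and, for $\alpha,\beta\in\calFn(\rbar+1)$, $\|\alpha_j^\us-\beta_j^\us\|_{L_2}\lesssim h^{-1}\|\alpha-\beta\|_{L_2}$ (Cauchy--Schwarz in \eqref{eq:smooth_master} with a change of variables, using the bounded-density condition among the deferred assumptions), so both the envelope and the localized modulus of this term acquire an extra factor $h^{-1}$, which is precisely what turns $\nun$ into $h^{-1}\nun$ in the stated rate; the $\deltaf$ contribution arises from using the diversified factor $\wt\bfF$ in place of $\bfF$, exactly as in Theorem~\ref{thm:m_hat_est_master}, and the cap $\wedge M$ from $\|\wh\alpha_n\|_\infty\le M$ together with $\|\alphanull\|_\infty\le M_\infty\le M$.

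For \eqref{eq:r_alpha}, the argument is a deterministic Riesz-duality computation. By \eqref{eq:Riesz_H0_master}, $\int v\,\alpha_{t,n}^*\,\rmd\PP=\int_{\Omega_h}\exp\{t\,g_{0,j}^\us\}\,v_j^\us\,\rmd\PP$ for all $v$, while $\int v\,\alphanull\,\rmd\PP=\int_{\Omega_h}v_j^\us\,\rmd\PP$; subtracting and recalling \eqref{eq:Z_t} gives $\int v\,(\alpha_{t,n}^*-\alphanull)\,\rmd\PP=\EE[\,Z_{t,j,h}\,v_j^\us(\wt\bfF,X_j)\,]$ for every $v$. Taking $v=\alpha_{t,n}^*-\alphanull$, which lies in $L_2(\PP)$ as a difference of Riesz representers, and combining Cauchy--Schwarz with the smoothing bound $\|v_j^\us\|_{L_2}\lesssim h^{-1}\|v\|_{L_2}$ from the previous paragraph yields $\|v\|_{L_2}^2\le Ch^{-1}\|Z_{t,j,h}\|_{L_2}\,\|v\|_{L_2}$, hence $\|\alpha_{t,n}^*-\alphanull\|_{L_2}\le Ch^{-1}\|Z_{t,j,h}\|_{L_2}$; the constant $C$ is the operator norm of the smoothing--differentiation map and is manifestly free of $t$, all $t$-dependence residing in $\|Z_{t,j,h}\|_{L_2}$. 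The alternative bound $\wedge M$ is the crude a priori estimate valid under the boundedness assumptions and is binding exactly when the signal $\|Z_{t,j,h}\|_{L_2}$ is large; note $Z_{t,j,h}\equiv 0$ under $H_0$, consistent with $\alpha_{t,n}^*=\alphanull$ there.

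The delicate step, and the main obstacle, is the localized empirical-process bound in \eqref{eq:r_n2_null}: one must show that, despite the envelope and the localized modulus of the smoothed-derivative term blowing up like $h^{-1}$, the peeling argument still returns the optimal $\nun$ scaling---so the rate is exactly $h^{-1}\nun$ and no worse---with the required $\exp(-\pn)$ probability, and that this is compatible with the bandwidth restriction $h\ge 1/(\sqrt n\,\nun)$ of Assumption~\ref{ass:alpha_h}. Everything else is either a transcription of the corresponding steps in the proof of Theorem~\ref{thm:m_hat_est_master} or elementary.
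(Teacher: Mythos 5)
Your proposal is correct and, for \eqref{eq:r_n2_null}, is essentially the paper's own route: an empirical-risk-minimization analysis under the loss $\Rnullhat$, using the identity that $\Rnull(\alpha)-\Rnull(\alphanull)=\|\alpha-\alphanull\|_{L_2}^2$ (the paper works with the centered process $\theta(\cdot;\alpha)$ of \eqref{eq:alpha_theta} and the sup-norm approximant $\wt\alpha_n$ from \eqref{talpha}, but this is the same decomposition into approximation bias plus a localized empirical fluctuation). You also put your finger on the real technical content: in the paper this is Proposition~\ref{prop:g_n_hat_rate}, where the envelope for the smoothed-derivative term is $O(h^{-1})$ and the localized modulus inherits $\|\theta(\cdot;\alpha)\|_{L_2}\lesssim h^{-1}\|\alpha-\wt\alpha_n\|_{L_2}$; the extra $h^{-1}\nusn^2$ remainder there, combined with Assumption~\ref{ass:alpha_h} ($h\ge 1/(\sqrt n\,\nun)$, which enters through condition~\eqref{eq:g_h_condition}), is exactly what keeps the rate at $h^{-1}\nun$ and no worse. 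The $\deltaf$ piece comes through Lemma~\ref{lemma:approx_error_alpha} just as you anticipate, and the cap $\wedge M$ is the a priori boundedness.

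For \eqref{eq:r_alpha} you take a genuinely shorter route than the paper. The paper starts from the minimization property of $\alphanull$ for the population loss $\Rnull$, expands $\PP(\alphanull-\alpha_{t,n}^*)^2-\Rnull(\alphanull)+\Rnull(\alpha_{t,n}^*)$, rewrites the resulting inner products via the two Riesz representations, and then applies Cauchy--Schwarz and Lemma~\ref{lemma:derivative_bound_via_original_specific}; the net effect is the same final inequality, picking up a harmless factor of $2$ along the way. You instead subtract the two Riesz identities, test the resulting functional identity at $v=\alpha_{t,n}^*-\alphanull$, and close immediately with Cauchy--Schwarz and the $h^{-1}$ smoothing bound. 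This is cleaner, avoids invoking the minimization property of $\alphanull$ altogether, and makes the $t$-independence of the constant transparent. Both proofs rely on the same two ingredients (the Riesz representations in \eqref{eq:Riesz_H0_master}/\eqref{eq:target_functional_derivative_master} and the smoothing bound on $\Omega_h$). One small caveat worth spelling out in a full write-up: the bound $\|v_j^\us\|_{L_2(\Omega_h)}\lesssim h^{-1}\|v\|_{L_2}$ is the paper's Lemma~\ref{lemma:derivative_bound_via_original_specific}, and the integrals $\EE[Z_{t,j,h}\,v_j^\us]$ are automatically restricted to $\Omega_h$ because the indicator is built into $Z_{t,j,h}$ from \eqref{eq:Z_t}; you should cite that lemma rather than the density-based heuristic you gesture at, since the paper's lemma is stated without a density assumption.
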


The proof of Theorem~\ref{thm:Riesz_est} is deferred to Section~\ref{sec:proof_thm:Riesz_est}. Compared with the rate \eqref{eq:rn1} in Theorem~\ref{thm:m_hat_est_master} for the regression estimator $\wh g_n$, the rates for the score function estimator $\wh\alpha_n$ mainly differ in two aspects: first, a factor $h^{-1}$ precedes $\nun$ which is the consequence of the smoothing operation in the loss function $\Rnullhat$ in \eqref{eq:Rhat}; second, under the alternative hypothesis a bias $h^{-1} \| Z_{t,j,h} \|_{L_2}$ in \eqref{eq:r_alpha} is induced relative to $\alpha_{t,n}^*$.

\subsection{Centering and adjusted estimator}
\label{sec:tweaking_proof}

Recall from Section~\ref{sec:tweaking} that the adjusted estimator $\wc g_n$ in \eqref{eq:f_check} was introduced to center our test statistics through a suitable minimization condition \eqref{eq:min_con}.  In this section Proposition~\ref{prop:tweaking} first shows that a small $\deltathat$ is sufficient to arrive at $\wc g_n$ that satisfies \eqref{eq:min_con} with a small $b_n$.

\begin{assumption}[Centering test statistics]
\label{ratexx}
$n$ is large enough such that
\begin{enumerate*}[label=(\roman*)]\label{ratexx:whole}
\item\label{ratexx:con_1}
$c_{\ut,1} (\deltaf + \nun/h) \le 1$ for a large enough constant $c_{\ut,1}$;
\item\label{ratexx:con_2}
$c_{\ut,2} r_{m,n} \le 1$ for the constant $c_{\ut,2}$ in Proposition~\ref{prop:tweaking}; also, let {the} infinitesimal positive sequence $\rinf$ satisfy $\rinf=\Co(r_{m,n} \nun)$.
\end{enumerate*}
\end{assumption}

\begin{proposition}
\label{prop:tweaking}
Under Assumptions~\ref{ass:DGP} to \ref{ratexx}\ref{ratexx:con_1} and  \ref{assum_fun_class} to \ref{ass:truncation},
on an event $\calA_{\ut,1}$ with $\PP(\calA_{\ut,1}) \ge 1 - C \exp(-\pn)$, for a constant $c_{\ut,2}$, $\deltathat$ given by \eqref{eq:k_check} satisfies $|\deltathat| \le c_{\ut,2} r_{m,n}$.  If furthermore Assumption~\ref{ratexx}\ref{ratexx:con_2} holds, then on an event $\calA_{\ut,2}$ with $\PP(\calA_{\ut,2}) \ge 1 - C \exp(-\pn)$,
for a constant $c_{\ut,3}$, condition \eqref{eq:min_con} holds uniformly at all $t\in\calT_\delta$ and $b_n=c_{\ut,3} r_{m,n} (\rnull + r_{\alpha,t,\ub})$.
\end{proposition}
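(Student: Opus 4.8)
\emph{Proof plan for Proposition~\ref{prop:tweaking}.}
The argument rests on the fact that $\deltathat$ in \eqref{eq:k_check} is exactly the least-squares slope obtained by minimizing the quadratic $\deltat\mapsto\PPn\ell(\cdot;\wh g_n+\deltat\wh\alpha_n)$: equating its derivative in $\deltat$ to zero reproduces the closed form \eqref{eq:k_check} and yields the first-order condition $\langle\wh\epsilon,\wh\alpha_n\rangle_{L_2(\PPn)}=0$, where $\wh\epsilon_i=Y_i-\wc g_n(\wt\bfF_i,X_i)$. This built-in (Neyman-type) orthogonality of the fitted residuals against the estimated score is what collapses the centering error into a product of two vanishing quantities; the rest is empirical-process control plus the rates already supplied by Theorems~\ref{thm:m_hat_est_master} and \ref{thm:Riesz_est} and by Lemma~\ref{lemma:factor}. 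Throughout I work on a high-probability event on which $\|\cdot\|_{L_2(\PPn)}$ and $\|\cdot\|_{L_2(\PP)}$ are equivalent over bounded shifts of $\calFn(\rbar+1)$; the events $\calA_{\ut,1},\calA_{\ut,2}$ are intersections of $\calA_m'$, $\calA_\alpha$ and such empirical-process events, each of probability at least $1-C\exp(-\pn)$.

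\emph{Bounding $|\deltathat|$.}
Write $\deltathat=\langle Y-\wh g_n,\wh\alpha_n\rangle_{L_2(\PPn)}/\|\wh\alpha_n\|_{L_2(\PPn)}^2$. For the denominator I combine the consistency $\|\wh\alpha_n-\alphanull\|_{L_2}\le\rnull$ from \eqref{eq:r_n2_null} ($\rnull$ being small by Assumption~\ref{ratexx}\ref{ratexx:con_1}) with a non-degeneracy lower bound $\|\alphanull\|_{L_2}\gtrsim1$ from the technical conditions of Section~\ref{sec:additional_ass_sec_3}, obtaining $\|\wh\alpha_n\|_{L_2(\PPn)}^2\gtrsim1$. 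For the numerator, decompose $Y_i-\wh g_n(\cdot)=\epsilon_i+(m_0(\bfF_i,X_{i,j})-g_0(\wt\bfF_i,X_{i,j}))+(g_0-\wh g_n)(\wt\bfF_i,X_{i,j})$ with $g_0$ as in \eqref{eq:def_g0}, and bound the three terms paired with $\wh\alpha_n$: the $\epsilon$-term by a multiplier empirical process over $\calFn(\rbar+1)$, whose conditional mean is $\Co(\deltaf)$ by Lemma~\ref{lemma:factor} and Remark~\ref{rmk:misspecification}, so it is $\Co(\nun)$; the $m_0-g_0$ term by Cauchy--Schwarz and Lemma~\ref{lemma:factor}, giving $\lesssim\deltaf$; and the $g_0-\wh g_n$ term by Cauchy--Schwarz, the rate \eqref{eq:rn1} transferred to $\|\cdot\|_{L_2(\PPn)}$, and $\|\wh\alpha_n\|_\infty\le M$, giving $\lesssim r_{m,n}$. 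Summing, the numerator is $\lesssim r_{m,n}$, whence $|\deltathat|\le c_{\ut,2}r_{m,n}$.

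\emph{Verifying \eqref{eq:min_con}.}
Expanding the square loss gives, for either sign,
\begin{align*}
\PPn\ell(\cdot;\wc g_n)-\PPn\ell(\cdot;\wc g_n\pm\rinf\alpha_{t,n}^*)=\pm2\rinf\,\langle\wh\epsilon,\alpha_{t,n}^*\rangle_{L_2(\PPn)}-\rinf^2\|\alpha_{t,n}^*\|_{L_2(\PPn)}^2 ,
\end{align*}
and since the last term is nonpositive this is at most $2\rinf\,|\langle\wh\epsilon,\alpha_{t,n}^*\rangle_{L_2(\PPn)}|$. Using the orthogonality $\langle\wh\epsilon,\wh\alpha_n\rangle_{L_2(\PPn)}=0$, I replace $\alpha_{t,n}^*$ by $\alpha_{t,n}^*-\wh\alpha_n$, so it remains to prove $|\langle\wh\epsilon,\alpha_{t,n}^*-\wh\alpha_n\rangle_{L_2(\PPn)}|\lesssim r_{m,n}(\rnull+r_{\alpha,t,\ub})$ uniformly over $t\in\calT_\delta$, which then gives \eqref{eq:min_con} with $b_n=c_{\ut,3}r_{m,n}(\rnull+r_{\alpha,t,\ub})$. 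Decompose $\wh\epsilon_i=\epsilon_i+(m_0-g_0)(\cdot)+(g_0-\wc g_n)(\cdot)$. The two deterministic-gap pieces are handled by Cauchy--Schwarz, using $\|g_0-\wc g_n\|_{L_2(\PPn)}\le\|g_0-\wh g_n\|_{L_2(\PPn)}+|\deltathat|\,\|\wh\alpha_n\|_{L_2(\PPn)}\lesssim r_{m,n}$ from the previous paragraph (here Assumption~\ref{ratexx}\ref{ratexx:con_2} makes $|\deltathat|\le1$ a genuine smallness and fixes the infinitesimal step size $\rinf$), $\|m_0-g_0\|_{L_2(\PPn)}\lesssim\deltaf\lesssim r_{m,n}$, and $\|\alpha_{t,n}^*-\wh\alpha_n\|_{L_2(\PPn)}\lesssim r_{\alpha,t,n}$ from \eqref{eq:r_n2} transferred to the empirical norm; both pieces are therefore $\lesssim r_{m,n}r_{\alpha,t,n}$. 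The remaining, delicate piece $\langle\epsilon,\alpha_{t,n}^*-\wh\alpha_n\rangle_{L_2(\PPn)}$ I bound by a \emph{localized} multiplier empirical process over $\{\alpha\in\calFn(\rbar+1)\cup\{\alpha_{t,n}^*\}:\|\alpha-\alpha_{t,n}^*\|_{L_2}\le r_{\alpha,t,n}\}$, which contains $\wh\alpha_n$ by \eqref{eq:r_n2}; by the metric-entropy bound for the ReLU class under Assumption~\ref{ass:NN_scaling} this is $\lesssim r_{\alpha,t,n}\nun$ (the localization of the $\nun$-rate in \eqref{eq:nu_n_concrete}), and since $\nun\lesssim r_{m,n}$ it is also $\lesssim r_{m,n}r_{\alpha,t,n}$. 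Uniformity in $t$ is automatic under $H_0$, where $\alpha_{t,n}^*\equiv\alphanull$ is $t$-free, and under the alternative follows from the equicontinuity of $t\mapsto\alpha_{t,n}^*$ on the compact set $\calT_\delta$.

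\emph{Main obstacle.}
The crux is the stochastic term $\langle\epsilon,\alpha_{t,n}^*-\wh\alpha_n\rangle_{L_2(\PPn)}$: one must show it attains the \emph{product} order $r_{m,n}r_{\alpha,t,n}$, uniformly in $t\in\calT_\delta$, rather than the naive $\Op(r_{\alpha,t,n})$ one gets by bounding $\|\wh\epsilon\|_{L_2(\PPn)}$ crudely by a constant. This hinges on the localized multiplier empirical-process estimate over the growing ReLU class together with the fact that $\nun\lesssim r_{m,n}$ (indeed $r_{m,n}=c_{m,1}'(\nun+\deltaf)$) under Assumption~\ref{ass:NN_scaling}. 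The surrounding steps --- the denominator lower bound, the passages between $\|\cdot\|_{L_2(\PP)}$ and $\|\cdot\|_{L_2(\PPn)}$, and absorbing the factor-proxy and possible-misspecification gaps via Lemma~\ref{lemma:factor} and Remark~\ref{rmk:misspecification} --- are comparatively routine.
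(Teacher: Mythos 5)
Your proof is correct and, in its first half, follows essentially the same decomposition as the paper: writing $\deltathat$ as a ratio, lower-bounding the denominator by $\|\alphanull\|_{L_2}^2\gtrsim1$ via Theorem~\ref{thm:Riesz_est}, and decomposing the numerator into an $\epsilon$-multiplier piece (multiplier empirical process), a factor-proxy piece ($m_0-g_0$, giving $\deltaf$), and a neural-network estimation piece ($g_0-\wh g_n$, giving $r_{m,n}$). One small imprecision: your stated $\Co(\nun)$ for the $\epsilon$-piece should really be $\Co(\nun+\deltaf)$, since $\epsilon$ is not conditionally centered given $(\wt\bfF,X_j)$ and the bias $\epsilon-\wt\epsilon$ contributes $\deltaf$ --- but this is absorbed into $r_{m,n}$ anyway.

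In the second half your route is genuinely cleaner than the paper's. You make explicit the exact orthogonality $\langle\wh\epsilon,\wh\alpha_n\rangle_{L_2(\PPn)}=0$ arising from the first-order condition defining $\deltathat$, compute $\PPn\ell(\cdot;\wc g_n)-\PPn\ell(\cdot;\wc g_n\pm\rinf\alpha_{t,n}^*)=\pm2\rinf\langle\wh\epsilon,\alpha_{t,n}^*\rangle_{L_2(\PPn)}-\rinf^2\|\alpha_{t,n}^*\|^2_{L_2(\PPn)}$, drop the nonpositive quadratic term, and use orthogonality to swap $\alpha_{t,n}^*$ for $\alpha_{t,n}^*-\wh\alpha_n$. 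Everything then reduces to $|\langle\wh\epsilon,\alpha_{t,n}^*-\wh\alpha_n\rangle_{L_2(\PPn)}|\lesssim r_{m,n}r_{\alpha,t,n}$, attained by Cauchy--Schwarz (for the deterministic pieces of $\wh\epsilon$) plus a localized multiplier process (for the $\epsilon$ piece). The paper instead compares $\wc g_n$ against $\wc g_n\pm\rinf\wh\alpha_n$ via $\PPn\ell(\cdot;\wc g_n)\le\PPn\ell(\cdot;\wc g_n\pm\rinf\wh\alpha_n)$, then expands the remaining difference $J_{n,t}=\PPn\ell(\cdot;\wc g_n\pm\rinf\wh\alpha_n)-\PPn\ell(\cdot;\wc g_n\pm\rinf\alpha_{t,n}^*)$ into four terms $J_{n,t,1},\dots,J_{n,t,4}$ involving $\wh\delta_{\alpha,n,t}=\wh\alpha_n-\alpha_{t,n}^*$, each bounded separately by Proposition~\ref{prop:m_n_hat_ratio_rate_master}. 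The two paths are algebraically equivalent --- the paper's inequality $\PPn\ell(\cdot;\wc g_n)-\PPn\ell(\cdot;\wc g_n\pm\rinf\wh\alpha_n)\le0$ is, by orthogonality, actually the identity $-\rinf^2\|\wh\alpha_n\|^2_{L_2(\PPn)}$ --- but yours avoids the intermediate comparison and the bookkeeping of the $J_{n,t,k}$ terms with the awkward $(\deltathat\pm\rinf)$ prefactors. Your treatment of uniformity over $t$ via equicontinuity on $\calT_\delta$ is a little looser than the paper's use of the VC-type condition in Assumption~\ref{ass:u_t_n_misc} inside Proposition~\ref{prop:m_n_hat_ratio_rate_master}, but either suffices.
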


The proof of Proposition~\ref{prop:tweaking} is deferred to Section~\ref{sec:proof_prop:tweaking}.  In the proposition, under the null the tolerance $b_n=r_{m,n} \rnull$, which is faster than $n^{-1/2}$ under appropriate conditions and will ensure the asymptotic normality of our test statistics.  Under the alternative, $b_n$ is not necessarily faster than $n^{-1/2}$, but is still fast enough to ensure {consistency under the local alternatives}.

We will start working mostly with the adjusted estimator $\wc g_n$ from now on.   The next theorem is the counterpart of Theorem~\ref{thm:m_hat_est_master} and shows that $\wc g_n$ and its smoothed derivative estimator $\wc g_{n,j}^\us$ maintain convergence rates similar to the unadjusted ones.
\begin{theorem}
\label{thm:m_check_est_master}
Under Assumptions~\ref{ass:DGP} to \ref{ratexx} and \ref{assum_fun_class} to \ref{ass:truncation}, for $\pn$, $\nun$ and $\deltaf$ in \eqref{eq:nu_n_concrete}, on an event $\calA_m$ with $\PP(\calA_m) \ge 1 - C \exp(-\pn)$, and for a constant $c_{m,1}$,
\begin{align}
\label{eq:m_check_rate_master}
\textstyle \left[ \int \{ \wc g_n(\wt\bff,x_j ) - m_0(\bff,\bx ) \}^2 \rmd\PP \right]^{1/2} \le c_{m,1} ( \nun + \deltaf ) .
\end{align}
In addition, its smoothed derivative estimator $\wc g_{n,j}^\us$ satisfies, on the same event $\calA_m$ and with a possibly different constant $c_{m,2}$,
\begin{align}
\textstyle \left[ \int_{\Omega_h} \{ \wc g_{n,j}^\us(\wt\bff, x_j)- m_{0,j}(\bff, x_j) \}^2 \rmd\PP \right]^{1/2} \le c_{m,2}  h^{-1} ( \nun + \deltaf ) + r_{\textup{b},m,j} .
\label{eq:m_check_rate_master_deriv}
\end{align}
\end{theorem}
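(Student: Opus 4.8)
The plan is to treat the adjusted estimator $\wc g_n=\wh g_n+\deltathat\,\wh\alpha_n$ as a controlled perturbation of $\wh g_n$ and to transfer the rates already established in Theorem~\ref{thm:m_hat_est_master} by the triangle inequality, so that essentially no new analytic work is needed. First I would take $\calA_m=\calA_m'\cap\calA_{\ut,1}$, the intersection of the event of Theorem~\ref{thm:m_hat_est_master} and the event of Proposition~\ref{prop:tweaking}; a union bound then gives $\PP(\calA_m)\ge 1-C\exp(-\pn)$. On $\calA_m$ I would bound
\begin{align*}
\textstyle \left[\int\{\wc g_n-m_0\}^2\rmd\PP\right]^{1/2} \le \left[\int\{\wh g_n-m_0\}^2\rmd\PP\right]^{1/2} + |\deltathat|\cdot\left[\int\wh\alpha_n^2\rmd\PP\right]^{1/2},
\end{align*}
apply \eqref{eq:rn1} to get $[\int\{\wh g_n-m_0\}^2\rmd\PP]^{1/2}\le r_{m,n}=c_{m,1}'(\nun+\deltaf)$, apply Proposition~\ref{prop:tweaking} to get $|\deltathat|\le c_{\ut,2}\,r_{m,n}$, and use $[\int\wh\alpha_n^2\rmd\PP]^{1/2}\le\|\wh\alpha_n\|_\infty\le M$ since every element of $\calFn(\rbar+1)$ is truncated at level $M$ by $T_M$ in \eqref{eq:base_NN_class}. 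Combining these yields \eqref{eq:m_check_rate_master} with, e.g., $c_{m,1}=c_{m,1}'(1+c_{\ut,2}M)$.

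For the smoothed derivative bound \eqref{eq:m_check_rate_master_deriv}, I would invoke Lemma~\ref{lemma:derivative_bound_via_original}, which---as remarked after Theorem~\ref{thm:m_hat_est_master}---applies to the smoothed derivative of \emph{any} estimator and, via the convolution identity \eqref{eq:smooth_master} together with Cauchy--Schwarz and Fubini, converts an $L_2(\PP)$ bound on $g-m_0$ into an $L_2(\Omega_h)$ bound on $g_j^\us-m_{0,j}^\us$ at the cost of a factor $h^{-1}$ times a constant depending only on $\|\dot K\|$. Applying it to $g=\wc g_n$ with the bound just derived, and then adding the deterministic smoothing bias $\|m_{0,j}^\us-m_{0,j}\|_{L_2(\Omega_h)}=r_{\textup{b},m,j}$ via the triangle inequality, gives \eqref{eq:m_check_rate_master_deriv} with $c_{m,2}$ a suitable multiple of $c_{m,1}$ and $\|\dot K\|$.

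I do not expect a serious obstacle here, as Theorem~\ref{thm:m_hat_est_master}, Proposition~\ref{prop:tweaking} and Lemma~\ref{lemma:derivative_bound_via_original} carry all the analytic weight; the only points needing care are (i) that the perturbation $\deltathat\,\wh\alpha_n$ is controlled \emph{uniformly over} $t\in\calT_\delta$---which is precisely why Proposition~\ref{prop:tweaking} states $|\deltathat|\le c_{\ut,2}\,r_{m,n}$ with a bound free of $t$---and (ii) that $\wh\alpha_n$ need not be consistent for any target in this argument, only bounded, which the truncation operator in $\calFn(\rbar+1)$ supplies pointwise, so the event $\calA_\alpha$ of Theorem~\ref{thm:Riesz_est} is not required. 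As in the proof of Theorem~\ref{thm:m_hat_est_master}, the discrepancy between the neural-network target $g_0(\wt\bff,x_j)=m_0(\bH^\dagger\bff,x_j)$ and $m_0(\bff,x_j)$ itself is absorbed into $\deltaf$ through Lemma~\ref{lemma:factor} and requires no additional treatment.
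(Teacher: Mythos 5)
Your argument is correct, and it is genuinely different from (and more direct than) the paper's. You decompose $\wc g_n - m_0 = (\wh g_n - m_0) + \deltathat\,\wh\alpha_n$ and apply the $L_2$ triangle inequality, controlling the first term via Theorem~\ref{thm:m_hat_est_master}, $|\deltathat|$ via the first half of Proposition~\ref{prop:tweaking}, and $\|\wh\alpha_n\|_{L_2}\le M$ via the truncation at level $M$ built into $\calFn(\rbar+1)$ in \eqref{eq:base_NN_class}; the resulting constant $c_{m,1}=c_{m,1}'(1+c_{\ut,2}M)$ matches what \eqref{eq:m_check_rate_master} requires, and the derivative bound then follows from Lemma~\ref{lemma:derivative_bound_via_original}, exactly as the paper does for that part. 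The paper instead notes that $|\deltathat|\le 1$ on $\calA_{\ut,1}$, so that $\wc g_n$ lies in the augmented class $\calFnbar(\rbar+1)=\{g+\deltat\alpha: g,\alpha\in\calFn(\rbar+1),\ |\deltat|\le 1\}$, proves a covering-number comparison $\log N(\calFnbar(\rbar+1),L_2(Q),\tau)\lesssim\log N(\calFn(\rbar+1),L_2(Q),\tau)$, and then re-runs the empirical-process argument of Theorem~\ref{thm:m_hat_est_master} over this enlarged class. Your triangle-inequality route is shorter and sidesteps any covering-number bookkeeping. What the paper's detour buys is the explicit containment $\wc g_n\in\calFnbar(\rbar+1)$ together with the covering-number control on $\calFnbar(\rbar+1)$, which is then invoked verbatim in the proof of Proposition~\ref{prop:I_tn1_rate} to control a centered empirical process evaluated at the random function $\wc g_n$ uniformly over a function class containing it; your argument does not produce this by-product. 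Two small points: your concern about uniformity of $\deltathat$ over $t\in\calT_\delta$ is moot since by \eqref{eq:k_check} $\deltathat$ is a single random scalar with no $t$-dependence, and although it is true that your step uses $\wh\alpha_n$ only through boundedness, the event $\calA_{\ut,1}$ you rely on for the $|\deltathat|$ bound is itself constructed using Theorem~\ref{thm:Riesz_est}, so the consistency of $\wh\alpha_n$ is not entirely avoided in the background.
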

The proof of Theorem~\ref{thm:m_check_est_master} is deferred to Section~\ref{sec:proof_thm:f_check_est}.

\subsection{Properties of the conditional screening tests}
\label{sec:test}

This section gives the result on the asymptotic null distributions of the test statistics.  The results for consistency against the local alternatives are given in Section~\ref{sec:alter_supp}.

\begin{theorem}
\label{thm:main_null_master}
Suppose that Assumptions~\ref{ass:DGP} to \ref{ratexx} and \ref{assum_fun_class} to \ref{ass:truncation} hold, and in addition condition
\begin{enumerate*}[label=($\ast$)]
\item\label{thm:main_null_master:con_1}
holds: $\deltaf + h^{-2} (\nun^2+\deltaf^2) = \Co(n^{-1/2})$.
\end{enumerate*}
Then,  with $\calT_\delta$ and tests given in Section~\ref{sec:uniform}, we have
\begin{align}
\label{eq:H0_normality_L2}
    \text{for all fixed}~t\in\calT_\delta, \frac{ \sqrt{n} }{t} \frac{\wc\eta_t^\us(\wc g_n) }{\| \epsilon \alphanull\|_{L_2} }\rightarrow_{d}\calZ , \quad \frac{\wh Z}{\| \epsilon \alphanull\|_{L_2}} \rightarrow_{d} |\calZ|, \quad  \frac{\wh\chi^2}{\| \epsilon \alphanull\|_{L_2}^2} \rightarrow_{d} \chi_1^2 ,
\end{align}
where $\calZ$ stands for a standard normal random variable and $\chi_1^2$ is a chi-square random variable with one degree of freedom.  Moreover, we are free to replace $\| \epsilon \alphanull\|_{L_2}$ in \eqref{eq:H0_normality_L2} above by its empirical counterpart $\|\wh\epsilon \wh\alpha_n\|_{L_2(\PPn)}$ from Section~\ref{sec:uniform}, to conclude that
\begin{align}
\label{eq:H0_normality_Ln}
    \text{for all fixed}~t\in\calT_\delta, \frac{ \sqrt{n} }{t} \frac{ \wc\eta_t^\us(\wc g_n) }{\| \wh\epsilon \wh\alpha_n \|_{L_2(\PPn)} }\rightarrow_{d} \calZ ,~~\frac{\wh Z}{ \| \wh\epsilon \wh\alpha_n \|_{L_2(\PPn)} } \rightarrow_{d} |\calZ|,~~\frac{\wh\chi^2}{ \| \wh\epsilon \wh\alpha_n \|_{L_2(\PPn)}^2 } \rightarrow_{d} \chi_1^2 .
\end{align}
\end{theorem}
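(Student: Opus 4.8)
\noindent\emph{Proof strategy.} The goal is to reduce all three test statistics to one and the same i.i.d.\ sum. Concretely, the plan is to prove the uniform (over $t\in\calT_\delta$) linearization
\[
\frac{\sqrt n}{t}\,\wc\eta_t^\us(\wc g_n)\;=\;S_n+\op(1),\qquad S_n:=\frac{1}{\sqrt n}\sum_{i=1}^n \epsilon_i\,\alphanull(\wt\bfF_i,X_{i,j}),
\]
and then invoke a triangular-array CLT for $S_n$, the continuous mapping theorem for the sup- and square-statistics, and Slutsky together with $\|\wh\epsilon\wh\alpha_n\|_{L_2(\PPn)}\to_p\|\epsilon\alphanull\|_{L_2}$. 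The essential simplification afforded by $H_0$ is that $m_{0,j}\equiv 0$, hence $g_{0,j}^\us\equiv 0$, the smoothing bias $r_{\textup{b},m,j}=0$, the alternative signal $Z_{t,j,h}\equiv 0$ so that $r_{\alpha,t,\ub}=0$ in \eqref{eq:r_alpha}, and $\alpha_{t,n}^*=\alphanull$ in $L_2(\PP)$ (both are the Riesz representer of $v\mapsto\int_{\Omega_h}v_j^\us\rmd\PP$; compare \eqref{eq:Riesz_H0_master}). Thus every $t$-dependent factor collapses to a constant, and all that remains to control is a single $n^{-1/2}$-scale expansion whose error terms must be shown to be $\op(1)$ uniformly on the bounded set $\calT_\delta$.

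\emph{Step 1 (reduction of the statistic).} Using the structure of the bounded truncation $\Psi$ (Lipschitz, equal to the identity near $0$, $\Psi(0)=0$, $|\Psi(u)|\le|u|$), I Taylor-expand $\exp\{t\Psi(u)\}-1=t\Psi(u)+O(t^2\Psi(u)^2)$ to obtain, uniformly in $t\in\calT_\delta$,
\[
\wc\eta_t^\us(\wc g_n)=t\,\PPn\!\big[\ind\{X_j\in\calB_h\}\,\wc g_{n,j}^\us\big]+O\!\big(\PPn[\ind\{X_j\in\calB_h\}(\wc g_{n,j}^\us)^2]+\PPn[\ind\{|\wc g_{n,j}^\us|>c\}\,|\wc g_{n,j}^\us|]\big).
\]
By \eqref{eq:m_check_rate_master_deriv} in Theorem~\ref{thm:m_check_est_master}, under $H_0$ one has $\{\int_{\Omega_h}(\wc g_{n,j}^\us)^2\rmd\PP\}^{1/2}\lesssim h^{-1}(\nun+\deltaf)$ (as $r_{\textup{b},m,j}=0$), so, after Cauchy--Schwarz/Markov and passing to $\PP$, the $O(\cdot)$ remainder is $\Op(h^{-2}(\nun+\deltaf)^2)=\op(n^{-1/2})$ by the condition $\deltaf+h^{-2}(\nun^2+\deltaf^2)=\Co(n^{-1/2})$. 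For the linear piece I split $\PPn=(\PPn-\PP)+\PP$. Since $g\mapsto g_j^\us$ is a bounded linear operator and $(g_0)_j^\us\equiv 0$ under $H_0$, the centered part equals $(\PPn-\PP)[\ind\{X_j\in\calB_h\}(\wc g_n-g_0)_j^\us]$, an empirical-process increment over the image of the neural-network class $\calFn(\rbar+1)$ (here $\wc g_n=\wh g_n+\deltathat\wh\alpha_n$ with $|\deltathat|\lesssim r_{m,n}$ by Proposition~\ref{prop:tweaking}) under that operator, localized to $L_2$-radius $\lesssim h^{-1}r_{m,n}$; a localized maximal inequality with metric-entropy scale $\pn$ then gives $\sqrt n(\PPn-\PP)[\cdots]\lesssim h^{-1}r_{m,n}\sqrt{\pn}+\pn/\sqrt n=\op(1)$ under the same condition (using that $\pn$ matches $n\nun^2$ up to logarithmic factors and $h\le 1$). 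The population part is identified by \eqref{eq:Riesz_H0_master}: $\PP[\ind\{X_j\in\calB_h\}\wc g_{n,j}^\us]=\int_{\Omega_h}\wc g_{n,j}^\us\rmd\PP=\int(\wc g_n-g_0)\,\alphanull\,\rmd\PP$.

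\emph{Step 2 (debiasing identity and CLT).} It remains to show $\sqrt n\int(\wc g_n-g_0)\alphanull\rmd\PP=S_n+\op(1)$. Expanding the square loss in the approximate-minimization inequality \eqref{eq:min_con} (valid at every $t\in\calT_\delta$ with $b_n=c_{\ut,3}r_{m,n}\rnull$ under $H_0$), dividing by $\rinf$, and using $\|\alphanull\|_\infty\le M$ together with $\rinf=\Co(r_{m,n}\nun)\lesssim b_n$, yields the empirical near-orthogonality $|\PPn[\wh\epsilon\,\alphanull]|\lesssim r_{m,n}\rnull$, hence $\sqrt n|\PPn[\wh\epsilon\,\alphanull]|=\op(1)$. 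Writing $Y_i=m_0(\bfF_i,X_{i,j})+\epsilon_i$ and $\wh\epsilon_i=\epsilon_i-(\wc g_n-g_0)(\wt\bfF_i,X_{i,j})+\rho_i$ with $\rho_i=m_0(\bfF_i,X_{i,j})-g_0(\wt\bfF_i,X_{i,j})$ the factor-proxy gap, which by Lemma~\ref{lemma:factor} satisfies $\EE\rho_i^2\lesssim\deltaf^2$, I rearrange to
\[
\int(\wc g_n-g_0)\alphanull\rmd\PP=\PPn[\epsilon\,\alphanull]+\PPn[\rho\,\alphanull]-\PPn[\wh\epsilon\,\alphanull]-(\PPn-\PP)\big[(\wc g_n-g_0)\,\alphanull\big].
\]
Multiplying by $\sqrt n$: the first term is $S_n$; $\sqrt n\,\PPn[\rho\,\alphanull]=\sqrt n\,\EE[\rho\alphanull]+\Op(\|\rho\alphanull\|_{L_2})=\op(1)$ since $\sqrt n\,\deltaf=\op(1)$; $\sqrt n\,\PPn[\wh\epsilon\,\alphanull]=\op(1)$ as shown; and $\sqrt n(\PPn-\PP)[(\wc g_n-g_0)\alphanull]=\op(1)$ by a localized maximal inequality over $\{(g-g_0)\alphanull\}$ with $\|\wc g_n-g_0\|_{L_2}\lesssim r_{m,n}$ and entropy scale $\pn$ — again absorbed by the condition — all uniformly in $t$. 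Combining Steps 1--2 gives the linearization. Since the summands of $S_n$ are i.i.d.\ with mean zero (by Assumption~\ref{ass:DGP} and the deferred assumptions, under which $\EE[\epsilon\,\alphanull(\wt\bfF,X_j)]=0$, any residual $O(\deltaf)$-slack being absorbed above), variance $\|\epsilon\alphanull\|_{L_2}^2$, and satisfy a Lyapunov condition ($\alphanull$ bounded by $M$, $\epsilon$ with enough moments), $S_n\rightarrow_d\|\epsilon\alphanull\|_{L_2}\,\calZ$, which is the fixed-$t$ claim in \eqref{eq:H0_normality_L2}.

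\emph{Step 3 (uniform statistics, variance replacement, main obstacle).} As $S_n$ is free of $t$ and the remainder is $\op(1)$ uniformly on $\calT_\delta$, we get $\wh Z/\|\epsilon\alphanull\|_{L_2}=|S_n|/\|\epsilon\alphanull\|_{L_2}+\op(1)\rightarrow_d|\calZ|$ and $\wh\chi^2/\|\epsilon\alphanull\|_{L_2}^2=S_n^2/\|\epsilon\alphanull\|_{L_2}^2+\op(1)\rightarrow_d\chi_1^2$ by the continuous mapping theorem, completing \eqref{eq:H0_normality_L2}. For \eqref{eq:H0_normality_Ln}, expanding $\wh\epsilon_i\wh\alpha_n=(\epsilon_i+\rho_i-(\wc g_n-g_0)(\wt\bfF_i,X_{i,j}))(\alphanull+(\wh\alpha_n-\alphanull))$ and using $\|\wc g_n-m_0\|_{L_2}\lesssim r_{m,n}$, $\|\wh\alpha_n-\alphanull\|_{L_2}\lesssim\rnull$ (Theorem~\ref{thm:Riesz_est}), $\EE\rho^2\lesssim\deltaf^2$, the boundedness of $\alphanull,\wh\alpha_n$, the law of large numbers, and maximal inequalities for the $(\PPn-\PP)$ cross terms, gives $\|\wh\epsilon\wh\alpha_n\|_{L_2(\PPn)}^2\to_p\|\epsilon\alphanull\|_{L_2}^2>0$; Slutsky then transfers \eqref{eq:H0_normality_L2} to \eqref{eq:H0_normality_Ln}. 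The main obstacle is Step 1's centered empirical-process term and the analogous one in Step 2: controlling $(\PPn-\PP)$ over the \emph{growing} network class $\calFn(\rbar+1)$ (and over its image under the smoothed-derivative operator $g\mapsto g_j^\us$, which carries an extra $h^{-1}$) and showing each is $\op(n^{-1/2})$ — this is exactly where the condition $\deltaf+h^{-2}(\nun^2+\deltaf^2)=\Co(n^{-1/2})$, i.e.\ ``$\sqrt n\times(\text{derivative rate})^2\to 0$'' together with ``$\sqrt n\,\deltaf\to 0$'', is consumed, and it relies on localized maximal inequalities with the correct dependence on the entropy scale $\pn$ and on the localization radius. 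Keeping all remainders $\op(1)$ \emph{uniformly} over $t\in\calT_\delta$ is a secondary concern that is painless under $H_0$ because the factor $\exp\{t\,g_{0,j}^\us\}$ degenerates to $1$ and $r_{\alpha,t,\ub}=0$.
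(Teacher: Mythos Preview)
Your strategy is correct and arrives at the same destination as the paper, but the organization differs in a way worth noting. The paper does \emph{not} Taylor-expand $\exp\{t\Psi(\cdot)\}$ at the empirical level. Instead it writes
\[
\wc\eta_t^\us(\wc g_n)=\underbrace{\{\wc\eta_t^\us(\wc g_n)-\wt\eta_t^\us(\wc g_n)\}}_{I_{t,n,1}}+\underbrace{\{\wt\eta_t^\us(\wc g_n)-\wt\eta_t^\us(g_0)\}}_{I_{t,n,2}}+\wt\eta_t^\us(g_0),
\]
where $\wt\eta_t^\us$ is the population analogue of $\wc\eta_t^\us$. The term $I_{t,n,1}$ is a centered empirical process of the \emph{full} truncated nonlinear summand (bounded envelope, so no $h^{-2}$ in the envelope), and $I_{t,n,2}$ is linearized at the \emph{population} level via the directional derivative and the Riesz identity \eqref{eq:Riesz_H0_master}; the second-order remainder there is $\lesssim(|t|+t^2)h^{-2}r_{m,n}^2$ directly from Lemma~\ref{lemma:derivative_bound_via_original_specific}. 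The debiasing identity you extract from \eqref{eq:min_con} is exactly what the paper packages as conditions \eqref{c1_master}--\eqref{c2_master} in Lemma~\ref{lem:In2_1_master}, yielding $\langle\wc g_n-m_0,\alpha_{t,n}^*\rangle-\PPn[\epsilon\alpha_{t,n}^*]=r(t;r_{m,n}r_{\alpha,t,n})$. So the two routes are isomorphic; the paper's ordering keeps $\Psi$ intact until it evaluates at $g_0$ (where it disappears by Assumption~\ref{ass:truncation}), which avoids your separate $\PPn[\ind\{|\wc g_{n,j}^\us|>c\}|\wc g_{n,j}^\us|]$ term and keeps the quadratic remainder at the population level.

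Two small points where you gloss more than the paper does. First, the paper's i.i.d.\ sum uses $\wt\epsilon_i=Y_i-\EE[Y_i\,|\,\wt\bfF_i,X_{i,j}]$ rather than $\epsilon_i$, because $\EE[\epsilon\,|\,\wt\bfF,X_j]$ need not vanish; your $S_n$ is therefore not exactly centered. You correctly note the $O(\deltaf)$ slack is absorbed by $\sqrt n\,\deltaf\to 0$, and the paper makes the same swap explicit via \eqref{eq:cond_exp_stability} after the CLT (replacing $\|\wt\epsilon\alphanull\|_{L_2}$ by $\|\epsilon\alphanull\|_{L_2}$). Second, your Step~1 requires $\PPn[(\wc g_{n,j}^\us)^2\ind_{\calB_h}]=\Op(h^{-2}r_{m,n}^2)$, i.e.\ concentration of an empirical mean over a class with envelope $O(h^{-1})$; this is available (cf.\ the covering-number machinery behind Proposition~\ref{prop:g_n_hat_rate} and Lemma~\ref{lem:square_deviation}), but it is an extra step that the paper's decomposition sidesteps by putting the quadratic piece in the population term $I_{t,n,2}$. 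Neither point is a gap; both approaches consume condition~\ref{thm:main_null_master:con_1} in exactly the places you identify.
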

The proof of Theorem~\ref{thm:main_null_master} is deferred to Section~\ref{sec:Proof_Thm_thm:main_null_master}.  The extra condition~\ref{thm:main_null_master:con_1} in Theorem~\ref{thm:main_null_master} is natural due to the presence of the $\sqrt{n}$ scaling factor in the our tests and is mild.  Note that if the bandwidth $h$ is held as a constant, then condition~\ref{thm:main_null_master:con_1} simply reduces to $\deltaf=\Co(n^{-1/2})$, and $\nun=\Co(n^{-1/4})$ which is in turn implied by $\kappa>1/2$ for $\kappa$ in Assumption~\ref{ass:NN_scaling} on the hierarchical composition model.

\section{Simulation studies}\label{sec:simulations}
In this section, we conduct simulations to assess the size and power of our conditional screening tests across different scenarios.  Recall that we summarize the implementations of these tests in Algorithm~\ref{algorithmhighd}.  We consistently employ the Quartic/biweight kernel.

We choose $t=1$ for the fixed-$t$ test statistic, and simply set $\calT_\delta = \{-1.25,-0.5,0.5,1.25\}$ in the sup statistic and the square statistic in \eqref{sup} and \eqref{square} respectively.  Then, we implement the sup statistic in \eqref{sup} as
$\mbox{sup}_{t \in T_{\delta}}| {\wh\eta_t^\us(\wc g_n)} |/t$, and the square statistic as $\wh\chi^2_{\delta} = n \sum_{t\in \calT_\delta} \left\{ {\wh\eta_{t}^\us(\wc g_n)} \right\}^2 / t^2$ (we take $w(t)=1$ in \eqref{square}).

We set $d=200$ and $d=400$ when $n=256$ and $n=512$ respectively, and the full regression model as $Y=m_0^*(\bfF,X_1,X_3)+\epsilon^*$ (see $m_0^*$ defined below \eqref{hypo_alt}).  To illustrate the performance of our conditional screening test, we have designed $m_0^*$ so that a significant portion of its variation is accounted for by the factors.
We further consider a nonlinear and a linear model of $m_0^*$.  Specifically, under the null hypothesis, we set:
\begin{equation}
\begin{gathered}
 \mbox{nonlinear}: m_0^*(\cdot) = m_0^{*(null)}(\cdot) = \mbox{sin}(f_1+u_1) + \log(8+f_2) \times \log(8+f_3) + \exp(-f_4^2/2), \\
 \mbox{linear}: m_0^*(\cdot) = m_0^{*(null)}(\cdot) = f_1-f_2+f_3+f_4-f_5 .
\end{gathered}
\label{eq:sim_null}
\end{equation}
For our screening test, we always select $X_3$ as the variable of interest.  Accordingly, under the alternative hypothesis we add signals in $X_3$ to $m_0^{*(null)}$ above:
\begin{equation}
\begin{gathered}
\mbox{nonlinear}:  m_0^*(\cdot) = m_0^{*(null)}(\cdot) + X_3^2 /4, \quad
\mbox{linear}: m_0^*(\cdot) = m_0^{*(null)}(\cdot)+ X_3/16.
\end{gathered}
\label{eq:sim_alt}
\end{equation}
The signal under the linear model is very weak and allows us to discern different test settings.
 While the signal under the nonlinear model may not appear weak at first, here the \textit{average} derivative $\EE m_{0,j}=0$, so detecting departure from the null critically depends on the higher-order, nonlinear effect of our MGF test statistics.  Under the nonlinear model $r=4$ while under the linear model $r=5$.  In both models we set $\rbar=r$.

Under the nonlinear model, $u_1$ is incorporated for a richer structure that specifically leads to the working regression function $m_0(\bfF,X_3)$ being different from the full regression function $m_0^*(\bfF,X_1,X_3)$ by design; see Remark~\ref{rmk:misspecification}.  We incorporate $u_1 = X_1 - \bB_{1\cdot}\bfF$ instead of $X_1$ directly because $u_1$ will be enforced to be independent of $X_3$ under screening.  Moreover, under the null hypothesis, the selection of $X_3$ is purely for clarity: under the nonlinear model, we could conduct our test on any $X_j$ for any $j\in\{2,\dots,d\}$, and under the linear model, for any $j\in\{1,\dots,d\}$.

For the high-dimensional predictor $\bX$ in the factor model~\eqref{eq:factor_observation}, the factor loading matrix $\bB$ is generated with i.i.d.\,Unif$[-\sqrt{3},\sqrt{3}]$ entries; the factor $\bfF$ and the idiosyncratic terms $\bmu$ both have i.i.d.~$\calN(0,0.6)$ entries.  We further draw the noise as $\epsilon^*\sim\calN(0,0.3)$.  The quantities $\bB$, $\bfF$, $\bmu$ and $\epsilon^*$ are all drawn independently.  We pre-train $\bW$ with samples of size $100$.

We set the hyper-parameters for the neural network fitting as follows: for the regression estimator $\wh g_n$ in \eqref{eq:def_wh_m_n_highd}, we employ neural networks with $L=5$ hidden layers, a common width of $k_{\textup{0}}=16$ per layer, and the ReLU activation function.  Training proceeds over $800$ epochs, utilizing a batch size of $256$ and a constant learning rate of $0.005$.  For the score function estimator $\wh\alpha_n$ in \eqref{eq:alpha_hat}, we maintain the same neural network fitting parameters, except that we set $L=2$, batch size to $64$ and the number of epochs to $400$.  We employ early stopping as the only regularization technique, and terminate training if there's no improvement after 20 epochs ($\text{patience}=20$) on a validation set.

We conduct our tests at significance levels of either $5\%$ or $10\%$ under both the nonlinear and the linear models, resulting in a total of four combinations summarized in Tables~\ref{tab:farlog1} to \ref{tab:farlinear2}. In each table we present in alternating rows the sizes and powers (the latters in parentheses) of the tests in \eqref{test:fixed_t}, \eqref{sup} and \eqref{square}, and of the same tests but without centering (that is, tests employing the non-adjusted estimator $\wh g_n$), under different sample sizes. The columns, arranged from left to right, correspond to the four different bandwidths.  Each entry is calculated based on $500$ Monte Carlo repetitions.

\begin{table}[htbp]
    \centering
    \begin{tabular}{c| c|c|c|c|c|c}
        \hline\hline
    \multirow{9}{*}{$n=256$}&   & $h$ & 0.1 & 1.0 & 1.5 & 2.0 \\ \hline
   & \multirow{2}{*}{fixed-$t$ test} & Non-centered & 0.16 (0.67) & 0.14 (0.57) & 0.13 (0.56) & 0.12 (0.51) \\
	&	    &	Centered & 0.11 (0.69) & 0.10 (0.60) & 0.08 (0.55) & 0.07 (0.50) \\ \cline{2-7}
   &  \multirow{2}{*}{sup test}		    &	Non-centered & 0.26 (0.97) & 0.21 (0.93) & 0.19 (0.92) & 0.18 (0.87) \\
   &	    &Centered & 0.19 (0.97) & 0.16 (0.93) & 0.13 (0.91) & 0.13 (0.85) \\ \cline{2-7}
   & \multirow{2}{*}{square test}		&    Non-centered & 0.16 (0.86) & 0.14 (0.79) & 0.14 (0.74) & 0.12 (0.65) \\
	&	    & Centered & 0.11 (0.82) & 0.09 (0.78) & 0.09 (0.70) & 0.09 (0.63) \\
		    \hline
    \end{tabular}
 \begin{tabular}{c|c|c|c|c|c|c}
 \multirow{7}{*}{$n=512$}  &		   \multirow{2}{*}{fixed-$t$ test} &  Non-centered & 0.11 (0.88) & 0.10 (0.80) & 0.10 (0.77) & 0.08 (0.70) \\
		&	& Centered & 0.09 (0.88) & 0.07 (0.83) & 0.07 (0.76) & 0.05 (0.70) \\ \cline{2-7}
	 &  \multirow{2}{*}{sup test}		& Non-centered & 0.19 (1.00) & 0.16 (0.98) & 0.15 (0.97) & 0.11 (0.96) \\
	 &		& Centered & 0.15 (1.00) & 0.12 (0.97) & 0.11 (0.97) & 0.08 (0.95) \\  \cline{2-7}
     &  \multirow{2}{*}{square test}	   &	Non-centered & 0.12 (0.96) & 0.10 (0.91) & 0.09 (0.87) & 0.08 (0.85) \\
 	 &							 &  Centered & 0.08 (0.96) & 0.07 (0.91) & 0.05 (0.87) & 0.05 (0.86) \\ \hline \hline
    \end{tabular}
    \caption{Performance summary of our test statistics under the nonlinear model in \eqref{eq:sim_null} (for size under the null) and \eqref{eq:sim_alt} (for power under the alternative) at the significance level $5\%$. Specifically, we provide the size and power (the latter displayed in parentheses) of our tests under various combinations of the test statistic (fixed-$t$, sup, or squared statistic), sample size, bandwidth, and the use of either the non-centered (employing the non-adjusted estimator $\widehat g_n$) or the centered (employing the adjusted $\widecheck g_n$) test statistics. Each value in the table represents the average over $500$ Monte Carlo repetitions.}
      \label{tab:farlog1}
\end{table}

\begin{table}[htbp]
    \centering
  \begin{tabular}{c|c|c|c|c|c|c}\hline\hline
  	\multirow{9}{*}{$n=512$}&   & $h$ & 0.1 & 1.0 & 1.5 & 2.0 \\ \hline
    & \multirow{2}{*}{fixed-$t$ test} & 	Non-centered & 0.16 (0.84) & 0.14 (0.78) & 0.14 (0.76) & 0.14 (0.76) \\
			& & Centered & 0.08 (0.87) & 0.09 (0.82) & 0.08 (0.80) & 0.07 (0.79) \\
        \cline{2-7}
	&  \multirow{2}{*}{sup test} &	Non-centered & 0.25 (0.85) & 0.21 (0.79) & 0.18 (0.77) & 0.18 (0.76) \\
		&  &	Centered & 0.18 (0.88) & 0.13 (0.83) & 0.11 (0.80) & 0.11 (0.80) \\
	      \cline{2-7}
	& \multirow{2}{*}{square test} &	Non-centered & 0.14 (0.74) & 0.14 (0.72) & 0.15 (0.72) & 0.14 (0.72) \\
		&  &	Centered & 0.09 (0.77) & 0.09 (0.75) & 0.08 (0.73) & 0.07 (0.75) \\
		\hline\hline
    \end{tabular}
    \caption{   Performance summary of our test statistics under the linear model in \eqref{eq:sim_null} (for size under the null) and \eqref{eq:sim_alt} (for power under the alternative) at the significance level $5\%$.}
    \label{tab:farlinear1}
\end{table}

 \begin{table}[htbp]
 	\centering
 	\begin{tabular}{c| c|c|c|c|c|c}
 		\hline\hline
 		\multirow{9}{*}{$n=512$}&   & $h$ & 0.1 & 1.0 & 1.5 & 2.0 \\ \hline
		 & \multirow{2}{*}{fixed-$t$ test}  & Non-centered & 0.17 (0.90) & 0.18 (0.84) & 0.18 (0.81) & 0.14 (0.76) \\
		&  & 	Centered & 0.13 (0.91) & 0.13 (0.86) & 0.13 (0.81) & 0.13 (0.77) \\ \cline{2-7}
        & \multirow{2}{*}{sup test}  & Non-centered & 0.27 (1.00) & 0.24 (0.99) & 0.24 (0.98) & 0.20 (0.97) \\
		& & 	Centered & 0.24 (1.00) & 0.19 (0.99) & 0.19 (0.98) & 0.17 (0.98) \\
		\cline{2-7}
		&  \multirow{2}{*}{square test}  & Non-centered & 0.19 (0.99) & 0.18 (0.96) & 0.16 (0.93) & 0.13 (0.90) \\
		&&	Centered & 0.13 (0.98) & 0.12 (0.95) & 0.12 (0.92) & 0.11 (0.90) \\
		\hline\hline
    \end{tabular}
    \caption{ The same caption as Table~\ref{tab:farlog1} except for significant level 10\%.}
     \label{tab:farlog2}
\end{table}

\begin{table}[htbp]
   \centering
   \begin{tabular}{c|c|c|c|c|c|c}\hline\hline
  	\multirow{9}{*}{$n=512$}&   & $h$ & 0.1 & 1.0 & 1.5 & 2.0 \\ \hline
  	& \multirow{2}{*}{fixed-$t$ test} &
    	Non-centered & 0.23 (0.89) & 0.21 (0.85) & 0.20 (0.82) & 0.21 (0.82) \\
    & & 	Centered & 0.14 (0.93) & 0.16 (0.88) & 0.13 (0.87) & 0.12 (0.85) \\ \cline{2-7}
	& \multirow{2}{*}{sup test} &  	Non-centered & 0.34 (0.90) & 0.26 (0.86) & 0.26 (0.82) & 0.26 (0.83) \\
			&& Centered & 0.26 (0.93) & 0.21 (0.89) & 0.17 (0.88) & 0.16 (0.86) \\\cline{2-7}
	& \multirow{2}{*}{square test}  &	Non-centered & 0.21 (0.83) & 0.21 (0.80) & 0.20 (0.78) & 0.21 (0.79) \\
	& &		Centered & 0.15 (0.86) & 0.15 (0.84) & 0.12 (0.83) & 0.13 (0.82) \\
		\hline\hline
    \end{tabular}
    \caption{ The same caption as Table~\ref{tab:farlinear1} except for significant level 10\%.}
    \label{tab:farlinear2}
\end{table}

We make a few observations from the results in the tables.  First, centering generally improves the size of our tests without affecting their power. Moreover, under the null, sizes arrive at the nominal level if we further increase the bandwidth; this is especially evident for the sup test, which naturally tends to reject more often (under both the null and alternative). This shows the improvement of the power of the centered, smoothed test compared to the initial test \eqref{eq:eta_t_naive} if both tests are calibrated to have the same size.  Of course, if we increase $h$ more than $1.5$, we expect the biases will arrive {under the alternative}.  In the nonlinear case (Tables~\ref{tab:farlog1} and  \ref{tab:farlog2}), the ensemble tests by aggregating various $t$ also outperform the fixed-$t$ test in power.

\section{Empirical Applications}
\label{sec:emp_app}

\subsection{Asset Pricing}
\label{sec:asset_pricing}

In our first empirical analysis, we examine a comprehensive dataset containing both returns and specific characteristics of firms.  Twelve monthly returns in $2021$ for the 100 largest financial institutions are drawn from the CRSP database as our response variable, along with a set of $d=48$ firm-specific characteristics as predictors; the resulting sample size is {$n=1200$}. The dataset's foundation is credited to \cite{chen2021open}.

We select the smoothing bandwidth from the sequence $\{0.4, 0.6, \dots, 1.8, 2.0\}$ using the bandwidth selection algorithm suggested in Remark~\ref{rmk:bandwidth}.  We set $\rbar=5$ by the conventional practice in finance, the hidden size $k_{\textup{0}}=24$, and the learning rate $\tau=10^{-3}$; we keep the other neural network fitting parameters identical to Section~\ref{sec:simulations}.

In Figure~\ref{assetpricing} we plot for each of the $44$ continuous predictors the intervals whose right ends are the test statistics normalized by their corresponding critical values in \eqref{test:fixed_t}, \eqref{sup} and \eqref{square}.  Specifically, we plot the intervals
$$ [ 0, n^{1/2} |\wc\eta_t^\us(\wc g_n)| / ( t z_{1-\alpha/2} \|\wh\epsilon \wh\alpha_n\|_{L_2(\PPn)} ) ], \quad [ 0, \wh Z/(z_{1-\alpha/2}\|\wh\epsilon \wh\alpha_n\|_{L_2(\PPn)})], \quad [ 0, \sqrt{\wh\chi^2}/(\sqrt{\chi^2_{1,1-\alpha}}\|\wh\epsilon \wh\alpha_n\|_{L_2(\PPn)})],$$
for the fixed-$t$, sup, and square tests respectively.  We also plot the same intervals but for the non-centered versions of these tests.  In all cases, an interval covering one indicates statistical significance.  The results suggest that the presence of some idiosyncratic terms related to tail risk, such as ReturnSkewCAPM and DownsideBeta, exert additional influence on returns and complement the explanatory power of the five factors.

\begin{figure}
    \includegraphics[height=20cm, width = 1.0\textwidth]{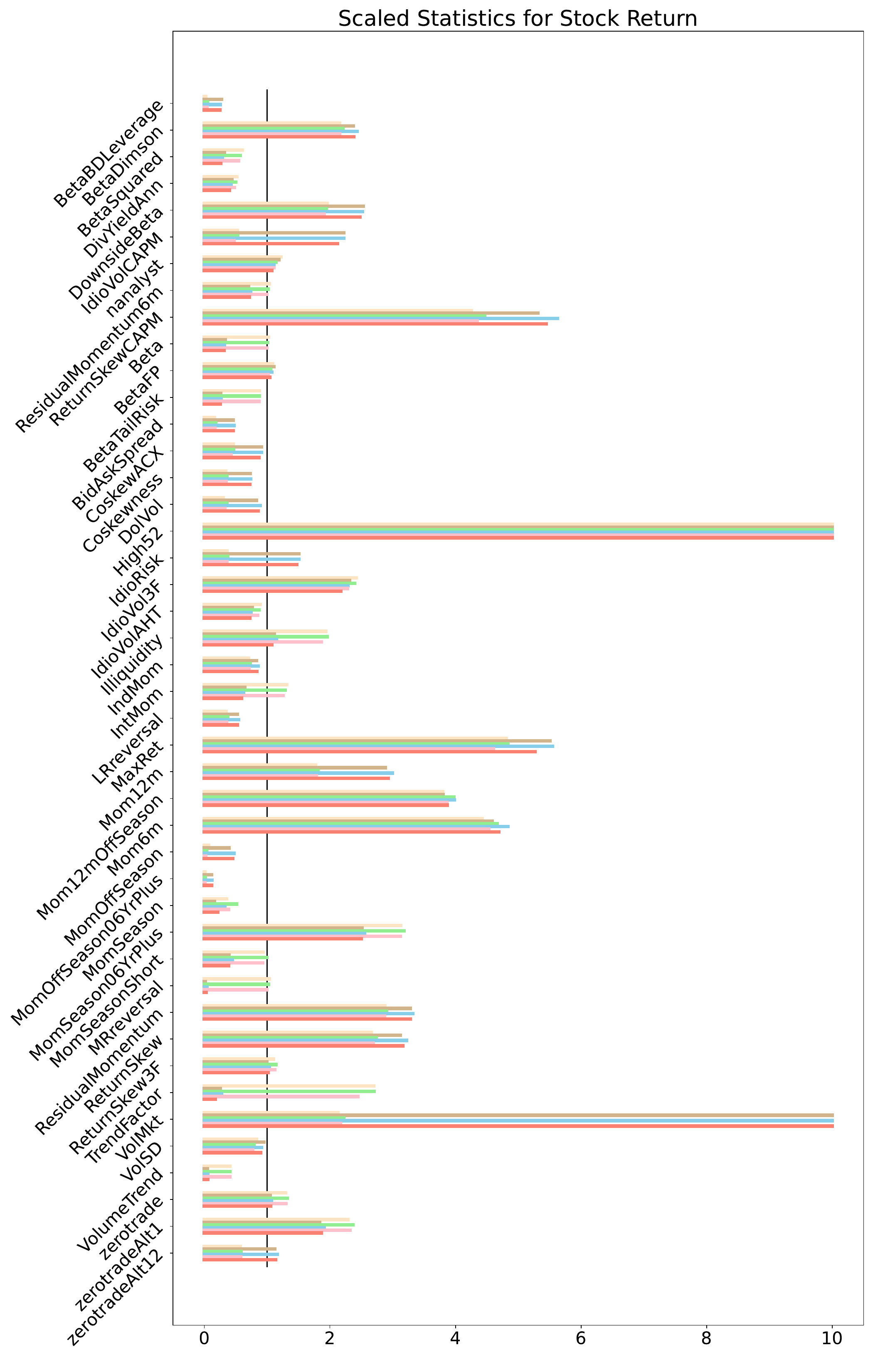}
\caption{Asset Pricing Dataset: Significance of {predictors/idiosyncratic terms} at the $5\%$ significance level. Intervals covering one (indicated by the vertical black line) correspond to the significant {predictors/idiosyncratic terms}. For each variable, six intervals are plotted in the order of: fix-$t$ test, sup test and square test, and within each test the {\color{black}non-centered version is plotted before the centered version}.\color{black}}
\label{assetpricing}
\end{figure}

\subsection{Macroeconomics Time Series}
\label{sec:applications:FRED-MD}

In this section, to substantiate Section~\ref{sec:asset_pricing}, we illustrate our conditional screening test with another empirical application, this time on the macroeconomics dataset FRED-MD introduced in \cite{mccracken2016fred} and also later studied in \cite{FanGu2023factor}.  The dataset collects $d=127$ monthly U.S.\,macroeconomic variables, such as unemployment rate and real personal income, starting from 1959/01.  It is shown in \cite{mccracken2016fred} that these variables can be explained well by several latent factors.

Our analysis setup is in general similar to Section~C in \cite{FanGu2023factor}. Our target variables are UEMP15T26,  TB3SMFFM or TB6SMFFM, and we aim to identify which variables contribute to predicting the target variables beyond the latent factors.  The variable UEMP15T26 represents the civilians unemployed for $15-26$ weeks. The variable TB3SMFFM (TB6SMFFM) measures the 3-month (6-month) treasury bill rate minus the effective federal funds rate. For each target response variable $y_{t+1}$ in \{UEMP15T26,  TB3SMFFM, TB6SMFFM\}, we regress $y_{t+1}$ on $\bx_t\in\RR^{127}$ where $\bx_t$ is the vector of all variables at the previous month. We choose the $n=330$ valid sample pairs $\{\bx_t, y_{t+1}\}$ between January 1980 and July 2022. We employ the same neural network fitting parameters as Section~\ref{sec:asset_pricing} except that we revert back to a hidden size $k_0=16$ as in our simulation studies.

For each variable, we plot the same six intervals derived from our tests as in Section~\ref{sec:asset_pricing} and we recall that an interval covering one indicates statistical significance.
Our analysis, based on the variables categorized by \cite{mccracken2016fred}, underscores the predominant impact of a limited set of variables on the three response variables.   As there are 127 variables, it takes four sub-figures to display a figure that is similar Figure~\ref{assetpricing} for each given response variable.
Figures~\ref{FRED10} {to} \ref{FRED13} in the appendix display the intervals, with the response variable being UEMP15T26 (civilian unemployment).
Notably, most variables fail to reach the $5\%$ significance level, except for some variables such as RETAILx, NDMANEMP, and COMPAPFFx. Additional significant variables for the response variables TB3SMFFM and TB6SMFFM are illustrated from Figures~\ref{FRED1} to \ref{FRED8}. As for TB3SMFFM, there are many significant variables. For example, the variables such as IPDMAT in the output and income variable group and DTCTHFNM in the prices group contribute additionally to the factors. TB6SMFFM demonstrates slightly higher susceptibility to variables compared to TB3SMFFM, which, in turn, is notably influenced by variables related to the stock market, employment, interest rates, and money and credit. Our overall findings validate our regression model based on a latent factors plus sparse idiosyncratic terms structure.

\section{Conclusion and further work}
\label{sec:conclusion}

We have introduced a conditional variable screening test for non-parametric regression using deep neural networks; the inputs to the networks are obtained with the help of a factor model that further enables us to handle very high-dimensional predictors.  In our test statistics, we employ high-quality estimators of the partial derivatives of the non-parametric regression function, which could be of independent interest.  To demonstrate the versatility of our test, we apply it to assess the adequacy of non-parametric factor regression.  An intriguing avenue for further exploration involves extending this framework to dependent data, other statistical machine learning losses, and simultaneous testing for multiple variables.  Relevant examples for the latter direction include the $\ell_{\infty}$ statistics proposed by \cite{chen2022inference} and the $\ell_2-\ell_{\infty}$ type statistics discussed by \cite{li2022ell}.

This paper focuses on the popular feed-forward networks with the ReLU activation function.  The derivative irregularity is not limited to the ReLU activation function, and hence, addressing this issue can benefit various neural network classes.  Moreover, in our approach, we apply the smoothing procedure after we have obtained a neural network estimator.  This is different from other derivative regularizations, for instance, Sobolev training \citep{Czarneck2017sobolev}, that employ roughness penalties during optimization, which could form a potential future topic.  Last but not least, we could also potentially conduct different tests for testing higher- or mixed-derivatives.  For instance, testing the monotonicity in a variable in the regression is equivalent to testing the sign of the associated partial derivative, a task already studied in the context of an one-dimensional predictor by, for instance, \cite{BowmanJonesGijbels1998testing}, \cite{GhosalArusharka2000testing} and \cite{HallHuang2001nonparametric}.

{\footnotesize \spacingset{1.0}
\bibliographystyle{apalike}
\bibliography{biball}
}

\appendix
\newpage

\addtocontents{toc}{\protect\thispagestyle{empty}}
\thispagestyle{empty}

\newpage

\pagestyle{plain}

The supplement is organized as follows.  Section~\ref{app_sec:misc} collects definitions, additional assumptions, and some miscellaneous results and discussions. Section~\ref{app_sec:proof_sec_3} provides supporting details and preliminary proofs for the main theorems in Section~\ref{sec:regression_function}, while Section~\ref{app_sec:proof_sec_4} presents the proofs for the said main theorems and the power for the fixed-$t$ test. Section~\ref{sec:lowd_supp} presents the low-dimensional regime as a special case of the high-dimensional regime. Section~\ref{sec:app_results_more} provides additional graphs for the empirical application in Section~\ref{sec:applications:FRED-MD}.  We introduce the additional shorthand notation that $\left\langle \cdot, \cdot \right\rangle$ represents the inner product with respect to the distribution $\PP$ (that is, $\left\langle g_1, g_2 \right\rangle = \int g_1 g_2 \rmd\PP$).

\section{Definitions, additional assumptions, miscellaneous results and discussions}
\label{app_sec:misc}

\subsection{Definitions}
\label{sec:def_supp}

\begin{definition}[H\"{o}lder class]
\label{def:Holder}
\normalfont Let $\beta$ and $C_\calH$ be two positive real numbers and let $\lfloor\beta\rfloor$ be the largest integer \textit{strictly} less than $\beta$.  The $(\beta, p, C_\calH)$-smoothness H\"{o}lder class on $\RR^p$ is the collection of functions $f:\RR^p\rightarrow\RR$ whose mixed derivatives $(\partial^{\lfloor\beta\rfloor}/\partial z_1^{\omega_1} \dots z_p^{\omega_p}) f(\bz)$, where $\bz=(z_1,\dots,z_p)^\top$, $\omega_1,\dots,\omega_p\ge 0$ and $\omega_1+\dots+\omega_p=\lfloor\beta\rfloor$, all exist, and moreover such derivatives are H\"{o}lder-continuous with constant $C_\calH$ and exponent $\beta-\lfloor\beta\rfloor$, or more precisely, $\left| (\partial^{\lfloor\beta\rfloor}/\partial z_1^{'\omega_1} \dots z_p^{'\omega_p}) f(\bz') -  (\partial^{\lfloor\beta\rfloor}/\partial z_1^{\omega_1} \dots z_p^{\omega_p}) f(\bz) \right| \le C_\calH \|\bz'-\bz\|^{\beta-\lfloor\beta\rfloor}$, $\forall \bz, \bz'$.
\end{definition}

\begin{definition}[sub-Gaussian norm; see, for instance, Definition~1 in \cite{Adamczak2008}]
\label{def:subGaussian}
Let the function $\psi_2:[0,\infty)\rightarrow[0,\infty)$ be $\psi_2(x) = \exp(x^2)-1$.  Then, the sub-Gaussian norm $\|\cdot\|_{\psi_2}$ of a random variable $X$ is $\|X\|_{\psi_2} = \inf\{\lambda>0: \EE \psi_2(|X|/\lambda)\le 1\}$.
\end{definition}

\begin{definition}[VC-subgraph class; see, for instance, Section~2.6.2 in \cite{vVW1996}]
\label{def:VC_subgraph}
The subgraph of a function $f:\calX\rightarrow\RR$ is the subset of $\calX\times\RR$ given by $\{(x,t): x\in\calX, t<f(x)\}$.  A collection $\calF$ of measurable functions is called a VC-subgraph class, or simply a VC-class, if the collection $\calS$ of the subgraphs corresponding to all functions $f\in\calF$ form a VC class of sets in $\calX\times\RR$.  Then, we define $V_{\calF}$ for the function class $\calF$ as the VC-index of the associated collection $\calS$ of subgraphs.  We refer the readers further to, for instance, Section~2.6.1 in \cite{vVW1996} for the definitions of VC classes of sets and the accompanying VC-index.
\end{definition}

\begin{definition}[VC-type of functions; see, for instance, the beginning of Section~2.1 in \cite{GineMason2007AoS}]
\label{def:VC}
Let $(S,\calS)$ be a measurable space.  A class $\calG$ of measurable functions on $(S,\calS)$ is \textit{VC-type} with respect to an envelope $G$, if for some $A\ge 3$ and index $v\ge 1$, the covering number $N(\calG,L_2(Q),\tau)$ satisfies
\begin{align}
\label{eq:covering_number_bound_master}
N( \calG, L_2(Q), \tau  ) \le \left( \dfrac{A \, \|G\|_{L_2(Q)}}{\tau} \right)^{v}, \quad 0<\tau\le \|G\|_{L_2(Q)},
\end{align}
for every probability measure $Q$ on $\calS$ for which $G\in L_2(Q)$.
\end{definition}

Finally we recap a definition coming out from recent deep neural network approximation theory.  It is well known that the optimal minimax rate for estimating functions in the $(\beta, \dbar, C_\calH)$-H\"{o}lder class (Def.~\ref{def:Holder}) scales with the sample size $n$ as $n^{-{\beta}/(2\beta+\dbar)}$ \citep{Stone1982}, which is slow when the dimension $\dbar$ is large compared to $\beta$.  To mitigate this issue, \cite{schmidt2020nonparametric} and \cite{KohlerLanger2021} showed the excellent approximation power of the neural network class $\calFn(\dbar)$ in \eqref{eq:base_NN_class} to the \textit{hierarchical composition model} $\calH(\dbar, l, \calP, C_\calH)$ defined in Def.~\ref{def:Hier_comp}.  This model consists of functions that are compositions in $l\in\NN$ layers of individual functions in various $(\beta, p, C_\calH)$-H\"{o}lder classes, where the $(\beta, p)$ tuples belong to a set $\calP \subset [1, \infty) \times \NN$ (for simplicity we assume a common $C_\calH$ across the composition functions).  Then, under the scaling suggested by Proposition~3.4 in \cite{fan2022noise} and recorded in Assumption~\ref{ass:NN_scaling}, the neural network class $\calF(\dbar)$ can approximate the class $\calH(\dbar, l, \calP, C_\calH)$ adaptively at a rate $(L k_{\textup{0}})^{-2\kappa}$ up to $\log$ factors, where $\kappa$ is the worst-case dimension-adjusted smoothness measure $\kappa\equiv\min_{(\beta, p)\in\calP} \beta/p$.  This rate depends only on the \textit{individual} composition functions, and not on the ambient dimension $\dbar$.

\begin{definition}
\label{def:Hier_comp}
The hierarchical composition model $\calH(\dbar, l, \calP, C_\calH)$ of functions from $\RR^\dbar$ to $\RR$, where $\dbar, l \in \NN$ and $\calP\subset[1, \infty) \times {\NN}$, is defined as follows. For $l=1$,
\begin{align*}
    & \calH(\dbar, 1, \calP, C_\calH)=\big\{ h: \RR^\dbar \rightarrow \RR: h(\bx)=g\left(x_{\pi(1)}, \ldots, x_{\pi(p)}\right) \text {, where }~g: \RR^p \rightarrow \RR\\
    & \quad \text{is $(\beta, p, C_\calH)$-H\"{o}lder for some $(\beta, p) \in\calP$ and $C_\calH<\infty$}, ~\text{and}~\pi: \{1,\dots,p\}\rightarrow\{1,\dots,\dbar\} \big\} .
\end{align*}
For $l>1$, $\calH(\dbar, l, \calP, C_\calH)$ is defined recursively as
\begin{align*}
&\calH(\dbar, l, \calP, C_\calH) = \big\{ h: \RR^\dbar \rightarrow \RR: h(\bx)=g\left(f_1(\bx), \ldots, f_p(\bx)\right),~\text{where}~g: \RR^p \rightarrow \RR \\
& \quad \text{is $(\beta, p, C_\calH)$-H\"{o}lder for some $(\beta, p) \in\calP$ and $C_\calH<\infty$},~\text{and each}~{f_k} \in \calH(\dbar, l-1, \calP, C_\calH) \big\}.
\end{align*}
\end{definition}

\subsection{Additional technical assumptions}
\label{sec:additional_ass_sec_3}

\begin{assumption}[Additional smoothness of $m_0$ and conditions on noise]\label{assum_fun_class}
	The true regression function $m_0(\bff,x_j)$ is Lipschitz in the first argument, precisely in the sense that $|m_0(\bff,x_j)-m_0(\bff',x_j)| \le C_{\textup{L}} {\|\bff-\bff'\|_2}$ for a common positive constant $C_{\textup{L}}$ for all $x_j$, $\bff$ and $\bff'$; in addition, for a positive constant $M_\infty'$, the derivative of $m_0$ in the last argument, namely $m_{0,j}$, exists everywhere and satisfies $\sup_{(\bff,x_j)\in \RR^{r+1}} |m_{0,j}(\bff,x_j)| < M_\infty'$.  The noise $\epsilon= Y - m_0(\bfF,X_j)$ is sub-Gaussian (see Def.~\ref{def:subGaussian}) with sub-Gaussian norm $C_\epsilon<\infty$.
\end{assumption}

\begin{assumption}[Diversified projection matrix]
	\label{ass:W}
	The diversified projection matrix $\bW=(\textup{W}_{j,k})_{1\le j\le d,1\le k\le\rbar}\in\RR^{d\times\rbar}$ is exogenous, so is independent of $(Y_i, \bfF_i,\bX_i)$, $i=1,\dots,n$. For some constants $c_{\textup{W},1}$, $c_{\textup{W},2}$, and $c_{\textup{W},3}$, $\max_{j,k}\left|\textup{W}_{j,k}\right| \le c_{\textup{W},1}$, and the minimum and maximum singular values of $\bH$ (recall that $\bH=d^{-1}\bW^\top \bB$), denoted by $v_{\min}(\bH)$ and $v_{\max}(\bH)$ respectively, satisfy $0<c_{\textup{W},2} \le v_{\min}(\bH) \le v_{\max}(\bH) \le c_{\textup{W},3}$.
	Finally,
	\begin{align}
		\label{eq:cond_exp_stability}
		\EE[ \{ \EE[Y|\wt\bfF,X_j] - m_0(\bfF,X_j) \}^2 ] \lesssim \deltaf^2 .
	\end{align}
\end{assumption}
Assumption~\ref{ass:W} above imposes conditions on $\bW$ beyond exogeneity to ensure in particular $\bH^\dagger \wt\bfF \approx \bfF$.
Moreover, Eq.~\eqref{eq:cond_exp_stability} in Assumption~\ref{ass:W} is a high-level condition regarding the two different conditional expectations $\EE[Y|\wt\bfF,X_j]$ and (from \eqref{eq:reg_model}) $m_0(\bfF,X_j) = \EE[Y|\bfF,X_j]$.  This condition is necessary because while our regression model \eqref{eq:reg_model} involves $m_0$, our estimation procedures instead rely on the (sample of the) diversified factors $\wt\bfF$ and $X_j$, so $\EE[Y|\wt\bfF,X_j]$ becomes a more appropriate measure of the ``center'' of $Y$.  Intuitively, because $\bfF,X_j$ and $\wt\bfF,X_j$ should generate similar $\sigma$-algebras, $\EE[Y|\wt\bfF,X_j]$ should be close to $m_0(\bfF,X_j)$ as characterized by \eqref{eq:cond_exp_stability} \citep{WlodzimierzWlodzimierz1992}.  We will provide conditions in Lemma~\ref{lemma:cond_exp_stability} for the validity of \eqref{eq:cond_exp_stability}.

\begin{assumption}[Smoothing kernel]
	\label{ass:kernel}
	The kernel $K:\RR\rightarrow\RR$ is compactly supported on $[-1,1]$, satisfies $\int_{-1}^1 K(a) \rmd a=1$ and vanishes at the boundary points $-1$ and $1$, and is continuously differentiable with uniformly bounded derivative $\dot K$.  The bandwidth $h$ satisfies $h\le b$ (and thus the interior sets $\calB_h$, $\calI_h$ and $\Omega_h$ defined in Section~\ref{sec:test_stat} are not degenerate).
\end{assumption}

\begin{assumption}[Miscellaneous conditions for estimating $\alpha_{t,n}^*$]
	\label{ass:u_t_n_misc}
	The class of functions $\{\alpha_{t,n}^*:t\in\calT_\delta\}$ indexed by $t$ is of VC-type (see Definition~\ref{def:VC}) with an index of at most $V_{\calFn(\rbar+1)}$, and is uniformly bounded in magnitude by $M_\infty<\infty$.  Moreover, for each $k\in\{1,\dots,\rbar\}$, for $\bH_{k\cdot}$ the $k$-th row of $\bH$, $\PP(|\bH_{k\cdot} \bfF| > c_b b/2)\le\deltaf^2/\rbar$.
\end{assumption}

In Assumption~\ref{ass:u_t_n_misc}, $V_{\calFn(\rbar+1)}$ is the VC-index of the subgraphs associated with the functions in the neural network class $\calFn(\rbar+1)$, which is well-defined because $\calFn(\rbar+1)$ turns out to be a VC-\textit{subgraph} class of functions as in Definition~\ref{def:VC_subgraph}.  On the other hand, the VC-\textit{type} assumption on the class $\{\alpha_{t,n}^*:t\in\calT_\delta\}$ refers to Definition~\ref{def:VC}; the ``index'' therein is not explicitly defined in terms of subgraphs and Assumption~\ref{ass:u_t_n_misc} simply states that this index is the \textit{number} $V_{\calFn(\rbar+1)}$.  Technically, the VC-subgraph and the VC-type classes of functions are related: a VC-subgraph class of functions with a VC-index $v$ admits a similar covering number as a VC-type class of functions that admits the same index $v$.  However, conceptually, these two classes of functions are not equal.  Next, in Assumption~\ref{ass:u_t_n_misc}, $\|\bH_{k\cdot}\|_2$ is upper bounded by $v_{\max}(\bH)$ which is in turn bounded by a constant as stipulated in Assumption~\ref{ass:W}.  Thus, because moreover $\bfF$ is coordinate-wise bounded, the tail condition on $\bH_{k\cdot} \bfF$ in Assumption~\ref{ass:u_t_n_misc} is satisfied if $c_b$ is large enough.

\begin{assumption}[Truncation function]\label{ass:truncation}
	The truncated function $\Psi$ is twice differentiable, bounded, and have bounded first- and second-order derivatives $\dot\Psi$ and $\ddot\Psi$ respectively.
	In addition, $\Psi$ satisfies $\Psi(t)=t$ for $t\in[- \|K\|_{L_1} M_\infty', \|K\|_{L_1} M_\infty']$ (where $M_\infty'$ is defined in Assumption~\ref{assum_fun_class} and $\|K\|_{L_1}$ is the $L_1$ norm of the kernel $K$).
\end{assumption}
For an example of a truncation function $\Psi$ satisfying Assumption~\ref{ass:truncation}, we can simply ``insert'' a linear segment of length $2\gamma$, where $\gamma>\|K\|_{L_1} M_\infty'$, around the origin into a centered and scaled logistic function; specifically, this $\Psi$ is given by $\Psi(t)=t$ for $|t|\le \gamma$ and $\Psi(t) = \text{sgn}(t) \gamma - 1/2 + 1/[ 1 + \exp\{-4(t- \text{sgn}(t) \gamma)\}]$ for $|t|>\gamma$.

\subsection{Miscellaneous results}
\label{sec:results_supp}

\begin{lemma}
\label{lemma:derivative_bound_via_original_specific}
Under Assumption~\ref{ass:kernel}, for generic functions $g=g(\wt\bff,x_j):\RR^{\rbar+1}\rightarrow\RR$ and $m=m(\bff,x_j):\RR^{r+1}\rightarrow\RR$, there exists a constant $c^\us$ (that does not depend on $g$ or $m$) such that
\begin{equation}
\begin{gathered}
    \textstyle \int_{\Omega_h} g_j^\us(\wt\bff,x_j)^2\,\rmd\PP \le \frac{(c^\us)^2}{h^2} \int g(\wt\bff,x_j)^2 \rmd\PP, \\ \textstyle \int_{\Omega_h} \{ g_j^\us(\wt\bff,x_j) - m_j^\us(\bff,x_j) \}^2\,\rmd\PP \le \frac{(c^\us)^2}{h^2} \int \{ g(\wt\bff,x_j) - m(\bff,x_j) \}^2 \rmd\PP .
\end{gathered}
\label{eq:derivative_bound_via_original}
\end{equation}
\end{lemma}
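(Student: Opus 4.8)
The plan is to reduce the second (distribution-level) inequality to the first (which is itself a special case with $m\equiv 0$ and $\wt\bff=\bff$), and then to prove the first inequality pointwise in $(\wt\bff,x_j)$ before integrating. The key observation is that from the integral representation~\eqref{eq:smooth_master}, namely $g_j^\us(\wt\bff,x_j) = \frac{1}{h}\int_{-1}^1 g(\wt\bff, x_j - ah)\,\dot K(a)\,\rmd a$, the smoothed derivative at any point is a weighted average (against the fixed, compactly supported, bounded weight $\dot K$) of values of $g$ over the window $[x_j - h, x_j + h]$. So I would first write $g_j^\us(\wt\bff,x_j)^2 \le \frac{1}{h^2}\big(\int_{-1}^1 |\dot K(a)|\,\rmd a\big)\big(\int_{-1}^1 |\dot K(a)|\,g(\wt\bff,x_j-ah)^2\,\rmd a\big)$ by Cauchy--Schwarz, using $\|\dot K\|_{L_1([-1,1])} \le 2\|\dot K\|_\infty < \infty$ from Assumption~\ref{ass:kernel}. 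Set $c^\us \equiv \|\dot K\|_{L_1([-1,1])}$ (or $2\|\dot K\|_\infty$), which depends only on the kernel and not on $g$ or $m$.

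Next I would integrate this pointwise bound over $\Omega_h$ against $\PP$. Since the integrand on $\Omega_h$ involves only $g^2$ evaluated at shifted arguments, I would apply Tonelli to swap the order of integration:
\begin{align*}
\int_{\Omega_h} g_j^\us(\wt\bff,x_j)^2\,\rmd\PP
&\le \frac{c^\us}{h^2}\int_{-1}^1 |\dot K(a)|\Big(\int_{\Omega_h} g(\wt\bff,x_j - ah)^2\,\rmd\PP\Big)\rmd a .
\end{align*}
The crucial point is that on $\Omega_h = \{X_j \in \calB_h\} = \{X_j \in [-b+h, b-h]\}$ the shifted argument $x_j - ah$ ranges (as $a$ ranges over $[-1,1]$) within $[-b, b]$, i.e., stays in the full support $[-b,b]$; hence the inner integral is bounded by $\int g(\wt\bff, x_j)^2\,\rmd\PP$ — but one must be careful, since the measure $\PP$ is being evaluated at the original point $(\wt\bff, X_j)$, not at the shifted point. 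The right way to handle this is to note that $\int_{\Omega_h} g(\wt\bff, x_j - ah)^2\,\rmd\PP$ is $\EE[ g(\wt\bfF, X_j - ah)^2 \ind\{X_j \in \calB_h\}]$; changing variables in the $X_j$-marginal (whose conditional density I would invoke, or alternatively bound the ratio of densities — this is where Assumption~\ref{ass:DGP} on bounded support and, if needed, a density regularity assumption enters) maps this to an integral of $g(\wt\bff, x_j)^2$ over a sub-range of the support against a comparable measure. Then $\int_{-1}^1 |\dot K(a)|\,\rmd a = c^\us$ absorbs the outer integral, yielding the stated constant $(c^\us)^2/h^2$.

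The second inequality in~\eqref{eq:derivative_bound_via_original} follows by the identical argument applied to the difference function. Writing $D(\wt\bff, x_j) \equiv g(\wt\bff,x_j) - m(\bff,x_j)$ — with the understanding that $\bff = \bff(\wt\bff)$ is determined through $\bff = \bH^\dagger \wt\bff$ or whatever the operative identification is — one has $D_j^\us = g_j^\us - m_j^\us$ by linearity of the smoothing-then-differentiating operation in~\eqref{eq:smooth_master}, and the same Cauchy--Schwarz-plus-Tonelli estimate gives $\int_{\Omega_h} D_j^\us(\wt\bff,x_j)^2\,\rmd\PP \le \frac{(c^\us)^2}{h^2}\int D(\wt\bff,x_j)^2\,\rmd\PP$. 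The main obstacle I anticipate is precisely the bookkeeping in the change-of-variables step: the smoothing integrates $g$ over a translated window, but the outer expectation weights by the law of $X_j$ at the untranslated point, so a clean bound requires either that translation within $[-b,b]$ does not inflate the measure by more than a constant (controlled by a bounded-density-ratio condition, which the paper's bounded-support and smoothness assumptions should supply) or a more careful argument restricting attention to $\Omega_h$ where the window stays inside the support. Everything else is routine.
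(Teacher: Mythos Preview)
Your proposal is correct and follows essentially the same route as the paper: apply Cauchy--Schwarz to the representation $g_j^\us(\wt\bff,x_j)=\frac{1}{h}\int_{-1}^1 g(\wt\bff,x_j-ah)\dot K(a)\,\rmd a$, then swap the order of integration via Fubini/Tonelli, and reduce the second inequality to the first by linearity of the smoothing operator applied to the difference. The only cosmetic difference is that the paper splits the Cauchy--Schwarz as $\dot K\cdot g$ (giving the constant $2\|\dot K\|_{L_2}^2$) rather than your $|\dot K|^{1/2}\cdot|\dot K|^{1/2}g$ (giving $\|\dot K\|_{L_1}^2$); and the measure-shift issue you carefully flag---bounding $\int_{\Omega_h} g^2(\wt\bff,x_j-ah)\,\rmd\PP$ by $\|g\|_{L_2}^2$---is handled in the paper by the terse assertion $\int_{-1}^1\{\int_{\Omega_h} g^2(\wt\bff,x_j-ah)\,\rmd\PP\}\,\rmd a\le 2\|g\|_{L_2}^2$, so your caution there is well placed rather than a defect in your argument.
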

\begin{proof}
We only prove the first half of \eqref{eq:derivative_bound_via_original}; the second half will follow by a similar argument.  We have, starting from \eqref{eq:smooth_master} and the Cauchy-Schwarz inequality,
\begin{align*}
    & \textstyle \int_{\Omega_h} g_j^\us(\wt\bff,x_j)^2\, \rmd\PP = \textstyle \int_{\Omega_h} \left\{ \frac{1}{h} \int_{-1}^1 g(\wt\bff,x_j-ah) \dot K(a) \rmd a \right\}^2\, \rmd\PP \\
    & \textstyle \le \frac{1}{h^2} \int_{\Omega_h} \left\{ \int_{-1}^1 g^2(\wt\bff,x_j-ah) \rmd a \right\} \left\{ \int_{-1}^1 \dot K(a)^2 \rmd a \right\}\, \rmd\PP \lesssim \frac{1}{h^2} \int_{\Omega_h}  \left\{ \int_{-1}^1 g^2(\wt\bff,x_j-ah) \rmd a \right\}\, \rmd\PP \\
    & \textstyle = \frac{1}{h^2} \int_{-1}^{1} \left\{ \int_{\Omega_h} g^2(\wt\bff,x_j - a h)\, \rmd\PP \right\} \rmd a \le \frac{2}{h^2} \|g\|_{L_2}^2 . \qedhere
\end{align*}
\end{proof}

\begin{lemma}[Numerical approximation for candidate Riesz representer derivative in Section~\ref{sec:Riesz_est}]
\label{lemma:alpha_j_numerical}
Let $N$ be a large but fixed integer (we used $N=50$ throughout our numerical studies) and consider the grid points $a_1,\dots,a_{2N}$ where $a_l=-1-\frac{1}{2N}+\frac{l}{N}$ (these grid points are the centers of the $2N$ equally-sized intervals that form a partition of the interval $[-1,1]$).  Then, $\alpha_j^\us(\wt\bfF_i,X_{i,j})$ in Section~\ref{sec:Riesz_est} can be numerically approximated via a middle Riemann sum over the said grid points as
\begin{align*}
& \textstyle \alpha_j^\us(\wt\bfF_i,X_{i,j}) \approx \frac{1}{Nh} \sum_{l=1}^{2N} \dot{K}\left(a_l\right) \alpha\left(\wt\bfF_i,X_{i,j} - a_l h\right) . \nonumber
\end{align*}
Then, the sample average of the $\alpha_j^\us(\wt\bfF_i,X_{i,j})$'s in $\Rnullhat$ in \eqref{eq:Rhat} can be approximated as
\begin{align*}
& \textstyle \frac{1}{n} \sum_{i\in\calI_h} \alpha_j^\us(\wt\bfF_i,X_{i,j}) \approx \frac{1}{nNh} \sum_{l=1}^{2N} \dot{K}\left(a_l\right) \left\{\sum_{i\in\calI_h} \alpha\left(\wt\bfF_i,X_{i,j} - a_l h\right) \right\} . \nonumber
\end{align*}
\end{lemma}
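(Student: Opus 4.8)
The plan is to identify both displayed approximations as instances of the standard midpoint (middle Riemann sum) quadrature rule applied to the integral representation of the smoothed derivative in \eqref{eq:smooth_master}, and then to pass the first approximation through the finite sum over the interior index set by linearity.

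First I would specialize \eqref{eq:smooth_master} to $g=\alpha$ and $(\wt\bff,x_j)=(\wt\bfF_i,X_{i,j})$, obtaining
\begin{align*}
\alpha_j^\us(\wt\bfF_i,X_{i,j}) = \frac{1}{h}\int_{-1}^1 \alpha(\wt\bfF_i,X_{i,j}-ah)\,\dot K(a)\,\rmd a = \frac{1}{h}\int_{-1}^1 \phi_i(a)\,\rmd a,
\end{align*}
where $\phi_i(a)\equiv \alpha(\wt\bfF_i,X_{i,j}-ah)\,\dot K(a)$. Next I would check the elementary arithmetic fact that the grid points $a_l=-1-\frac{1}{2N}+\frac{l}{N}$, $l=1,\dots,2N$, are precisely the midpoints of the $2N$ equal subintervals of width $1/N$ partitioning $[-1,1]$ (for instance $a_1=-1+\frac{1}{2N}$ and $a_{2N}=1-\frac{1}{2N}$). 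The midpoint rule on $[-1,1]$ then gives
\begin{align*}
\int_{-1}^1 \phi_i(a)\,\rmd a \;\approx\; \frac{1}{N}\sum_{l=1}^{2N}\phi_i(a_l) \;=\; \frac{1}{N}\sum_{l=1}^{2N}\dot K(a_l)\,\alpha(\wt\bfF_i,X_{i,j}-a_l h),
\end{align*}
and dividing through by $h$ yields the first claimed approximation. For the second claim I would sum this identity over $i\in\calI_h$, divide by $n$, and rearrange the resulting finite double sum, which produces the displayed formula with no further conditions.

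The only point calling for a remark is the accuracy of the midpoint approximation. Since $\alpha\in\calFn(\rbar+1)$ is a truncated ReLU network, $a\mapsto\alpha(\wt\bfF_i,X_{i,j}-ah)$ is bounded, Lipschitz, and piecewise affine with finitely many breakpoints, and $\dot K$ is continuous and bounded under Assumption~\ref{ass:kernel}; hence each $\phi_i$ is bounded, Lipschitz, and piecewise $C^1$, so the midpoint quadrature error is of order $1/N$ overall (and of order $1/N^2$ on any subinterval avoiding a breakpoint), which is negligible for the fixed large $N$ used in the numerical studies. I do not anticipate a substantive obstacle: the statement is an implementation device, and its content reduces to recognizing $\{a_l\}$ as the midpoint grid and invoking Riemann-sum convergence, with the only mild care being that the ReLU-induced kinks of $\alpha$ do not impede convergence because the integrand remains Riemann integrable.
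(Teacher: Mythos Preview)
Your proposal is correct and matches the paper's own proof essentially step for step: the paper also starts from the right-most integral in \eqref{eq:smooth_master}, replaces it by the midpoint Riemann sum with node weight $2/(2N)=1/N$, and notes that the second display follows from the first by averaging over $i\in\calI_h$. Your additional remark on the $O(1/N)$ quadrature error is a nice touch that the paper does not include.
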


\begin{proof}
We only prove the first equation display, because the second equation display is straightforward given the first.  Starting from the right-most term in \eqref{eq:smooth_master}, we have
\begin{align*}
\textstyle \alpha_j^\us(\wt\bfF_i,X_{i,j}) & \textstyle = \frac{1}{h} \int_{-1}^1 \alpha(\wt\bfF_i,X_{i,j} - a h) \dot K(a) \rmd a \approx \frac{1}{Nh} \sum_{l=1}^{2N} \dot{K}\left(a_l\right)\alpha\left(\wt\bfF_i,X_{i,j} - a_l h\right) ,
\end{align*}
where the approximation follows by replacing the integral with the middle Riemann sum over the grid points in the lemma statement, where each grid point carries a weight $2/(2N)=1/N$.
\end{proof}

\begin{lemma}[Existence and analytical form of the Riesz representer in \eqref{eq:Riesz_H0_master}]
\label{lem:Riesz_analytic}
Under Assumptions~\ref{ass:kernel} and \ref{ass:truncation}, the Riesz representer $\alpha_{t,n}^*$ that satisfies \eqref{eq:Riesz_H0_master} exists.  Furthermore, if $\wt\bfF, X_j$ admits a joint density $p=p(\wt\bff,x_j)$, then $\alpha_{t,n}^*$ takes the form
\begin{align}
\alpha_{t,n}^*(\wt\bff,x_j) = \dfrac{ \int_{\calB_h} \dot K_h(z-x_j) e^{t\,g_{0,j}^\us(\wt\bff,z) } p(\wt\bff,z) \rmd z}{ p(\wt\bff,x_j) } . \label{eq:def_v}
\end{align}
Finally, if $p$ is bounded and is differentiable in $x_j$ with the derivative being $p_j$, then under $H_0$ and in the limit $h\rightarrow 0$, \eqref{eq:def_v} simplifies to $-p_j(\wt\bff,x_j)/p(\wt\bff,x_j)$ .
\end{lemma}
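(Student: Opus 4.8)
The plan is to handle the three assertions of the lemma in turn: existence of $\alpha_{t,n}^*$, its integral representation \eqref{eq:def_v} when $(\wt\bfF,X_j)$ admits a joint density, and its pointwise limit as $h\to0$ under $H_0$.

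\textbf{Existence.} This is essentially what was argued around \eqref{eq:Riesz_H0_master}, and I would reprise it. The map $v\mapsto t^{-1}\,\partial\wt\eta_t^\us(g_0)/\partial g\,[v]$ is linear, and by \eqref{eq:target_functional_derivative_master} equals $\int_{\Omega_h}\exp\{t\,g_{0,j}^\us(\wt\bff,x_j)\}\,v_j^\us(\wt\bff,x_j)\,\rmd\PP$. Since $g_0(\wt\bff,x_j)=m_0(\bH^\dagger\bff,x_j)$ is differentiable in its last argument with $|g_{0,j}|\le M_\infty'$ (Assumption~\ref{assum_fun_class}), integrating \eqref{eq:smooth_master} by parts and using that $K$ vanishes at $\pm1$ (Assumption~\ref{ass:kernel}) yields $|g_{0,j}^\us|\le\|K\|_{L_1}M_\infty'$, so $\exp\{t\,g_{0,j}^\us\}$ is bounded by $\exp\{|t|\,\|K\|_{L_1}M_\infty'\}$; combined with $\|v_j^\us\|_{L_2(\Omega_h)}\le(c^\us/h)\|v\|_{L_2}$ from Lemma~\ref{lemma:derivative_bound_via_original_specific} and Cauchy--Schwarz (recall $\PP$ is a probability measure), this makes the functional bounded on $L_2(\PP)$. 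The Riesz representation theorem \cite[Theorem~3.4, Chapter~1]{Conway1990} then produces $\alpha_{t,n}^*\in L_2(\PP)$ satisfying \eqref{eq:Riesz_H0_master}.

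\textbf{Integral form.} Assume $(\wt\bfF,X_j)$ has joint density $p$. I would substitute $v_j^\us(\wt\bff,x_j)=\int v(\wt\bff,z)\dot K_h(x_j-z)\,\rmd z$ into both sides of $\int v\,\alpha_{t,n}^*\,\rmd\PP=\int_{\Omega_h}\exp\{t\,g_{0,j}^\us\}\,v_j^\us\,\rmd\PP$, written as Lebesgue integrals against $p$ (the $x_j$-integral on the right running over $\calB_h$). Fubini---valid because $\dot K$, $\exp\{t\,g_{0,j}^\us\}$ and $p$ are bounded while $v\in L_1$---permits interchanging the $x_j$- and $z$-integrations; relabelling the dummy variables then moves $v$ outside and exhibits $p(\wt\bff,x_j)^{-1}\int_{\calB_h}\dot K_h(z-x_j)\exp\{t\,g_{0,j}^\us(\wt\bff,z)\}p(\wt\bff,z)\,\rmd z$ as the multiplier of $v(\wt\bff,x_j)p(\wt\bff,x_j)$ on the right-hand side. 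Since equality holds for every $v$, the two multipliers agree $\PP$-a.e., which is \eqref{eq:def_v}. Under $H_0$, $m_{0,j}\equiv0$ forces $g_{0,j}^\us\equiv0$ and $\exp\{t\,g_{0,j}^\us\}\equiv1$, so \eqref{eq:def_v} reduces to $\alpha_{t,n}^*(\wt\bff,x_j)=p(\wt\bff,x_j)^{-1}\int_{\calB_h}\dot K_h(z-x_j)p(\wt\bff,z)\,\rmd z$; for fixed interior $(\wt\bff,x_j)$ and $h$ small, the change of variables $z=x_j+ah$ turns this into $p(\wt\bff,x_j)^{-1}\,\frac1h\int_{-1}^1\dot K(a)\,p(\wt\bff,x_j+ah)\,\rmd a$. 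A first-order Taylor expansion of $p$ in its last argument, together with $\int_{-1}^1\dot K(a)\,\rmd a=K(1)-K(-1)=0$ and $\int_{-1}^1 a\,\dot K(a)\,\rmd a=[aK(a)]_{-1}^1-\int_{-1}^1 K(a)\,\rmd a=-1$ (both from the boundary-vanishing and normalization in Assumption~\ref{ass:kernel}), yields $-p_j(\wt\bff,x_j)+o(1)$, and dividing by $p(\wt\bff,x_j)$ gives the claimed limit $-p_j(\wt\bff,x_j)/p(\wt\bff,x_j)$.

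\textbf{Main obstacle.} The genuine care is needed at two spots. First, the Fubini interchange: one must verify absolute convergence of the double integral before swapping, which is exactly where the boundedness of $p$ and $\dot K$ and the integrability of $v$ are invoked. Second, controlling the Taylor remainder in the last step under merely ``$p$ bounded and differentiable in $x_j$'': one needs $a^{-1}[\,\{p(\wt\bff,x_j+ah)-p(\wt\bff,x_j)\}/h-p_j(\wt\bff,x_j)\,]$ to be bounded on $[-1,1]$ and to vanish pointwise as $h\to0$, so that $\int_{-1}^1 a\dot K(a)\times(\text{remainder})\,\rmd a\to0$ by dominated convergence; this is immediate when $p$ is $C^1$ with bounded $x_j$-derivative and otherwise takes a short additional argument using the difference-quotient bound.
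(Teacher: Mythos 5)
Your arguments for existence and for the integral form \eqref{eq:def_v} coincide with the paper's: you verify boundedness of $v\mapsto t^{-1}\,\partial\wt\eta_t^\us(g_0)/\partial g\,[v]$ via the bound $|g_{0,j}^\us|\le\|K\|_{L_1}M_\infty'$ (after which the truncation $\Psi$ acts as the identity, so \eqref{eq:target_functional_derivative_master} applies) together with Lemma~\ref{lemma:derivative_bound_via_original_specific}, and you obtain \eqref{eq:def_v} by substituting $v_j^\us$, applying Fubini, and relabelling the dummy variables. Your explicit Fubini justification and the closing ``holds for all $v$, hence a.e.'' step are a modest tightening of what the paper leaves implicit, but the route is the same.

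For the $h\to0$ limit your route genuinely differs. The paper first rewrites $\int_{\calB_h}\dot K_h(z-x_j)p(\wt\bff,z)\,\rmd z$ as $-\int\dot K_h(x_j-z)p(\wt\bff,z)\,\rmd z$ --- a step that, as written, uses antisymmetry of $\dot K_h$, i.e.\ evenness of $K$, which Assumption~\ref{ass:kernel} does not actually state --- then pulls $\partial/\partial x_j$ outside a convolution and invokes approximation-to-the-identity. You instead substitute $z=x_j+ah$ to obtain $h^{-1}\int_{-1}^1\dot K(a)\,p(\wt\bff,x_j+ah)\,\rmd a$, Taylor-expand $p$, and use the two moment identities $\int_{-1}^1\dot K(a)\,\rmd a=K(1)-K(-1)=0$ and $\int_{-1}^1 a\,\dot K(a)\,\rmd a=\bigl[aK(a)\bigr]_{-1}^1-\int_{-1}^1K=-1$. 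This is more elementary, reaches the same limit $-p_j/p$, and, as a bonus, does not rely on the implicit symmetry of $K$. Your residual worry about the Taylor remainder is the right one: you need $\rho(a,h)=\{p(\wt\bff,x_j+ah)-p(\wt\bff,x_j)\}/(ah)-p_j(\wt\bff,x_j)$ to be dominated on $[-1,1]$ uniformly for small $h$ (e.g.\ if $p_j$ is bounded near $x_j$) so that dominated convergence kills $\int a\,\dot K(a)\,\rho(a,h)\,\rmd a$; the paper's appeal to approximation-to-the-identity and to differentiation under the integral sign rests on comparable regularity of $p_j$, so neither argument is materially more demanding than the other.
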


\begin{proof}
We first verify that $\frac{1}{t} \frac{\partial \wt\eta_t^\us(g_0)}{\partial g}[v]$ is a bounded linear functional.  For any fixed $t$ and $h$,
\begin{align*}
& \textstyle \frac{\partial \wt\eta_t^\us(g_0)}{\partial g}[v] =\frac{\partial \wt\eta_t^\us(g_0+\tau v)}{\partial\tau}\Big|_{\tau=0} \textstyle = \left[ \frac{\partial}{\partial\tau} \int_{\Omega_h} e^{t\,\Psi\{(g_0+\tau v)_j^\us(\wt\bff,x_j)\}} \rmd\PP \right] \Big|_{\tau=0} \nonumber \\
& \textstyle = \int_{\Omega_h} \left[ e^{t\,\Psi\{(g_0+\tau v)_j^\us(\wt\bff,x_j)\}} \dot\Psi\{(g_0+\tau v)_j^\us(\wt\bff,x_j)\} \right] \Big|_{\tau=0}t v_j^\us(\wt\bff,x_j) \rmd\PP \nonumber \\
& \textstyle = t \int_{\Omega_h} e^{t\,\Psi\{ g_{0,j}^\us(\wt\bff,x_j)\} } \dot\Psi\{ g_{0,j}^\us(\wt\bff,x_j)\} v_j^\us(\wt\bff,x_j) \rmd\PP .
\end{align*}
Now, we note that equivalent to \eqref{eq:smooth_master} one can also write
\begin{align*}
	\textstyle g_{0,j}^\us(\wt\bff, x_j) = \int_{-1}^1 m_{0,j}(\bH^\dagger \wt\bff,x_j-ah) K(a) \rmd a .
\end{align*}
Thus, clearly
$|g_{0,j}^\us(\wt\bff, x_j)|\le \|K\|_{L_1} \|m_{0,j}\|_{L_\infty}$ whenever $x_j\in\calB_h$.  Analogously, $|m_{0,j}^\us(\bff, x_j)|\le \|K\|_{L_1} \|m_{0,j}\|_{L_\infty}$ whenever $x_j\in\calB_h$ as well.  Thus, by Assumptions~\ref{assum_fun_class} and \ref{ass:truncation}, $\Psi\{ g_{0,j}^\us(\wt\bff,x_j)\}=g_{0,j}^\us(\wt\bff,x_j)$ and $\dot\Psi\{ g_{0,j}^\us(\wt\bff,x_j)\}=1$; hence, we conclude that
\begin{align}
	& \textstyle \frac{\partial \wt\eta_t^\us(g_0)}{\partial g}[v] = t \int_{\Omega_h} e^{t\, g_{0,j}^\us(\wt\bff,x_j) } v_j^\us(\wt\bff,x_j) \rmd\PP .
	\label{eq:target_functional_derivative_master}
\end{align}
Thus, the functional $\frac{1}{t} \frac{\partial \wt\eta_t^\us(g_0)}{\partial g}[v]$ is clearly linear in its input and, by the boundedness of $g_{0,j}^\us$ just shown and Lemma~\ref{lemma:derivative_bound_via_original_specific}, is also bounded.
Therefore, the Riesz representer that satisfies \eqref{eq:Riesz_H0_master} exists \cite[Theorem~3.4, Chapter~1]{Conway1990}.  As a side remark, the modified functional that, when acting on any $v$, always returns $\int_{\Omega_h} v_j^\us(\wt\bff,x_j) \rmd\PP$ (the last term in \eqref{eq:Riesz_H0_master}) is also a bounded linear functional.  Hence, this modified functional admits its own Riesz representer.  It's also easy to show that the latter Riesz representer minimizes the population loss $\Rnull \equiv \int \Rnullhat(\alpha) \rmd\PP$.  Thus the latter Riesz representer is precisely $\alphanull$ which we have defined in Section~\ref{sec:variance_prep} as the population limit of $\wh\alpha_n$.

Next, starting from the first two steps of \eqref{eq:Riesz_H0_master}, with the assumed density $p$,
\begin{align*}
    & \textstyle t \langle v, \alpha_{t,n}^*\rangle = \frac{\partial \wt\eta_t^\us(g_0)}{\partial g}[v] = t \int_{[-b,b]^\rbar} \int_{\calB_h} \{ \int_{[-b,b]} v(\wt\bff,z) \dot K_h(x_j-z) \rmd z \} e^{t\,g_{0,j}^\us(\wt\bff,x_j)} p(\wt\bff,x_j) \rmd x_j \rmd\wt\bff \\
    & \textstyle = t \int_{[-b,b]^{\rbar+1}} v(\wt\bff,z) \{ \int_{\calB_h} \dot K_h(x_j-z) e^{t\,g_{0,j}^\us(\wt\bff,x_j)} p(\wt\bff,x_j) \rmd x_j \} \rmd z  \rmd \wt\bff \\
    & = t \int_{[-b,b]^{\rbar+1}} v(\wt\bff,z) \left\{ \dfrac{ \int_{\calB_h} \dot K_h(x_j-z) e^{t\,g_{0,j}^\us(\wt\bff,x_j)} p(\wt\bff,x_j) \rmd x_j}{ p(\wt\bff,z) } \right\} p (\wt\bff,z) \rmd z  \rmd \wt\bff .
\end{align*}
After interchanging the dummy integration variables $z$ and $x_j$ in the last line above, the term in the curly bracket then gives the form of $\alpha_{t,n}^*$ in \eqref{eq:def_v}.  Finally,
\begin{align*}
\alpha_{t,n}^*(\wt\bff,x_j) = \frac{\int_{-b+h}^{b-h}\dot K_h\left(z-x_j\right) e^{t\,g_{0,j}^\us(\wt\bff,z)} p(\wt\bff,z) \rmd z }{p(\wt\bff,x_j) }  {\stackrel{\text{under}~H_{0}}{=}} \frac{\int_{-b+h}^{b-h}\dot K_h\left(z-x_j\right) p(\wt\bff,z) \rmd z }{p(\wt\bff,x_j) } .
\end{align*}
For $x_j\in[-b+2h,b-2h]$, the numerator can be treated as
\begin{align*}
    & \textstyle \int_{-b+h}^{b-h}\dot K_h\left(z-x_j\right) p(\wt\bff,z) \rmd z = - \int_{x_j-h}^{x_j+h}\dot K_h\left(x_j-z\right) p(\wt\bff,z) \rmd z = - \frac{\partial}{\partial x_j}\int_{x_j-h}^{x_j+h} K_h\left(x_j-z\right) p(\wt\bff,z) \rmd z \\
    & \textstyle = - \frac{\partial}{\partial x_j}\int_{x_j-h}^{x_j+h} K_h\left(z\right) p(\wt\bff,x_j-z) \rmd z = - \int_{x_j-h}^{x_j+h} K_h\left(z\right) p_j(\wt\bff,x_j-z) \rmd z \xrightarrow{h\rightarrow 0} - p_j(\wt\bff,x_j) ,
\end{align*}
where we have applied twice differentiation under the integral sign \cite[Theorem 3, Chapter 8]{ProtterCharles2012}, followed by the usual approximation to the identity theory \cite[Theorem~2.1]{SteinShakarchi2009}.  This completes the proof of the Lemma.
\end{proof}

\subsection{Additional remarks}
\label{sec:remarks_supp}

\begin{remark}[Specification test in factor models]
\label{rmk:specification_test}
As discussed in Section~\ref{sec:reg_model}, our hypotheses \eqref{hypo_null} and \eqref{hypo_alt} are equivalent to determining whether the $j$-th idiosyncratic term $u_j$ of $\bmu$ (defined in \eqref{eq:factor_observation}) contributes to the regression function at the population level.  In the framework of \cite{fan2022learning}, this test is further similar to testing their Factor Augmented Regression (FAR) model against their Factor Augmented Sparse Throughput (FAST) model for high-dimensional regression. To establish the aforementioned equivalence, let's define the regression function
\begin{align*}
    m_0^\dagger(\bfF,u_j)=\EE[Y|\bfF,u_j]
\end{align*}
and the associated partial derivative with respect to the last argument:
\begin{equation}
\label{eq:FAST_m_j}
    \textstyle m_{0,j}^\dagger(\bff,v_j) = \frac{\partial}{\partial v_j} m_0^\dagger(\bff,v_j).
\end{equation}
Then, similar to our hypotheses \eqref{hypo_null} and \eqref{hypo_alt}, the new hypotheses for $u_j$ can be formulated as
\begin{align}
\label{null}
H_0^\dagger: m_{0,j}^\dagger(\bff,v_j)=0~\mbox{for all}~\bff,v_j, \quad \mbox{against} \quad H_A^\dagger: m_{0,j}^\dagger(\bff,v_j)\neq 0~\mbox{for some}~\bff,v_j .
\end{align}

Let $\bB_{j\cdot}$ denote the $j$-th row of the loading matrix $\bB$, so $X_j=\bB_{j\cdot}\bfF+u_j$.  Clearly, $\bfF,X_j$ and $\bfF,u_j$ generate the same $\sigma$-algebra, and hence the functions $m_0$ and $m_0^\dagger$ as conditional expectations are essentially a re-writing of each other with different arguments.  Specifically,
\begin{align*}
m_0(\bff, x_j) = m_0^\dagger(\bff, x_j-\bB_{j\cdot}\bff) = m_0^\dagger(\bff,v_j).
\end{align*}
Thus, the partial derivatives $m_{0,j}$ and $m_{0,j}^\dagger$ are linked as follows: at $x_j-\bB_{j\cdot}\bff=v_j$,
\begin{align*}
\textstyle m_{0,j}(\bff, x_j) = \frac{\partial}{\partial x_j} m_0(\bff, x_j) = \frac{\partial}{\partial x_j} m_0^\dagger(\bff, x_j-\bB_{j\cdot}\bff) = m_{0,j}^\dagger(\bff,v_j).
\end{align*}
Therefore, at least at the population level, our conditional screening hypotheses \eqref{hypo_null} and \eqref{hypo_alt} for $m_{0,j}$ are equivalent to the similar ones in \eqref{null} for $m_{0,j}^\dagger$.
\end{remark}

\begin{remark}[Relationship between \eqref{eq:reg_model} and the full model]
\label{rmk:misspecification}
Note that our reduced regression model \eqref{eq:reg_model} obviously does not preclude the ``full'' regression model $Y = m_0^*(\bfF,\bX) + \epsilon^*$ where $m_0^*(\bfF,\bX) = \EE[Y|\bfF,\bX]$ is the full conditional expectation.  It is apparent that our reduced model and the full model are linked as
\begin{align*}
    \epsilon = \epsilon^* + \EE[Y|\bfF,\bX] - \EE[Y|\bfF,X_j] = \epsilon^* + m_0^*(\bfF,\bX) - m_0(\bfF,X_j) .
\end{align*}
Moreover, by the definition of the conditional expectation, $\epsilon$ is centered and uncorrelated with any function of $\bfF$ and $X_j$.
\end{remark}

\begin{remark}[Boundary region]
\label{rmk:boundary}
Technically, we need to exclude the boundary region, which under Assumption~\ref{ass:DGP} is the union $[-b,-b+h]\cup[b-h,b]$, from our smoothing operation, because smoothing at a $x_j$ in this region will extend beyond the support of $X_j$ which is problematic.  Thus the sets $\calB_h$, $\calI_h$, $\Omega_h$ relevant to the interior regions in different contexts were introduced.  For instance, the quantity $\int_{\Omega_h} g \rmd\PP$ signifies the integral of the function $g$ in the interior region only.
\end{remark}

\begin{remark}[Cross validation for bandwidth selection]
\label{rmk:bandwidth}
In this remark we provide some brief guideline on how to choose an optimal bandwidth $h$ through cross validation.  Suppose that we split the data evenly into $K$ folds (for instance, $K=5$) indexed by $k=1, \dots, K$, and denote the collection of sample indices $i\in\{1,\dots,n\}$ within the $k$-th fold by $\calI_k$.  We let $\wh g_n^{(-k)}=\wh g_n^{(-k)}(\wt\bff,x_j)$ be the regression function estimator similar to that in \eqref{eq:def_wh_m_n_highd}, except that here we exclude the samples in the $k$-th fold; then, let $\wh g_{n,j,h}^{(-k),\us}=\wh g_{n,j,h}^{(-k),\us}(\wt\bff,x_j)$ be the corresponding estimator for the partial derivative (with respect to $x_j$) that is smoothed at bandwidth $h$.  Next, let $\wh g_n^{(k)}$ be the regression function estimator using only the sample in the $k$-th fold, and let $\wh g_{n,j}^{(k)}$ be the corresponding \textit{unsmoothed} estimator for the partial derivative (with respect to $x_j$).  Then, we define the cross validation error at $h$ as $\textup{CV}(h)=\sum_{k=1}^K \sum_{i\in \calI_k}[\wh g_{n,j,h}^{(-k),\us}(\wt\bfF_i, X_{i,j})-\wh g_{n,j}^{(k)}(\wt\bfF_i, X_{i,j})]^2$.  Finally, we could look for the value of $h$ that minimizes $\textup{CV}(h)$ as the optimal $h$.
\end{remark}

\subsection{Further motivation on the smoothing operation}
\label{sec:sawtooth}

In this section, we first provide detailed descriptions for our observation in Section~\ref{sec:test_stat} on the sawtooth function $\zeta_L$ which demonstrates the intricacy of derivative estimation with deep neural networks, and then further motivate our smoothing operation.  By Proposition~2 in \cite{Yarotsky2017}, the triangular function $\zeta:[0,1]\rightarrow[0,1]$ defined as
\begin{align*}
    \zeta(t) = \begin{cases}
        2t, &0\le t<1/2 , \\
        2(1-t), & 1/2\le t\le 1, \\
        0, & \text{otherwise}
    \end{cases}
\end{align*}
is given in terms of the ReLU function $\sigma$ by $\zeta(t)=2\sigma(t)-4\sigma(t-1/2)+2\sigma(t-1)$.
Next, let $\zeta_L(\cdot)=\frac{2}{2^L}\zeta_L\circ\zeta_L\circ\cdots\circ\zeta_L$ denote the $L$-fold iteration of $\zeta$ that is further scaled by an overall factor $\frac{2}{2^L}$.  The function $\zeta_L$, described in Lemma~2.4 in \cite{Telgarsky2015}, is often referred to as the ``sawtooth'' function.  (To be precise, in that reference, $\zeta_L$ is defined without the scaling factor $\frac{2}{2^L}$.)  The function $\zeta_L$ is visually represented in Figure~\ref{fig:sawtooth} that in particular illustrates the increasingly ``oscillatory'' behavior of $\zeta_L$ for increasing values of $L$.  By Proposition~2 in \cite{Yarotsky2017}, it's feasible to implement $\zeta_L$ with a neural network of no more than $L$ hidden layers and three nodes per layer.  (See Fig.~2(c) in \cite{Yarotsky2017} for a related construction.)

\renewcommand{\thefigure}{\Alph{section}.\arabic{figure}}

\begin{figure}[ht]
	\begin{subfigure}{1\textwidth}
		\centering
		\includegraphics[width=0.9\linewidth]{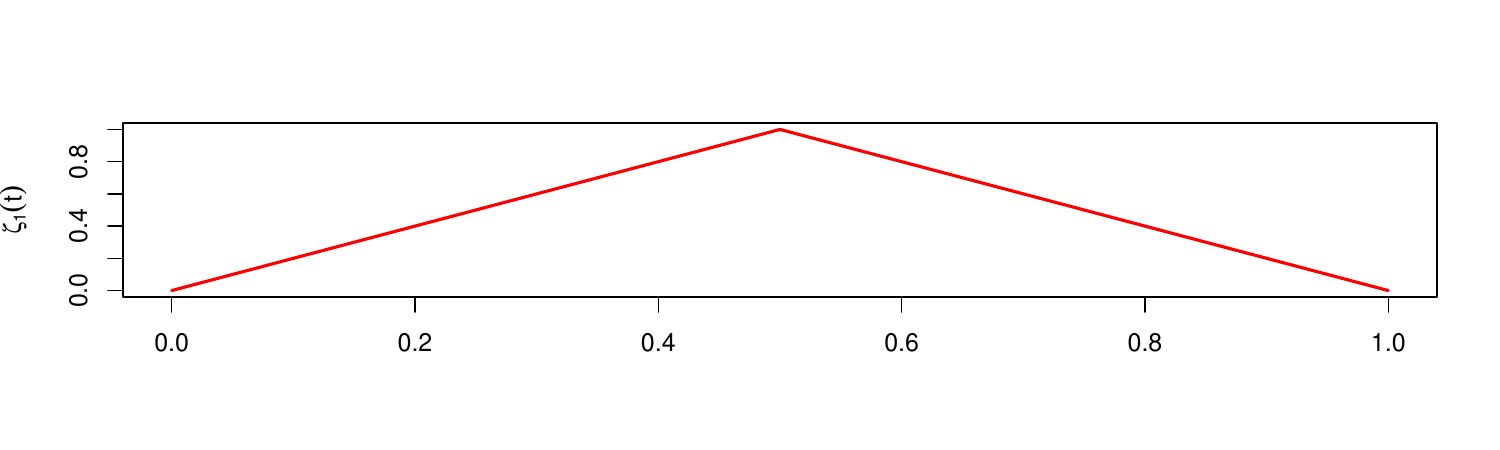}
		\vspace{-20mm}
	\end{subfigure}
	\begin{subfigure}{1\textwidth}
		\centering
		\includegraphics[width=0.9\linewidth]{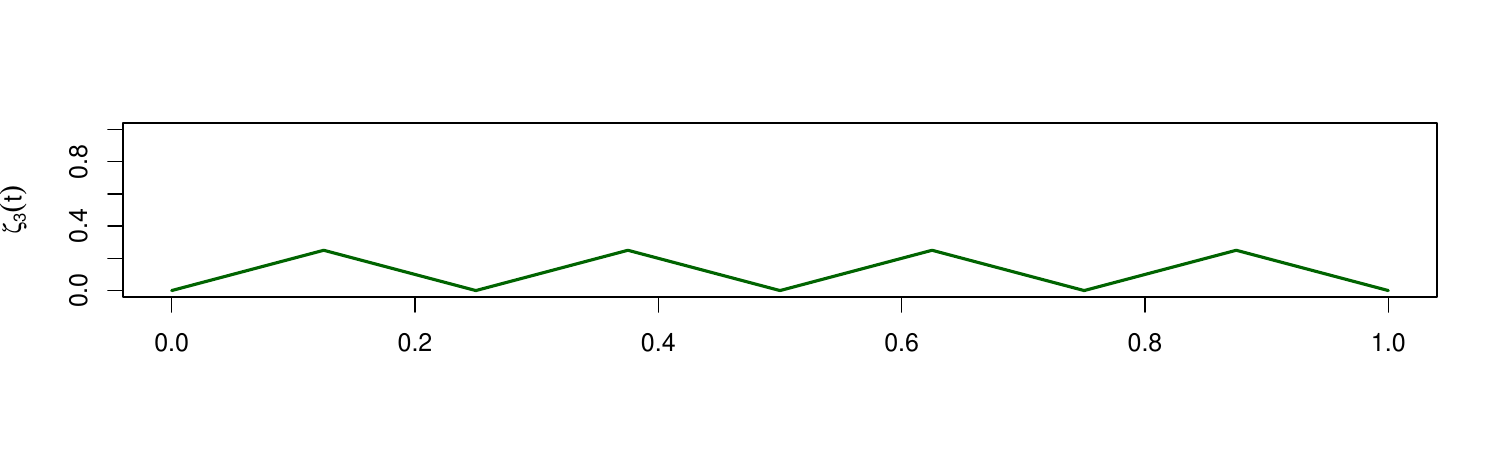}
		\vspace{-20mm}
	\end{subfigure}
	\begin{subfigure}{1\textwidth}
		\centering
		\includegraphics[width=0.9\linewidth]{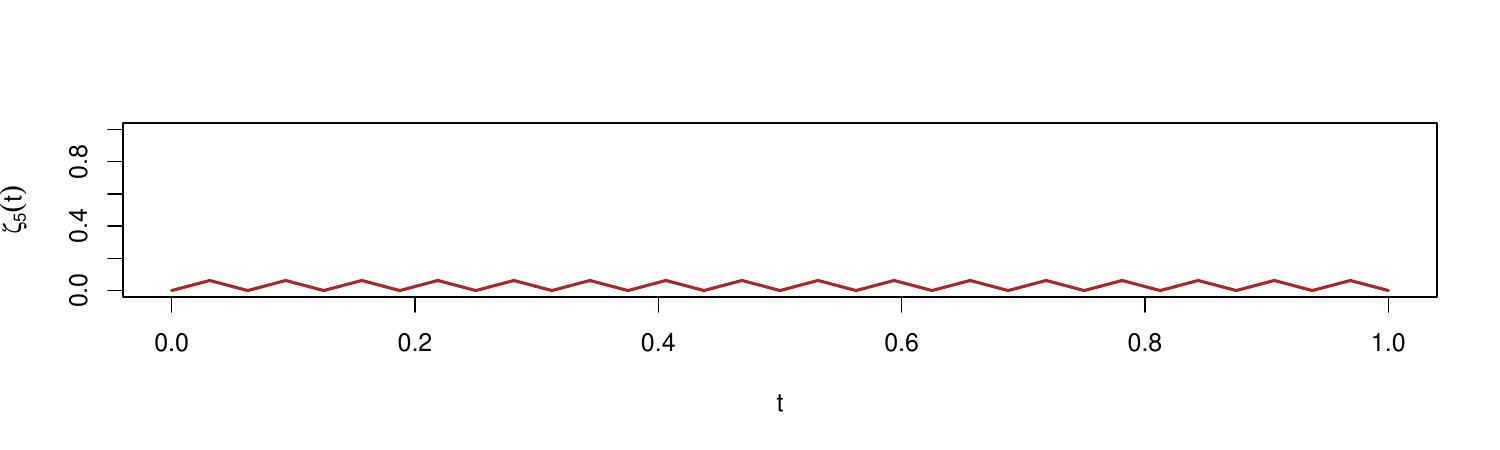}
	\end{subfigure}
	\vspace{-10mm}
	\caption{Iterated triangular functions $\zeta_L$ in Section~\ref{sec:sawtooth} for $L=1,3,5$.  As $L$ increases, the $L_2$ norm of the function $\zeta_L$ itself decays exponentially fast to zero, while the $L_2$ norm of the derivative function $\dot\zeta_L$ remains a constant.}
	\label{fig:sawtooth}
\end{figure}

The function $\zeta_L$ essentially consists of $2^L$ piecewise linear segments with each segment having a slope of constant magnitude $2$.  Consequently, the $L_2$ norm of $\zeta_L$ over the interval $[0,1]$ with respect to the Lebesgue measure is $\{ 2^L \int_0^{1/(2^L)} (2 t)^2 \rmd t \}^{1/2} \sim 1/2^L$, which vanishes as $L$ tends to infinity.  Meanwhile, let the first-order derivative of $\zeta_L$ be $\dot \zeta_L$.  Because the magnitude of $\dot \zeta_L$ is always $2$, its $L_2$
norm (with respect to the same Lebesgue measure on $[0,1]$) is also $2$.
 Thus, the $L_2$ norm of the derivative function $\dot\zeta_L$ remains a constant, even though the $L_2$ norm of the function $\zeta_L$ itself converges exponentially in $L$ to zero, yielding the observation in Section~\ref{sec:test_stat}.

Clearly, this example implies that the $L_2$ convergence between a neural network function and a target, for instance from $\wh g_n$ to $m_0$, does not necessarily imply the convergence at the same rate between their partial derivatives, for instance from $\widehat{g}_{n,j}$ to $m_{0,j}$.  Moreover, this discrepancy becomes more apparent as the depth $L$ increases, but this is precisely the regime of interest for deep neural networks.  Mathematically, this phenomenon can be understood as the absence of a general reverse Poincar\'{e} inequality.  The inequality itself conventionally bounds the function norm by the norm of the derivative of the said function (see Theorem~13.27 in \cite{Leoni2017} for an exact formulation), but the reverse direction does not always hold, as demonstrated by our simple example.

It's important to clarify that we're not asserting that \textit{no} neural network can approximate the first-order derivatives of smooth functions \citep{GuhringKutyniokPetersen2020}.  In our example above, we could easily select a neural network that consistently outputs zero, thereby perfectly approximating both the identically zero function and its derivative.  However, in practical scenarios involving observations with noise, there's no guarantee that such a derivative-approximating neural network will be chosen.  Furthermore, as previously noted, for approximating higher-order derivatives, any ReLU-based neural network will completely fail if we simply settle for the derivatives of the same order (which will always be zero almost everywhere) on the network function directly.

Having demonstrated the irregularities that the derivatives of neural networks may exhibit, we now motivate our smoothing operation.  Our first motivation is the convergence performance obtained from the smoothing operation and the generality of the operation.  As mentioned in Section~\ref{sec:our_method}, our smoothing operation proposed in \eqref{eq:smooth_master} is applicable to general machine learning estimators.  Specifically, our Lemma~\ref{lemma:derivative_bound_via_original} shows that, if we start from a generic regression estimator $\wh g$ for a generic regression function target $m$, then by using the simple smoothing operation in \eqref{eq:smooth_master}, we can easily recover a convergence rate from the \textit{derivative} estimator $\wh g_j^\us$ to the target derivative $m_j$ through the convergence rate from $\wh g_n$ to $m_0$, up to a factor $1/h$ and a bias term; thus the two rates can be made comparable as long as the bandwidth $h$ does not approach zero too fast.  Therefore, we have translated the power of neural networks for regression function estimation into an analogous result for derivative estimation through smoothing.  The bias term occurs naturally because $\wh g_j^\us$ is the \textit{smoothed} version of the naive derivative estimator $\wh g_j$, and hence its target is the \textit{smoothed} target derivative $m_j^\us$ instead of the non-smoothed target derivative $m_j$.  Later, we will apply our lemma on our deep neural network estimators, which will replace $\wh g$ here with $\wh g_n$ in \eqref{eq:def_wh_m_n_highd} or $\wc g_n$ in \eqref{eq:f_check}.
\begin{lemma}
\label{lemma:derivative_bound_via_original}
Suppose that Assumptions~\ref{ass:DGP}\ref{ass:DGP:con_1} and \ref{ass:kernel} hold.
Let $\wh g=\wh g(\wt\bff,x_j): \RR^{\rbar+1}\rightarrow\RR$ be a generic estimator that estimates a generic target $m=m(\bff,x_j):\RR^{r+1}\rightarrow\RR$ with a $L_2$ rate of $r_n$ on an event $\calA$, that is
\begin{align}
\textstyle \{ \int  ( \widehat{g}-m)^2\,\rmd\PP\}^{1/2} \le r_n~\text{on the event}~\calA.
\label{eq:derivative_bound_var}
\end{align}
Moreover, let $m_j^\us$ and $m_j$ be respectively the smoothed and non-smoothed derivatives (with respect to the last argument) of the target function $m$, and suppose that $m_j^\us$ exhibits a bias with respect to $m_j$ as follows:
\begin{align}
\textstyle \{ \int_{\Omega_h} ( m_j^\us-m_j)^2\,\rmd\PP\}^{1/2} \le \mbox{Bias}_n.
\label{eq:derivative_bound_bias}
\end{align}
Then, there exists a constant $c^\us$ (that does not depend on $g$ or $m$) such that the smoothed derivative estimator $\wh g_j^\us$ estimates the non-smoothed $m_j$ with the following rate:
\begin{align}
\textstyle \{ \int_{\Omega_h} ( \wh g_j^\us- m_j )^2\,\rmd\PP\}^{1/2} \le \frac{c^\us}{h} r_n + \mbox{Bias}_n ~\text{on the event}~\calA.
\label{eq:derivative_bound}
\end{align}
\end{lemma}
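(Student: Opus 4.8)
The plan is to reduce the statement to the already-established Lemma~\ref{lemma:derivative_bound_via_original_specific} together with one application of the triangle inequality in $L_2(\Omega_h,\PP)$. The key observation is that the smoothing operation $g\mapsto g_j^\us$ defined in \eqref{eq:smooth_master} is linear in its input, so $\wh g_j^\us-m_j^\us$ is nothing but the smoothed derivative of the difference $\wh g-m$. Hence I would first split
\begin{align*}
\Big\{ \int_{\Omega_h} (\wh g_j^\us - m_j)^2\,\rmd\PP \Big\}^{1/2}
\le \Big\{ \int_{\Omega_h} (\wh g_j^\us - m_j^\us)^2\,\rmd\PP \Big\}^{1/2}
+ \Big\{ \int_{\Omega_h} (m_j^\us - m_j)^2\,\rmd\PP \Big\}^{1/2},
\end{align*}
and note that the second term on the right is at most $\mbox{Bias}_n$ directly from hypothesis \eqref{eq:derivative_bound_bias}.

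For the first term I would invoke the second inequality of Lemma~\ref{lemma:derivative_bound_via_original_specific}, taking the generic functions there to be $g=\wh g$ and the same target $m$; since $\wh g_j^\us$ and $m_j^\us$ are precisely the smoothed derivatives of $\wh g$ and $m$, this gives
\begin{align*}
\int_{\Omega_h} (\wh g_j^\us - m_j^\us)^2\,\rmd\PP \le \frac{(c^\us)^2}{h^2} \int (\wh g - m)^2\,\rmd\PP ,
\end{align*}
with $c^\us$ the constant from that lemma, which is independent of $\wh g$ and $m$. On the event $\calA$, the right-hand side is at most $(c^\us/h)^2 r_n^2$ by the rate assumption \eqref{eq:derivative_bound_var}, so the first term is bounded by $(c^\us/h)\,r_n$ on $\calA$. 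Adding the two bounds yields \eqref{eq:derivative_bound}.

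There is essentially no obstacle left, since all the analytic content has been pushed into Lemma~\ref{lemma:derivative_bound_via_original_specific}, whose proof is the Cauchy--Schwarz bound on the integral representation \eqref{eq:smooth_master} of $g_j^\us$ followed by Fubini to interchange the $\rmd a$ and $\rmd\PP$ integrals, using $h\le b$ and the boundedness of $\dot K$ from Assumption~\ref{ass:kernel} to keep the constant finite. The only point to be mildly careful about is that the restricted integral $\int_{\Omega_h}$ appears on the left of the bound from Lemma~\ref{lemma:derivative_bound_via_original_specific} while the full-space integral $\int(\cdot)\,\rmd\PP=\|\cdot\|_{L_2}^2$ appears on the right; this is harmless because we only need an upper bound and $\|\wh g-m\|_{L_2}\le r_n$ holds over the whole space.
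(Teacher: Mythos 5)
Your proof is correct and is essentially the paper's own argument: a triangle inequality splitting $\wh g_j^\us - m_j$ into $(\wh g_j^\us - m_j^\us)$ and $(m_j^\us - m_j)$, bounding the first term via the second inequality of Lemma~\ref{lemma:derivative_bound_via_original_specific} together with \eqref{eq:derivative_bound_var}, and the second term directly by \eqref{eq:derivative_bound_bias}. The extra remarks on linearity of the smoothing operator and on the restricted-versus-full-space integrals are correct and consistent with the paper's reasoning, though the paper leaves them implicit.
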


\begin{proof}
By the triangle inequality, the left-hand side of \eqref{eq:derivative_bound} can be bounded as
\begin{align*}
    \textstyle \{ \int_{\Omega_h} ( \wh g_j^\us- m_j )^2\,\rmd\PP\}^{1/2} \le \{ \int_{\Omega_h} ( \wh g_j^\us- m_j^\us )^2\,\rmd\PP\}^{1/2}  + \{ \smallint_{\Omega_h} ( \wh m_j^\us- m_j )^2\,\rmd\PP\}^{1/2} .
\end{align*}
The first term on the right-hand side above can be bounded by
Lemma~\ref{lemma:derivative_bound_via_original_specific} in conjunction with the rate in \eqref{eq:derivative_bound_var}, and the second term on the right-hand side above is simply bounded by the condition given in \eqref{eq:derivative_bound_bias}.
\end{proof}

The second motivation for our smoothing operation comes from traditional non-parametric regression.  It is well known that in the latter setting, the optimal bandwidth for the derivative estimator of either the density or regression function should be larger than the optimal bandwidth for the estimator of the density or regression function itself (see, for instance, \cite[Theorem~1]{GasserMuller1984}, \cite[Theorem~3.2]{HallMarron1987} and \cite[p.~226]{HardleMarronWand1990}), due to the former estimator possessing a larger variance at the same bandwidth.  Another example in non-parametric regression calling for over-smoothing comes from the so-called irrelevant regressors \cite[Section~2.2.4]{LiRacine2006}, which are those predictors that do not materially contribute to the regression outcome.  Irrelevant regressors are not about derivative estimation per se, but by their very definition they are similar to those coordinates $X_k$ that satisfy the null hypothesis \eqref{hypo_null} in our screening test.  In kernel regression, it is known that the optimal bandwidth for the irrelevant regressors should diverge to infinity in order to ``weed out'' these regressors more efficiently.  (Furthermore, if a regressor is not irrelevant, but contributes linearly to the regression outcome, it's conjectured that its associated optimal bandwidth should diverge to infinity as well \cite[Section~2.4.1]{LiRacine2006}.)  Thus, from the point of view of both derivative estimation and screening in traditional non-parametric regression, it is not surprising that smoothing could benefit our derivative estimation and screening test.

\section{Supporting details and preliminary proofs for Section~\ref{sec:regression_function}}
\label{app_sec:proof_sec_3}

\subsection{Bias from diversified projection and neural network approximation}
\label{sec:approx_error}

In this section we show two parallel results: first, in Lemma~\ref{lemma:approx_error} we show that the true regression function $m_0=m_0(\bff,x_j):\RR^{r+1}\rightarrow\RR$ is well approximated by $\wt g_n=\wt g_n(\wt\bff,x_j)\equiv \wt m_n(\bH^\dagger \wt\bff, x_j)$ where
\begin{align}
\label{eq:def_wt_m_n_highd}
    \textstyle \wt m_n= \argmin_{m=m(\bff,x_j)\in\calFn(r+1)} \sup_{(\bff,x_j)\in [-2b,2b]^r\times[-b,b]} |(m-m_0)(\bff,x_j)| ,
\end{align}
and second, in Lemma~\ref{lemma:approx_error_alpha} we show that analogously the true Riesz representer under the null hypothesis $\alphanull$ is well approximated by
\begin{align}
\label{talpha}
\textstyle \wt\alpha_n = \argmin_{\alpha\in\calFn(\rbar+1)} \sup_{(\wt\bff,x_j)\in[-c_b b, c_b b]^\rbar\times[-b,b]} |(\alpha - \alphanull)(\wt\bff,x_j)|.
\end{align}

We start with Lemma~\ref{lemma:approx_error}.  Let $\nuan$ be the approximation rate in the sup norm attained by $\wt m_n$ in \eqref{eq:def_wt_m_n_highd}, that is
\begin{align*}
    \textstyle \nuan = \sup_{(\bff,x_j)\in[-2b,2b]^r\times[-b,b]} | (\wt m_n-m_0)(\bff,x_j) | .
\end{align*}
As can be seen from the lemma and its proof below, the price we pay for the approximation is decomposed into the term $\deltaf$, which in fact diminishes with respect to the ambient dimension $d$ and which results from approximating $m_0$ using diversified factors through $g_0=g_0(\wt\bff,x_j)$ given in \eqref{eq:def_g0} (Lemma~\ref{lemma:factor}), and the term $\nuan$, which results from subsequently approximating $g_0$ by the neural network function $\wt g_n$ (Lemma~\ref{lemma:neural_approx_error_m}).  Our lemma is a straightforward extension of Ineq.~(F.6) in the appendix of \cite{FanGu2023factor} by including the variable of interest $X_j$.  For completeness, we will provide a sketch of the proof.
\begin{lemma}[Modification of Inequality~(F.6) in \cite{FanGu2023factor}]
\label{lemma:approx_error}
Under Assumptions~\ref{ass:DGP} to \ref{asumidio} and \ref{assum_fun_class} to \ref{ass:W},
\begin{align*}
   \textstyle \left[ \int \{ \wt g_n(\wt\bff, x_j) - m_0(\bff, x_j) \}^2 \rmd\PP \right]^{1/2} \lesssim \deltaf + \nuan.
\end{align*}
\end{lemma}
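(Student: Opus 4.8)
The plan is to insert the intermediate function $g_0$ from \eqref{eq:def_g0}, splitting the error into a latent-factor contribution and a neural-network approximation contribution. Since $g_0(\wt\bff,x_j)=m_0(\bH^\dagger\wt\bff,x_j)$ while $\wt g_n(\wt\bff,x_j)=\wt m_n(\bH^\dagger\wt\bff,x_j)$, I would decompose $\wt g_n(\wt\bff,x_j)-m_0(\bff,x_j)=\{\wt g_n(\wt\bff,x_j)-g_0(\wt\bff,x_j)\}+\{g_0(\wt\bff,x_j)-m_0(\bff,x_j)\}$ and apply the triangle inequality in $L_2(\PP)$; it then suffices to bound $\|\wt g_n-g_0\|_{L_2}$ and $\|g_0-m_0\|_{L_2}$ separately.

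For the second term, $\|g_0-m_0\|_{L_2}\lesssim\deltaf$ is exactly the content of Lemma~\ref{lemma:factor}. Its proof rests on three ingredients: $\bH$ has full column rank by Assumption~\ref{ass:W}, so $\bH^\dagger\bH=I_r$ and $\bH^\dagger\wt\bfF=\bfF+\bH^\dagger d^{-1}\bW^\top\bmu$; $m_0$ is Lipschitz in its first argument by Assumption~\ref{assum_fun_class}; and the bound $\EE\|\bH^\dagger d^{-1}\bW^\top\bmu\|_2^2\lesssim\rbar d_u/d^2=\deltaf^2$, which follows from $\|\bH^\dagger\|_{\textup{op}}$ and $\max_{j,k}|\textup{W}_{j,k}|$ being bounded (Assumption~\ref{ass:W}) together with $\sum_j\EE[u_j^2]\lesssim d_u$ and the weak dependence $\sum_{j\neq j'}|\EE[u_ju_{j'}]|\lesssim d_u$ (Assumption~\ref{asumidio}).

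For the first term, the delicate point is that the sup-norm approximation defining $\wt m_n$ in \eqref{eq:def_wt_m_n_highd} holds on the enlarged box $[-2b,2b]^r\times[-b,b]$, whereas $g_0$ is evaluated at $\bH^\dagger\wt\bfF$ rather than at a point of $[-b,b]^r$. I would work on the event $\calE=\{\bH^\dagger\wt\bfF\in[-2b,2b]^r\}$: there, using $X_j\in[-b,b]$ almost surely (Assumption~\ref{ass:DGP}), we get $|\wt g_n-g_0|=|\wt m_n(\bH^\dagger\wt\bff,x_j)-m_0(\bH^\dagger\wt\bff,x_j)|\le\nuan$ by the very definition of $\wt m_n$. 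On $\calE^c$ only the crude bound $|\wt g_n-g_0|\le 2M$ is available (the network output is truncated at $M$ and $|m_0|\le M_\infty\le M$ by Assumption~\ref{ass:DGP}), but $\calE^c$ is rare: since $\bfF\in[-b,b]^r$ a.s.\ and $\bH^\dagger\wt\bfF-\bfF=\bH^\dagger d^{-1}\bW^\top\bmu$, Markov's inequality with the same $L_2$ bound as above gives $\PP(\calE^c)\le\PP(\|\bH^\dagger d^{-1}\bW^\top\bmu\|_2>b)\lesssim\deltaf^2$. Hence $\|\wt g_n-g_0\|_{L_2}^2\le\nuan^2+4M^2\PP(\calE^c)\lesssim\nuan^2+\deltaf^2$, which is the assertion of Lemma~\ref{lemma:neural_approx_error_m}. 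Combining the two terms yields $\|\wt g_n-m_0\|_{L_2}\lesssim\deltaf+\nuan$.

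The main obstacle, modest as it is, is precisely this domain-mismatch bookkeeping: one enlarges the approximation box from $[-b,b]^r$ to $[-2b,2b]^r$ so that the perturbed diversified factor $\bH^\dagger\wt\bfF$ falls inside it with overwhelming probability, and then absorbs the negligible exceptional event into the $\deltaf$ term via the truncation at level $M$. Apart from that, the argument is the decomposition already underlying Inequality~(F.6) of \cite{FanGu2023factor}, carried through with the inert extra coordinate $x_j$.
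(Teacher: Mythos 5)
Your proposal matches the paper's argument exactly: the same triangle-inequality split through $g_0$, with $\|g_0-m_0\|_{L_2}\lesssim\deltaf$ handled as in Lemma~\ref{lemma:factor} (Lipschitz plus the $L_2$ bound on $d^{-1}\bH^\dagger\bW^\top\bmu$) and $\|\wt g_n-g_0\|_{L_2}\lesssim\nuan+\deltaf$ handled as in Lemma~\ref{lemma:neural_approx_error_m} (split on $\{\bH^\dagger\wt\bfF\in[-2b,2b]^r\}$, using $\nuan$ on the good event and Markov plus the truncation at $M$ on its complement). The reasoning and all key ingredients are identical to the paper's.
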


\begin{proof}
First, we have the easy bound
\begin{align*}
    & \textstyle [ \int \{ \wt g_n(\wt\bff, x_j) - m_0(\bff, x_j) \}^2 \rmd\PP ]^{1/2} \\
    & \textstyle \le [ \int \{ g_0(\wt\bff, x_j) - m_0(\bff, x_j) \}^2 \rmd\PP ]^{1/2} + [ \int \{ \wt g_n(\wt\bff, x_j) - g_0(\wt\bff, x_j) \}^2 \rmd\PP ]^{1/2} .
\end{align*}
The first and second term in the last line above are in turn handled by Lemma~\ref{lemma:factor} and Lemma~\ref{lemma:neural_approx_error_m} below respectively, whose rates when combined yield the conclusion of the lemma.
\end{proof}

\begin{lemma}
\label{lemma:factor}
Under the same assumptions as Lemma~\ref{lemma:approx_error},
\begin{align*}
    \textstyle \left[ \int \{ g_0(\wt\bff, x_j) - m_0(\bff, x_j) \}^2 \rmd\PP \right]^{1/2} \lesssim \deltaf .
\end{align*}
\end{lemma}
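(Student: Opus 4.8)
The plan is to bound the quantity pointwise, exploiting the Lipschitz property of $m_0$ in its factor argument together with the weak dependence of the idiosyncratic errors; no earlier result is needed.  First I would simplify the argument $\bH^\dagger\wt\bff$ appearing inside $g_0$.  From \eqref{eq:factor_observation} the diversified factor is $\wt\bfF = d^{-1}\bW^\top\bX = \bH\bfF + d^{-1}\bW^\top\bmu$ with $\bH = d^{-1}\bW^\top\bB$.  Under Assumption~\ref{ass:DGP}\ref{ass:DGP:con_2} we have $\rbar\ge r$, and Assumption~\ref{ass:W} gives $v_{\min}(\bH)\ge c_{\textup{W},2}>0$, so $\bH\in\RR^{\rbar\times r}$ has full column rank $r$; consequently $\bH^\dagger\bH = I_r$ and $\|\bH^\dagger\|_{\textup{op}} = 1/v_{\min}(\bH)\le 1/c_{\textup{W},2}$.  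Hence $\bH^\dagger\wt\bfF = \bfF + \bdelta$ where $\bdelta \equiv \bH^\dagger d^{-1}\bW^\top\bmu$, so that $g_0(\wt\bfF,X_j) = m_0(\bfF+\bdelta,\,X_j)$ and the left-hand side of the claim equals $\{\EE[\,\{m_0(\bfF+\bdelta,X_j)-m_0(\bfF,X_j)\}^2\,]\}^{1/2}$.

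Next I would apply the Lipschitz bound on $m_0$ in its first argument from Assumption~\ref{assum_fun_class}; since that bound is posited on all of $\RR^{r+1}$, there is no difficulty with the perturbed argument $\bH^\dagger\wt\bfF$ possibly leaving the support $[-b,b]^r$ of $\bfF$.  This gives $|g_0(\wt\bfF,X_j)-m_0(\bfF,X_j)| \le C_{\textup{L}}\|\bdelta\|_2$ almost surely, and hence $\int\{g_0-m_0\}^2\rmd\PP \le C_{\textup{L}}^2\, c_{\textup{W},2}^{-2}\, d^{-2}\, \EE\|\bW^\top\bmu\|_2^2$.  It then remains to control $\EE\|\bW^\top\bmu\|_2^2$.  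As $\bW$ is exogenous and $\max_{j,k}|\textup{W}_{j,k}|\le c_{\textup{W},1}$ by Assumption~\ref{ass:W}, for each column index $k\le\rbar$ one has $\EE(\sum_{j=1}^d\textup{W}_{j,k}u_j)^2 = \sum_{j,j'}\textup{W}_{j,k}\textup{W}_{j',k}\EE[u_ju_{j'}] \le c_{\textup{W},1}^2\,(\sum_j\EE u_j^2 + \sum_{j\ne j'}|\EE u_ju_{j'}|) \lesssim c_{\textup{W},1}^2 d_u$ by Assumption~\ref{asumidio}; summing over the $\rbar$ columns gives $\EE\|\bW^\top\bmu\|_2^2 \lesssim \rbar d_u$.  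Substituting back yields $\int\{g_0-m_0\}^2\rmd\PP \lesssim \rbar d_u/d^2 = \deltaf^2$, which is the asserted bound once square roots are taken.

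I do not anticipate a genuine obstacle: the argument is essentially bookkeeping steered by the standing assumptions.  The two points that merit a moment's care are (i) the identity $\bH^\dagger\bH = I_r$, which relies on $\rbar\ge r$ together with the full-rank consequence of the singular-value lower bound in Assumption~\ref{ass:W}; and (ii) legitimacy of the pointwise Lipschitz estimate when $\bH^\dagger\wt\bfF$ falls outside $[-b,b]^r$, which is handled because Assumption~\ref{assum_fun_class} imposes the Lipschitz (and derivative) bounds on all of $\RR^{r+1}$ rather than merely on the support.  Should one prefer to keep every argument of $m_0$ within the support, an alternative would be to precompose with a truncation onto $[-b,b]^r$ and absorb the resulting discrepancy into the $\deltaf$ term, but this refinement appears unnecessary under the assumptions as stated.
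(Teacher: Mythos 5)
Your proof is correct and follows essentially the same route as the paper: decompose $\wt\bfF = \bH\bfF + d^{-1}\bW^\top\bmu$ so that $\bH^\dagger\wt\bfF = \bfF + \bdelta$, apply the Lipschitz estimate for $m_0$ in its factor argument from Assumption~\ref{assum_fun_class}, bound $\|\bH^\dagger\|_{\textup{op}}$ via Assumption~\ref{ass:W}, and bound $\EE\|\bW^\top\bmu\|_2^2 \lesssim \rbar d_u$ via Assumption~\ref{asumidio}. The only (cosmetic) difference is that your statement of the operator-norm identity $\|\bH^\dagger\|_{\textup{op}} = 1/v_{\min}(\bH)$ is a slightly cleaner way of saying what the paper expresses by citing the derivation of (F.15) in \cite{FanGu2023factor}, and you usefully flag the role of the Lipschitz bound being posited on all of $\RR^{r+1}$ rather than merely the support, which the paper uses tacitly.
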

\begin{proof}
By the definition of diversified projection matrix, $\operatorname{rank}(\bH_{j}^{\dagger})=r$,
which implies $\bH^\dagger\bH=\mathbf{I}_{r}$.
Starting from the decomposition
\begin{align}
\label{eq:F_tilde_decomp}
\wt\bfF_i = d^{-1} \bW^\top \bX_i=d^{-1} \bW^\top \bB\bfF_i+d^{-1} \bW^\top\bmu_i = \bH \bfF_i+d^{-1} \bW^\top\bmu_i,
\end{align}
we have
\begin{align}
&m_0(\bfF_i,X_{i,j})-g_0(\wt\bfF_i,X_{i,j}) \nonumber \\
&= m_0(\bfF_i,X_{i,j})-g_0(\bH \bfF_i+d^{-1}\bW^\top\bmu_i,X_{i,j}) \nonumber \\
& = m_0(\bfF_i,X_{i,j})-m_0(\bH^\dagger \bH
 \bfF_i+d^{-1}\bH^\dagger\bW^\top\bmu_i,X_{i,j}) \nonumber \quad \mbox{ by the definition of $g_0$ in \eqref{eq:def_g0}}\\
& = m_0(\bfF_i,X_{i,j})-m_0(\bfF_i+d^{-1}\bH^\dagger \bW^\top\bmu_i,X_{i,j}) \quad \mbox{ by the definition of $\bH^\dagger$}.
\label{factor_approximation_01}
\end{align}
By Assumption~\ref{assum_fun_class}, $m_0(\bff,x_j)$ is Lipschitz in the sense that $|m_0(\bff,x_j)-m_0(\bff',x_j)|\le C_{\textup{L}}\|\bff-\bff'\|$ (for all $x_j$).  It follows from this and \eqref{factor_approximation_01} that
\begin{align}
    | m_0(\bfF_i,X_{i,j}) - g_0(\wt\bfF_i,X_{i,j}) |\le C_{\textup{L}} d^{-1}\left\Vert \bH^\dagger\bW^\top\bmu_i\right\Vert_2 \le C_{\textup{L}} d^{-1}{\left\Vert \bH^\dagger\right\Vert_{\textup{op}}} \left\Vert \bW^\top\bmu_i\right\Vert_2 .
    \label{factor_approximation_02}
\end{align}
By the derivation of (F.15) in \cite{FanGu2023factor} and Assumption~\ref{ass:W},
\begin{align}
\label{eq:H_bound}
\| \bH^\dagger \|_{\textup{op}} \le \left\{ v_{\min}( \bH^\dagger ) \right\}^{-1} \lesssim 1.
\end{align}
Next we bound $\Vert d^{-1} \bW^\top\bmu_i\Vert_2$ in \eqref{factor_approximation_02}.  It follows from the linearity of expectation and then Assumption~\ref{asumidio} that
\begin{align}
& \textstyle \EE [ \| \bW^\top\bmu_i \|^2] = \EE\left[ \sum_{k=1}^\rbar (\sum_{j=1}^d \textup{W}_{j,k} u_{i,j} )^2\right] \textstyle = \sum_{k=1}^\rbar \sum_{j=1}^d \textup{W}_{j,k}^2\EE\left[u_{i,j}^2\right] + \sum_{j\neq j^{\prime}} \textup{W}_{j,k} \textup{W}_{j^{\prime},k} \EE\left[u_{i,j}u_{i,j^{\prime}}\right] \nonumber \\
& \textstyle \le \rbar \max_{j,k} \left|\textup{W}_{j,k}\right| \sum_{j=1}^d \EE\left[u_{i,j}^2\right] + \rbar\max_{j,k} \left|\textup{W}_{j,k}\right|^2 \sum_{j\neq j^{\prime}}\left|\EE\left[u_{i,j}u_{i,j^{\prime}}\right]\right| \textstyle \lesssim  \rbar d_u.
\label{eq:W_mu_bound}
\end{align}

Finally, we conclude by taking the expectation of the square of both sides of \eqref{factor_approximation_02} and subsequently applying \eqref{eq:H_bound} and \eqref{eq:W_mu_bound}.
\end{proof}

\begin{lemma}
\label{lemma:neural_approx_error_m}
Under the same assumptions as Lemma~\ref{lemma:approx_error},
\begin{align*}
\textstyle \left[ \int \{ \wt g_n(\wt\bff, x_j) - g_0(\wt\bff, x_j) \}^2 \rmd\PP \right]^{1/2} \lesssim \nuan + \deltaf .
\end{align*}
\end{lemma}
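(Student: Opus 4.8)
The plan is to reduce the claim to the sup-norm approximation guarantee that defines $\nuan$, combined with the second-moment control on $d^{-1}\bW^\top\bmu$ already obtained inside the proof of Lemma~\ref{lemma:factor}. Since $\wt g_n(\wt\bff,x_j)=\wt m_n(\bH^\dagger\wt\bff,x_j)$ and $g_0(\wt\bff,x_j)=m_0(\bH^\dagger\wt\bff,x_j)$, the integrand equals $\{(\wt m_n-m_0)(\bH^\dagger\wt\bfF,X_j)\}^2$, so the whole question is whether the random argument $\bH^\dagger\wt\bfF$ falls inside the box $[-2b,2b]^r$ on which $\nuan$ bounds $\wt m_n-m_0$; recall $X_j\in[-b,b]$ always by Assumption~\ref{ass:DGP}.

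First I would use \eqref{eq:F_tilde_decomp} together with $\bH^\dagger\bH=\mathbf{I}_r$ to write $\bH^\dagger\wt\bfF=\bfF+d^{-1}\bH^\dagger\bW^\top\bmu$; because $\bfF\in[-b,b]^r$, the event $\calE=\{\bH^\dagger\wt\bfF\notin[-2b,2b]^r\}$ is contained in $\{\|d^{-1}\bH^\dagger\bW^\top\bmu\|_2>b\}$. Splitting $\int\{\wt g_n-g_0\}^2\rmd\PP$ over $\calE^c$ and $\calE$: on $\calE^c$ the integrand is at most $\nuan^2$ by the definition of $\nuan$ in \eqref{eq:def_wt_m_n_highd}; on $\calE$ the integrand is at most $(M+M_\infty)^2\le(2M)^2$ because $\wt m_n\in\calFn(r+1)$ is truncated into $[-M,M]$ and $|m_0|\le M_\infty\le M$ by Assumption~\ref{ass:DGP}. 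It then remains to bound $\PP(\calE)$, which by Markov's inequality applied to the second moment satisfies $\PP(\calE)\le b^{-2}d^{-2}\|\bH^\dagger\|_{\textup{op}}^2\,\EE\|\bW^\top\bmu\|_2^2$; invoking $\|\bH^\dagger\|_{\textup{op}}\lesssim 1$ from \eqref{eq:H_bound} and $\EE\|\bW^\top\bmu\|_2^2\lesssim\rbar d_u$ from \eqref{eq:W_mu_bound} gives $\PP(\calE)\lesssim\rbar d_u/d^2=\deltaf^2$. Combining, $\int\{\wt g_n-g_0\}^2\rmd\PP\le\nuan^2+(2M)^2\PP(\calE)\lesssim\nuan^2+\deltaf^2$, and taking square roots finishes the proof.

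The one point to handle with care is the contribution from the ``bad'' event $\calE$: one must exploit that the approximating network is genuinely bounded (it lives in the truncated class $\calFn(r+1)$), rather than merely close to $m_0$ on the small box, so that this contribution is governed entirely by $\PP(\calE)$; and one must verify that $\calE$ is captured by exactly the deviation of $d^{-1}\bW^\top\bmu$ whose second moment was already estimated in the proof of Lemma~\ref{lemma:factor}. Once these are in place the remainder is a routine truncation-and-Markov estimate, so I do not anticipate a genuine obstacle.
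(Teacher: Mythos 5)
Your proof is correct and takes essentially the same route as the paper's: both split the integral over the event $\{\bH^\dagger\wt\bfF\in[-2b,2b]^r\}$ and its complement, use the sup-norm guarantee defining $\nuan$ on the good set, and bound the bad-set contribution via boundedness of $\wt m_n$ and $m_0$ together with Markov's inequality applied to the second moment of $d^{-1}\bH^\dagger\bW^\top\bmu$ (already controlled inside the proof of Lemma~\ref{lemma:factor}). Your constant $(M+M_\infty)^2\le(2M)^2$ from the truncation in $\calFn(r+1)$ and Assumption~\ref{ass:DGP}\ref{ass:DGP:con_3} is, if anything, the more carefully stated version of the paper's $(2M_\infty)^2$, but the distinction is immaterial to the $\lesssim$ claim.
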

\begin{proof}
Following essentially the same derivation on p.~10 in the appendix of \cite{FanGu2023factor},
\begin{align*}
& \textstyle \int \{ \wt g_n(\wt\bff, x_j) - g_0(\wt\bff, x_j) \}^2 \rmd\PP = \int \{ \wt g_n(\wt\bff, x_j) - g_0(\wt\bff, x_j) \}^2 \ind\{\bH^\dagger \wt\bff\in[-2b,2b]^r\}  \rmd\PP \\
& \textstyle + \int \{ \wt g_n(\wt\bff, x_j) - g_0(\wt\bff, x_j) \}^2 \ind\{\bH^\dagger \wt\bff\notin[-2b,2b]^r\} \rmd\PP .
\end{align*}
By the definition of $\wt g_n$, $g_0$ and $\nuan$, the first term on the right-hand side above is reduced to
\begin{align*}
& \textstyle \int \{ \wt g_n(\wt\bff, x_j) - g_0(\wt\bff, x_j) \}^2 \ind\{\bH^\dagger \wt\bff\in[-2b,2b]^r\} \rmd\PP \\
& \textstyle = \int \{ \wt m_n(\bH^\dagger \wt\bff, x_j) - m_0(\bH^\dagger \wt\bff, x_j) \}^2 \ind\{\bH^\dagger \wt\bff\in[-2b,2b]^r\} \rmd\PP \le \nuan^2.
\end{align*}

Meanwhile, for the second term, by Assumption~\ref{assum_fun_class}, Markov's inequality, and finally the reasoning at the top of p.~11 in the appendix of \cite{FanGu2023factor},
\begin{align*}
& \textstyle \int \{ \wt g_n(\wt\bff, x_j) - g_0(\wt\bff, x_j) \}^2 \ind\{\bH^\dagger \wt\bff\notin[-2b,2b]^r\} \rmd\PP \\
& \textstyle \le (\|\wt g_n(\wt\bff, x_j) \|_{L_\infty} + \|g_0(\wt\bff, x_j)\|_{L_\infty})^2 \PP(\bH^\dagger\wt\bfF\notin[-2b,2b]^r) \\
& \textstyle \lesssim (2M_\infty)^2 \PP(\| d^{-1} \bH^\dagger \bW^\top \bmu \|_2 \ge b) \le \EE[\| d^{-1} \bH^\dagger \bW^\top \bmu \|_2^2] \lesssim \deltaf^2.
\end{align*}
Combining the two equation displays above yields the conclusion of the lemma.
\end{proof}

\begin{lemma}[Bias for approximating the Riesz representer]
\label{lemma:approx_error_alpha}
Under Assumptions~\ref{ass:DGP} to \ref{ass:u_t_n} and \ref{assum_fun_class} to \ref{ass:u_t_n_misc},
\begin{align*}
    \textstyle \left[ \int \{ \wt\alpha_n(\wt\bff, x_j) - \alphanull(\wt\bff, x_j) \}^2 \rmd\PP \right]^{1/2} \lesssim \nuan + \deltaf .
\end{align*}
\end{lemma}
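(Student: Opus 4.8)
The plan is to reuse, essentially verbatim, the argument behind Lemma~\ref{lemma:neural_approx_error_m}. The key simplification here is that $\alphanull$ is already a function of the diversified factor $(\wt\bff,x_j)$, so there is no analogue of the ``$g_0$ versus $m_0$'' step (Lemma~\ref{lemma:factor}) to perform: the entire error is of neural-network-approximation-plus-tail type, and it is exactly this structure that yields the rate $\nuan+\deltaf$ claimed in the statement.

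First I would split the integral according to whether the diversified factor lands in the enlarged box $[-c_b b,c_b b]^\rbar$:
\begin{align*}
\int ( \wt\alpha_n - \alphanull )^2 \rmd\PP
&= \int ( \wt\alpha_n - \alphanull )^2 \ind\{ \wt\bff\in[-c_b b, c_b b]^\rbar \} \rmd\PP \\
&\quad + \int ( \wt\alpha_n - \alphanull )^2 \ind\{ \wt\bff\notin[-c_b b, c_b b]^\rbar \} \rmd\PP .
\end{align*}
For the first (interior) term, Assumption~\ref{ass:u_t_n} places $\alphanull$, restricted to $[-c_b b,c_b b]^\rbar\times[-b,b]$, in the hierarchical composition model $\calH(\rbar+1,l,\calP,C_\calH)$ with the same $(l,\calP,C_\calH)$ as for $m_0$, and endows $\calFn(\rbar+1)$ in \eqref{talpha} with the same structural scaling as in Assumption~\ref{ass:NN_scaling}. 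Hence the same neural-network approximation theory already invoked for Lemma~\ref{lemma:neural_approx_error_m} (the rate $(Lk_{\textup{0}})^{-2\kappa}$ up to log factors, recorded below Definition~\ref{def:Hier_comp}) gives $\sup_{(\wt\bff,x_j)\in[-c_b b,c_b b]^\rbar\times[-b,b]}|(\wt\alpha_n-\alphanull)(\wt\bff,x_j)|\lesssim\nuan$, so the interior term is $\lesssim\nuan^2$.

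For the second (exterior) term, $\wt\alpha_n\in\calFn(\rbar+1)$ is bounded by the truncation level $M$, while $\alphanull$ is bounded by $M_\infty\le M$ by Assumption~\ref{ass:u_t_n} and Assumption~\ref{ass:DGP}\ref{ass:DGP:con_3}; thus the integrand is $\lesssim 1$ and this term is $\lesssim\PP(\wt\bfF\notin[-c_b b,c_b b]^\rbar)$. Using $\wt\bfF=\bH\bfF+d^{-1}\bW^\top\bmu$ from \eqref{eq:F_tilde_decomp}, the event $\{\wt\bfF\notin[-c_b b,c_b b]^\rbar\}$ forces, for some $k\in\{1,\dots,\rbar\}$, either $|\bH_{k\cdot}\bfF|>c_b b/2$ or $\|d^{-1}\bW^\top\bmu\|_2>c_b b/2$; by Assumption~\ref{ass:u_t_n_misc} the former has probability at most $\deltaf^2/\rbar$, and by Markov's inequality together with $\EE\|d^{-1}\bW^\top\bmu\|_2^2\lesssim\rbar d_u/d^2=\deltaf^2$ (the bound \eqref{eq:W_mu_bound} from the proof of Lemma~\ref{lemma:factor}) the latter has probability $\lesssim\deltaf^2$. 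A union bound over the fixed number $\rbar$ of coordinates then gives $\PP(\wt\bfF\notin[-c_b b,c_b b]^\rbar)\lesssim\deltaf^2$. Combining the two terms yields $\int(\wt\alpha_n-\alphanull)^2\rmd\PP\lesssim\nuan^2+\deltaf^2$, and taking square roots completes the argument.

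The only step I expect to call for real care is the interior estimate, namely making sure the neural-network approximation rate for $\alphanull$ on the enlarged box $[-c_b b,c_b b]^\rbar\times[-b,b]$ has the same order $\nuan$ as the rate for $m_0$ on $[-2b,2b]^r\times[-b,b]$. This is precisely why Assumption~\ref{ass:u_t_n} is phrased so as to grant, without loss of generality, the same composition parameters $(l,\calP,C_\calH)$ and the same scaling $L\cdot k_{\textup{0}}\sim n^{1/(4\kappa+2)}\log^{(4\kappa-1)/(2\kappa+1)}(n)$ as Assumption~\ref{ass:NN_scaling}; with those in place the approximation bound is literally the one used before, and everything else is routine bookkeeping, with the $\deltaf$ term again being the price of working with the diversified factors rather than the latent ones.
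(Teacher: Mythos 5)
Your proof is correct and follows essentially the same route as the paper's: split the $L_2$ integral into the interior event $\{\wt\bfF\in[-c_b b,c_b b]^\rbar\}$ and its complement, bound the interior piece by $\nuan^2$ via the sup-norm neural-network approximation guarantee supplied by Assumption~\ref{ass:u_t_n}, and bound the exterior piece by $\lesssim\deltaf^2$ using the decomposition $\wt\bfF=\bH\bfF+d^{-1}\bW^\top\bmu$, the tail bound on $\bH_{k\cdot}\bfF$ from Assumption~\ref{ass:u_t_n_misc}, and the second moment bound \eqref{eq:W_mu_bound} with Markov's inequality. You also correctly observe that, unlike Lemma~\ref{lemma:approx_error}, there is no analogue of the Lemma~\ref{lemma:factor} step because $\alphanull$ is already a function of the diversified factor rather than the latent one.
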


\begin{proof}
As usual, let $\wt F_k$ denote the $k$-th element of $\wt\bfF$ and moreover let $[\bW^\top]_{k\cdot}$ denote the $k$-th row of $\bW^\top$.  Using the decomposition~\eqref{eq:F_tilde_decomp},
\begin{align}
\label{eq:F_wt_decomp}
|\wt F_k| \le |\bH_{k\cdot} \bfF| + |d^{-1} [\bW^\top]_{k\cdot} \bmu| \le |\bH_{k\cdot} \bfF| + \|d^{-1} \bW^\top \bmu\|_2 .
\end{align}
By Assumption~\ref{ass:u_t_n_misc}, the first term $|\bH_{k\cdot} \bfF|$ in the last step above is uniformly bounded over $k\in\{1,\dots,\rbar\}$ by $c_b b/2$ with probability $1-\deltaf^2$.  Next, the second term has been bounded in \eqref{eq:W_mu_bound} and yields, by Markov's inequality,
\begin{align*}
\PP(\| d^{-1} \bW^\top \bmu \|_2\ge c_b b/2) \lesssim \deltaf^2 .
\end{align*}
Hence, we conclude that with probability at least $1-C \deltaf^2$, $\wt\bfF\in[-c_b b, c_b b]^\rbar$.  Then, using argument similar to that in the proof of Lemma~\ref{lemma:neural_approx_error_m}, the lemma holds because
\begin{align*}
& \textstyle \int \{ \wt\alpha_n(\wt\bff, x_j) - \alphanull(\wt\bff, x_j) \}^2 \rmd\PP = \int \{ \wt\alpha_n(\wt\bff, x_j) - \alphanull(\wt\bff, x_j) \}^2 \ind\{ \wt\bff\in[-c_b b, c_b b]^\rbar\} \rmd\PP \\
& \textstyle + \int \{ \wt\alpha_n(\wt\bff, x_j) - \alphanull(\wt\bff, x_j) \}^2 \ind\{ \wt\bff\notin[-c_b b, c_b b]^\rbar\} \rmd\PP \textstyle \lesssim \nuan^2 + \deltaf^2 . \qedhere
\end{align*}
\end{proof}

\subsection{The stability of conditional expectation in Assumption~\ref{ass:W}}
\label{sec:cond_exp_stability}
{This subsection addresses the issue that while model \eqref{eq:reg_model} yields the proper centering of the noise, namely $\EE[\epsilon|\bfF,X_j]=0$, after replacing $\bfF$ by $\wt\bfF$ we may no longer have $\EE[\epsilon|\wt\bfF,X_j]=0$; instead, what is properly centered here is the adjusted noise term $\wt\epsilon\equiv Y-\EE[Y|\wt\bfF,X_j]$.  Naturally we will need to handle the difference $|\epsilon-\wt\epsilon|=|\{Y-m_0(\bfF,X_j)\} - \{Y-\EE[Y|\wt\bfF,X_j]\}| = |\EE[Y|\bfF,X_j]-\EE[Y|\wt\bfF,X_j]|$, whose $L_2$ norm is precisely the left-hand side of \eqref{eq:cond_exp_stability}.
This difference concerns the closeness between the two conditional expectations $\EE[Y|\wt\bfF,X_j]$ and $m_0(\bfF,X_j) = \EE[Y|\bfF,X_j]$ under a perturbation of the conditioning variables.
A simple version of such a ``stability of conditional expectation'' result appears in \cite{WlodzimierzWlodzimierz1992}; however, only one-dimensional random variables were examined and no explicit convergence rate was provided.  We will provide in Lemma~\ref{lemma:cond_exp_stability} such a stability result {which is more appropriate for our context}.

For brevity, define $m^*(\bfF,\wt\bfF,X_j)=\EE[Y|\bfF,\wt\bfF,X_j]$.  Compared to the full regression function $m_0^*(\bfF,\bX)=\EE[Y|\bfF,\bX]$ defined below \eqref{hypo_alt}, $m^*(\bfF,\wt\bfF,X_j)$ is obtained by conditioning on a smoother $\sigma$-algebra (because the $\sigma$-algebra generated by $(\bfF,\wt\bfF,X_j)$ is contained in the $\sigma$-algebra generated by $(\bfF,\bX)$) and in fact $m^*(\bfF,\wt\bfF,X_j)=\EE[m_0^*(\bfF,\bX)|\bfF,\wt\bfF,X_j]$.  In turn, both $m_0(\bfF,X_j)$ and $\EE[Y|\wt\bfF,X_j]$ involved in \eqref{eq:cond_exp_stability} can be obtained by further smoothing $m^*(\bfF,\wt\bfF,X_j)$ (as the proof of Lemma~\ref{lemma:cond_exp_stability} will show)}.  Define $\bXi= \bW^\top \bmu \in \RR^\rbar$.  Let $\bfF$ have marginal density $p_{\bfF}$ and let $\bXi$ and $u_j$ have joint density $p_{\bXi, u_j}$. Further define the random vector $\buc=C_\uf^{-1}\deltaf^{-1}\frac{1}{d}(\bH^\dagger \bXi-\bH^\dagger \bz)\in\RR^r$ where $\bz\in\RR^\rbar$ and $C_\uf$ is the constant in condition~\ref{exp_stability:con_5} in Lemma~\ref{lemma:cond_exp_stability}, and the random functions $g_\un$ and $g_\ud$ as, for $a\in\RR$,
\begin{align*}
    g_\un(a) & \textstyle = \int m^*(\bfF + a \buc, \bH\bfF+\frac{1}{d} \bz, X_j)\,p_{\bfF}(\bfF + a \buc) p_{\bXi,u_j}(\bz, u_j - \bB_{j \cdot} a \buc ) \rmd\bz , \\
    g_\ud(a) & \textstyle = \int p_{\bfF}(\bfF + a \buc) p_{\bXi,u_j}(\bz, u_j - \bB_{j \cdot} a \buc ) \rmd\bz .
\end{align*}
The reason we introduce these functions is because, in terms of them, by the derivations in the proof of Lemma~\ref{lemma:cond_exp_stability},
\begin{align*}
\EE[m^*(\bfF, \wt\bfF, X_j)|\wt\bfF,X_j] = \frac{g_\un(C_\uf \deltaf)}{g_\ud(C_\uf \deltaf)} , \quad \EE[ m^*(\bfF,\wt\bfF,X_j)|\bfF,X_j] = \frac{g_\un(0)}{g_\ud(0)} ;
\end{align*}
hence, the left-hand side of \eqref{eq:cond_exp_stability} can be expressed with the help of the mean value theorem applied to the expressions above.

\begin{lemma}[Conditions for the validity of inequality~\eqref{eq:cond_exp_stability}]
\label{lemma:cond_exp_stability}
Suppose that Assumptions~\ref{ass:DGP} to \ref{asumidio}, and \ref{ass:W} excluding \eqref{eq:cond_exp_stability}, hold.  Moreover, suppose that
\begin{enumerate}[label=(\roman*)]\label{ratexx:whole}
\item\label{exp_stability:con_1}
The function $m^*=m^*(\bff,\wt\bff,x_j)$ where $\bff\in\RR^r$ and $\wt\bff\in\RR^\rbar$ is bounded, and $m^*$ admits a bounded gradient in the first argument $\bff$ which we denote by $\nabla m^*(\bff,\wt\bff,x_j)$;
\item\label{exp_stability:con_2}
$\bfF$ admits a density $p_{\bfF}$ with gradient $\nabla p_{\bfF}$;
\item\label{exp_stability:con_3}
$\bmu$ and $\bfF$ are independent;
\item\label{exp_stability:con_4}
$\bXi$ and $u_j$ admits a joint density $p_{\bXi,u_j}=p_{\bXi,u_j}(\bxi,v_j)$, and $p_{\bXi,u_j}$ admits a partial derivative with respect to $v_j$ which we denote by $\dot p_{\bXi,u_j}$;
\item\label{exp_stability:con_5}
The following bounds hold:
\begin{gather*}
    \textstyle \EE\left[ \sup_{0<a<C_\uf \deltaf} \left\{ \frac{ \int \nabla p_{\bfF}(\bfF + a \buc)^\top p_{\bXi, u_j}(\bz, u_j - \bB_{j \cdot} a \buc ) \buc \rmd\bz }{ g_\ud(a) } \right\}^2 \right] < \infty , \\
    \textstyle \EE\left[ \sup_{0<a<C_\uf \deltaf} \left\{ \frac{ \int p_{\bfF}(\bfF + a \buc) \dot p_{\bXi, u_j}(\bz, u_j - \bB_{j \cdot} a \buc ) \bB_{j \cdot} \buc \rmd\bz }{ g_\ud(a) } \right\}^2 \right] < \infty ;
\end{gather*}
\item\label{exp_stability:con_6}
The random vector $\bXi$ satisfies $\|\frac{1}{d}\bH^\dagger \bXi\|_2\le C_\uf \deltaf$ almost surely for a constant $C_\uf$.
\end{enumerate}
Then Inequality~\eqref{eq:cond_exp_stability} holds.
\end{lemma}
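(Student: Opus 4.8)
The plan is to combine the tower property with the two ratio‑of‑integrals representations displayed just before the lemma statement, reduce the $L_2$ discrepancy to a one–dimensional difference quotient in the auxiliary parameter $a$, and then control that quotient by the mean value theorem. First I would record that, by the tower property and $\sigma(\bfF,X_j)\subseteq\sigma(\bfF,\wt\bfF,X_j)$, one has $m_0(\bfF,X_j)=\EE[Y|\bfF,X_j]=\EE[m^*(\bfF,\wt\bfF,X_j)\mid\bfF,X_j]$ and $\EE[Y|\wt\bfF,X_j]=\EE[m^*(\bfF,\wt\bfF,X_j)\mid\wt\bfF,X_j]$, so the left side of \eqref{eq:cond_exp_stability} is exactly $\|\EE[m^*\mid\wt\bfF,X_j]-\EE[m^*\mid\bfF,X_j]\|_{L_2}^2$. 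The next step is to verify the two claimed representations $\EE[m^*\mid\bfF,X_j]=g_\un(0)/g_\ud(0)$ and $\EE[m^*\mid\wt\bfF,X_j]=g_\un(C_\uf\deltaf)/g_\ud(C_\uf\deltaf)$. For the first, since $X_j=\bB_{j\cdot}\bfF+u_j$ we have $\sigma(\bfF,X_j)=\sigma(\bfF,u_j)$, and because $\bmu\perp\bfF$ (condition~\ref{exp_stability:con_3}) the conditional law of $\bXi=\bW^\top\bmu$ given $(\bfF,u_j)$ has density $p_{\bXi,u_j}(\cdot,u_j)/\!\int p_{\bXi,u_j}(\bz,u_j)\rmd\bz$; substituting $\wt\bfF=\bH\bfF+\tfrac1d\bXi$ into $m^*$ and integrating against this density gives $g_\un(0)/g_\ud(0)$ after the common factor $p_{\bfF}(\bfF)$ cancels. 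For the second, one uses $\bH^\dagger\bH=I_r$ to write $\bfF=\bH^\dagger\wt\bfF-\tfrac1d\bH^\dagger\bXi$, so conditioning on $(\wt\bfF,X_j)$ realizes $\bfF$ as an affine function of $\bXi$; a change of variables together with the density of $\wt\bfF$ (carrying a Jacobian $d^{\rbar}$) recasts the conditional expectation as $g_\un(a)/g_\ud(a)$ at $a=C_\uf\deltaf$, where the shift $a\,\buc=\tfrac1d\bH^\dagger(\bXi-\bz)$ is precisely what transports ``conditioning on $\bfF$'' to ``conditioning on $\bH^\dagger\wt\bfF$''.

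Given these representations, I would apply the mean value theorem to the scalar map $a\mapsto g_\un(a)/g_\ud(a)$ on $[0,C_\uf\deltaf]$, obtaining $\EE[m^*\mid\bfF,X_j]-\EE[m^*\mid\wt\bfF,X_j]=-C_\uf\deltaf\,(g_\un/g_\ud)'(\xi)$ for some random $\xi\in(0,C_\uf\deltaf)$, so the left side of \eqref{eq:cond_exp_stability} is at most $C_\uf^2\deltaf^2\,\EE\!\big[\sup_{0<a<C_\uf\deltaf}\{(g_\un/g_\ud)'(a)\}^2\big]$, and it remains to show this expectation is $O(1)$. Writing $g_\un/g_\ud=\int m^*(\bfF+a\buc,\bH\bfF+\tfrac1d\bz,X_j)\,w_a(\bz)\,\rmd\bz$ where the normalized weights $w_a(\bz)=p_{\bfF}(\bfF+a\buc)p_{\bXi,u_j}(\bz,u_j-\bB_{j\cdot}a\buc)/g_\ud(a)$ integrate to one, differentiation under the integral sign (legitimate under conditions~\ref{exp_stability:con_1}, \ref{exp_stability:con_2}, \ref{exp_stability:con_4} and the integrability bounds~\ref{exp_stability:con_5}) produces three contributions: a term $\int\nabla m^{*\top}\buc\,w_a$, bounded by $\|\nabla m^*\|_\infty\sup_{\bz}|\buc|$; and, via $\partial_a w_a$ together with the $g_\ud'/g_\ud$ term, two further terms equal to $\|m^*\|_\infty$ times exactly the ratios appearing in condition~\ref{exp_stability:con_5}. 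The one structural point is that $|\buc|$ is bounded on the integration domain: by condition~\ref{exp_stability:con_6}, $\|\tfrac1d\bH^\dagger\bXi\|_2\le C_\uf\deltaf$ almost surely, and the same bound holds for $\tfrac1d\bH^\dagger\bz$ whenever $\bz$ lies in the support of $\bXi$, hence $|\buc|\le C_\uf^{-1}\deltaf^{-1}(C_\uf\deltaf+C_\uf\deltaf)=2$ there; this is what stops the $\deltaf^{-1}$ in the definition of $\buc$ from blowing up and what makes the final bound $\deltaf^2$ rather than $O(1)$. Combining, $\EE\big[\sup_a\{(g_\un/g_\ud)'(a)\}^2\big]\lesssim\|\nabla m^*\|_\infty^2+\|m^*\|_\infty^2\times(\text{the two finite quantities in \ref{exp_stability:con_5}})<\infty$, so the left side of \eqref{eq:cond_exp_stability} is $\le C_\uf^2\deltaf^2\cdot O(1)\lesssim\deltaf^2$, as required.

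The main obstacle I anticipate is the bookkeeping in the second representation: identifying the correct change of variables between the $r$‑dimensional $\bfF$ and the $\rbar$‑dimensional randomness carried by $\bXi$ (with $\rbar\ge r$, so $\bH\bH^\dagger\neq I_{\rbar}$), getting the Jacobian and the role of $\bH^\dagger\bH=I_r$ exactly right, and checking that $g_\un(C_\uf\deltaf)/g_\ud(C_\uf\deltaf)$ genuinely coincides with $\EE[m^*\mid\wt\bfF,X_j]$; a secondary technical point is rigorously justifying differentiation under the integral sign, for which the uniform‑in‑$a$ integrability bounds in condition~\ref{exp_stability:con_5} and boundedness of $m^*,\nabla m^*$ in condition~\ref{exp_stability:con_1} are tailored. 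Once the representations are in hand, the mean value theorem step, the differentiation, and the $|\buc|\le 2$ estimate are comparatively routine consequences of the stated hypotheses.
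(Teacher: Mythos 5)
Your proof follows the paper's route exactly: the tower property reduces the left side of \eqref{eq:cond_exp_stability} to $\|\EE[m^*\mid\wt\bfF,X_j]-\EE[m^*\mid\bfF,X_j]\|_{L_2}^2$, the density-ratio representations $g_\un(0)/g_\ud(0)$ and $g_\un(C_\uf\deltaf)/g_\ud(C_\uf\deltaf)$ are obtained by the same change of variables (with Jacobian $d^{\rbar}$), and the mean value theorem applied to $a\mapsto g_\un(a)/g_\ud(a)$ is then controlled via conditions~\ref{exp_stability:con_1} and \ref{exp_stability:con_5}. Your explicit treatment of the $\nabla m^{*\top}\buc$ contribution and the observation that $\|\buc\|_2\le 2$ on the effective integration domain (condition~\ref{exp_stability:con_6} applied both to $\bXi$ and to the dummy $\bz$ ranging over its support) fills in a step the paper elides in its terse final line, so your write-up is if anything slightly more complete than the original.
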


Some remarks about the conditions in Lemma~\ref{lemma:cond_exp_stability} are in order.  First, condition~\ref{exp_stability:con_6} is not directly used in the proof of Lemma~\ref{lemma:cond_exp_stability} per se, but it facilitates our subsequent discussion about condition~\ref{exp_stability:con_5}.  In any case, the proof of Lemma~\ref{lemma:factor} shows that $\EE \|\frac{1}{d}\bH^\dagger \bXi\|_2\lesssim \deltaf$, so condition~\ref{exp_stability:con_6} is reasonable.
 Condition~\ref{exp_stability:con_5} may appear complex at first; however, this condition simplifies vastly if we simply set $a=0$.  Take the second inequality in condition~\ref{exp_stability:con_5} as an example.  When $a=0$, the expectation on the left-hand side reduces to
\begin{align*}
    \EE\left[ \left\{ \frac{ \int \dot p_{\bXi, u_j}(\bz, u_j ) \bB_{j \cdot} \buc \rmd\bz }{ \int p_{\bXi, u_j}(\bz, u_j ) \rmd\bz } \right\}^2 \right] \stackrel{\text{condition }\ref{exp_stability:con_6}}{\lesssim} \EE\left[ \left\{ \frac{ \int \dot p_{\bXi, u_j}(\bz, u_j ) \rmd\bz }{ \int p_{\bXi, u_j}(\bz, u_j ) \rmd\bz } \right\}^2 \right] = \EE\left[ \left\{ \frac{ \dot p_{u_j}(u_j) }{ p_{u_j}(u_j) } \right\}^2 \right]
\end{align*}
where $\dot p_{u_j}$ denotes the derivative of the marginal density $p_{u_j}$ of $u_j$ and in the last step we have exchanged integration and differentiation.  We recognize the rightmost term above as simply the Fisher information for location of the idiosyncratic term $u_j$, and the boundedness of this information is a very weak condition.  Thus, condition~\ref{exp_stability:con_5} can be seen as a slightly strengthened version of the boundededness of this Fisher information under a small perturbation by the random variable $a \buc$ where $0<a<C_\uf \deltaf$.

\begin{proof}[Proof of Lemma~\ref{lemma:cond_exp_stability}]
By the smoothing property of conditional expectations (see, for instance, property~(10) on p.~348 of \cite{Resnick2005}), $\EE[m^*(\bfF,\wt\bfF,X_j)|\bfF,X_j] = \EE[\EE[Y|\bfF,\wt\bfF,X_j]|\bfF,X_j] = \EE[Y|\bfF,X_j] = m_0(\bfF,X_j)$, and similarly $\EE[m^*(\bfF,\wt\bfF,X_j)|\wt\bfF,X_j] = \EE[Y|\wt\bfF,X_j]$.  Thus,
\begin{align*}
&\EE[ \{ \EE[Y|\wt\bfF,X_j] - m_0(\bfF,X_j) \}^2 ] = \EE[ \{ \EE[m^*(\bfF,\wt\bfF,X_j)|\wt\bfF,X_j] - \EE[m^*(\bfF,\wt\bfF,X_j)|\bfF,X_j] \}^2 ] .
\end{align*}
We now give explicit expressions for $\EE[m^*(\bfF, \wt\bfF, X_j)|\wt\bfF,X_j]$ and $\EE[m^*(\bfF,\wt\bfF,X_j)|\bfF,X_j]$.  We start from the first of these, namely
\begin{align}
\label{eq:cond_exp_Ftilde}
\EE[m^*(\bfF, \wt\bfF, X_j)|\wt\bfF,X_j] = \frac{\int m^*(\bff, \wt\bfF, X_j) p_{\bfF,\wt\bfF,X_j}(\bff,\wt\bfF,X_j)\rmd\bff}{p_{\wt\bfF,X_j}(\wt\bfF,X_j)} ,
\end{align}
where $p_{\wt\bfF,X_j}$ and $p_{\bfF,\wt\bfF,X_j}$ denote respectively the joint densities of $\wt\bfF$ and $X_j$, and of $\bfF,\wt\bfF$ and $X_j$.
Recalling \eqref{eq:F_tilde_decomp} which gives $\wt\bfF=\bH\bfF+\frac{1}{d}\bXi$, and letting $\bB_{j \cdot}$ denote the $j$-th row of $\bB$ so $X_j=\bB_{j \cdot}\bfF+u_j$, we note that for $\wt\ba\in\RR^\rbar$ and $b\in\RR$, that $(\wt\bfF,X_j)=(\bH\bfF+\frac{1}{d}\bXi,\bB_{j \cdot}\bfF+u_j)=(\wt\ba^\top,b)^\top$ is equivalent to $\bXi=d(\wt\ba-\bH\bfF)$ and $u_j=b-\bB_{j \cdot} \bfF = b - \bB_{j \cdot} \bH^\dagger (\wt\bfF-\frac{1}{d} \bXi)$.  Then, by the usual convolution argument, the joint density of $p_{\wt\bfF,X_j}$ is given by
\begin{align*}
    & \textstyle p_{\wt\bfF,X_j}(\wt\ba,b) = \frac{\rmd^2}{\rmd\wt\ba\rmd b} \PP(\wt\bfF\le\wt\ba,X_j\le b) = \frac{\rmd^2}{\rmd\wt\ba\rmd b} \PP(\bH \bfF+\frac{1}{d}\bXi \le \wt\ba, \bB_{j \cdot}\bfF + u_j\le b) \\
    & \textstyle = \frac{\rmd^2}{\rmd\wt\ba\rmd b} \PP(\bXi \le d (\wt\ba - \bH \bfF), u_j\le b-\bB_{j \cdot}\bfF) = \frac{\rmd^2}{\rmd\wt\ba\rmd b} \int \int^{d (\wt\ba - \bH \bff)} \int^{b-\bB_{j \cdot}\bff} p_{\bfF}(\bff) p_{\bXi,u_j}(\bm{w},v) \rmd v \rmd\bm{w} \rmd\bff \\
    & \textstyle = d^\rbar \int p_{\bfF}(\bff) p_{\bXi,u_j}(d(\wt\ba-\bH\bff),b-\bB_{j \cdot}\bff) \rmd\bff
\end{align*}
where the last step follows by the Leibniz integral rule.  Now, $p_{\wt\bfF,X_j}(\wt\bfF,X_j)$ can serve as the denominator of $\EE[m^*(\bfF, \wt\bfF, X_j)|\wt\bfF,X_j]$ in \eqref{eq:cond_exp_Ftilde}. Analogously, in the numerator of \eqref{eq:cond_exp_Ftilde}, for $\ba\in\RR^r$,
\begin{align*}
    \textstyle p_{\bfF,\wt\bfF,X_j}(\ba,\wt\ba,b) = d^\rbar  p_{\bfF}(\ba) p_{\bXi,u_j}(d(\wt\ba-\bH\ba),b-\bB_{j \cdot}\ba) .
\end{align*}
Consequently,
\begin{align*}
&\EE[m^*(\bfF, \wt\bfF, X_j)|\wt\bfF,X_j] \\
&= \dfrac{ \int m^*(\bff, \wt\bfF, X_j)\,p_{\bfF}(\bff)\,p_{\bXi,u_j}(d(\wt\bfF-\bH\bff),X_j-\bB_{j \cdot}\bff) \rmd\bff }{ \int p_{\bfF}(\bff)\,p_{\bXi,u_j}(d(\wt\bfF-\bH\bff), X_j-\bB_{j \cdot}\bff) \rmd\bff } \\
&= \dfrac{ \left( \splitdfrac{\int m^*(\bfF+\frac{1}{d} \bH^\dagger \bXi-\frac{1}{d} \bH^\dagger \bz, \bH\bfF+\frac{1}{d} \bz, X_j)\,p_{\bfF}(\bfF+\frac{1}{d} \bH^\dagger \bXi-\frac{1}{d} \bH^\dagger \bz)}{\times p_{\bXi,u_j}(\bz, u_j - \bB_{j \cdot} (\frac{1}{d} \bH^\dagger \bXi - \frac{1}{d} \bH^\dagger \bz )) \rmd\bz} \right) }{ \int p_{\bfF}(\bfF+\frac{1}{d} \bH^\dagger \bXi-\frac{1}{d} \bH^\dagger \bz)\,p_{\bXi,u_j}(\bz, u_j  - \bB_{j \cdot} (\frac{1}{d} \bH^\dagger \bXi - \frac{1}{d} \bH^\dagger \bz) ) \rmd\bz } = \frac{g_\un(C_\uf \deltaf)}{g_\ud(C_\uf \deltaf)} .
\end{align*}
where in the transition to the second to last step we have conducted a change of the integration variable from $\bff$ to $\bz=d(\wt\bfF-\bH\bff)$. Meanwhile, for the second conditional expectation $\EE[ m^*(\bfF,\wt\bfF,X_j)|\bfF,X_j]$, we can analogously obtain
\begin{align*}
\EE[ m^*(\bfF,\wt\bfF,X_j)|\bfF,X_j] = \dfrac{ \int m^*(\bfF,\bH\bfF+\frac{1}{d} \bz,X_j)\,p_{\bXi,u_j}(\bz,u_j) \rmd\bz }{ p_{u_j}(u_j) } = \frac{g_\un(0)}{g_\ud(0)}
\end{align*}
Next, denote by $\dot g_\un$ and $\dot g_\ud$ the derivatives of $g_\un$ and $g_\ud$ respectively.  By the mean value theorem, for some random $\wt a$ strictly between $0$ and $C_\uf \deltaf$,
\begin{align*}
    &\textstyle | \EE[m^*(\bfF, \wt\bfF, X_j)|\wt\bfF,X_j] - \EE[ m^*(\bfF,\wt\bfF,X_j)|\bfF,X_j] | = \left| \frac{g_\un(C_\uf \deltaf)}{g_\ud(C_\uf \deltaf)} - \frac{g_\un(0)}{g_\ud(0)} \right| \\
    &\textstyle = \left| \left\{ \frac{\dot g_\un(\wt a) }{ g_\ud(\wt a) } - \frac{g_\un(\wt a)}{g_\ud(\wt a)} \frac{\dot g_\ud(\wt a)}{ g_\ud(\wt a) } \right\} \right| C_\uf \deltaf \stackrel{\text{condition }\ref{exp_stability:con_1}}{\lesssim} \left\{ \left| \frac{\dot g_\un(\wt a) }{ g_\ud(\wt a) } \right| + \left| \frac{\dot g_\ud(\wt a)}{ g_\ud(\wt a) } \right| \right\} C_\uf \deltaf.
\end{align*}
Finally, Inequality~\eqref{eq:cond_exp_stability} follows by applying condition~\ref{exp_stability:con_5} to the last term above.
\end{proof}

\subsection{Useful bounds on key stochastic terms}
\label{sec:thm:m_hat_est_master_stochastic}
We first present a couple of results that will be used repeatedly for bounding some key stochastic terms throughout our proofs, for instance in the proof of Theorem~\ref{thm:m_hat_est_master} in Section~\ref{sec:proof_thm:m_hat_est_master}.  Here, the key element is to utilize sharp weighted empirical process techniques in, for instance, \cite{GineKoltchinskiiJonWellner2003} to obtain, for each indexing function (for instance $\theta$ on the left-hand sides of \eqref{eq:m_n_hat_ratio_rate_1_master} and \eqref{eq:m_n_hat_ratio_rate_2_master}), a bound on the empirical sum of the indexing function that is related to the norm of that particular indexing function (for instance, $\| \theta(Y_i,\bfF_i,\bX_i) \|_{L_2}$ on the right-hand sides of \eqref{eq:m_n_hat_ratio_rate_1_master} and \eqref{eq:m_n_hat_ratio_rate_2_master}).  Bounds of this kind become sharper when the norm of the indexing function itself is small, and are key to obtaining good rates for $\wh g_n$ and our subsequent derivative estimators.

Define the quantity $\nusn$ that will feature prominently in the our subsequent Lemma~\ref{lem:square_deviation} and Proposition~\ref{prop:m_n_hat_ratio_rate_master}, where we recall again from Def.~\ref{def:VC_subgraph} that $V_{\calFn(\rbar+1)}$ is the VC-index of the subgraphs associated with the neural network class $\calFn(\rbar+1)$:
\begin{align}
\label{eq:nu_n_generic}
\textstyle \nusn = \left\{ \frac{V_{\calFn(\rbar+1)} \log(n)}{n} \right\}^{1/2}.
\end{align}
\begin{lemma}[Lemma~2 in \cite{FanGu2023factor}]
\label{lem:square_deviation}
Let $\calG$ be a class of uniformly bounded functions on $R^{1+r+d}$, for which we can choose a constant (bounded) envelope function $G$.  Assume that the class $\calG$ is of VC-type (Def.~\ref{def:VC}) with index $v=2V_{\calFn(\rbar+1)}$ (so in particular the covering number for $\calG$ admits the bound \eqref{eq:covering_number_bound_master} with $v=2V_{\calFn(\rbar+1)}$).  Then, there exist some universal constants $c_1$, $c_2$, $c_3$ such that for all $\sqrt{s'}\ge c_1 \sqrt{n} \nusn/\{\zeta (1-\zeta)\}$ with $0<\zeta<1$, on an event $\calA_{1,\zeta,s'}$ satisfying
\begin{align}
\label{eq:square_deviation_prob}
\PP(\calA_{1,\zeta,s'})\ge 1-c_2\exp(-c_3 \zeta^2(1-\zeta) s')
\end{align}
we have
\begin{align}
\label{eq:square_deviation}
    \textstyle \forall g\in\calG, \quad \left| \frac{1}{n} \sum_{i=1}^n \{ g^2(Y_i,\bfF_i,\bX_i) - \EE[g^2(Y,\bfF,\bX)] \} \right| \le \zeta \left\{  \int g^2(Y_i,\bfF_i,\bX_i) \rmd\PP + \frac{s'}{n} \right\} .
\end{align}
\end{lemma}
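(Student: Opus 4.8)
The statement is a localized (ratio-type) maximal inequality for the empirical process indexed by the squared class $\calG^2:=\{g^2:g\in\calG\}$, and the plan is to derive it from a sharp chaining bound for the expected supremum together with Talagrand's concentration inequality, organized by a peeling argument over the $L_2(\PP)$-size of the functions. First I would record the stability of the hypotheses under squaring: since every $g\in\calG$ is bounded by the constant envelope $G$, the map $g\mapsto g^2$ is $2\|G\|_\infty$-Lipschitz from $L_2(Q)$ to $L_2(Q)$ for every probability measure $Q$, so $\calG^2$ is again VC-type (with the same index $v=2V_{\calFn(\rbar+1)}$ and an adjusted constant in \eqref{eq:covering_number_bound_master}), has constant envelope $\|G\|_\infty^2$, and satisfies $\int(g^2)^2\,\rmd\PP\le\|G\|_\infty^2\,\sigma^2(g)$ with $\sigma^2(g):=\int g^2\,\rmd\PP$; thus the variance of the summand $g^2$ is controlled by $\sigma^2(g)$, which is exactly the quantity appearing on the right-hand side of \eqref{eq:square_deviation}.

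Next comes the peeling step. Put $\sigma_0^2:=s'/n$ and slice $\calG=\calG_0\cup\bigcup_{j\ge1}\calG_j$, where $\calG_0=\{g:\sigma(g)\le\sigma_0\}$ and $\calG_j=\{g:2^{j-1}\sigma_0<\sigma(g)\le2^j\sigma_0\}$; since $\sigma(g)\le\|G\|_\infty$ and, by the hypothesis $\sqrt{s'}\ge c_1\sqrt{n}\,\nusn/\{\zeta(1-\zeta)\}$, the cutoff $\sigma_0$ is only polynomially small in $n$, only $O(\log n)$ shells are nonempty. On each shell I would apply the standard entropy-integral maximal inequality for VC-type classes (as in \cite{GineKoltchinskiiJonWellner2003}), giving
\[
\EE\sup_{g\in\calG_j}\Big|\tfrac1n\sum_{i=1}^n\{g^2(Y_i,\bfF_i,\bX_i)-\EE g^2\}\Big|\;\lesssim\;\sqrt{\tfrac vn}\,\|G\|_\infty\,2^j\sigma_0\,\sqrt{\log n}\;+\;\tfrac{v\|G\|_\infty^2\log n}{n},
\]
and then Talagrand's inequality in Bousquet's (variance-dependent) form, which adds a fluctuation of order $\sqrt{\|G\|_\infty^2(2^j\sigma_0)^2 t/n}+\|G\|_\infty^2 t/n$ with exceptional probability $e^{-t}$.

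It then remains to calibrate. On $\calG_j$ one has $\sigma^2(g)\ge(2^{j-1}\sigma_0)^2$, so it suffices to make the displayed bound, plus the Talagrand fluctuation with budget $t_j\asymp\zeta^2(1-\zeta)\{s'+n(2^j\sigma_0)^2\}$, no larger than $\zeta\{(2^{j-1}\sigma_0)^2+s'/n\}$. The two "additive" terms (the $v\log n/n$ from chaining and the $\|G\|_\infty^2 t_j/n$ from Talagrand) are absorbed into $\zeta s'/n$ precisely because $s'\ge c_1^2 v\log n/\{\zeta(1-\zeta)\}^2$; the "multiplicative" chaining term $\sqrt{v/n}\,2^j\sigma_0\sqrt{\log n}$ and its Talagrand analogue are absorbed into $\zeta(2^j\sigma_0)^2$ because $2^j\sigma_0\ge\sigma_0\gtrsim\|G\|_\infty\sqrt{v\log n}/(\zeta\sqrt n)$, again by the hypothesis on $s'$; and $\calG_0$ is handled entirely by the $s'/n$ term. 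Finally, a union bound over the $O(\log n)$ shells costs $\sum_j e^{-t_j}\le\sum_j e^{-c\zeta^2(1-\zeta)4^j s'}$, which — after slightly shrinking $c_3$ to swallow both the $\log n$ from the number of shells and the geometric sum — is at most $c_2\exp(-c_3\zeta^2(1-\zeta)s')$, as claimed; all constants are universal since the class enters only through $v$ and $\|G\|_\infty$, and $\|G\|_\infty$ can be normalized to $1$.

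I expect the main obstacle to be exactly this calibration bookkeeping: producing the precise exponents $\zeta$ and $(1-\zeta)$ in the probability bound and in the slack requires Talagrand's inequality in its sharp form together with careful tracking of how the variance proxy $\|G\|_\infty^2\sigma^2(g)$, the budget $t_j$, and the threshold on $s'$ interact on each shell — a cruder concentration step would only yield the weaker, non-localized statement. (As this is Lemma~2 of \cite{FanGu2023factor}, the most economical route in the paper is to invoke that result directly, after verifying that the class $\calG$ here meets its VC-type and uniform boundedness hypotheses.)
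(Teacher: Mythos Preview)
Your proposal is correct and, in fact, goes well beyond what the paper does: the paper's own ``proof'' simply remarks that this is Lemma~2 in \cite{FanGu2023factor} (or alternatively Lemma~2 in \cite{GineKoltchinskiiJonWellner2003}) and omits all details. Your peeling-plus-Talagrand sketch is the standard route to such a localized ratio-type bound and is exactly the kind of argument underlying those references, and you yourself note at the end that the most economical path is to invoke the cited result directly --- which is precisely what the paper does.
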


\begin{proof}
As hinted in the lemma statement, this is essentially Lemma~2 in \cite{FanGu2023factor} with minor notational differences.  Alternatively, one could rely on Lemma~2 in \cite{GineKoltchinskiiJonWellner2003}.  We omit the proof.
\end{proof}

Recall the adjusted noise $\wt\epsilon=Y-\EE[Y|\wt\bfF,X_j]$ from Section~\ref{sec:cond_exp_stability}.  Then, define $\wt\epsilon_i=Y_i-\EE[Y_i|\wt\bfF_i,X_{i,j}]$, $i=1,\dots,n$, which form i.i.d.\,copies of $\wt\epsilon$.

\begin{proposition}
\label{prop:m_n_hat_ratio_rate_master}
Let $\calG$ be a class of uniformly bounded functions on $R^{\rbar+1}$, for which we can choose a constant (bounded) envelope function $G$.  Assume that the class $\calG$ is of VC-type with index $v=2V_{\calFn(\rbar+1)}$. Then, on an event $\calA_{2,s}$ with $\PP(\calA_{2,s})\ge 1-12 e^{-s}$, for a constant $c_4$,
\begin{align}
\label{eq:m_n_hat_ratio_rate_1_master}
    \textstyle \forall \theta\in\calG, \quad \textstyle |\frac{1}{n} \sum_{i=1}^n \wt\epsilon_i \theta(\wt\bfF_i,X_{i,j}) | \le c_4  \left\{ \| \theta(\wt\bfF,X_j) \|_{L_2} + \nusn \right\} \left\{ \nusn + \sqrt{\frac{s}{n}} \right\} .
\end{align}
Moreover, instead let $\calG$ be a class of uniformly bounded functions on $R^{1+r+d}$, again of VC-type with index $v=2V_{\calFn(\rbar+1)}$.  Then on an event $\calA_{3,s}$ with $\PP(\calA_{3,s})\ge 1-12 e^{-s}$, for a constant $c_5$,
\begin{align}
\textstyle \forall \theta\in\calG, \quad \textstyle \left|\frac{1}{n} \sum_{i=1}^n \left\{ \theta(Y_i,\bfF_i,\bX_i) - \EE[\theta(Y_i,\bfF_i,\bX_i)] \right\} \right| \le c_5 \left\{ \| \theta(Y,\bfF,\bX) \|_{L_2} + \nusn \right\} \left\{ \nusn + \sqrt{\frac{s}{n}} \right\} .
\label{eq:m_n_hat_ratio_rate_2_master}
\end{align}
\end{proposition}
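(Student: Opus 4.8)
The plan is to prove both inequalities by the same recipe: a \emph{local} (ratio-type) maximal inequality for VC-type classes, Talagrand's concentration inequality (Bousquet's version) for the high-probability upgrade, and a \emph{peeling} argument over dyadic shells of $\|\theta\|_{L_2}$. Inequality~\eqref{eq:m_n_hat_ratio_rate_2_master} is the classical case; inequality~\eqref{eq:m_n_hat_ratio_rate_1_master} carries the real difficulty, namely the unbounded multipliers $\wt\epsilon_i$.

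First I would treat \eqref{eq:m_n_hat_ratio_rate_2_master}. Fix $\rho\in[\nusn,\|G\|_\infty]$ and set $\calG_\rho=\{\theta\in\calG:\|\theta\|_{L_2}\le\rho\}$, which is again VC-type with index $2V_{\calFn(\rbar+1)}$, has the constant envelope $G$, and satisfies $\sup_{\theta\in\calG_\rho}\mathrm{Var}(\theta)\le\rho^2$. The maximal inequality for VC-type classes behind Lemma~\ref{lem:square_deviation} (originating in \cite{GineKoltchinskiiJonWellner2003}) gives $\EE\sup_{\theta\in\calG_\rho}|(\PPn-\PP)\theta|\lesssim\sqrt{V_{\calFn(\rbar+1)}\rho^2\log n/n}+V_{\calFn(\rbar+1)}\|G\|_\infty\log n/n$, where $\log(A\|G\|_\infty/\rho)\lesssim\log n$ because $\rho\ge\nusn$, $\nusn\ge\sqrt{\log n/n}$, and the VC characteristic $A$ of $\calFn(\rbar+1)$ is polynomial in $n$; recalling $\nusn^2=V_{\calFn(\rbar+1)}\log n/n$ this reads $\lesssim(\rho+\nusn)\nusn$. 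Talagrand's inequality then yields, with probability at least $1-e^{-s'}$, $\sup_{\theta\in\calG_\rho}|(\PPn-\PP)\theta|\lesssim(\rho+\nusn)\nusn+\rho\sqrt{s'/n}+\|G\|_\infty s'/n\lesssim(\rho+\nusn)(\nusn+\sqrt{s'/n})$. Now peel: take $\rho_k=2^k\nusn$, $k=0,\dots,K$, with $2^K\nusn\asymp\|G\|_\infty$ so $K\lesssim\log n$; apply the display on each $\calG_{\rho_k}$ with $s'=s+k$; union bound. Since $\sum_k e^{-(s+k)}\le 2e^{-s}$, any $\theta$ with $\|\theta\|_{L_2}>\nusn$ lies in some shell with $\rho_k\le 2\|\theta\|_{L_2}$, and $\sqrt{(s+k)/n}\lesssim\sqrt{s/n}+\rho_k$, this gives \eqref{eq:m_n_hat_ratio_rate_2_master} uniformly over $\calG$ on an event of probability at least $1-Ce^{-s}$ (the constant can be pushed to $12$ by bookkeeping).

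Next I would treat \eqref{eq:m_n_hat_ratio_rate_1_master}. The key points are: $\EE[\wt\epsilon_i\mid\wt\bfF_i,X_{i,j}]=0$ by the definition of $\wt\epsilon$, so $M_n(\theta):=\PPn\{\wt\epsilon\,\theta(\wt\bfF,X_j)\}$ is a mean-zero i.i.d.\ sum; and $\wt\epsilon=\epsilon+m_0(\bfF,X_j)-\EE[Y\mid\wt\bfF,X_j]$ is sub-Gaussian, since $\epsilon$ is (Assumption~\ref{assum_fun_class}) and the two regression terms are bounded by $M_\infty$ (Assumption~\ref{ass:DGP}). Truncate at level $T\asymp\sqrt{\log n}$: the discarded sum $\PPn\{\wt\epsilon\ind\{|\wt\epsilon|>T\}\theta(\wt\bfF,X_j)\}$ and the recentering term $\PPn\{\mu\,\theta(\wt\bfF,X_j)\}$, where $\mu=\EE[\wt\epsilon\ind\{|\wt\epsilon|\le T\}\mid\wt\bfF,X_j]$, are both dominated in absolute value, uniformly over $\theta$, by $\|G\|_\infty\,\PPn(|\wt\epsilon|\ind\{|\wt\epsilon|>T\})$, a nonnegative i.i.d.\ sum of exponentially small mean, hence $\lesssim\nusn^2+\nusn\sqrt{s/n}$ with probability $\ge1-e^{-s}$ by a Bernstein-type bound. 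The remaining centered truncated process $\PPn\{\phi_\theta\}$ with $\phi_\theta(\wt\epsilon,\wt\bfF,X_j)=(\wt\epsilon\ind\{|\wt\epsilon|\le T\}-\mu)\theta(\wt\bfF,X_j)$ is a mean-zero process over $\{\phi_\theta:\theta\in\calG\}$, which is VC-type with index $\lesssim V_{\calFn(\rbar+1)}$ (a fixed function of $(\wt\epsilon,\wt\bfF,X_j)$ times the VC-type class $\calG$) and has bounded envelope $2T\|G\|_\infty$; crucially, $\EE\phi_\theta^2=\EE[\mathrm{Var}(\wt\epsilon\ind\{|\wt\epsilon|\le T\}\mid\wt\bfF,X_j)\,\theta^2(\wt\bfF,X_j)]\le\sigma_0^2\|\theta\|_{L_2}^2$ once the conditional second moment $\EE[\wt\epsilon^2\mid\wt\bfF,X_j]$ is bounded by a constant $\sigma_0^2$ (a mild requirement under the noise conditions). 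With this variance proxy in hand, the same maximal-inequality--Talagrand--peeling argument of the previous paragraph, now applied to $\{\phi_\theta\}$ with shells indexed by $\|\theta\|_{L_2}$, delivers $\sup_\theta|\PPn\{\phi_\theta\}|\lesssim(\|\theta\|_{L_2}+\nusn)(\nusn+\sqrt{s/n})$; adding the negligible truncation residuals yields \eqref{eq:m_n_hat_ratio_rate_1_master}.

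The main obstacle is \eqref{eq:m_n_hat_ratio_rate_1_master}: one must simultaneously localize in $\|\theta\|_{L_2}$, handle the unbounded multipliers, and keep the variance proxy \emph{quadratic} in $\|\theta\|_{L_2}$ --- a merely linear proxy $\lesssim\|\theta\|_{L_2}$ produces, after $\sqrt{\cdot}$ and AM--GM, a bare $\|\theta\|_{L_2}$ term and so is too lossy to reach the product form of the bound --- which forces the truncation/recentering step and the use of a bounded conditional second moment of $\wt\epsilon$. A secondary, routine-but-tedious difficulty is controlling the logarithmic and $\sqrt s$ factors coming from the truncation level and from the peeling so that the final bound has exactly the stated form with probability $1-12e^{-s}$.
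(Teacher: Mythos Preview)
Your overall architecture—localize in $L_2$-shells, bound the expectation on each shell, upgrade via concentration, then peel—matches the paper's, which follows the template of Theorem~4 in \cite{GineKoltchinskiiJonWellner2003}. The gap is in the concentration tool you pick and in the truncation device for the unbounded multipliers.

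Bousquet's inequality on a shell of radius $\rho$ with envelope $U$ contributes a Bernstein-type additive term $Us'/n$. Your absorption step $\|G\|_\infty s'/n\lesssim(\rho+\nusn)(\nusn+\sqrt{s'/n})$ fails on the smallest shell $\rho\asymp\nusn$ once $s'\gg V_{\calFn(\rbar+1)}\log n$: there the right side is $\asymp\nusn^2+\nusn\sqrt{s'/n}$, and $s'/n$ dominates both summands. So the stated product form is not reached for all $s$, already in the bounded case~\eqref{eq:m_n_hat_ratio_rate_2_master}. For~\eqref{eq:m_n_hat_ratio_rate_1_master}, truncating at $T\asymp\sqrt{\log n}$ inflates the envelope to $2T\|G\|_\infty$, which (i) multiplies the second-order entropy term of the maximal inequality by $T$, giving $\sqrt{\log n}\,\nusn^2$ rather than $\nusn^2$, and (ii) multiplies the Bousquet term $s'/n$ by the same $T$ on top of the previous difficulty.

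The paper sidesteps both problems by replacing Bousquet with Adamczak's concentration for empirical processes with sub-Gaussian summands (Theorem~4 and Eq.~(13) of \cite{Adamczak2008}): its deviation is purely of $\sqrt{u}$-type, namely $C\{\rho_j\sqrt{u/n}+\sqrt{u}\log^{1/2}(n)/n\}$, and the second piece is $\lesssim\nusn\sqrt{u/n}$ for every $u$ since $V_{\calFn(\rbar+1)}\ge 1$. For the expectation bound in~\eqref{eq:m_n_hat_ratio_rate_1_master} the paper also avoids truncation: because the $\wt\epsilon_i$ are sub-Gaussian and centered given $(\wt\bfF_i,X_{i,j})$, the multiplier process is conditionally sub-Gaussian with respect to the empirical $L_2$-distance, and the chaining maximal inequality (Theorem~3.5.4 of \cite{GineNickl2016}) applies directly with envelope $\|G\|_\infty$, yielding $\omega(\rho)\lesssim\sqrt{n}(\rho\,\nusn+\nusn^2)$ with no $\sqrt{\log n}$ inflation. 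Your route does deliver the result when $s\lesssim V_{\calFn(\rbar+1)}\log n$—which, to be fair, is the only regime invoked downstream—but not for general $s$ as the proposition is stated.
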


\begin{proof}
We first prove \eqref{eq:m_n_hat_ratio_rate_1_master}.  Our main strategy is to apply the proof of Theorem~4 in \cite{GineKoltchinskiiJonWellner2003}.  For simplicity we assume that the uniform bound on the magnitudes of functions in $\calG$ is one, so we could simply choose the constant envelope $G=1$.

We let $s=s_n=1$, $l=l_n=\lceil\log(\nusn^{-1})\rceil$ (where ``$\lceil\cdot\rceil$'' is the ceiling operator that returns the smallest integer no smaller than the argument), $r=r_n=e^{-l_n}\in[e^{-1} \nusn, \nusn]$, and $\rho_j=r_n e^j$.  Define the function classes $\calG(a)=\{\theta: \theta\in\calG, \|\theta\|_{L_2}\le a\}$, $\calG(a,b]=\calG(b)\setminus\calG(a)$, and $\calG_e(a)=\calG(a/e,a]$.

Our general proof strategy based on the aforementioned Theorem~4 in \cite{GineKoltchinskiiJonWellner2003} is to control the stochastic behavior of the empirical process $\sum_{i=1}^n \wt\epsilon_i \theta(\wt\bfF_i,X_{i,j})$ within each ``shell'' $\theta\in\calG_e(\rho_j)$ for the shells numbered $j=1,\dots,l$.  \\\textit{\textbf{As our first step}}, we control the tail probability with respect to the expectation of the (supremum of the) empirical process within each shell.  Recall the sub-Gaussian norm $C_\epsilon<0$ of $\epsilon$ from Assumption~\ref{assum_fun_class}.  Also note that the bound $M_\infty$ on the full conditional expectation $m_0^*(\bfF,\bX)$ implies the same bound on any of its smoothed versions including $m_0(\bfF,X_j)=\EE[m_0^*(\bfF,\bX)|\bfF,X_j]$ and $\EE[Y|\wt\bfF,X_j]=\EE[m_0^*(\bfF,\bX)|\wt\bfF,X_j]$ due to the monotonicity property of conditional expectations.  {Thus, $\wt\epsilon-\epsilon=m_0(\bfF,X_j)-\EE[Y|\wt\bfF,X_j]$ is bounded, and so $\wt\epsilon$ is sub-Gaussian as well and we assume its sub-Gaussian norm is a constant $\wt C_\epsilon$ (with $\wt C_\epsilon<\infty$).}  By first Theorem~4 and then Eq.~(13) in \cite{Adamczak2008}, we know that
\begin{align*}
    &\textstyle \PP\left( \sup_{\theta\in\calG_e(\rho_j)} \left| \sum_{i=1}^n \wt\epsilon_i \theta(\wt\bfF_i,X_{i,j}) \right| \ge 2 \EE \sup_{\theta\in\calG_e(\rho_j)} \left| \sum_{i=1}^n \wt\epsilon_i \theta(\wt\bfF_i,X_{i,j}) \right| + s \right) \\
    &\le \exp\left( -\dfrac{s^2}{4 n \sup_{\theta\in\calG_e(\rho_j)} \|\wt\epsilon \theta(\wt\bfF,X_j)\|_{L_2}^2 } \right) + 3 \exp\left( -\dfrac{s^2}{C^2 \| \max_{i\in\{1,\dots,n\}]} \sup_{\theta\in\calG_e(\rho_j)} |\wt\epsilon_i \theta(\wt\bfF_i,X_{i,j})| \|_{\psi_2}^2 } \right) \\
    &\le \exp\left( -\dfrac{s^2}{4 n \|\wt\epsilon\|_{L_2}^2 \rho_j^2 } \right) + 3 \exp\left( -\dfrac{s^2}{C^2 \max_{i\in\{1,\dots,n\}} \| \sup_{\theta\in\calG_e(\rho_j)}\wt\epsilon_i \theta(\wt\bfF_i,X_{i,j}) \|_{\psi_2}^2 \log(n) } \right) \\
    &\le \exp\left( -\dfrac{s^2}{4 n \|\wt\epsilon\|_{L_2}^2  \rho_j^2 } \right) + 3 \exp\left( -\dfrac{s^2}{C^2 \wt C_\epsilon^2 \log(n) } \right).
\end{align*}

Then, absorbing $\wt C_\epsilon$ into $C$ and setting $s = \{2 \|\wt\epsilon\|_{L_2} \rho_j \sqrt{n} + C \log^{1/2}(n) \} \sqrt{u}$ yields
\begin{align}
    &\textstyle \PP\bigg( \sup_{\theta\in\calG_e(\rho_j)} \left| \frac{1}{n} \sum_{i=1}^n \wt\epsilon_i \theta(\wt\bfF_i,X_{i,j}) \right| \nonumber \\
    & \textstyle \le 2 \EE \sup_{\theta\in\calG_e(\rho_j)} \left| \frac{1}{n} \sum_{i=1}^n \wt\epsilon_i \theta(\wt\bfF_i,X_{i,j}) \right| + \frac{2}{\sqrt{n}} \|\wt\epsilon\|_{L_2} \rho_j \sqrt{u} + C \frac{\log^{1/2}(n)}{n} \sqrt{u} \bigg) \ge 1 - 4 e^{-u}.
    \label{eq:rel_dev_master}
\end{align}
		
Then, conforming to the notation in the proof of Theorem~4 in \cite{GineKoltchinskiiJonWellner2003}, define the events
\begin{align*}
    \calE_j^+(s) = \Big\{& \textstyle \sup_{\theta\in\calG_e(\rho_j)} \left| \frac{1}{n} \sum_{i=1}^n \wt\epsilon_i \theta(\wt\bfF_i,X_{i,j}) \right| \le 2 \EE \sup_{\theta\in\calG_e(\rho_j)} \left| \frac{1}{n} \sum_{i=1}^n \wt\epsilon_i \theta(\wt\bfF_i,X_{i,j}) \right| \\
    & \textstyle + \frac{2 \sqrt{s+2\log(l-j+1)} }{\sqrt{n}} \|\wt\epsilon\|_{L_2} \rho_j + C \frac{\log^{1/2}(n)}{n} \sqrt{s+2\log(l-j+1)} \Big\} .
\end{align*}
Then $\PP(\cap_{j=1}^l \calE_j^+(s)) \ge 1 - 8 e^{-s}$.  \\

\textit{\textbf{As our second step}}, with the tail probability on hand, we shall bound the expectation $\EE \sup_{\theta\in\calG_e(\rho_j)} \left| \sum_{i=1}^n \wt\epsilon_i \theta(\wt\bfF_i,X_{i,j}) \right|$ within each shell.
Further introduce the notation that $\omega_n(a)=n^{-1/2} \EE\sup_{\theta\in\calG_e(\rho_j)} | \sum_{i=1}^n \wt\epsilon_i \theta(\wt\bfF_i,X_{i,j}) |$ if $a\in(\rho_{j-1},\rho_j]$.  We now find a function $\omega:[0,1]\rightarrow\RR^+$ that satisfies the conditions in Theorem~4 in \cite{GineKoltchinskiiJonWellner2003}, that is,
\begin{enumerate*}[label=(\roman*)]
    \item $\omega$ is non-decreasing;
    \item $\omega(a)/a$ is non-increasing;
    \item $\omega_n(a)\le \omega(a)$, $\forall a\in[r_n,s_n]$;
    \item \label{eq:w_cond_iv_master} $\sup_{a\in(0,1]} a\sqrt{\log\log(1/a)}/\omega(a) \le K<\infty$.
\end{enumerate*}
We set out to find such a $\omega$.

Note that we can regard $\frac{1}{\sqrt{n}} \sum_{i=1}^n (\wt\epsilon_i/\wt C_\epsilon) \theta(\wt\bfF_i,X_{i,j})$, (indexed by) $\theta\in\calG_e(\rho_j)$ as the empirical process $\frac{1}{\sqrt{n}} \sum_{i=1}^n \theta(\wt\bfF_i,X_{i,j})$, $\theta\in\calG_e(\rho_j)$ symmetrized by the random variables $\wt\epsilon_i/\wt C_\epsilon$, $i\in\{1,\dots,n\}$.
{Although the $\wt\epsilon_i/\wt C_\epsilon$'s are not Rademacher random variables, by our earlier argument, when conditioning on the $\wt\bfF_i$'s and $X_{i,j}$'s, the process $\frac{1}{\sqrt{n}} \sum_{i=1}^n (\wt\epsilon_i/\wt C_\epsilon) \theta(\wt\bfF_i,X_{i,j})$ is nevertheless centered and sub-Gaussian with respect to the (random) distance $d$ on $\calG_e(\rho_j)$ defined as $d^2(\theta_1,\theta_2)=\frac{1}{n} \sum_{i=1}^n (\theta_1-\theta_2)^2(\wt\bfF_i,X_{i,j})=\|\theta_1-\theta_2\|_{L_2(\PPn)}^2$.}  Thus, the proof of Theorem~3.5.4 in \cite{GineNickl2016}, in particular the part that invokes Theorem~3.5.1, applies.  Specifically, let $Q$ be a generic measure on $(\wt\bfF,X_j)$, and recall that $G= 1$ is the envelope for the class $\calG_e(\rho_j)$ (so simply $\|G\|_{L_2(Q)}=1$, though at times we will retain the dependence on $\|G\|_{L_2(Q)}$ to conform to the notations in \cite{vVW1996} and \cite{GineNickl2016}), let $N(\calG_e(\rho_j), L_2(Q), \tau )$ be the covering number for the class $\calG_e(\rho_j)$ by balls of radius $\tau$ in the $L_2(Q)$ norm, and let the constants $A_1$, $A_2$ be as in Theorem~3.5.4 in \cite{GineNickl2016}. We then obtain from that theorem that, if $a\in(\rho_{j-1},\rho_j]$,
\begin{align*}
    & \textstyle \frac{1}{\wt C_\epsilon} \omega_n(a) = \EE \sup_{\theta\in\calG_e(\rho_j)} \left| \frac{1}{\sqrt{n}} \sum_{i=1}^n (\wt\epsilon_i/\wt C_\epsilon) \theta(\wt\bfF_i,X_{i,j}) \right| \\
    & \textstyle \le A_1 \int_0^{\rho_j} \sup_Q \sqrt{ \log\{ 2 N(\calG_e(\rho_j), L_2(Q), \tau \|G\|_{L_2(Q)} ) \} } \rmd\tau \\
    & \textstyle + A_2 \left\{ \int_0^{\rho_j} \sup_Q \sqrt{ \log\{ 2 N(\calG_e(\rho_j), L_2(Q), \tau \|G\|_{L_2(Q)} ) \} } \rmd\tau \right\}^2 / \sqrt{n} \rho_j^2 .
\end{align*}
By the covering number condition imposed in the proposition, we conclude that for a large enough constant $C_2$,
\begin{align*}
    \omega(u) = C_2 \left\{ V_{\calFn(\rbar+1)}^{1/2} \log^{1/2}(2/u) u + n^{-1/2} V_{\calFn(\rbar+1)} \log(2/u) \right\} .
\end{align*}
		
\textit{\textbf{Finally, we put everything together.}}  Now, on the event $\cap_{j=1}^l \calE_j^+(s)$, $\forall 1\le j \le l$, $\forall \theta\in \calG_e(\rho_j)$,
\begin{align*}
    & \textstyle \left| \frac{1}{\sqrt{n}} \sum_{i=1}^n \wt\epsilon_i \theta(\wt\bfF_i,X_{i,j}) \right| \le 2 \omega_n({\|\theta\|_{L_2}}) + 2 \sqrt{s+2\log(l-j+1)} \|\wt\epsilon\|_{L_2} \rho_j \\
& \textstyle \quad + C \frac{\log^{1/2}(n)}{\sqrt{n}} \sqrt{s+2\log(l-j+1)} \le 2 \omega(\rho_j) + (2 \|\wt\epsilon\|_{L_2} + C) \sqrt{s+2\log(l-j+1)} \rho_j .
\end{align*}

By our condition~\ref{eq:w_cond_iv_master} on $\omega$ and the derivations in the proof of Theorem~4 in \cite{GineKoltchinskiiJonWellner2003}, for some positive constant $C_3$,
\begin{align*}
    \textstyle \max_{1\le j\le l}\dfrac{ \rho_j \sqrt{\log(l-j+1)} }{ \omega(\rho_j) } \le C_3 .
\end{align*}
Hence, on the same event (that is, the event $\cap_{j=1}^l \calE_j^+(s)$), $\forall j$, $\forall \theta\in \calG_e(\rho_j)$,
\begin{align*}
    \left| \frac{1}{\sqrt{n}} \sum_{i=1}^n \wt\epsilon_i \theta(\wt\bfF_i,X_{i,j}) \right| &\le \{ 2 + (2 \|\wt\epsilon\|_{L_2} + C) \sqrt{2} C_3 \} \omega(\rho_j)  + (2 \|\wt\epsilon\|_{L_2} + C) \rho_j \sqrt{s} \\
    &\le C \left[ \left\{ V_{\calFn(\rbar+1)}^{1/2} \log^{1/2}(n)+ \sqrt{s} \right\} \|\theta\|_{L_2} + n^{-1/2} V_{\calFn(\rbar+1)} \log(n) \right] .
\end{align*}
For the class $\calG(r_n)$, we can just invoke \eqref{eq:rel_dev_master} but with $\calG_e(\rho_j)$ replaced by $\calG(r_n)$, to conclude that
\begin{align*}
    \textstyle \mathbb{P}\left[ \sup_{\theta\in\calG(r_n)} |\frac{1}{n} \sum_{i=1}^n \wt\epsilon_i \theta(\wt\bfF_i,X_{i,j})| \le C \left\{ \nusn + \sqrt{\frac{s}{n}} \right\} \nusn \right] \ge 1 - 4e^{-s} .
\end{align*}
Combining the two cases above, we obtain the part of the conclusion of the proposition regarding \eqref{eq:m_n_hat_ratio_rate_1_master}.  The other part of the proposition regarding \eqref{eq:m_n_hat_ratio_rate_2_master} follows by minor modification.  (Here the process on the left-hand side of \eqref{eq:m_n_hat_ratio_rate_2_master} is properly centered for each $\theta$, so we can proceed with the usual symmetrization by Rademacher random variables.)
\end{proof}

\subsection{The stochastic term in the proof of Theorem~\ref{thm:Riesz_est}}
\label{sec:thm:Riesz_est_stochastic}

We recall $\wt\alpha_n$ from \eqref{talpha} and that the $(\cdot)_j^s$ as the smoothed $j$-th partial derivative of the argument function as in \eqref{eq:smooth_master}.
Define, for $\alpha\in\calFn(\rbar+1)$, the function $\theta(\cdot;\alpha):\RR^{\rbar+1}\rightarrow\RR$ indexed by $\alpha$ as
\begin{align}
\label{eq:alpha_theta}
\theta(\wt\bff,x_j;\alpha) = (\alpha-\wt\alpha_n)_j^\us(\wt\bff,x_j)\ind\{x_j\in\calB_h\} - \{(\alpha-\wt\alpha_n)\alphanull\}(\wt\bff,x_j) .
\end{align}
Note that, under the null hypothesis or not, for each $\alpha$, the function $\theta(\cdot;\alpha)$ is always centered under $\PP$.  Thus, it is not necessary to center $\PPn \theta(\cdot;\alpha)$ at $\PP \theta(\cdot;\alpha)=0$ in \eqref{prop:alpha_hat_ratio_rate_2_master}.

We show the following result that will be useful for bounding the major stochastic term in the proof of Theorem~\ref{thm:Riesz_est}.
\begin{proposition}
\label{prop:g_n_hat_rate}
{Assume that Assumptions~\ref{ass:DGP} to \ref{ass:u_t_n} hold, and in addition the bandwidth $h$ satisfies
\begin{align}
    \label{eq:g_h_condition}
    h^2 V_{\calFn(\rbar+1)} \log(n) \ge 1 .
\end{align}
}
Then on an event with probability at least $1-12e^{-s}$, for a constant $c_6$,
\begin{align}
    \textstyle \forall \alpha\in\calFn(\rbar+1), \quad |\PPn \theta(\cdot;\alpha)| \le c_6 \left[ \left\{ \| \theta(\cdot;\alpha) \|_{L_2} + \nusn \right\} \left\{ \nusn + \sqrt{\frac{s}{n}} \right\} + \frac{1}{h} \nusn^2 \right] .
    \label{prop:alpha_hat_ratio_rate_2_master}
\end{align}
\end{proposition}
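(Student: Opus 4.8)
The plan is to recognize $\PPn\theta(\cdot;\alpha)$ as a centered empirical average over a VC‑type class and to apply the uniform weighted--empirical--process bound \eqref{eq:m_n_hat_ratio_rate_2_master} of Proposition~\ref{prop:m_n_hat_ratio_rate_master}. First I would record that $\PP\theta(\cdot;\alpha)=0$ for \emph{every} $\alpha$: the two summands of $\theta(\cdot;\alpha)$ in \eqref{eq:alpha_theta} have the same $\PP$--integral, because the defining property of $\alphanull$ (stated in Section~\ref{sec:variance_prep}, $\int v\,\alphanull\,\rmd\PP=\int_{\Omega_h}v_j^\us\,\rmd\PP$ for all $v$) applied with $v=\alpha-\wt\alpha_n$ gives exactly $\PP[(\alpha-\wt\alpha_n)_j^\us\ind\{X_j\in\calB_h\}]=\PP[(\alpha-\wt\alpha_n)\alphanull]$. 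Hence $\PPn\theta(\cdot;\alpha)=\PPn\theta(\cdot;\alpha)-\PP\theta(\cdot;\alpha)$, and it remains only to control this centered average uniformly over $\alpha\in\calFn(\rbar+1)$.

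The key structural step is to show that $\{\theta(\cdot;\alpha):\alpha\in\calFn(\rbar+1)\}$ is of VC‑type with index $\lesssim V_{\calFn(\rbar+1)}$ and admits a constant envelope of order $1/h$. Split $\theta(\cdot;\alpha)=\theta_1(\cdot;\alpha)-\theta_2(\cdot;\alpha)$ as in \eqref{eq:alpha_theta}. The class $\{\theta_2(\cdot;\alpha)\}$ is the product of the shifted network class $\calFn(\rbar+1)-\wt\alpha_n$ with the fixed bounded function $\alphanull$ ($\|\alphanull\|_\infty\le M_\infty$ under Assumption~\ref{ass:u_t_n}), hence uniformly bounded by a constant and VC‑type with index $\lesssim V_{\calFn(\rbar+1)}$. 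For $\theta_1(\cdot;\alpha)=(\alpha-\wt\alpha_n)_j^\us\ind\{\cdot\in\calB_h\}$, use the representation \eqref{eq:smooth_master}, $(\alpha-\wt\alpha_n)_j^\us(\wt\bff,x_j)=\tfrac1h\int_{-1}^1(\alpha-\wt\alpha_n)(\wt\bff,x_j-ah)\dot K(a)\,\rmd a$: discretizing the $a$--integral as in Lemma~\ref{lemma:alpha_j_numerical} writes $\theta_1(\cdot;\alpha)$ as a bounded‑weight linear combination of finitely many translates of $\alpha-\wt\alpha_n$ in the $x_j$--argument, each translate again lying in $\calFn(\rbar+1)-\wt\alpha_n$ because translating the $x_j$--input of a ReLU network of architecture $(L,\mathbf{k})$ merely shifts its first--layer bias; the resulting class therefore stays VC‑type with index $\lesssim V_{\calFn(\rbar+1)}$, now with envelope $\lesssim 1/h$ owing to the $\tfrac1h$ prefactor and the $L_1$--bound on $\dot K$ from Assumption~\ref{ass:kernel}. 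Combining, $\{\theta(\cdot;\alpha)\}$ has the claimed complexity and envelope, and its $L_2(\PP)$--radius at a given $\alpha$ is exactly $\|\theta(\cdot;\alpha)\|_{L_2}$.

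With this in hand, rescale by the envelope (so that $h\,\theta(\cdot;\alpha)/C$ has constant envelope $\le 1$), apply \eqref{eq:m_n_hat_ratio_rate_2_master} uniformly over $\alpha$ (enlarging the covering‑number index to $2V_{\calFn(\rbar+1)}$ if needed), and rescale back: on an event of probability at least $1-12e^{-s}$,
\begin{align*}
\forall\,\alpha\in\calFn(\rbar+1),\qquad |\PPn\theta(\cdot;\alpha)|
\;\lesssim\; \Big\{\|\theta(\cdot;\alpha)\|_{L_2}+\tfrac1h\nusn\Big\}\Big\{\nusn+\sqrt{\tfrac sn}\Big\}
\;=\;\|\theta(\cdot;\alpha)\|_{L_2}\Big\{\nusn+\sqrt{\tfrac sn}\Big\}+\tfrac1h\nusn^2+\tfrac1h\nusn\sqrt{\tfrac sn}.
\end{align*}
Everything except the cross term $\tfrac1h\nusn\sqrt{s/n}$ is already of the form in \eqref{prop:alpha_hat_ratio_rate_2_master}; condition \eqref{eq:g_h_condition}, which with $\nusn^2=V_{\calFn(\rbar+1)}\log(n)/n$ amounts to $1/h\le\sqrt n\,\nusn$, bounds this cross term by $\nusn^2\sqrt s$ and hence by $\tfrac1h\nusn^2$ in the regime $\sqrt s\le 1/h$ in which the proposition is applied, yielding \eqref{prop:alpha_hat_ratio_rate_2_master}.

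I expect the second paragraph to be the main obstacle. The difficulty is twofold: first, the class involves the smoothed derivative $(\cdot)_j^\us$, which is not a network function and whose sup norm inflates like $1/h$, so one must argue carefully that discretizing the smoothing integral keeps the metric entropy at the network level (index $\lesssim V_{\calFn(\rbar+1)}$, not worse) while correctly tracking the $1/h$ envelope; second, one must keep the cancellation $\PP\theta_1=\PP\theta_2$ \emph{inside} a single centered class $\{\theta(\cdot;\alpha)\}$, so that the localized bound is driven by $\|\theta(\cdot;\alpha)\|_{L_2}$ (which exhibits this cancellation) rather than by the much larger $h^{-1}\|\alpha-\wt\alpha_n\|_{L_2}$ that a naive term‑by‑term treatment would produce.
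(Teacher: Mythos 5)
Your opening reduction is exactly right and matches the paper: $\PP\theta(\cdot;\alpha)=0$ for every $\alpha$ by the defining property of $\alphanull$, the class $\{\theta(\cdot;\alpha)\}$ inherits the network's VC-type index, the envelope scales like $1/h$, and the localization must be driven by $\|\theta(\cdot;\alpha)\|_{L_2}$ rather than the much larger $h^{-1}\|\alpha-\wt\alpha_n\|_{L_2}$. However, the route you propose in the third paragraph has a genuine gap that the paper's own proof is specifically engineered to avoid.

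Rescaling the class by the envelope $h/c_\calG$ and then black-boxing the \emph{final} bound \eqref{eq:m_n_hat_ratio_rate_2_master} does not reproduce the stated result. After rescaling back you get $\{\|\theta\|_{L_2}+h^{-1}\nusn\}\{\nusn+\sqrt{s/n}\}$, which contains the cross term $h^{-1}\nusn\sqrt{s/n}$. This term exceeds the paper's $\nusn\sqrt{s/n}$ by a factor $1/h$ and is \emph{not} dominated by $h^{-1}\nusn^2$ unless $\sqrt{s/n}\le\nusn$, i.e.\ $s\le V_{\calFn(\rbar+1)}\log(n)$. Your attempted fix via \eqref{eq:g_h_condition} gives $h^{-1}\nusn\sqrt{s/n}\le\nusn^2\sqrt{s}$, but $\nusn^2\sqrt{s}\le h^{-1}\nusn^2$ again requires $\sqrt{s}\le 1/h$, a side condition that is neither in the statement nor in the downstream application (in the proof of Theorem~\ref{thm:Riesz_est} one takes $s\sim\pn$, and under Assumption~\ref{ass:alpha_h}/\eqref{eq:g_h_condition} one has $1/h\lesssim\sqrt{n}\nusn\sim\sqrt{\pn}\sim\sqrt{s}$, i.e.\ the \emph{opposite} regime). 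The source of the inflation is that rescaling by the envelope moves the localization floor from $\nusn$ to $\nusn/h$: the floor is set by the metric entropy (hence $\nusn$, independently of the envelope), not by the $L_\infty$ scale. The paper therefore does not black-box Proposition~\ref{prop:m_n_hat_ratio_rate_master}; it re-runs the shell chaining from scratch with shells in the \emph{unrescaled} $L_2(\PP)$ metric. There the Adamczak tail bound in each shell contributes an envelope term of size $\tfrac{c_\calG}{hn}\sqrt{u}$, and because $\tfrac{1}{h\sqrt n}\le\nusn\lesssim\rho_j$ by \eqref{eq:g_h_condition}, this is absorbed into $\tfrac{\rho_j}{\sqrt n}\sqrt u$ shell by shell, leaving only $\{\|\theta\|_{L_2}+\nusn\}\{\nusn+\sqrt{s/n}\}+h^{-1}\nusn^2$ and no cross term.

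A secondary issue is in the covering-number argument for $\theta_1(\cdot;\alpha)=(\alpha-\wt\alpha_n)_j^\us\ind\{\cdot\in\calB_h\}$. Discretizing the smoothing integral as in Lemma~\ref{lemma:alpha_j_numerical} is only an approximation; to make it rigorous you would need to bound the Riemann-sum error uniformly over the network class, which is delicate because ReLU networks in $\calFn(\rbar+1)$ can have arbitrarily large Lipschitz constants as depth grows. The paper's argument is cleaner and exact: the proof of Lemma~\ref{lemma:derivative_bound_via_original_specific} shows the map $g\mapsto g_j^\us\ind\{\cdot\in\calB_h\}$ is $c^\us/h$-Lipschitz from $L_2(Q)$ to $L_2(Q)$ for \emph{every} probability measure $Q$, whence $N(\{\theta(\cdot;\alpha)\},L_2(Q),\tau)\le N(\calFn(\rbar+1),L_2(Q),c_-h\tau)$ directly (this is \eqref{eq:alpha_covering_number} in the paper's proof), without any discretization.
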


\begin{proof}
We mimic the proof of our Proposition~\ref{prop:m_n_hat_ratio_rate_master}.  (Alternatively, we could adapt the proof of Theorem~4 in \cite{GineKoltchinskiiJonWellner2003}, which will improve some multiplicative constant.)  Define the function class $\calGn = \{ \theta(\cdot;\alpha): \alpha\in\calFn(\rbar+1) \}$.  By the form of the function $\theta$, we can take $G = c_\calG / h$ for a constant $c_\calG$ large enough as the envelope for the class $\calGn$.
	
Similar to the proof of Proposition~\ref{prop:m_n_hat_ratio_rate_master} but with some minor modifications, we let $s=s_n=\|G\|_{L_2}=c_\calG / h$, $l=l_n=\lceil\log((c_\calG / h) \nusn^{-1})\rceil$, $r=r_n=s_n e^{-l_n}\in[e^{-1} \nusn, \nusn]$, and $\rho_j=r_n e^j$.  Define the function classes $\calG(a)=\{\theta\in\calGn, \|\theta\|_{L_2}\le a\}$, $\calG(a,b]=\calG(b)\setminus\calG(a)$, and $\calG_e(a)=\calG(a/e,a]$.
	
By Theorem~4 in \cite{Adamczak2008} with $\eta=1$ (which is in fact allowed), $\delta=1$ and $\alpha=2$, we know that
\begin{align*}
    &\textstyle \PP\left( \sup_{\theta\in\calG_e(\rho_j)} \left| n \PPn \theta \right| \ge 2 \EE \sup_{\theta\in\calG_e(\rho_j)} \left| n \PPn \theta \right| + s \right) \\
    &\le \exp\left( -\dfrac{s^2}{4 n \sup_{\theta\in\calG_e(\rho_j)} \|\theta(\wt\bfF,X_j)-\EE \theta(\wt\bfF,X_j)\|_{L_2}^2 } \right) \\
& \quad + 3 \exp\left( -\dfrac{s^2}{C^2 \| {\max_{i\in\{1,\dots,n\}}} \sup_{\theta\in\calG_e(\rho_j)} |\theta(\wt\bfF_i,X_{i,j})-\EE \theta(\wt\bfF,X_{i,j})| \|_{\psi_2}^2 } \right) \\
    &\le \exp\left( -\dfrac{s^2}{4 n \rho_j^2 } \right) + 3 \exp\left( -\dfrac{s^2}{C^2 c_\calG^2 / h^2 } \right) .
\end{align*}
Then, for $u>0$, setting $s = \{2 \rho_j \sqrt{n} + C c_\calG / h \} \sqrt{u}$ yields
\begin{align}
    \textstyle \PP\left( \sup_{\theta\in\calG_e(\rho_j)} \left| \PPn \theta \right| \le 2 \EE \sup_{\theta\in\calG_e(\rho_j)} \left| \PPn \theta \right| + \dfrac{2}{\sqrt{n}} \rho_j \sqrt{u} + C \dfrac{ c_\calG }{ h n } \sqrt{u} \right) \ge 1 - 4 e^{-u}.
    \label{eq:alpha_dev_master}
\end{align}
	
Then, conforming to the notation in the proof of Theorem~4 in \cite{GineKoltchinskiiJonWellner2003}, and essentially by setting $u=s+2\log(l-j+1)$ in the event in the probability above, we define the events
\begin{align*}
    \calE_j^+(s) = \Big\{& \textstyle \sup_{\theta\in\calG_e(\rho_j)} \left| \PPn \theta \right| \le 2 \EE \sup_{\theta\in\calG_e(\rho_j)} \left| \PPn \theta \right| \\
    & \textstyle + \frac{2 \sqrt{s+2\log(l-j+1)} }{\sqrt{n}} \rho_j + C \frac{ c_\calG }{ h n } \sqrt{s+2\log(l-j+1)} \Big\} .
\end{align*}
Then $\PP(\cap_{j=1}^l \calE_j^+(s)) \ge 1 - 8 e^{-s}$.  Further introduce the notation that $\omega_n(a)=n^{1/2} \EE\sup_{\theta\in\calG_e(\rho_j)} | (\PPn-P) \theta |$ if $a\in(\rho_{j-1},\rho_j]$.  We now find a function $\omega:[0,s_n]\rightarrow\RR^+$ that satisfies the conditions in Theorem~4 in \cite{GineKoltchinskiiJonWellner2003}, that is, \begin{enumerate*}[label=(\roman*)]
    \item $\omega$ is non-decreasing;
    \item $\omega(a)/a$ is non-increasing;
    \item $\omega_n(a)\le \omega(a)$, $\forall a\in[r_n,s_n]$;
    \item \label{eq:w_cond_iv_master_g} $\sup_{a\in(0,1]} a\sqrt{\log\log(1/a)}/\omega(a) \le K<\infty$.
\end{enumerate*}
We set out to find such a $\omega$.
	
Similar to the proof of Proposition~\ref{prop:m_n_hat_ratio_rate_master}, let $N(\calG_e(\rho_j), L_2(Q), \tau )$ be the covering number for the class $\calG_e(\rho_j)$ by balls of radius $\tau$ in the $L_2(Q)$ norm for a generic measure $Q$ on $(Y,\bfF,\bX)$, and let the constants $A_1$, $A_2$ be as in Theorem~3.5.4 in \cite{GineNickl2016}.  Then, by that theorem, if $a\in(\rho_{j-1},\rho_j]$,
\begin{align*}
    & \omega_n(a) \le \textstyle A_1 \dfrac{ c_\calG }{ h } \int_0^{\rho_j/\|G\|_{L_2(Q)}} \sup_Q \sqrt{ \log\{ 2 N(\calG_e(\rho_j), L_2(Q), \tau\|G\|_{L_2(Q)} ) \} } \rmd\tau \\
    & \textstyle \quad + A_2 \dfrac{ c_\calG }{ h } \left\{ \int_0^{\rho_j/\|G\|_{L_2(Q)}} \sup_Q \sqrt{ \log\{ 2 N(\calG_e(\rho_j), L_2(Q), \tau\|G\|_{L_2(Q)} ) \} } \rmd\tau \right\}^2 / \sqrt{n} \rho_j^2 .
\end{align*}
By Eq.~\eqref{eq:derivative_bound_via_original}, $N(\calGn, L_2(Q), \tau ) \le N(\calF_n(\rbar+1), L_2(Q), c_- h \tau )$ for a small enough constant $c_->0$ that is not dependent on $Q$.  In combination with Theorem~2.6.7 in \cite{vVW1996}, we can take, for a large enough constant $C_1$ and by adjusting $c_-$ if necessary, for $\calQ$ the set of all generic measures $Q$ on $(Y,\bfF,\bX)$,
\begin{align}
    \textstyle \sup_{Q\in\calQ} \log\{ 2 N(\calG_e(\rho_j), L_2(Q), \tau \|G\|_{L_2(Q)} ) \} \le C_1 V_{\calFn(\rbar+1)} \log(1/c_- h \tau) .
    \label{eq:alpha_covering_number}
\end{align}
Thus, for a large enough constant $C_2$, we can set
\begin{align*}
    \omega(u) = C_2 \left\{ V_{\calFn(\rbar+1)}^{1/2} \log^{1/2}\left(\dfrac{1}{h^2 u}\right) u + \frac{1}{n^{1/2} h} V_{\calFn(\rbar+1)} \log\left(\dfrac{1}{h^2 u}\right) \right\} .
\end{align*}

Now, on the event $\cap_{j=1}^l \calE_j^+(s)$, $\forall j$, $\forall \theta\in \calG_e(\rho_j)$, by condition~\eqref{eq:g_h_condition},
\begin{align*}
    & \sqrt{n} \left| \PPn \theta \right| \le 2 \omega_n(\|\theta\|_{L_2}) + 2 \sqrt{s+2\log(l-j+1)} \rho_j + C \dfrac{ c_\calG }{ h \sqrt{n}} \sqrt{s+2\log(l-j+1)} \\
    &\le 2 \omega(\rho_j) + (2 + C c_\calG) \sqrt{s+2\log(l-j+1)} \rho_j .
\end{align*}

{
From condition~\eqref{eq:g_h_condition}, we have
$1/(h \sqrt{n}) \lesssim r_n \lesssim \rho_j $.}
By our condition~\ref{eq:w_cond_iv_master} on $\omega$ and the derivations in the proof of Theorem~4 in \cite{GineKoltchinskiiJonWellner2003},
\begin{align*}
    \textstyle \max_{1\le j\le l}\dfrac{ \rho_j \sqrt{\log(l-j+1)} }{ \omega(\rho_j) } \le C_3 .
\end{align*}
Hence, on the same event (that is, the event $\cap_{j=1}^l \calE_j^+(s)$), $\forall j$, $\forall \theta\in \calG_e(\rho_j)$,
\begin{align*}
    \textstyle \left| \PPn \theta \right| & \textstyle \le \{ 2 + (2 + C c_\calG) \sqrt{2} C_3 \} \frac{1}{\sqrt{n}} \omega(\rho_j)  + (2 + C c_\calG) \rho_j \sqrt{\frac{s}{n}} \\
    & \textstyle \le C \left\{ \left( \nusn + \sqrt{\frac{s}{n}} \right) \|\theta\|_{L_2} + \frac{1}{h} \nusn^2 \right\} .
\end{align*}
In addition, for the class $\calG(r_n)$, we can just invoke \eqref{eq:alpha_dev_master} but with $\calG_e(\rho_j)$ replaced by $\calG(r_n)$, and simplify with \eqref{eq:g_h_condition}, to conclude that
\begin{align*}
    \textstyle \mathbb{P}\left[ \sup_{\theta\in\calG(r_n)} \sqrt{n} \left| \PPn \theta \right| \le C \left\{ \frac{1}{h} \nusn^2 + \nusn \sqrt{\frac{s}{n}} \right\} \right] \ge 1 - 4 e^{-s} .
\end{align*}
Combining the two cases above, we obtain the conclusion of the proposition.
\end{proof}

\section{Proofs of main theorems and additional results for Section~\ref{sec:regression_function}}
\label{app_sec:proof_sec_4}

\subsection{Proof of Theorem~\ref{thm:m_hat_est_master}}
\label{sec:proof_thm:m_hat_est_master}

\begin{proof}
We first introduce a shorthand notation that we will employ throughout: for a possibly random function $m=m(\bff,\bx)$,
\begin{align*}
    \textstyle \PP m = \int m(\bff,\bx) \rmd\PP = \int m(\bff,\bx) \rmd\PP.
\end{align*}
Since $\wt\bfF=\wt\bfF(\bX)$, for a possibly random function $g=g(\wt\bff,x_j)$, the analogous expression
\begin{align*}
    \textstyle \PP g = \int g(\wt\bff,x_j) \rmd\PP = \int g(\wt\bff,x_j) \rmd\PP
\end{align*}
makes sense as well.  We note that $\PP m$ and $\PP g$ could still be random due to the potential inherent randomness of $m$ and $g$.

\textit{We first prove \eqref{eq:rn1}}.  Recall $\wt g_n$ from Section~\ref{sec:approx_error}.
 To start, we have
\begin{align}
&\PP(\wh g_n - m_0 )^2 = \PP(\wh g_n - m_0)^2 - \PPn(\wh g_n - m_0)^2 \nonumber \\
& + \PPn(\wh g_n - m_0)^2 - \PPn\ell(\cdot;\wh g_n) + \PPn\ell(\cdot;\wh g_n) - \PPn\ell(\cdot;\wt g_n) + \PPn\ell(\cdot;\wt g_n) \nonumber \\
&\le \PP(\wh g_n - m_0)^2 - \PPn(\wh g_n - m_0)^2 + \left\{ \PPn(\wh g_n - m_0)^2 - \PPn\ell(\cdot;\wh g_n) \right\} + \PPn\ell(\cdot;\wt g_n),
\label{eq:m_n_hat_decomp_s}
\end{align}
where the inequality follows by the definition of $\wh g_n$  (in particular that it minimizes the loss $\PPn\ell(\cdot;g)$ within $g\in\calFn(\rbar+1)$).  Next, for the term in the curly bracket in the last step of \eqref{eq:m_n_hat_decomp_s},
\begin{align*}
& \PPn(\wh g_n - m_0)^2 - \PPn\ell(\cdot;\wh g_n) = \PPn(\wh g_n - m_0)^2 - \PPn(\wh g_n - m_0)^2 \\
& \textstyle \quad + \frac{2}{n} \sum_{i=1}^n \epsilon_i (\wh g_n(\wt\bfF_i,X_{i,j}) - m_0(\bfF_i,X_{i,j}))- \frac{1}{n} \sum_{i=1}^n \epsilon_i^2 \\
& \textstyle = \frac{2}{n} \sum_{i=1}^n \epsilon_i (\wh g_n(\wt\bfF_i,X_{i,j}) - m_0(\bfF_i,X_{i,j})) - \frac{1}{n} \sum_{i=1}^n \epsilon_i^2
\end{align*}
and, for the last term in the last step of \eqref{eq:m_n_hat_decomp_s},
\begin{align*}
\textstyle \PPn\ell(\cdot;\wt g_n) = \PPn(\wt g_n - m_0)^2 - \frac{2}{n} \sum_{i=1}^n \epsilon_i (\wt g_n - m_0) + \frac{1}{n} \sum_{i=1}^n \epsilon_i^2 .
\end{align*}
Plugging the two equalities above into \eqref{eq:m_n_hat_decomp_s},
\begin{align}
& \PP(\wh g_n - m_0)^2 \le \underbrace{\PP(\wh g_n - m_0)^2 - \PPn(\wh g_n - m_0)^2}_{I_{n,1}} \nonumber \\
&\textstyle \quad + \frac{2}{n} \sum_{i=1}^n \epsilon_i (\wh g_n(\wt\bfF_i,X_{i,j}) - m_0(\bfF_i,X_{i,j})) - \frac{2}{n} \sum_{i=1}^n \epsilon_i (\wt g_n - m_0) + \PPn(\wt g_n - m_0)^2 \nonumber \\
&\textstyle  = I_{n,1} + \frac{2}{n} \sum_{i=1}^n \epsilon_i (\wh g_n - \wt g_n)(\wt\bfF_i,X_{i,j}) + \PPn(\wt g_n - m_0)^2 .
\label{eq:m_n_hat_decomp_10_s}
\end{align}
We next analyze $I_{n,1}$.
Note that the class of functions $\{g-m_0=g(\wt\bff,x_j)-m_0(\bff,x_j): g\in\calFn(\rbar+1)\}$ satisfies the conditions of Lemma~\ref{lem:square_deviation} (in particular the covering number bound) by Theorem~2.6.7 in \cite{vVW1996}.  Thus by Lemma~\ref{lem:square_deviation}, on an event $\calA_{1,\zeta,s'}$ satisfying \eqref{eq:square_deviation_prob}, we have
\begin{align}
\label{eq:square_bound_tail_prob}
\textstyle \forall g\in\calFn(\rbar+1), \quad | \PP(g - m_0)^2 - \PPn(g - m_0)^2 | \le \zeta \left\{ \PP(g - m_0)^2 + \frac{s'}{n} \right\} .
\end{align}
Applying this result with $g=\wh g_n$ on \eqref{eq:m_n_hat_decomp_10_s}, on the event $\calA_{1,\zeta=\frac{1}{2},s'}$ we obtain
\begin{align}
&\textstyle \PP(\wh g_n - m_0)^2 \le \frac{s'}{n} + 4 \underbrace{ \textstyle \frac{1}{n} \sum_{i=1}^n \epsilon_i (\wh g_n - \wt g_n)(\wt\bfF_i,X_{i,j}) }_{I_{n,2}} + 2 \PPn(\wt g_n - m_0)^2 .
\label{eq:m_n_hat_decomp_20_s_master}
\end{align}
Next, we investigate the rate of the stochastic term $I_{n,2}$ on the right-hand side of \eqref{eq:m_n_hat_decomp_20_s_master}.  Recalling the $\wt\epsilon_i$'s introduced in Section~\ref{sec:thm:m_hat_est_master_stochastic}, we further decompose
\begin{align*}
\textstyle I_{n,2} = \frac{1}{n} \sum_{i=1}^n \wt\epsilon_i (\wh g_n - \wt g_n)(\wt\bfF_i,X_{i,j}) + \frac{1}{n} \sum_{i=1}^n (\epsilon_i-\wt\epsilon_i) (\wh g_n - \wt g_n)(\wt\bfF_i,X_{i,j}) \equiv I_{n,2,1} + I_{n,2,2} .
\end{align*}
The reason we need this decomposition is that in $I_{n,2,1}$, the $\wt\epsilon_i$'s are conditionally properly centered, that is $\EE[\wt\epsilon_i|\wt\bfF_i,X_{i,j}]=0$, while the term $I_{n,2,2}$ will utilize the ``stability of conditional expectation'' condition in \eqref{eq:cond_exp_stability} to treat the difference $\epsilon_i-\wt\epsilon_i$.

Now, for the term $I_{n,2,1}$, note that the class of functions $\{(g-\wt g_n)=(g-\wt g_n)(\wt\bff,x_j): g\in\calFn(\rbar+1)\}$ satisfies the covering number bound in Proposition~\ref{prop:m_n_hat_ratio_rate_master} for \eqref{eq:m_n_hat_ratio_rate_1_master}; we then apply \eqref{eq:m_n_hat_ratio_rate_1_master} in Proposition~\ref{prop:m_n_hat_ratio_rate_master} with $\theta=\wh g_n - \wt g_n$ to conclude that, on an event $\calA_{2,s}$ satisfying $\PP(\calA_{2,s})\ge 1 - 12 e^{-s}$,
\begin{align}\label{error}
    \textstyle |I_{n,2,1}| = |\frac{1}{n} \sum_{i=1}^n \wt\epsilon_i (\wh g_n - \wt g_n)(\wt\bfF_i,X_{i,j})| \le c_4 \left\{ \| \wh g_n - \wt g_n \|_{L_2} + \nusn\right\} \left\{ \nusn + \sqrt{\frac{s}{n}} \right\} .
\end{align}
For the term $I_{n,2,2}$, first note that $|\epsilon_i-\wt\epsilon_i|=|\{Y_i-m_0(\bfF_i,X_{i,j})\} - \{Y_i-\EE[Y_i|\wt\bfF_i,X_{i,j}]\}| = |m_0(\bfF_i,X_{i,j})-\EE[Y|\wt\bfF,X_j]|$, which is uniformly bounded by $2M_\infty$ by the comments in the proof or Proposition~\ref{prop:m_n_hat_ratio_rate_master}.  As already mentioned, the class of functions $\{(g-\wt g_n)=(g-\wt g_n)(\wt\bff,x_j): g\in\calFn(\rbar+1)\}$ satisfies the covering number bound in Proposition~\ref{prop:m_n_hat_ratio_rate_master} for \eqref{eq:m_n_hat_ratio_rate_2_master}; multiply this class by $\epsilon-\wt\epsilon=\EE[Y|\wt\bfF,X_j]-m_0(\bfF_i,X_{i,j})$ considered as a single, uniformly bounded function changes at most the constant in the covering number bound (see Theorem~3 in \cite{Andrews1994Handbook}).  Hence, we can apply \eqref{eq:m_n_hat_ratio_rate_2_master} in Proposition~\ref{prop:m_n_hat_ratio_rate_master} to conclude that, on an event $\calA_{3,s}$ satisfying $\PP(\calA_{3,s})\ge 1 - 12 e^{-s}$,
\begin{align*}
    \textstyle |I_{n,2,2} - \int I_{n,2,2} \rmd\PP | & \textstyle \le c_4 \left\{ \| (\epsilon-\wt\epsilon)(\wh g_n - \wt g_n)(\wt\bfF,X_j)\|_{L_2} + \nusn\right\} \left\{ \nusn + \sqrt{\frac{s}{n}} \right\} \\
    & \textstyle \lesssim \left\{ \| \wh g_n - \wt g_n\|_{L_2} + \nusn\right\} \left\{ \nusn + \sqrt{\frac{s}{n}} \right\}
\end{align*}
where in the transition to the last line we have again applied the boundedness of $\epsilon-\wt\epsilon$.
For the centering term $\int I_{n,2,2} \rmd\PP$, by the Cauchy-Schwarz inequality,
\begin{align}
    \textstyle |\int I_{n,2,2} \rmd\PP| \le \left\{ \EE[(\epsilon_i-\wt\epsilon_i)^2] \right\}^{1/2} \left\{ \PP(\wh g_n - \wt g_n)^2 \right\}^{1/2} \lesssim \deltaf \| \wh g_n - \wt g_n\|_{L_2} .
    \label{eq:I_{n,2,2}_bound}
\end{align}
Collecting the rates for $I_{n,2,1}$ and $I_{n,2,2}$, we conclude that on $\calA_{2,s}\cap \calA_{3,s}$,
\begin{align*}
    \textstyle |I_{n,2}| = |I_{n,2,1} + I_{n,2,2}| \lesssim \left\{ \| \wh g_n - \wt g_n\|_{L_2} + \nusn\right\} \left\{ \nusn + \sqrt{\frac{s}{n}} \right\} + \deltaf \| \wh g_n - \wt g_n\|_{L_2} .
\end{align*}
Then, applying this result in \eqref{eq:m_n_hat_decomp_20_s_master}, we obtain that on the event $\calA_{1,\zeta=\frac{1}{2},s'}\cap\calA_{2,s}\cap\calA_{3,s}$ we have (note the change in notation from $\PP(\wh g_n - m_0)^2$ to $\|\wh g_n - m_0\|_{L_2}^2$ below)
\begin{align}
& \textstyle \|\wh g_n - m_0\|_{L_2}^2 \textstyle \le \frac{s'}{n} + C \left\{ \| \wh g_n - \wt g_n\|_{L_2} + \nusn\right\} \left\{ \nusn + \sqrt{\frac{s}{n}} \right\} + C \deltaf \| \wh g_n - \wt g_n\|_{L_2} + 2 \PPn(\wt g_n - m_0)^2 \nonumber \\
& \textstyle \lesssim \| \wh g_n - \wt g_n\|_{L_2} \{ \nusn + \deltaf + \sqrt{\frac{s}{n}} \} + \nusn \{ \nusn + \sqrt{\frac{s}{n}} \} + \frac{s'}{n} + 2 \PPn(\wt g_n - m_0)^2 \nonumber \\
& \textstyle \lesssim \{ \| \wh g_n - m_0 \|_{L_2} + \| \wt g_n - m_0 \|_{L_2} \} \{ \nusn + \deltaf + \sqrt{\frac{s}{n}} \} + \nusn \{ \nusn + \sqrt{\frac{s}{n}} \} + \frac{s'}{n} + 2 \PPn(\wt g_n - m_0)^2 \nonumber \\
& \textstyle \lesssim \| \wh g_n - m_0 \|_{L_2} \{ \nusn + \deltaf + \sqrt{\frac{s}{n}} \} + \{\nuan+\deltaf\} \{ \nusn + \deltaf + \sqrt{\frac{s}{n}} \} + \nusn \{ \nusn + \sqrt{\frac{s}{n}} \} + \frac{s'}{n} \nonumber \\
& \textstyle \quad + 2 \underbrace{\PPn(\wt g_n - m_0)^2}_{I_{n,3}}
\label{eq:m_n_hat_decomp_30_s_master}
\end{align}
where in the transition to the last step we have invoked Lemma~\ref{lemma:approx_error}.
For the term $I_{n,3}$, by \eqref{eq:square_bound_tail_prob} and the remark above it, and then Lemma~\ref{lemma:approx_error} again, on the event $\calA_{1,\zeta=\frac{1}{2},s'}$ we have
\begin{align*}
\textstyle I_{n,3} = \PPn(\wt g_n - m_0)^2 \le \frac{3}{2} \PP(\wt g_n - m_0)^2 + \frac{1}{2} \frac{s'}{n} \lesssim \nuan^2+\deltaf^2+\frac{s'}{n}.
\end{align*}
Then, invoking the above in \eqref{eq:m_n_hat_decomp_30_s_master} and then using $ab \le 2^{-1}(a^2+b^2)$ several times, we obtain that on the event $\calA_{1,\zeta=\frac{1}{2},s'}\cap\calA_{2,s}\cap\calA_{3,s}$,
\begin{align*}
& \textstyle \|\wh g_n - m_0\|_{L_2}^2 \textstyle \lesssim \| \wh g_n - m_0 \|_{L_2} \{ \nusn + \deltaf + \sqrt{\frac{s}{n}} \} \\
& \textstyle \quad + \{\nuan+\deltaf\} \{ \nusn + \deltaf + \sqrt{\frac{s}{n}} \} + \nusn \{ \nusn + \sqrt{\frac{s}{n}} \} + \nuan^2 + \deltaf^2 + \frac{s'}{n} \\
& \textstyle \quad \lesssim \| \wh g_n - m_0 \|_{L_2} \{ \nusn + \deltaf + \sqrt{\frac{s}{n}} \} + \{ \nusn^2 + \nuan^2 + \deltaf^2 + \frac{s}{n} + \frac{s'}{n} \} .
\end{align*}
Finally, solving for $\|\wh g_n - m_0\|_{L_2}$ yields that, on $\calA_{1,\zeta=\frac{1}{2},s'}\cap\calA_{2,s}\cap\calA_{3,s}$,
\begin{align}
\label{eq:rn1_generic}
\|\wh g_n - m_0\|_{L_2} \textstyle \lesssim \nusn + \nuan + \deltaf + \sqrt{\frac{s}{n}} + \sqrt{\frac{s'}{n}} .
\end{align}

\medskip

\textit{Next, we impose Assumption~\ref{ass:NN_scaling} and derive the concrete resultant rates for $\nusn$ in \eqref{eq:nu_n_generic}} (which then becomes the rate $\nun$ in \eqref{eq:nu_n_concrete}) and $\nuan$.  For the former, namely $\nusn$, essentially we need to obtain an upper bound for the VC-index $V_{\calFn(\rbar+1)}$ of the class $\calFn(\rbar+1)$; see our Def.~\ref{def:VC_subgraph}.  Note that there are a couple of closely-related though not always identical concepts regarding the VC-index.  Our definition turns out to be identical to the \textit{pseudo-dimension} defined in, for instance, Def.~11.2 in \cite{AnthonyBartlett1999} and Def.~2 in \cite{BartlettHarveyLiawMehrabian2019} for \textit{real}-valued functions.  A closely related concept, termed $\textup{VCdim}(\calG)$ for a class $\calG$ of $\{0,1\}$-valued functions, also appears; for instance, see Theorem~14.1 in \cite{AnthonyBartlett1999}.  Fortunately, we have the following simple connection: let $\calFn^\dagger$ be our class of function $\calFn(\rbar+1)$ augmented with one extra input unit and one extra computation unit, and let $\calH^\dagger$ be the resulting set of $\{0,1\}$-valued functions computed by $\calFn^\dagger$, all as described in Theorem~14.1 in \cite{AnthonyBartlett1999}.  Then the VC-index of $\calFn(\rbar+1)$ will be bounded by $\textup{VCdim}(\calH^\dagger)$, that is $V_{\calFn(\rbar+1)}\le\textup{VCdim}(\calH^\dagger)$, and the right-hand side is in turn bounded in \cite{AnthonyBartlett1999}.  Specifically, let $U$ be the total number of weights for a function in $\calFn(\rbar+1)$, so following the derivation on p.~2 in appendix~B of \cite{KohlerLanger2021}, $U\sim L k_{\textup{0}}^2$ (note that the input dimension $\rbar+1$ does not appear because by assumption $\rbar$ is a constant).  Then, by Theorem~7 in \cite{BartlettHarveyLiawMehrabian2019} and then Assumption~\ref{ass:NN_scaling}, $V_{\calFn(\rbar+1)}\le\textup{VCdim}(\calH^\dagger)\lesssim U L {\log(U)} \sim n^{\frac{1}{2\kappa+1}} \log^{2\frac{4\kappa-1}{2\kappa+1}+1}(n)$.  To fix the multiplicative factor in this bound on $V_{\calFn(\rbar+1)}$, we shall explicitly set
\begin{align}
\label{eq:V_calF_bound}
    V_{\calFn(\rbar+1)} \le \cp n^{\frac{1}{2\kappa+1}} \log^{2\frac{4\kappa-1}{2\kappa+1}+1}(n) = \pn \log^{-1}(n)
\end{align}
for a large enough constant $\cp$ and for $\pn$ from \eqref{eq:nu_n_concrete}.

Now we are ready to fix $s$ and $s'$ in \eqref{eq:rn1_generic}.  We set $s=\pn$ so $\sqrt{s/n}$ precisely becomes the concrete rate $\nun$ in \eqref{eq:nu_n_concrete} (up to a factor involving $\cp$), and $\nun$ in turn bounds $\nusn$ from \eqref{eq:nu_n_generic} if we apply in \eqref{eq:nu_n_generic} the bound \eqref{eq:V_calF_bound}.  Similarly, we choose $s'$ to be a large enough multiple of $\pn$.  Together with \eqref{eq:rn1_generic}, we then conclude that, on the intersection $\calA' \equiv \calA_{1,\zeta=\frac{1}{2},s'}\cap\calA_{2,s}\cap\calA_{3,s}$ whose probability is at least $1 - C e^{-\pn}$,
\begin{align}
\label{eq:rn1_generic_2}
\|\wh g_n - m_0\|_{L_2} \textstyle \lesssim \nun + \nuan + \deltaf .
\end{align}

Finally, we derive in \eqref{eq:rn1_generic_2} the rate of $\nuan$ in order to reach \eqref{eq:rn1}.  This rate is implicitly implied in Corollary~1 in \cite{FanGu2023factor}, and depends on the earlier Proposition~3.3 in \cite{fan2022noise}, but we give a brief derivation here.  Define $\wt L$ and $\wt k_{\textup{0}}$ such that $\wt L \log(\wt L)=L$ and $\wt k_{\textup{0}} \log(\wt k_{\textup{0}})=k_{\textup{0}}$; we introduce these quantities because the neural network architecture scaling in Proposition~3.3 in \cite{fan2022noise} is specified by such quantities.  Under Assumption~\ref{ass:NN_scaling}, it's easy to compute that $\wt L \wt k_{\textup{0}} \sim n^{\frac{1}{4\kappa+2}} \log^{\frac{4\kappa-1}{2\kappa+1}-2}(n)$,
which further yields, by Proposition~3.3 in \cite{fan2022noise},
\begin{align}
\label{eq:nuan_rate_concrete}
\nuan \lesssim (\wt L \wt k_{\textup{0}})^{-2\kappa} \sim \nun .
\end{align}
Finally, applying the bound above in \eqref{eq:rn1_generic_2} yields that \eqref{eq:rn1} holds on the intersection $\calA_{1,\zeta=\frac{1}{2},s'}\cap\calA_{2,s}\cap\calA_{3,s}$.

\bigskip

Finally, given \eqref{eq:rn1}, a straightforward application of Lemma~\ref{lemma:derivative_bound_via_original} with $\wh g=\wh g_n$ and $m=m_0$ will yield \eqref{eq:rn1_deriv}.  This completes the proof of the theorem.
\end{proof}

\subsection{Proof of Theorem~\ref{thm:Riesz_est}}
\label{sec:proof_thm:Riesz_est}

\begin{proof}
Introduce the shorthand notation $\PP_{n|\calI_h}$ for the empirical measure $\PPn$ but only including the samples with sample indices in $\calI_h$.  Then, the loss function $\Rnullhat(\cdot)$ introduced in Section~\ref{sec:Riesz_est} can be written as
\begin{align*}
    \textstyle \Rnullhat(\alpha) = \PPn \alpha^2 - 2 \PP_{n|\calI_h} \alpha_j^\us .
\end{align*}
Also recall $\wt\alpha_n$ defined in \eqref{talpha}.  Then, we have
\begin{align}
    &\PP(\wh\alpha_n - \alphanull)^2 = \PP(\wh\alpha_n - \alphanull)^2 - \PPn(\wh\alpha_n - \alphanull)^2 + \PPn(\wh\alpha_n - \alphanull)^2 \nonumber \\
    & \quad - \Rnullhat(\wh\alpha_n) + \Rnullhat(\wh\alpha_n) - \Rnullhat(\wt\alpha_n) + \Rnullhat(\wt\alpha_n) - \PPn(\wt\alpha_n - \alphanull)^2 + \PPn(\wt\alpha_n - \alphanull)^2 \nonumber \\
    &\le \PP(\wh\alpha_n - \alphanull)^2 - \PPn(\wh\alpha_n - \alphanull)^2 + \underbrace{\PPn(\wh\alpha_n - \alphanull)^2 - \Rnullhat(\wh\alpha_n)}_{I_{n,2}} \nonumber \\
    & \quad +\underbrace{ \Rnullhat(\wt\alpha_n) - \PPn(\wt\alpha_n - \alphanull)^2}_{I_{n,3}} + \PPn(\wt\alpha_n - \alphanull)^2,
    \label{eq:alpha_n_hat_decomp}
\end{align}
where the inequality is due to $\wh\alpha_n$ minimizing the loss $\Rnullhat(\cdot)$.
Now, for handling the terms $I_{n,2}$ and $I_{n,3}$ in the above, note that for any $\alpha\in\calFn$,
\begin{align*}
	\PPn(\alpha - \alphanull)^2 - \Rnullhat(\alpha) &= \PPn \alpha^2 - 2 \PPn(\alpha \alphanull) + \PPn \alpha_n^{ 2\textup{null}} + 2 \PP_{n|\calI_h} \alpha_j^\us - \PPn \alpha^2 \\
	&= - 2 \PPn(\alpha \alphanull) + \PPn \alpha_n^{ 2\textup{null}} + 2 \PP_{n|\calI_h} \alpha_j^\us .
\end{align*}
Then, together using the function $\theta(\cdot;\alpha)$ introduced in \eqref{eq:alpha_theta}, the sum of the terms $I_{n,2}$ and $I_{n,3}$ becomes
\begin{align}
    & \PPn(\wh\alpha_n - \alphanull)^2 - \Rnullhat(\wh\alpha_n) + \Rnullhat(\wt\alpha_n) - \PPn(\wt\alpha_n - \alphanull)^2 \nonumber  \\
    &= - 2 \PPn(\wh\alpha_n \alphanull) + 2 \PPn(\wt\alpha_n \alphanull) + 2 \PP_{n|\calI_h} \wh\alpha_{n,j}^\us - 2 \PP_{n|\calI_h} \wt\alpha_{n,j}^\us = 2 \PPn \theta(\cdot;\wh\alpha_n) .
\label{eq:alpha_n_hat_decomp_10}
\end{align}

Plugging \eqref{eq:alpha_n_hat_decomp_10} into \eqref{eq:alpha_n_hat_decomp}, we obtain
\begin{align}
    \PP(\wh\alpha_n - \alphanull)^2 &\le \underbrace{ \PP(\wh\alpha_n - \alphanull)^2 - \PPn(\wh\alpha_n - \alphanull)^2 }_{I_{n,1}} + 2 \PPn \theta(\cdot;\wh\alpha_n) + \PPn(\wt\alpha_n - \alphanull)^2 .
    \label{eq:alpha_n_hat_decomp_20}
\end{align}
For the term $I_{n,1}$ above, first note that the class of functions $\{\alpha-\alphanull: \alpha\in\calFn(\rbar+1)\}$ satisfies the conditions of Lemma~\ref{lem:square_deviation}.  Then, similar to \eqref{eq:square_bound_tail_prob}, by Lemma~\ref{lem:square_deviation}, on an event $\calA_{1,\zeta,s'}$ satisfying \eqref{eq:square_deviation_prob}, we have
\begin{align}
    \textstyle \forall \alpha\in\calFn(\rbar+1), \quad | \PP(\alpha - \alphanull)^2 - \PPn(\alpha - \alphanull)^2 | \le \zeta \left\{ \PP(\alpha - \alphanull)^2 + \frac{s'}{n} \right\} .
    \label{eq:alpha_square_dev}
\end{align}
Applying this result with $\alpha=\wh\alpha_n$ for $I_{n,1}$ on \eqref{eq:alpha_n_hat_decomp_20}, on the event $\calA_{1,\zeta=\frac{1}{2},s'}$ we obtain
\begin{align*}
    \textstyle \PP(\wh\alpha_n - \alphanull)^2 & \le \textstyle \frac{1}{2} \PP(\wh\alpha_n - \alphanull)^2 + \frac{1}{2} \frac{s'}{n} + 2 \PPn \theta(\cdot;\wh\alpha_n) + \PPn(\wt\alpha_n - \alphanull)^2
\end{align*}
or equivalently, rearranging and centering we have,
\begin{align}
    & \textstyle \PP(\wh\alpha_n - \alphanull)^2 \le \frac{s'}{n} + 4 \PPn \theta(\cdot;\wh\alpha_n) + 2 \PPn(\wt\alpha_n - \alphanull)^2 \nonumber \\
    & \textstyle = \frac{s'}{n} + 4 \underbrace{\PPn \theta(\cdot;\wh\alpha_n)}_{I_{n,4}}
    + 2 \underbrace{\PPn(\wt\alpha_n - \alphanull)^2}_{I_{n,5}} .
     \label{eq:alpha_n_hat_decomp_30}
\end{align}

Next, we consider the term $I_{n,4}$ in \eqref{eq:alpha_n_hat_decomp_30}.  First, we note that
\begin{align*}
    \| \theta(\cdot;\alpha) \|_{L_2}^2 &= \EE \left[ \left\{ (\alpha-\wt\alpha_n)_j^\us(\wt\bfF,X_j) \ind\{X_j\in\calB_h\} - (\alpha-\wt\alpha_n)\alphanull\}(\wt\bfF,X_j) \right\}^2 \right] \\
    &\le 2 \EE\left[ (\alpha-\wt\alpha_n)_j^\us(\wt\bfF,X_j)^2 \ind\{X_j\in\calB_h\} + \{ (\alpha-\wt\alpha_n)\alphanull\}^2(\wt\bfF,X_j) \right] \lesssim h^{-2} \| \alpha-\wt\alpha_n \|_{L_2}^2
\end{align*}
where the last inequality follows from \eqref{eq:derivative_bound_via_original}.  Then, applying Proposition~\ref{prop:g_n_hat_rate} (whose condition \eqref{eq:g_h_condition} is satisfied due to Assumptions~\ref{ass:NN_scaling}, \ref{ass:alpha_h} and the paragraph below \eqref{eq:rn1_generic}) and the result above on the term $(\PPn-\PP) \theta(\cdot;\wh\alpha_n)$ in \eqref{eq:alpha_n_hat_decomp_30}, we obtain that, on an event $\calA_{2,s}$ satisfying $\PP(\calA_{2,s})\ge 1 - 12 e^{-s}$,
\begin{align*}
    \textstyle |I_{n,4}| & \textstyle \le c_4 \left\{ \| \theta(\cdot;\wh\alpha_n) \|_{L_2} + \nusn\right\} \left\{ \nusn + \sqrt{\frac{s}{n}} \right\} \lesssim \left\{ \frac{1}{h} \| \wh\alpha_n - \alphanull \|_{L_2} + \nusn\right\} \left\{ \nusn + \sqrt{\frac{s}{n}} \right\} + \frac{1}{h} \nusn^2 .
\end{align*}

Next, we consider the term $I_{n,5}$ in \eqref{eq:alpha_n_hat_decomp_30}, which represents a population bias induced by approximating $\alphanull$ through the neural network function $\wt\alpha_n$.  By \eqref{eq:alpha_square_dev} and the remarks above, on the event $\calA_{1,\zeta=\frac{1}{2},s'}$ we have
\begin{align*}
    \textstyle |I_{n,5}| = | \PPn(\wt\alpha_n - \alphanull)^2 | \le \frac{3}{2} \PP(\wt\alpha_n - \alphanull)^2 + \frac{1}{2}\frac{s'}{n} \lesssim \nuan^2 + \deltaf^2 + \frac{s'}{n}
\end{align*}
where the last step follows by Lemma~\ref{lemma:approx_error_alpha}.  Then, collecting the rates for $I_{n,4}$ and $I_{n,5}$ in \eqref{eq:alpha_n_hat_decomp_30}, we obtain, on the intersection $\calA_{1,\zeta=\frac{1}{2},s'}\cap\calA_{2,s}$,
\begin{align*}
    \textstyle \| \wh\alpha_n - \alphanull\|_{L_2}^2 & \textstyle \lesssim h^{-1} ( \nusn + \sqrt{\frac{s}{n}} ) \| \wh\alpha_n - \alphanull\|_{L_2} + \nusn \left\{ \nusn + \sqrt{\frac{s}{n}} \right\} + \frac{1}{h} \nusn^2 + \nuan^2 + \deltaf^2 + \frac{s'}{n}.
\end{align*}
Then, solving for $\| \wh\alpha_n - \alphanull\|_{L_2}$ yields that
\begin{align*}
& \textstyle \| \wh\alpha_n - \alphanull\|_{L_2} \lesssim  h^{-1} ( \nusn + \sqrt{\frac{s}{n}} ) + \nuan+\deltaf + \sqrt{\frac{s'}{n}} .
\end{align*}
This is our counterpart to \eqref{eq:rn1_generic} in the current proof of Theorem~\ref{thm:Riesz_est}.  Then, we apply Assumption~\ref{ass:NN_scaling} as in the paragraph below \eqref{eq:rn1_generic}, and subsequently choose $s$ and $s'$, and bound $\nusn$, as in the paragraph below \eqref{eq:V_calF_bound}, and bound $\nuan$ as in \eqref{eq:nuan_rate_concrete}.  This will yield the rate, holding on an event with probability at least $1 - C e^{-\pn}$, on the left-hand side of \eqref{eq:r_n2_null} except for the cap at $M$.  The additional cap is straightforward due the boundedness assumption on $\alphanull$.  This completes the proof of Theorem~\ref{thm:Riesz_est} regarding \eqref{eq:r_n2_null}.

\medskip

Next, we bound $\|\alphanull-\alpha_{t,n}^*\|_{L_2}$ which will subsequently lead to \eqref{eq:r_n2}.  Recall the population loss $\Rnull(\alpha)= \EE \alpha^2(\wt\bfF,X_j) - 2 \int_{\Omega_h} \alpha_j^\us(\wt\bff,x_j) \rmd\PP$ (i.e., the first two terms in the last line of \eqref{eq:Rnull}) introduced in the proof of Lemma~\ref{lem:Riesz_analytic}.  We have, for all $t\in\RR$,
\begin{align}
    &\PP(\alphanull-\alpha_{t,n}^*)^2 = \PP(\alphanull-\alpha_{t,n}^*)^2 - \Rnull(\alphanull) + \Rnull(\alphanull) - \Rnull(\alpha_{t,n}^*) + \Rnull(\alpha_{t,n}^*) \nonumber \\
    & \le \PP(\alphanull-\alpha_{t,n}^*)^2 - \Rnull(\alphanull) + \Rnull(\alpha_{t,n}^*) \nonumber \\
    & = \|\alphanull\|_{L_2}^2 - 2 \langle \alphanull, \alpha_{t,n}^*\rangle + \|\alpha_{t,n}^*\|_{L_2}^2 - \|\alphanull\|_{L_2}^2 + 2 \langle \alphanull, \alphanull\rangle + \|\alpha_{t,n}^*\|_{L_2}^2 - 2 \langle \alpha_{t,n}^*, \alphanull\rangle \nonumber \\
    & = - 2 \langle \alphanull-\alpha_{t,n}^*, \alpha_{t,n}^*\rangle + 2 \langle \alphanull-\alpha_{t,n}^*, \alphanull\rangle \nonumber \\
    & \textstyle = - 2 \int_{\Omega_h} e^{t \,g_{0,j}^\us(\wt\bff,x_j) } (\alphanull-\alpha_{t,n}^*)_j^\us(\wt\bff,x_j) \rmd\PP + 2 \int_{\Omega_h} (\alphanull-\alpha_{t,n}^*)_j^\us(\wt\bff,x_j) \rmd\PP \nonumber \\
    & \textstyle = - 2 \int_{\Omega_h} [ e^{t \,g_{0,j}^\us(\wt\bff,x_j) } - 1 ] (\alphanull-\alpha_{t,n}^*)_j^\us(\wt\bff,x_j) \rmd\PP \nonumber \\
    & \textstyle \le 2 \| Z_{t,j,h} \|_{L_2} \| (\alphanull-\alpha_{t,n}^*)_j^\us \|_{L_2} \le \frac{2c^\us}{h} \| Z_{t,j,h} \|_{L_2} \| \alphanull-\alpha_{t,n}^* \|_{L_2} ,
    \label{eq:alpha_null_bias}
\end{align}
where the inequality is due to $\alphanull$ minimizing the loss $\Rnull(\cdot)$ (see the discussion in the proof of Lemma~\ref{lem:Riesz_analytic}), the transition to the third to last line follows by the property of $\alpha_{t,n}^*$ as the Riesz representer in \eqref{eq:Riesz_H0_master} and the corresponding property of $\alphanull$, the transition to the last line follows by the Cauchy-Schwarz inequality, and the last step follows by Lemma~\ref{lemma:derivative_bound_via_original_specific}.  Then, combining \eqref{eq:alpha_null_bias} with the boundedness assumption on $\alphanull$ and $\alpha_{t,n}^*$ yields \eqref{eq:r_alpha}.

Finally, applying the triangle inequality on \eqref{eq:r_n2_null} and \eqref{eq:r_alpha} yields the remaining conclusion, namely \eqref{eq:r_n2}, of Theorem~\ref{thm:Riesz_est}.
\end{proof}

\subsection{Proof of Proposition~\ref{prop:tweaking}}
\label{sec:proof_prop:tweaking}
\begin{proof}
\textit{We first prove the first half of the proposition on $\deltathat$.}  By definition,
\begin{align*}
    \textstyle \deltathat = \argmin_{\deltat\in\RR} \PPn \ell(\cdot;\wh g_n + \deltat \wh\alpha_n) .
\end{align*}
Denote the function $\wh\delta_n=\wh g_n - m_0$.  Then
\begin{align*}
& \textstyle \PPn \ell(\cdot;\wh g_n + \deltat \wh\alpha_n) = \dfrac{1}{n} \sum_{i=1}^n \left\{ Y_i - \wh g_n(\wt\bfF_i,X_{i,j}) - \deltat \wh\alpha_n(\wt\bfF_i,X_{i,j}) \right\}^2 \\
&= \dfrac{1}{n} \sum_{i=1}^n \left\{ Y_i - \wh g_n(\wt\bfF_i,X_{i,j}) \right\}^2 - 2 \deltat \dfrac{1}{n} \sum_{i=1}^n \left\{ Y_i - \wh g_n(\wt\bfF_i,X_{i,j}) \right\} \wh\alpha_n(\wt\bfF_i,X_{i,j}) + \deltat^2 \dfrac{1}{n} \sum_{i=1}^n \wh\alpha_n^2(\wt\bfF_i,X_{i,j}) .
\end{align*}
The last line above viewed as a quadratic function in $\deltat$ reaches its minimum at $\deltat=\deltathat$ for $\deltathat$ given in \eqref{eq:k_check}.  We can also re-write \eqref{eq:k_check} as
\begin{align}
    \label{eq:k_hat_explicit}
    \deltathat &= \dfrac{1}{\frac{1}{n} \sum_{i=1}^n \wh\alpha_n^2(\wt\bfF_i,X_{i,j})} \dfrac{1}{n} \sum_{i=1}^n \left\{ \epsilon_i - \wh\delta_n(\wt\bfF_i,X_{i,j}) \right\} \wh\alpha_n(\wt\bfF_i,X_{i,j}) \\
    & = \dfrac{ 1 }{ \underbrace{ \textstyle \frac{1}{n} \sum_{i=1}^n \wh\alpha_n^2(\wt\bfF_i,X_{i,j}) }_{J_{n,1}} } \underbrace{ \dfrac{1}{n} \sum_{i=1}^n \epsilon_i \wh\alpha_n(\wt\bfF_i,X_{i,j})}_{J_{n,2}} - \dfrac{ 1 }{\frac{1}{n} \sum_{i=1}^n \wh\alpha_n^2(\wt\bfF_i,X_{i,j}) } \underbrace{\dfrac{1}{n} \sum_{i=1}^n \wh\delta_n(\wt\bfF_i,X_{i,j}) \wh\alpha_n(\wt\bfF_i,X_{i,j})}_{J_{n,3}} . \nonumber
\end{align}
For the term $J_{n,1}$ (in the denominator) above, similar to \eqref{eq:alpha_square_dev} (with the choice $s'$ being a large enough multiple of $\pn$), we conclude that on an event $\calA_{1,1}$ with $\PP(\calA_{1,1})\ge 1-C e^{-\pn}$,
\begin{align*}
\textstyle \frac{1}{n} \sum_{i=1}^n \wh\alpha_n^2(\wt\bfF_i,X_{i,j}) \ge \frac{1}{2} \int \wh\alpha_n^2(\wt\bff,x_j) \rmd\PP - C \nun^2.
\end{align*}
Next,
\begin{align*}
& \textstyle \int \wh\alpha_n^2(\wt\bff,x_j) \rmd\PP \ge \int (\alphanull)^2(\wt\bff,x_j) \rmd\PP + \int (\wh\alpha_n^2-(\alphanull)^2)(\wt\bff,x_j) \rmd\PP \\
& \textstyle = \int (\alphanull)^2(\wt\bff,x_j) \rmd\PP + \int (\wh\alpha_n+\alphanull)(\wh\alpha_n-\alphanull)(\wt\bff,x_j) \rmd\PP \\
& \textstyle \ge \int (\alphanull)^2(\wt\bff,x_j) \rmd\PP - C M_\infty \{ \int (\wh\alpha_n-\alphanull)^2(\wt\bff,x_j) \rmd\PP \}^{1/2} \ge \int (\alphanull)^2(\wt\bff,x_j) \rmd\PP - C(h^{-1}\nun+\deltaf)
\end{align*}
where the last step holds on an event $\calA_{1,2}$ with $\PP(\calA_{1,2})\ge 1-C e^{-\pn}$ by Theorem~\ref{thm:Riesz_est}.
Combining the above two inequalities, by the conditions imposed in Proposition~\ref{prop:tweaking}
with Assumption \ref{ratexx}\ref{ratexx:con_1}, on the event $\calA_{1,1}\cap\calA_{1,2}$ we have that $J_{n,1}$ is bounded from below from zero by a constant (independent of $n$).

Next, for the term $J_{n,2}$, we can follow the treatment for the terms $I_{n,2} = I_{n,2,1} + \{ I_{n,2,2} - \int I_{n,2,2} \rmd\PP \} + \int I_{n,2,2} \rmd\PP$ in the proof of Theorem~\ref{thm:m_hat_est_master} in Section~\ref{sec:proof_thm:m_hat_est_master}, together with the choice of $s$ in that proof, to conclude that on an event $\calA_2$ with $\PP(\calA_2)\ge 1-C e^{-\pn}$,
\begin{align*}
    |J_{n,2}| \lesssim \{ \|\wh\alpha_n\|_{L_2} + \nun\}\nun + \{ \|\wh\alpha_n\|_{L_2} + \nun\}\nun + \deltaf\|\wh\alpha_n\|_{L_2} \lesssim r_{m,n}
\end{align*}
where the last step follows by the boundedness of $\wh\alpha_n$.

For the term $J_{n,3}$ above, first note that the class of functions $\{(g-m_0)\alpha: g, \alpha\in\calFn(\rbar+1)\}$ satisfies the conditions of Proposition~\ref{prop:m_n_hat_ratio_rate_master} for \eqref{eq:m_n_hat_ratio_rate_2_master} (with minor adjustment of constants) by Theorem~3 in \cite{Andrews1994Handbook} and our assumptions.  Then, by \eqref{eq:m_n_hat_ratio_rate_2_master} in Proposition~\ref{prop:m_n_hat_ratio_rate_master}, together with the choice of $s$ in the proof of Theorem~\ref{thm:m_hat_est_master} in Section~\ref{sec:proof_thm:m_hat_est_master}, we conclude that an event $\calA_3$ with $\PP(\calA_3)\ge 1-C e^{-\pn}$,
\begin{gather*}
    \textstyle |J_{n,3} - \int J_{n,3} \rmd\PP| \lesssim \{ \|\wh\delta_n \wh\alpha_n\|_{L_2} + \nun\} \nun \lesssim \{ M \|\wh\delta_n\|_{L_2} + \nun\}\nun \lesssim r_{m,n}, \\
    \textstyle |\int J_{n,3} \rmd\PP| \lesssim \{ (\PP\wh\alpha_n^2) (\PP\wh\delta_n^2) \}^{1/2} \lesssim M r_{m,n} \lesssim r_{m,n}
\end{gather*}
both hold.  Combining these results, we conclude that $|\deltathat| \lesssim r_{m,n} $ on an event with probability at least $1 - C e^{-\pn}$.

\medskip

\textit{Next, we prove the second half of the proposition on the minimization condition \eqref{eq:min_con}.}
Denote the functions $\wh\delta_{\alpha,n}=\wh\alpha_n - \alphanull$, $\delta_{\alpha,n,t}=\alphanull - \alpha_{t,n}^*$ and $\wh\delta_{\alpha,n,t}=\wh\alpha_n - \alpha_{t,n}^* = \wh\delta_{\alpha,n} + \delta_{\alpha,n,t}$.  It follows from the definition of $\deltathat$ that $\PPn\ell(\cdot; \wc g_n) - \PPn\ell(\cdot; \wc g_n \pm \rinf \wh\alpha_n )\leq 0$.
 We then have, for all $t\in\RR$,
\begin{align*}
    &\PPn\ell(\cdot; \wc g_n) - \PPn\ell(\cdot; \wc g_n \pm \rinf \alpha_{t,n}^* ) \\
    & = {\PPn\ell(\cdot; \wc g_n) - \PPn\ell(\cdot; \wc g_n \pm \rinf \wh\alpha_n )} + \PPn\ell(\cdot; \wc g_n \pm \rinf \wh\alpha_n ) - \PPn\ell(\cdot; \wc g_n \pm \rinf \alpha_{t,n}^* ) \\
    &\le \PPn\ell(\cdot; \wc g_n \pm \rinf \wh\alpha_n ) - \PPn\ell(\cdot; \wc g_n \pm \rinf \alpha_{t,n}^* )\\& = \PPn\ell(\cdot; \wh g_n + \deltathat \wh\alpha_n \pm \rinf \wh\alpha_n ) - \PPn\ell(\cdot; \wh g_n + \deltathat \wh\alpha_n \pm \rinf \alpha_{t,n}^* ) \\
    & = \textstyle \PPn\ell(\cdot; \wh g_n + (\deltathat \pm \rinf ) \wh\alpha_n )
    - \PPn\ell(\cdot; \wh g_n + (\deltathat \pm \rinf ) \wh\alpha_n \mp {\rinf \wh\delta_{\alpha,n,t}} ) \equiv J_{n,t},
\end{align*}
where the inequality follows by our earlier remark.

Next,
\begin{align*}
    & \textstyle J_{n,t} = \frac{1}{n} \sum_{i=1}^n [ \{ \epsilon_i -\wh\delta_n(\wt\bfF_i,X_{i,j}) - (\deltathat \pm \rinf ) \wh\alpha_n(\wt\bfF_i,X_{i,j}) \}^2 \\
    & \textstyle - \{ \epsilon_i -\wh\delta_n(\wt\bfF_i,X_{i,j}) - (\deltathat \pm \rinf ) \wh\alpha_n(\wt\bfF_i,X_{i,j}) \pm \rinf \wh\delta_{\alpha,n,t}(\wt\bfF_i,X_{i,j}) \}^2 ] \\
    & \textstyle = \frac{1}{n} \sum_{i=1}^n \left[ \mp 2 \left\{ \epsilon_i -\wh\delta_n(\wt\bfF_i,X_{i,j}) - (\deltathat \pm \rinf ) \wh\alpha_n(\wt\bfF_i,X_{i,j}) \right\} \rinf \wh\delta_{\alpha,n,t}(\wt\bfF_i,X_{i,j}) - \rinf^2 \wh\delta_{\alpha,n,t}^2(\wt\bfF_i,X_{i,j}) \right] \\
    & \textstyle = \pm (\deltathat \pm \rinf ) \frac{2 \rinf}{n} \sum_{i=1}^n \wh\alpha_n(\wt\bfF_i,X_{i,j}) \wh\delta_{\alpha,n,t}(\wt\bfF_i,X_{i,j}) \mp \frac{2 \rinf}{n} \sum_{i=1}^n \epsilon_i \wh\delta_{\alpha,n,t}(\wt\bfF_i,X_{i,j}) \\
    & \textstyle\quad \pm \frac{2 \rinf}{n} \sum_{i=1}^n \wh\delta_n(\wt\bfF_i,X_{i,j}) \wh\delta_{\alpha,n,t}(\wt\bfF_i,X_{i,j})  - \frac{ \rinf^2 }{n} \sum_{i=1}^n \wh\delta_{\alpha,n,t}^2(\wt\bfF_i,X_{i,j}) \\
    & \textstyle = \pm (\deltathat \pm \rinf ) \frac{2 \rinf}{n} \sum_{i=1}^n \alpha_{t,n}^*(\wt\bfF_i,X_{i,j}) \wh\delta_{\alpha,n,t}(\wt\bfF_i,X_{i,j}) \mp \frac{2 \rinf}{n} \sum_{i=1}^n \epsilon_i \wh\delta_{\alpha,n,t}(\wt\bfF_i,X_{i,j})  \\
    & \textstyle \quad \pm \frac{2 \rinf}{n} \sum_{i=1}^n \wh\delta_n(\wt\bfF_i,X_{i,j}) \wh\delta_{\alpha,n,t}(\wt\bfF_i,X_{i,j}) + (\pm 2 \deltathat + \rinf ) \frac{\rinf}{n} \sum_{i=1}^n \wh\delta_{\alpha,n,t}^2(\wt\bfF_i,X_{i,j}) \\
    & \equiv \pm 2 \rinf (\deltathat \pm \rinf ) J_{n,t,1} \mp 2 \rinf J_{n,t,2} + 2 \rinf J_{n,t,3} + \rinf (\pm 2 \deltathat + \rinf ) J_{n,t,4}
\end{align*}
where we have introduced
\begin{gather*}
\textstyle J_{n,t,1} = \frac{1}{n} \sum_{i=1}^n \alpha_{t,n}^*(\wt\bfF_i,X_{i,j}) \wh\delta_{\alpha,n,t}(\wt\bfF_i,X_{i,j}), \quad J_{n,t,2} = \frac{1}{n} \sum_{i=1}^n \epsilon_i \wh\delta_{\alpha,n,t}(\wt\bfF_i,X_{i,j}), \\
\textstyle J_{n,t,3} = \frac{1}{n} \sum_{i=1}^n \wh\delta_n(\wt\bfF_i,X_{i,j}) \wh\delta_{\alpha,n,t}(\wt\bfF_i,X_{i,j}), \quad J_{n,t,4} = \frac{1}{n} \sum_{i=1}^n \wh\delta_{\alpha,n,t}^2(\wt\bfF_i,X_{i,j}) .
\end{gather*}
We only demonstrate the treatment for $J_{n,t,1}$ in detail; the treatments for the other $J_{n,k}$'s are in principle similar and have been employed in the proofs of Theorems~\ref{thm:m_hat_est_master}, \ref{thm:Riesz_est} and Proposition~\ref{prop:tweaking}.  For the term $J_{n,t,1}$, we further define $J_{n,t,1,1} = \PPn \alpha_{t,n}^* \wh\delta_{\alpha,n}$ and $J_{n,t,1,2} = \PPn \alpha_{t,n}^* \delta_{\alpha,n,t}$ so $J_{n,t,1}=J_{n,t,1,1}+J_{n,t,1,2}$.  Now, we apply \eqref{eq:m_n_hat_ratio_rate_2_master} in Proposition~\ref{prop:m_n_hat_ratio_rate_master} to conclude that with probability at least $1-C e^{-\pn}$,
\begin{align*}
    & \textstyle \forall t\in\calT_\delta,  |J_{n,t,1,1}| = |(\PPn-\PP) \alpha_{t,n}^* \wh\delta_{\alpha,n} + \PP \alpha_{t,n}^* \wh\delta_{\alpha,n}| \lesssim \{ \|\alpha_{t,n}^* \wh\delta_{\alpha,n}\|_{L_2} + \nun \} \nun + \{ \PP \alpha_{t,n}^{*2} \PP \wh\delta_{\alpha,n}^2 \}^{1/2} \\
    & \textstyle \lesssim \|\wh\delta_{\alpha,n}\|_{L_2} \lesssim \rnull .
\end{align*}
To treat $J_{n,t,1,2}$, we apply the same proposition to conclude that with probability at least $1-C e^{-\pn}$,
\begin{align*}
    & \textstyle \forall t\in\calT_\delta, \quad |J_{n,t,1,2}| = |(\PPn-\PP) \alpha_{t,n}^* \delta_{\alpha,n,t} + \PP \alpha_{t,n}^* \delta_{\alpha,n,t}| \lesssim \left\{ \|\alpha_{t,n}^* \delta_{\alpha,n,t}\|_{L_2} + \nun \right\} \nun + \{ \PP \alpha_{t,n}^{*2} \PP \delta_{\alpha,n,t}^2 \}^{1/2} \\
    & \textstyle \lesssim \nun+ r_{\alpha,t,\ub} .
\end{align*}
Thus, with probability at least $1-C e^{-\pn}$,
\begin{align*}
\forall t\in\calT_\delta, \quad |J_{n,t,1}| \lesssim \rnull + r_{\alpha,t,\ub}.
\end{align*}
For the other $J_{n,t,k}$ terms, we can similarly conclude that, with probability at least $1-C e^{-\pn}$,
\begin{align*}
 & \textstyle \forall t\in\calT_\delta, \quad |J_{n,t,2}| \lesssim r_{m,n} (\rnull + r_{\alpha,t,\ub}), \quad |J_{n,t,3}| \lesssim r_{m,n} (\rnull + r_{\alpha,t,\ub}), \quad |J_{n,t,4}| \lesssim \rnull (\rnull \nun + r_{\alpha,t,\ub}) .
\end{align*}
Combining all the $J_{n,t,k}$ terms and further simplifying, we conclude that with probability at least $1-C e^{-\pn}$,
\begin{align*}
 & \textstyle \forall t\in\calT_\delta, \quad |J_{n,t}| \lesssim \rinf r_{m,n} (\rnull + r_{\alpha,t,\ub}) .
\end{align*}
This concludes the proof of the proposition.
\end{proof}

\subsection{Proof of Theorem~\ref{thm:m_check_est_master}}
\label{sec:proof_thm:f_check_est}
\begin{proof}
As in the proof of Theorem~\ref{thm:m_hat_est_master}, given \eqref{eq:m_check_rate_master}, the proof of \eqref{eq:m_check_rate_master_deriv} is straightforward, so we will only prove \eqref{eq:m_check_rate_master}.  By Proposition~\ref{prop:tweaking} and Assumption~\ref{ratexx}, on the event $\calA_{\ut,1}$ (see Proposition~\ref{prop:tweaking}) we have $|\deltathat|\le 1$, and hence further by \eqref{eq:f_check}, on the same event,
\begin{align*}
    \wc g_n \in \calFnbar(\rbar+1)\equiv\{ g + \deltat \alpha: g, \alpha\in\calFn(\rbar+1), |\deltat|\le 1\} .
\end{align*}
We now control the complexity of the class $\calFn(\rbar+1)$.
Let $Q$ be a generic measure on $(\wt\bfF,X_j)$, let $\calN$ be a cover of $\calFn(\rbar+1)$ by balls of radius $\tau/\sqrt{9}$ in the $L_2(Q)$ norm and moreover $\calN$ consist of functions uniformly bounded in magnitude by $M$, and let $\calE$ be a cover of the interval $[-1,1]$ by intervals of length at most $\tau/(\sqrt{9}M)$; we choose $\calE$ to have cardinality at most $\lceil2\sqrt{9}M/\tau\rceil$.

Suppose $g, g', \alpha, \alpha'$ be functions of $(\wt\bfF,X_j)$, that are uniformly bounded in magnitude by $M$, and moreover satisfy $\|g-g'\|_{L_2(Q)}\le \tau/\sqrt{9}$ and $\|\alpha-\alpha'\|_{L_2(Q)}\le \tau/\sqrt{9}$, and let $\deltat, \deltat'\in[-1,1]$ be such that $|\deltat-\deltat'|\le \tau/(\sqrt{9}M)$.  Let $\theta= g + \deltat\alpha$ and $\theta'= g' + \deltat'\alpha'$.  Then
\begin{align*}
    &\textstyle \|\theta'-\theta\|_{L_2(Q)}^2 \le 3\|g-g'\|_{L_2(Q)}^2 + 3 \deltat^2 \|\alpha-\alpha'\|_{L_2(Q)}^2 + 3 (\deltat-\deltat')^2 \|\alpha'\|_{L_2(Q)}^2 \\
    &\textstyle  \le 3\|g-g'\|_{L_2(Q)}^2 + 3 \deltat^2 \|\alpha-\alpha'\|_{L_2(Q)}^2 + 3 (\deltat-\deltat')^2 M^2 \le 3\frac{\tau^2}{9} + 3\frac{\tau^2}{9} + 3\frac{\tau^2}{9} = \tau^2.
\end{align*}
Thus $N(\calFnbar(\rbar+1), L_2(Q), \tau) \le N(\calFn(\rbar+1), L_2(Q), \tau/\sqrt{9})^2 \lceil2\sqrt{9}M/\tau\rceil$ where we follow the notation for the covering numbers, for instance $N(\calFnbar(\rbar+1), L_2(Q), \tau)$, in \eqref{eq:covering_number_bound_master}.  Thus, the logarithm of the covering number $N(\calFnbar(\rbar+1), L_2(Q), \tau)$ for the class $\calFnbar(\rbar+1)$ is bounded up to a multiplicative constant by the logarithm of the covering number for the class $\calFn(\rbar+1)$.  Then, the proof of \eqref{eq:m_check_rate_master} in Theorem~\ref{thm:m_check_est_master} follows that of \eqref{eq:rn1} in Theorem~\ref{thm:m_hat_est_master} with minor modifications.
\end{proof}

\subsection{Proof of the size under the null: Theorem~\ref{thm:main_null_master} }
\label{sec:Proof_Thm_thm:main_null_master}

Let $r=r(\cdot;b)$ be an unspecified, possibly random function that may change for each occurrence, but that always satisfies $\sup_{t\in\calT_\delta}|r(t;b)|\lesssim b$.
By this convention, the function $r=r(t;\Co(1))$ in \eqref{eq:H0_iid} satisfies $\sup_{t\in\calT_\delta}|r(t;\Co(1))|=\Co(1)$.

\begin{proof}[Proof of Theorem~\ref{thm:main_null_master}]

We will first prove the following master approximate i.i.d.\,expansion for the statistic $\wc\eta_t^\us(\wc g_n)$: under the null hypothesis $H_0$ in \eqref{hypo_null}, with probability at least $1 - C e^{-\pn}$, uniformly over all $t\in\calT_\delta$ we have
\begin{align}
\textstyle \sqrt{n} \wc\eta_t^\us(\wc g_n) = \frac{t}{\sqrt{n}} \sum_{i=1}^n \wt\epsilon_i \alphanull(\wt\bfF_i, X_{i,j})
+ r(t;\Co(1)) .
\label{eq:H0_iid}
\end{align}

Recall the notations $\wc\eta_t^\us$ from \eqref{eta_t_s_wc}, $\wt\eta_t^\us$ from \eqref{labeltildeg}, and the random variable $Z_{t,j,h}$ from \eqref{eq:Z_t}.  Introduce, for a generic function $m:\RR^{r+1}\rightarrow\RR$, the functional $\eta_t^\us$ as
\begin{align*}
\textstyle \eta_t^\us(m) = \int_{\Omega_h} [ \exp\{ t\, m_j^\us(\bff, x_j) \} -1 ] \rmd\PP .
\end{align*}
Under either $H_0$ and $H_1$, we have the master decomposition
\begin{align}
    &\wc\eta_t^\us(\wc g_n) = \left\{ \wc\eta_t^\us(\wc g_n) - \wt\eta_t^\us(\wc g_n) \right\} + \left\{ \wt\eta_t^\us(\wc g_n) - \wt\eta_t^\us(g_0) \right\} + \left\{ \wt\eta_t^\us(g_0) - \eta_t^\us(m_0) \right\} \nonumber \\
    & \textstyle \quad + \{ \eta_t^\us(m_0) - \int_{\Omega_h} \left[ \exp\{t\,m_{0,j}(\bff, x_j)\} - 1 \right] \rmd\PP \} + \int_{\Omega_h} \left[ \exp\{t\,m_{0,j}(\bff, x_j)\} - 1 \right] \rmd\PP \nonumber \\
    & \equiv I_{t,n,1} + I_{t,n,2} + \underbrace{I_{t,n,3} + I_{t,n,4} + I_{t,n,5}}_{=\EE Z_{t,j,h}\stackrel{\text{under $H_0$}}{=}0}
    \label{eq:test_stat_master_decomp_master}
\end{align}
where we note that $\wt\eta_t^\us(g_0)=\EE Z_{t,j,h}$ (by a remark in the proof of Lemma~\ref{lem:Riesz_analytic}, $\Psi$ in \eqref{labeltildeg} can be discarded if $g=g_0$).  In this decomposition,
\begin{itemize}[leftmargin=*]
    \item[i)]
    $I_{t,n,1}$ is a centered empirical process evaluated at the random function $\wc g_n$.  Under the null, $I_{t,n,1}$ admits a fast rate converging to zero because here the smoothed derivative estimator $\wc g_{n,j}^\us$ is expected to be close to zero.  {Under the alternative}, $I_{t,n,1}$ still converges to zero due to the averaging effect, though the rate will be comparatively slower.
    \item[ii)]
    $I_{t,n,2}$ is the dominant term in the decomposition under the null, and in fact will admit an approximate i.i.d.\,expansion that will in turn become the leading, i.i.d.\,sum in \eqref{eq:H0_iid}.
    \item[iii)]
    The term $I_{t,n,3}$ represents the approximation error arising from replacing the latent factor $\bfF$ by the diversified factor $\wt\bfF$, $I_{t,n,4}$ is a bias term arising from smoothing the true regression function derivative $m_{0,j}$, and finally, $I_{t,n,5}$ can be regarded as the original signal strength favoring the alternative.  Together, the (magnitude of the) sum $I_{t,n,3}+I_{t,n,4}+I_{t,n,5}$ can be regarded as containing the signal favoring the alternative.  Here, under the null, the sum becomes $\wt\eta_t^\us(g_0)=\EE Z_{t,j,h}$ which is zero (see the remark below \eqref{eq:Z_t}).
\end{itemize}
Next, the treatments of the terms $I_{t,n,1}$ and $I_{t,n,2}$ will be covered in Sections~\ref{sec:In1_master} and \ref{sec:In2_master} respectively.  The transition from \eqref{eq:H0_normality_L2} to \eqref{eq:H0_normality_Ln} will be covered in Section~\ref{sec:H0_var_est} by showing that the estimated variance $\| \wh\epsilon \wh\alpha_n \|_{L_2(\PPn)}^2$ is a consistent estimator of the unknown asymptotic variance $\| \epsilon \alphanull\|_{L_2}^2$.  We finalize our proof of Theorem~\ref{thm:main_null_master} afterward.

\subsubsection{Proof for the first term in \eqref{eq:test_stat_master_decomp_master}}
\label{sec:In1_master}
Now we handle the term $I_{t,n,1}$ in the decomposition \eqref{eq:test_stat_master_decomp_master}, which as stated below \eqref{eq:test_stat_master_decomp_master} is a centered empirical process term. Recall the definition of  $\calT_\delta= [-T, -\delta] \cup [\delta, T]$ where $T$ is a constant.

\begin{proposition}
\label{prop:I_tn1_rate}
Suppose that the conditions of Theorem~\ref{thm:main_null_master} hold.  Then, under either $H_0$ or $H_1$, with probability at least $1 - C e^{-\pn}$, for a constant $c_{4,1}$, and uniformly over $t\in\calT_\delta$,
\begin{align*}
\textstyle |I_{t,n,1}| \le c_{4,1} \{ \frac{|t|}{h} r_{m,n} + \frac{1}{h} \nun + \|Z_{t,j,h}\|_{L_2} \} \nun .
\end{align*}
\end{proposition}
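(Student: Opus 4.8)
The plan is to recognize $I_{t,n,1}=\wc\eta_t^\us(\wc g_n)-\wt\eta_t^\us(\wc g_n)=(\PPn-\PP)\theta_t(\cdot;\wc g_n)$ as a centered empirical process evaluated at the data-dependent function $\wc g_n$, where for $g\in\calFnbar(\rbar+1)$ and $t\in\calT_\delta$ I set $\theta_t(\wt\bff,x_j;g)=[\exp\{t\,\Psi(g_j^\us(\wt\bff,x_j))\}-1]\,\ind\{x_j\in\calB_h\}$. On the event $\calA_{\ut,1}$ of Proposition~\ref{prop:tweaking}, Assumption~\ref{ratexx} forces $|\deltathat|\le 1$, so $\wc g_n\in\calFnbar(\rbar+1)$ exactly as in the proof of Theorem~\ref{thm:m_check_est_master}; hence it suffices to produce a bound on $\sup_{g\in\calFnbar(\rbar+1)}\sup_{t\in\calT_\delta}|(\PPn-\PP)\theta_t(\cdot;g)|$ that scales linearly with $\|\theta_t(\cdot;g)\|_{L_2}$, and then evaluate it at $g=\wc g_n$ and intersect the good events.

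The first substantive step is to establish the relevant properties of the class $\{\theta_t(\cdot;g):g\in\calFnbar(\rbar+1),\,t\in\calT_\delta\}$. Since $\Psi(0)=0$ and both $\Psi,\dot\Psi$ are bounded (Assumption~\ref{ass:truncation}), the map $u\mapsto e^{t\Psi(u)}-1$ vanishes at $u=0$ and is Lipschitz with constant $\lesssim|t|\le T$ on $\calT_\delta$; combined with the pointwise bound $|g_j^\us|\le\|\dot K\|_{L_1}M/h$ read off from \eqref{eq:smooth_master} and the truncation level $M$, the class admits a constant envelope of order $1/h$. For metric entropy I would chain together: the covering-number comparison $N(\{g_j^\us\},L_2(Q),\tau)\le N(\calFnbar(\rbar+1),L_2(Q),c_- h\tau)$ coming from \eqref{eq:derivative_bound_via_original}; the bound $\log N(\calFnbar(\rbar+1),L_2(Q),\cdot)\lesssim V_{\calFn(\rbar+1)}\log(\cdot)$ established inside the proof of Theorem~\ref{thm:m_check_est_master}; stability of covering numbers under the Lipschitz nonlinearity (Theorem~3 in \cite{Andrews1994Handbook}); and one extra one-dimensional net over $\calT_\delta$ (a polynomial factor). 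This places the class in the VC-type framework with index $\lesssim V_{\calFn(\rbar+1)}$ and envelope $\lesssim 1/h$, i.e.\ precisely the setting of Proposition~\ref{prop:g_n_hat_rate}.

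Next I would rerun the localized/weighted chaining argument behind Proposition~\ref{prop:g_n_hat_rate} for the process $(\PPn-\PP)\theta_t(\cdot;g)$ (symmetrized by Rademacher signs, as it is already $\PP$-centered; condition~\eqref{eq:g_h_condition} holds under the standing assumptions as in the proof of Theorem~\ref{thm:Riesz_est}). With $s=\pn$ this yields, on an event of probability $\ge 1-Ce^{-\pn}$, uniformly over $t\in\calT_\delta$, $|I_{t,n,1}|\lesssim\{\|\theta_t(\cdot;\wc g_n)\|_{L_2}+\nusn\}\{\nusn+\sqrt{\pn/n}\}+h^{-1}\nusn^2$. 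It then remains to plug in the rates. Splitting $\theta_t(\cdot;\wc g_n)$ at $g_0$ and using that $\Psi(g_{0,j}^\us)=g_{0,j}^\us$ on $\Omega_h$ (Assumptions~\ref{assum_fun_class} and \ref{ass:truncation}), the Lipschitz bound, Lemma~\ref{lemma:derivative_bound_via_original_specific}, Theorem~\ref{thm:m_check_est_master} and Lemma~\ref{lemma:factor} give $\|\theta_t(\cdot;\wc g_n)\|_{L_2}\le\tfrac{C|t|}{h}\|\wc g_n-g_0\|_{L_2}+\|Z_{t,j,h}\|_{L_2}\lesssim\tfrac{|t|}{h}r_{m,n}+\|Z_{t,j,h}\|_{L_2}$, the last term being identically zero under $H_0$. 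Finally $\nusn\lesssim\nun$ and $\sqrt{\pn/n}\lesssim\nun$ (as in the proof of Theorem~\ref{thm:m_hat_est_master}), while $\nun\lesssim h^{-1}\nun$ because $h\le b$ (Assumption~\ref{ass:kernel}); absorbing the stray $\nun$ into $h^{-1}\nun$ one reads off $|I_{t,n,1}|\lesssim\{\tfrac{|t|}{h}r_{m,n}+\tfrac1h\nun+\|Z_{t,j,h}\|_{L_2}\}\nun$, which is the claim.

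The main obstacle I anticipate is the entropy/VC-type bookkeeping of the second paragraph together with the chaining of the third: one must track how the smoothing operator's $1/h$ blow-up enters only the envelope while the VC index stays controlled by $V_{\calFn(\rbar+1)}$, and then re-run the localization of \cite{GineKoltchinskiiJonWellner2003}/\cite{GineNickl2016} carefully enough to retain both the sharp $\|\theta_t\|_{L_2}$-dependence and the correct $h^{-1}\nusn^2$ residual, uniformly in $t\in\calT_\delta$. Everything after that is a routine triangle-inequality-and-plug-in exercise.
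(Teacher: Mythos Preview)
Your proposal is correct and follows essentially the same route as the paper: represent $I_{t,n,1}=(\PPn-\PP)\theta_t(\cdot;\wc g_n)$, restrict to $\wc g_n\in\calFnbar(\rbar+1)$ via Proposition~\ref{prop:tweaking}, control the entropy of $\{\theta_t(\cdot;g):g\in\calFnbar(\rbar+1),\,t\in\calT_\delta\}$ through the Lipschitz map and the $h^{-1}$ envelope exactly as in Proposition~\ref{prop:g_n_hat_rate}, and then bound $\|\theta_t(\cdot;\wc g_n)\|_{L_2}$ by splitting at $g_0$ and invoking Lemma~\ref{lemma:derivative_bound_via_original_specific}, Theorem~\ref{thm:m_check_est_master} and Lemma~\ref{lemma:factor}. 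The only cosmetic difference is that the paper writes the Lipschitz estimate directly as $|\theta(\cdot;g,t)-\theta(\cdot;g',t')|\lesssim T|(g-g')_j^\us|\ind\{x_j\in\calB_h\}+|t-t'|$ rather than phrasing it in terms of an envelope, but the resulting covering-number input to the localization is the same.
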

\begin{proof}

Define the function $\theta(\cdot; g,t)=\theta(\wt\bff, x_j; g,t)=\{\exp[t\,\Psi\{ g_j^\us(\wt\bff, x_j)\} ]-1\}\ind\{x_j\in\calB_h\}:\RR^{\rbar+1}\rightarrow\RR$ (essentially the summand in \eqref{eta_t_s_wc}) indexed by $g\in\calFnbar(\rbar+1)$ and $t$.  Then, $I_{t,n,1} = (\PPn-\PP) \theta(\cdot;\wc g_n, t)$.  Now, in the proof of Theorem~\ref{thm:m_check_est_master} in Section~\ref{sec:proof_thm:f_check_est}, it was shown that with probability at least $1 - C e^{-\pn}$, $\wc g_n$ belongs to the functional class $\calFnbar(\bar{r}+1)$ defined in that proof.  We shall focus on this event.

As in the proof of Theorem~\ref{thm:m_check_est_master}, let $g, g'$ be functions of $(\wt\bfF,X_j)$, and moreover let $t, t'\in\calT_\delta$.  Then, by Assumption~\ref{ass:truncation},
\begin{align*}
    &|\theta(\wt\bff, x_j; g, t)-\theta(\wt\bff, x_j; g', t')| = |\exp[t\,\Psi\{ g_j^\us(\wt\bff, x_j)\}] - \exp[t'\,\Psi\{ g^{'\us}_j(\wt\bff,x_j)\} ]| \ind\{x_j\in\calB_h\} \\
    &\lesssim t |(g-g')_j^\us(\wt\bff, x_j)|\ind\{x_j\in\calB_h\} + |t-t'| \le T |(g-g')_j^\us(\wt\bff, x_j)|\ind\{x_j\in\calB_h\} + |t-t'|.
\end{align*}
Since $t\in\calT_\delta$ and $\calT_\delta$ is compact, the class of functions $\{\theta(\cdot; g,t): g\in\calFnbar(\rbar+1), t\in\calT_\delta\}$ admits a uniform covering number similar to that in \eqref{eq:alpha_covering_number}.  Thus, the situation is largely similar to that in Proposition~\ref{prop:g_n_hat_rate}, and we obtain that on an event with probability at least $1-12 e^{-s}$, for a constant $c_6'$,
\begin{align*}
\textstyle \forall g\in\calFnbar(\rbar+1), \forall t\in\calT_\delta, \quad |(\PPn-\PP) \theta(\cdot;g,t)| \le c_6' [ \left\{ \| \theta \|_{L_2} + \nusn \right\} \left\{ \nusn + \sqrt{\frac{s}{n}} \right\} + \frac{1}{h^2}\nusn^2 ].
\end{align*}
Choosing $s$ as in the proof in Section~\ref{sec:proof_thm:m_hat_est_master} of Theorem~\ref{thm:m_hat_est_master} in the above, we obtain that, with probability at least $1-C e^{-\pn}$, by changing the constant $c_6'$ if necessary,
\begin{align}
\textstyle \forall t\in\calT_\delta, \quad |I_{t,n,1}| \le c_6' \{ \| \theta(\cdot;\wc g_n, t) \|_{L_2} + \frac{1}{h} \nun \} \nun .
\label{eq:I_{t,n,1}_intermediate}
\end{align}
Now, with the aid of \eqref{eq:derivative_bound_via_original} and the definition of $Z_{t,j,h}$ from \eqref{eq:Z_t},
\begin{align*}
&\textstyle \| \theta(\cdot;\wc g_n, t) \|_{L_2}^2 = \int_{\Omega_h} \{ \exp(t\,\Psi\{\wc g_{n,j}^\us(\wt\bff, x_j)\}) -1 \}^2 \rmd\PP \\
&\textstyle  \le 2 \int_{\Omega_h} \{ \exp(t\,\Psi\{\wc g_{n,j}^\us(\wt\bff, x_j)\}) - \exp(t\,\Psi\{g_{0,j}^\us(\wt\bff, x_j)\}) \}^2 \rmd\PP + 2 \int_{\Omega_h} \{ \exp(t\,\Psi\{g_{0,j}^\us(\wt\bff, \rmd x_j)\} ) -1 \}^2 \rmd\PP \\
& \textstyle \lesssim t^2 \int_{\Omega_h} \{ (\wc g_n-g_0)_j^\us \}^2 \rmd\PP + \|Z_{t,j,h}\|_{L_2}^2 \lesssim \frac{t^2}{h^2} \int (\wc g_n-g_0)^2 \rmd\PP + \|Z_{t,j,h}\|_{L_2}^2 \\
& \textstyle \lesssim \frac{t^2}{h^2} \int \{ (\wc g_n-m_0)^2 + (g_0-m_0)^2 \} \rmd\PP + \|Z_{t,j,h}\|_{L_2}^2 \lesssim \frac{t^2}{h^2} r_{m,n}^2 + \|Z_{t,j,h}\|_{L_2}^2 ,
\end{align*}
where the last step follows by Theorem~\ref{thm:m_check_est_master} and Lemma~\ref{lemma:factor}.
Combining the above with \eqref{eq:I_{t,n,1}_intermediate} then yield the conclusion of the proposition.
\end{proof}

\subsubsection{Proof for the second term in \eqref{eq:test_stat_master_decomp_master}}
\label{sec:In2_master}

Recall that $\alpha_{t,n}^*=\alpha_{t,n}^*(\wt\bff,x_j): \RR^{\rbar+1}\rightarrow\RR$ is the Riesz representer that satisfies \eqref{eq:Riesz_H0_master}.
\begin{proposition}
    \label{prop:I_tn2_rate}
    Suppose that the conditions of Theorem~\ref{thm:main_null_master} hold.  Then, under either $H_0$ or $H_1$, with probability at least $1 - C e^{-\pn}$, for a common function $r=r(\cdot;b)$ as introduced at the beginning of Section~\ref{sec:Proof_Thm_thm:main_null_master}, uniformly over $t\in\calT_\delta$ we have
    \begin{align}
     \textstyle I_{t,n,2} = t \frac{1}{n}\sum_{i=1}^n \wt\epsilon_i \alpha_{t,n}^*(\wt\bfF_i,X_{i,j})  + |t|\, r(t; r_{m,n} r_{\alpha,t,n} + \deltaf ) + (|t|+t^2) r(t; h^{-2} r_{m,n}^2 ) .
     \label{eq:I_tn2_rate}
\end{align}
\end{proposition}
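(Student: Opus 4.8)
The plan is to expand the functional $\wt\eta_t^\us(\wc g_n)-\wt\eta_t^\us(g_0)=I_{t,n,2}$ by a first-order Taylor expansion along the line segment from $g_0$ to $\wc g_n$, identify the leading linear term with the Riesz representer $\alpha_{t,n}^*$, show that the second-order remainder is controlled by $h^{-2}r_{m,n}^2$, and then turn the resulting population integral $\int_{\Omega_h}\exp\{t\,g_{0,j}^\us\}(\wc g_n-g_0)_j^\us\,\rmd\PP$ into the i.i.d.\ sum $t\,n^{-1}\sum_i\wt\epsilon_i\alpha_{t,n}^*(\wt\bfF_i,X_{i,j})$ using the approximate minimization condition \eqref{eq:min_con} from Proposition~\ref{prop:tweaking}. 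Concretely, first I would write, for $g_\tau = g_0 + \tau(\wc g_n-g_0)$,
\begin{align*}
I_{t,n,2} = \int_0^1 \frac{\partial \wt\eta_t^\us(g_\tau)}{\partial g}[\wc g_n - g_0]\,\rmd\tau = \frac{\partial\wt\eta_t^\us(g_0)}{\partial g}[\wc g_n-g_0] + \int_0^1\!\!\int_0^\tau \frac{\partial^2 \wt\eta_t^\us(g_s)}{\partial g^2}[\wc g_n-g_0,\wc g_n-g_0]\,\rmd s\,\rmd\tau,
\end{align*}
so that by \eqref{eq:target_functional_derivative_master} the first term equals $t\int_{\Omega_h}\exp\{t\,g_{0,j}^\us\}(\wc g_n-g_0)_j^\us\,\rmd\PP = t\langle \wc g_n - g_0, \alpha_{t,n}^*\rangle$ (using that $g_{0,j}^\us$ is bounded, hence $\Psi$ acts as the identity and $\dot\Psi\equiv1$ there, exactly as in the proof of Lemma~\ref{lem:Riesz_analytic}), while the second-order term involves two factors of $(\wc g_n-g_0)_j^\us$ and a bounded multiplier, so Lemma~\ref{lemma:derivative_bound_via_original_specific} bounds it by $h^{-2}\|\wc g_n-g_0\|_{L_2}^2 \lesssim h^{-2}r_{m,n}^2$ (times a polynomial in $t$, accounting for the $(|t|+t^2)$ factor), using Theorem~\ref{thm:m_check_est_master} and Lemma~\ref{lemma:factor}; one must also check that $g_{s,j}^\us$ stays in the linear region of $\Psi$, which holds up to an event of small probability since $\wc g_{n,j}^\us$ is $L_2$-close to the bounded $g_{0,j}^\us$.

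Next I would convert $t\langle \wc g_n-g_0,\alpha_{t,n}^*\rangle$ into the desired i.i.d.\ average. Writing $\wc g_n - g_0 = (\wc g_n - m_0) + (m_0 - g_0)$, the second piece contributes $t\langle m_0-g_0,\alpha_{t,n}^*\rangle$, which by Cauchy--Schwarz, Lemma~\ref{lemma:factor}, and boundedness of $\alpha_{t,n}^*$ is $O(|t|\deltaf)$, absorbed into $r(t;\deltaf)$. For the first piece I would exploit the centering condition: differentiating the empirical loss, the minimization condition \eqref{eq:min_con} of Proposition~\ref{prop:tweaking} evaluated at $\wc g_n$ and direction $\alpha_{t,n}^*$ gives $\PPn\{(Y - \wc g_n)\alpha_{t,n}^*\} = O(b_n) = O(r_{m,n}(\rnull+r_{\alpha,t,\ub}))$. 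Since $Y - \wc g_n = \wt\epsilon - (\wc g_n - \EE[Y|\wt\bfF,X_j])$ and $\EE[Y|\wt\bfF,X_j]$ differs from $m_0$ by $O_{L_2}(\deltaf)$ (Assumption~\ref{ass:W}, Eq.~\eqref{eq:cond_exp_stability}), this yields $\PPn\{(m_0 - \wc g_n)\alpha_{t,n}^*\} = \PPn\{\wt\epsilon\,\alpha_{t,n}^*\} - \PPn\{(\wc g_n - \EE[Y|\wt\bfF,X_j])\alpha_{t,n}^*\} + O(\cdots)$, and rearranging gives $\langle \wc g_n - m_0, \alpha_{t,n}^*\rangle \approx -\,\PPn\{\wt\epsilon\,\alpha_{t,n}^*\} + (\text{remainders})$ up to replacing the empirical inner product $\PPn\{(\wc g_n-m_0)\alpha_{t,n}^*\}$ by its population version $\langle\wc g_n-m_0,\alpha_{t,n}^*\rangle$. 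That last replacement is the empirical-process step: the class $\{(g-m_0)\alpha_{t,n}^*: g\in\calFnbar(\rbar+1), t\in\calT_\delta\}$ is VC-type (Assumption~\ref{ass:u_t_n_misc}, Theorem~3 in \cite{Andrews1994Handbook}), so Proposition~\ref{prop:m_n_hat_ratio_rate_master} bounds $(\PPn-\PP)$ of it by $\{\|(\wc g_n-m_0)\alpha_{t,n}^*\|_{L_2}+\nun\}\nun \lesssim r_{m,n}\nun$, uniformly in $t$. Collecting all error terms — $h^{-2}r_{m,n}^2$ from the quadratic remainder, $r_{m,n}(\rnull + r_{\alpha,t,\ub}) = r_{m,n}r_{\alpha,t,n}$ from the centering tolerance $b_n$, $r_{m,n}\nun \lesssim r_{m,n}r_{\alpha,t,n}$ from the empirical process, and $\deltaf$ from the diversified-factor substitution — gives precisely the remainder structure $|t|\,r(t;r_{m,n}r_{\alpha,t,n}+\deltaf) + (|t|+t^2)r(t;h^{-2}r_{m,n}^2)$ in \eqref{eq:I_tn2_rate}, while the leading term becomes $t\cdot\frac1n\sum_i\wt\epsilon_i\alpha_{t,n}^*(\wt\bfF_i,X_{i,j})$.

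The main obstacle I anticipate is making the passage from the centering condition \eqref{eq:min_con} to a clean identity for $\langle\wc g_n - m_0,\alpha_{t,n}^*\rangle$ fully rigorous and uniform over $t\in\calT_\delta$: the minimization condition is stated with an infinitesimal stepsize $\rinf$ and a tolerance $b_n$, and extracting from it the first-order (KKT-type) statement $\PPn\{(Y-\wc g_n)\alpha_{t,n}^*\} = O(b_n/\rinf) = O(b_n)$ requires care, as does tracking that the various remainders genuinely depend on $t$ only through benign polynomial factors so that the $\sup_{t\in\calT_\delta}$ can be taken inside an event of probability $1 - Ce^{-\pn}$. Keeping the quadratic-remainder bound honest also requires confirming that $\wc g_{n,j}^\us$ remains in the linear region of the truncation $\Psi$ on the relevant event, which follows from $L_2$-closeness to the bounded $g_{0,j}^\us$ together with Markov's inequality but needs to be stated. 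Everything else — the Taylor expansion, the Riesz-representer identification, the Cauchy--Schwarz bounds, and the empirical-process control — is routine given Proposition~\ref{prop:m_n_hat_ratio_rate_master}, Proposition~\ref{prop:tweaking}, Theorem~\ref{thm:m_check_est_master}, and Lemmas~\ref{lemma:derivative_bound_via_original_specific} and \ref{lemma:factor}.
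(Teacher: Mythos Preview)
Your proposal is correct and follows essentially the same route as the paper, which packages the argument into five abstract conditions (the minimization condition, a population-loss expansion, an empirical-process deviation, a second-order Taylor remainder, and the Riesz identification) in one lemma and then verifies them in another. One small clarification: your worry that $g_{s,j}^\us$ must stay in the linear region of $\Psi$ is unnecessary, since the second-order remainder bound needs only the uniform boundedness of $\Psi,\dot\Psi,\ddot\Psi$ from Assumption~\ref{ass:truncation} (this is exactly why the truncation was introduced), and the identity $\Psi(g_{0,j}^\us)=g_{0,j}^\us$ is required only at $\tau=0$, where it holds deterministically by design.
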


\begin{proof}
To streamline our presentation, we divide the proof into two parts.  In the first part, Lemma~\ref{lem:In2_1_master}, the proposition is proven under a set of technical conditions, while in the second part, Lemma~\ref{lem:In2_2_master}, the technical conditions themselves are verified.  The general direction of the proof of Lemma~\ref{lem:In2_1_master} follows that of \cite{ChenLiaoSun2014, ChenLiaoWang2022inference}, but here we work with the adjusted regression function estimator $\wc g_n$ instead of the preliminary estimator $\wh g_n$, and also take into account the accuracy loss of using $\wt\bfF$ in place of $\bfF$.  {More importantly, by using the adjusted estimator $\wc g_n$, we avoid artificially imposing the ``undersmoothing/bias vanishing abnormally fast'' condition as in \cite{ChenLiaoWang2022inference}; more specifically, see how their ``undersmoothing'' condition, Assumption~4.4, is invoked in the proof of their Lemma~C.1(ii).  If we were to impose a similar condition, then we would need $\nuan=\nu_{\ua,n,\textup{undersmooth}}$ from \eqref{eq:nuan_rate_concrete},  the bound on the bias, to be $\Co(n^{-1/2})$; this would lead to a scaling of $\wt L \wt k_{\textup{0}}$ that would further result in $\nusn=\nu_{\us,n,\textup{undersmooth}}$ from \eqref{eq:nu_n_generic}, the accompanying standard deviation for the stochastic variation, to be at least on the order of {$n^{-\frac{2\kappa-1}{4\kappa}}$} (up to log factors).  Clearly, the rates for $\nu_{\ua,n,\textup{undersmooth}}$ and $\nu_{\us,n,\textup{undersmooth}}$ in the bias/variance decomposition are no longer balanced, and in particular the rate for $\nu_{\us,n,\textup{undersmooth}}$ is inferior to our balanced rate $\nun$ in \eqref{eq:nu_n_concrete}; in fact, when $\kappa<1/2$, it is no longer true that $\nu_{\us,n,\textup{undersmooth}}=\Co(1)$, so the regression function estimators $\wh g_n$ and $\wc g_n$ are no longer consistent.
}\\

\begin{lemma}
\label{lem:In2_1_master}
Suppose that Assumption~\ref{ratexx}\ref{ratexx:con_2} holds, and that on an event $\calA$, uniformly over $t\in\calT_\delta$ the following conditions hold:
\begin{align}
\label{eq:min_con_master}
\PPn\ell(\cdot;\wc g_n)-\PPn\ell(\cdot;\wc g_n \pm \rinf \alpha_{t,n}^*) &\le r(t; \rinf r_{m,n} r_{\alpha,t,n} ) \\
\intertext{{(note that \eqref{eq:min_con_master} is essentially the minimization condition \eqref{eq:min_con} at $t$ for the tolerance $b_n$ on the order of $r_{m,n} (\rnull + r_{\alpha,t,\ub})$),}}
\label{c1_master}
\PP\{ \ell(\cdot;\wc g_n)-\ell(\cdot;\wc g_n \pm \rinf \alpha_{t,n}^*) \} \hspace{3cm} & \nonumber \\
\pm 2\rinf \langle \wc g_n-m_0, \alpha_{t,n}^* \rangle {\mp 2 \rinf \EE [\epsilon \alpha_{t,n}^*(\wt\bfF,X_j)}] &= r(t;\rinf^2) , \\
\label{c2_master}
(\PPn-\PP)\left\{ \ell(\cdot;\wc g_n)-\ell(\cdot;\wc g_n \pm \rinf \alpha_{t,n}^*) \mp 2 \rinf \epsilon \alpha_{t,n}^* \right\} &= r(t;\rinf r_{m,n} \nun) , \\
\label{c5_master}
\wt\eta_t^\us(\wc g_n)-{\wt\eta_t^\us(g_0)}-\frac{\partial \wt\eta_t^\us(g_0)}{\partial g}[\wc g_n-g_0] &= (|t|+t^2) r(t; h^{-2} r_{m,n}^2 ) , \\
\label{c6_master}
\frac{\partial\wt\eta_t^\us(g_0)}{\partial g}[\wc g_n-g_0] &= t \langle \wc g_n-g_0, \alpha_{t,n}^* \rangle .
\end{align}
Then, on the intersection of the event $\calA$ and another event with probability at least $1 - C e^{-\pn}$, uniformly over $t\in\calT_\delta$, \eqref{eq:I_tn2_rate} in Proposition~\ref{prop:I_tn2_rate} hold.
\end{lemma}

\begin{proof}
By condition \eqref{c2_master},
\begin{align*}
&\PPn\ell(\cdot;\wc g_n)-\PPn\ell(\cdot;\wc g_n \pm \rinf \alpha_{t,n}^*) = \PP\{ \ell(\cdot;\wc g_n)-\ell(\cdot;\wc g_n \pm \rinf \alpha_{t,n}^*) \} \pm 2 \rinf (\PPn-\PP)\{ \epsilon \alpha_{t,n}^* \} \\
&\quad + (\PPn-\PP)\left\{ \ell(\cdot;\wc g_n) - \ell(\cdot;\wc g_n \pm \rinf \alpha_{t,n}^*) \mp 2 \rinf \epsilon \alpha_{t,n}^*\right\} \\
&= \PP\{ \ell(\cdot;\wc g_n)-\ell(\cdot;\wc g_n \pm \rinf \alpha_{t,n}^*) \} \pm 2 \rinf (\PPn-\PP)\{ \epsilon \alpha_{t,n}^* \} + r(t;\rinf r_{m,n} \nun) .
\end{align*}
Combining the above with condition~\eqref{c1_master} and Assumption~\ref{ratexx}\ref{ratexx:con_2} yields
\begin{align*}
\PPn\ell(\cdot;\wc g_n)-\PPn\ell(\cdot;\wc g_n \pm \rinf \alpha_{t,n}^*)= \mp 2\rinf \langle \wc g_n-m_0,\alpha_{t,n}^* \rangle \pm 2 \rinf \PPn\{ \epsilon \alpha_{t,n}^* \} + r(t;\rinf r_{m,n} \nun) .
\end{align*}
By the above and the minimization condition~\eqref{eq:min_con_master}, we further obtain
\begin{align}
\label{close_master}
\langle \wc g_n-m_0,\alpha_{t,n}^* \rangle - \PPn\{ \epsilon \alpha_{t,n}^* \} =r(t; r_{m,n} r_{\alpha,t,n}).
\end{align}

Next, we show that $t \langle \wc g_n-m_0,\alpha_{t,n}^* \rangle$ can approximate $\eta_t^\us(\wc g_n)-\eta_t^\us(g_0)$ well.  First, by conditions~\eqref{c5_master} and \eqref{c6_master},
\begin{align*}
I_{t,n,2} &= \wt\eta_t^\us(\wc g_n) - \wt\eta_t^\us(g_0) = \frac{\partial \wt\eta_t^\us(g_0)}{\partial g}[\wc g_n-g_0] + \left\{ \wt\eta_t^\us(\wc g_n)-\wt\eta_t^\us(g_0)-\frac{\partial \wt\eta_t^\us(g_0)}{\partial g}[\wc g_n-g_0] \right\} \\
&= t \langle \wc g_n-g_0, \alpha_{t,n}^* \rangle + (|t|+t^2) r(t; h^{-2} r_{m,n}^2 ) .
\end{align*}
Combining the above with \eqref{close_master}, and using the Cauchy-Schwarz inequality followed by Lemma~\ref{lemma:factor}, we obtain
\begin{align*}
\textstyle I_{t,n,2}  & \textstyle = t \langle \wc g_n-g_0, \alpha_{t,n}^* \rangle + (|t|+t^2) r(t; h^{-2} r_{m,n}^2 ) \\
& \textstyle = t\, r(t; r_{m,n} r_{\alpha,t,n} ) + (|t|+t^2) r(t; h^{-2} r_{m,n}^2 ) + t \langle m_0-g_0, \alpha_{t,n}^* \rangle + t \frac{1}{n}\sum_{i=1}^n \epsilon_i \alpha_{t,n}^*(\wt\bfF_i,X_{i,j}) \\
& \textstyle = t\, r(t; r_{m,n} r_{\alpha,t,n} + \deltaf ) + (|t|+t^2) r(t; h^{-2} r_{m,n}^2 ) + t \frac{1}{n}\sum_{i=1}^n \epsilon_i \alpha_{t,n}^*(\wt\bfF_i,X_{i,j}) .
\end{align*}
Compared to \eqref{eq:I_tn2_rate}, we only need to replace the $\epsilon_i$'s by the $\wt\epsilon_i$'s in the last line above.  This can be done similarly to how we dealt with the term $I_{n,2,2}$ in the proof of Theorem~\ref{thm:m_hat_est_master} in Section~\ref{sec:proof_thm:m_hat_est_master}, and is why the intersection of $\calA$ with another event with probability at least $1 - C e^{-\pn}$ in the statement of Lemma~\ref{lem:In2_1_master} is needed, at the cost of another $t\, r(t; \deltaf )$ term.  Eventually we arrive at
\begin{align*}
\textstyle I_{t,n,2} - t \frac{1}{n}\sum_{i=1}^n \wt\epsilon_i \alpha_{t,n}^*(\wt\bfF_i,X_{i,j}) & \textstyle = t\, r(t; r_{m,n} r_{\alpha,t,n} + \deltaf ) + (|t|+t^2) r(t; h^{-2} r_{m,n}^2 ) ,
\end{align*}
which is the conclusion of the lemma.
\end{proof}

\begin{lemma}
\label{lem:In2_2_master}
Suppose that the conditions of Theorem~\ref{thm:main_null_master} hold.  Then, conditions~\eqref{eq:min_con_master}, \eqref{c1_master}, \eqref{c2_master}, \eqref{c5_master} and \eqref{c6_master} hold with probability at least $1 - C e^{-\pn}$.
\end{lemma}

\begin{proof}
As stated in Lemma~\ref{lem:In2_1_master},
Ineq.~\eqref{eq:min_con_master} is essentially the minimization condition \eqref{eq:min_con}, which has already been established in Proposition~\ref{prop:tweaking}.  Next, we verify condition~\eqref{c1_master}.  We have
\begin{align}
& \ell(Y,\wt\bfF,X_j;\wc g_n) - \ell(Y,\wt\bfF,X_j;\wc g_n \pm \rinf \alpha_{t,n}^*) =\{ Y-\wc g_n(\wt\bfF,X_j) \}^2 - \{ Y-\wc g_n(\wt\bfF,X_j) \mp \rinf \alpha_{t,n}^*(\wt\bfF,X_j) \}^2 \nonumber \\
&= [ \epsilon - \{ \wc g_n(\wt\bfF,X_j) - m_0(\bfF,X_j) \} ]^2 - [ \epsilon - \{ \wc g_n(\wt\bfF,X_j) - m_0(\bfF,X_j) \} \mp \rinf \alpha_{t,n}^*(\wt\bfF,X_j) ]^2 \nonumber \\
&= \pm 2 \rinf [\epsilon - \{ \wc g_n(\wt\bfF,X_j) - m_0(\bfF,X_j) \} ] \alpha_{t,n}^*(\wt\bfF,X_j) - \rinf^2 \alpha_{t,n}^{*2}(\wt\bfF,X_j) \nonumber \\
&= \mp 2 \rinf \{ \wc g_n(\wt\bfF,X_j) - m_0(\bfF,X_j) \} \alpha_{t,n}^*(\wt\bfF,X_j) - \rinf^2 \alpha_{t,n}^{*2}(\wt\bfF,X_j) \pm 2 \rinf \epsilon \alpha_{t,n}^*(\wt\bfF,X_j) .
\label{eq:ell_diff}
\end{align}
Then, \eqref{c1_master} follows by taking the integral of \eqref{eq:ell_diff} against the measure $\PP$.

Next, we verify condition~\eqref{c2_master}.  Starting from Eq.~\eqref{eq:ell_diff},
\begin{align*}
&(\PPn-\PP)\left\{ \ell(\cdot;\wc g_n)-\ell(\cdot;\wc g_n \pm \rinf \alpha_{t,n}^*){\mp} 2 \rinf \epsilon \alpha_{t,n}^* \right\}
\\&= {\mp} 2 \rinf (\PPn-\PP) \{ (\wc g_n-m_0) \alpha_{t,n}^* \} - \rinf^2 (\PPn-\PP) \alpha_{t,n}^{*2} = r(t;\rinf r_{m,n} \nun) ,
\end{align*}
where the last step holds with probability at least $1 - C e^{-\pn}$, and follows by (at this point) the usual application of \eqref{eq:m_n_hat_ratio_rate_2_master} in Proposition~\ref{prop:m_n_hat_ratio_rate_master}, with $r_{m,n}$ from Theorem~\ref{thm:m_check_est_master}.

Next, we verify condition~\eqref{c5_master}.  In condition~\eqref{c5_master}, the second-order remainder term is, for some $\tau'$ strictly between $0$ and $1$, and $v=\wc g_n-g_0$,
\begin{align*}
& \frac{1}{2} \left[ \frac{\partial^2}{\partial\tau^2} \int_{\Omega_h} [ e^{t\,\Psi\{(g_0+\tau v)_j^\us(\wt\bff, x_j)\}} - 1 ] \rmd\PP \right] \Big\vert_{\tau=\tau'} \\
& = \frac{1}{2} \left[ \textstyle \frac{\partial}{\partial\tau} \displaystyle \left[ \int_{\Omega_h} t \dot\Psi\{(g_0+\tau v)_j^\us(\wt\bff, x_j)\} v_j^\us(\wt\bff, x_j) e^{t\,\Psi\{(g_0+\tau v)_j^\us(\wt\bff, x_j)\}} \rmd\PP \right] \right] \Big\vert_{\tau=\tau'} \\
& = \frac{1}{2} \bigg[ \int_{\Omega_h} \Big[ t \ddot\Psi\{(g_0+\tau v)_j^\us(\wt\bff, x_j)\} (v_j^\us)^2(\wt\bff, x_j) e^{t\,\Psi\{(g_0+\tau v)_j^\us(\wt\bff, x_j)\}} \\
& \textstyle \quad + t^2 \dot\Psi^2\{(g_0+\tau v)_j^\us(\wt\bff, x_j)\} (v_j^\us)^2(\wt\bff, x_j) e^{t\,\Psi\{(g_0+\tau v)_j^\us(\wt\bff, x_j)\}} \Big] \rmd\PP \bigg]\Big\vert_{\tau=\tau'} ,
\end{align*}
which by Assumption~\ref{ass:truncation} and the boundedness of $\calT_\delta$ is bounded in magnitude and up to a multiplicative factor (that is not dependent on $t$) by $(|t|+t^2) \int_{\Omega_h} v_j^\us(\wt\bff, x_j)^2 \rmd\PP$.  Thus, condition~\eqref{c5_master} is verified because, by the first half of \eqref{eq:derivative_bound_via_original} in Lemma~\ref{lemma:derivative_bound_via_original_specific} applied to this result,
\begin{align*}
& \textstyle \|\wt\eta_t^\us( \wc g_n )-\wt\eta_t^\us(g_0)-\frac{\partial\wt\eta_t^\us(g_0)}{\partial g}[\wc g_n-g_0]\|_{L_2} \lesssim (|t|+t^2) \|(\wc g_n-g_0)_j^\us\|_{L_2}^2 \\
& \textstyle \lesssim (|t|+t^2)\frac{1}{h^2} \|\wc g_n-g_0\|_{L_2}^2\lesssim (|t|+t^2)\frac{1}{h^2} r_{m,n}^2 .
\end{align*}

Finally, we verify condition~\eqref{c6_master}.
We just need to recall the first part of Lemma~\ref{lem:Riesz_analytic}, from which condition~\eqref{c6_master} follows as a special case.  This completes the proof of Lemma~\ref{lem:In2_2_master}.
\end{proof}
Then, as mentioned earlier, combining Lemmata~\ref{lem:In2_1_master} and \ref{lem:In2_2_master} completes the proof of Proposition~\ref{prop:I_tn2_rate}.
\end{proof}

\subsubsection{Variance estimation}
\label{sec:H0_var_est}

\begin{proposition}
\label{prop:H0_var_est}
Suppose that the conditions of Theorem~\ref{thm:main_null_master} hold.  Then, under either $H_0$ or $H_1$, with probability at least $1 - C e^{-\pn}$,
\begin{align*}
\textstyle \| \wh\epsilon \wh\alpha_n \|_{L_2(\PPn)}^2 - \| \epsilon \alphanull\|_{L_2}^2 = \Co(1) .
\end{align*}
Hence, under $H_0$, we are free to replace $\| \epsilon \alphanull\|_{L_2}^2$ in \eqref{eq:H0_normality_L2} by $\| \wh\epsilon \wh\alpha_n \|_{L_2(\PPn)}^2$ to arrive at \eqref{eq:H0_normality_Ln}.
\end{proposition}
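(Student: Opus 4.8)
The plan is to show that the empirical quantity $\|\wh\epsilon\wh\alpha_n\|_{L_2(\PPn)}^2 = \PPn\{\wh\epsilon^2 \wh\alpha_n^2\}$ converges (in the $\Co(1)$ sense, on an event of probability at least $1 - C e^{-\pn}$) to the population quantity $\|\epsilon\alphanull\|_{L_2}^2 = \PP\{\epsilon^2 (\alphanull)^2\}$, and then invoke Slutsky's theorem to pass from \eqref{eq:H0_normality_L2} to \eqref{eq:H0_normality_Ln}. First I would decompose the difference through intermediate quantities so that each leg is controlled by a tool already developed in the paper:
\begin{align*}
\PPn\{\wh\epsilon^2\wh\alpha_n^2\} - \PP\{\epsilon^2(\alphanull)^2\}
&= \big\{ \PPn\{\wh\epsilon^2\wh\alpha_n^2\} - \PPn\{\wt\epsilon^2(\alphanull)^2\} \big\} \\
&\quad + \big\{ \PPn\{\wt\epsilon^2(\alphanull)^2\} - \PP\{\wt\epsilon^2(\alphanull)^2\} \big\} + \big\{ \PP\{\wt\epsilon^2(\alphanull)^2\} - \PP\{\epsilon^2(\alphanull)^2\} \big\},
\end{align*}
where $\wt\epsilon_i = Y_i - \EE[Y_i|\wt\bfF_i, X_{i,j}]$ is the adjusted noise from Section~\ref{sec:cond_exp_stability} and $\wh\epsilon_i = Y_i - \wc g_n(\wt\bfF_i,X_i)$. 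The middle term is the fluctuation of an empirical average of a fixed (non-random) integrable function around its mean; since $\wt\epsilon$ is sub-Gaussian (shown in the proof of Proposition~\ref{prop:m_n_hat_ratio_rate_master}) and $\alphanull$ is bounded by $M_\infty$ (Assumption~\ref{ass:u_t_n}), this is $\Co(1)$ with the requisite exponential probability by a standard Bernstein/sub-exponential tail bound, and moreover fits the template of \eqref{eq:m_n_hat_ratio_rate_2_master} in Proposition~\ref{prop:m_n_hat_ratio_rate_master} with $s=\pn$. The third term is purely analytic: $|\wt\epsilon^2 - \epsilon^2| = |\wt\epsilon - \epsilon|\,|\wt\epsilon+\epsilon|$, where $|\wt\epsilon - \epsilon| = |m_0(\bfF,X_j) - \EE[Y|\wt\bfF,X_j]|$ has $L_2$ norm bounded by $\deltaf$ via \eqref{eq:cond_exp_stability} in Assumption~\ref{ass:W}, and $|\wt\epsilon + \epsilon|$ has bounded $L_2$ norm since both noises are sub-Gaussian; Cauchy–Schwarz together with the boundedness of $(\alphanull)^2$ gives a bound of order $\deltaf = \Co(1)$ (using $\deltaf\to 0$ from \eqref{eq:nu_n_concrete}, or more precisely condition~(\ref{thm:main_null_master:con_1})).

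The first (and main) term is the substantive one. Here I would write $\wh\epsilon_i - \wt\epsilon_i = \{\EE[Y_i|\wt\bfF_i,X_{i,j}] - \wc g_n(\wt\bfF_i,X_i)\}$, and expand
\begin{align*}
\PPn\{\wh\epsilon^2\wh\alpha_n^2\} - \PPn\{\wt\epsilon^2(\alphanull)^2\}
= \PPn\{(\wh\epsilon^2 - \wt\epsilon^2)\wh\alpha_n^2\} + \PPn\{\wt\epsilon^2(\wh\alpha_n^2 - (\alphanull)^2)\}.
\end{align*}
For the second piece, factor $\wh\alpha_n^2 - (\alphanull)^2 = (\wh\alpha_n - \alphanull)(\wh\alpha_n + \alphanull)$; since $\wh\alpha_n$ is bounded by $M$ (truncation) and $\alphanull$ by $M_\infty$, and $\wt\epsilon$ is square-integrable, a Cauchy–Schwarz step reduces this to controlling $\{\PPn(\wh\alpha_n-\alphanull)^2\}^{1/2}$, which by Lemma~\ref{lem:square_deviation}-type control transfers to $\{\PP(\wh\alpha_n - \alphanull)^2\}^{1/2} = \rnull = \Co(1)$ from Theorem~\ref{thm:Riesz_est} (note $h^{-1}\nun + \deltaf = \Co(1)$ under condition~(\ref{thm:main_null_master:con_1}), which forces $h^{-2}\nun^2 = \Co(n^{-1/2})$ hence $h^{-1}\nun\to 0$). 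For the first piece, $\wh\epsilon^2 - \wt\epsilon^2 = (\wh\epsilon - \wt\epsilon)(\wh\epsilon + \wt\epsilon)$ with $\wh\epsilon - \wt\epsilon = \EE[Y|\wt\bfF,X_j] - \wc g_n$, whose $L_2$ norm is at most $\|\wc g_n - m_0\|_{L_2} + \deltaf \lesssim \nun + \deltaf = \Co(1)$ by Theorem~\ref{thm:m_check_est_master} and Lemma~\ref{lemma:factor}; bounding $\wh\alpha_n^2 \le M^2$ and using that $\wh\epsilon + \wt\epsilon$ has controlled empirical $L_2$ norm (via sub-Gaussianity of $\wt\epsilon$ plus the uniform bound on $\wc g_n$), Cauchy–Schwarz again delivers $\Co(1)$. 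Throughout, the passage from empirical $L_2$ norms of neural-network-indexed functions to their population counterparts is justified by Lemma~\ref{lem:square_deviation} applied to the relevant VC-type classes (products of elements of $\calFnbar(\rbar+1)$, $\calFn(\rbar+1)$, and bounded noise factors, using Theorem~3 in \cite{Andrews1994Handbook} to control covering numbers of products), and all the associated events have probability at least $1 - Ce^{-\pn}$, so intersecting finitely many of them preserves that bound.

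The main obstacle I anticipate is bookkeeping rather than a deep difficulty: one must be careful that the cross terms involving $\wh\epsilon + \wt\epsilon$ — which is \emph{not} bounded, only sub-Gaussian and hence only square-integrable with a possibly-random empirical second moment — are handled through a Cauchy–Schwarz split that isolates the unbounded factor into a clean empirical second moment $\PPn(\wh\epsilon + \wt\epsilon)^2$, which itself must be shown to be $\CO_{\textup{p}}(1)$ (indeed bounded on the good event) by combining $\PPn\wt\epsilon^2 \le 2\PP\wt\epsilon^2 + \Co(1)$ from a Lemma~\ref{lem:square_deviation}-style inequality with the uniform bound $|\wc g_n|\le M$. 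Once this is in place, assembling the three main legs with the triangle inequality gives the $\Co(1)$ conclusion, and the final Slutsky step to \eqref{eq:H0_normality_Ln} — dividing the three displayed limits in \eqref{eq:H0_normality_L2} by a ratio of estimated to true standard deviation that converges to one — is immediate since the continuous mapping theorem applies to $|\cdot|$ and $(\cdot)^2$.
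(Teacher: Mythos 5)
Your decomposition takes a genuinely different route from the paper's. The paper writes $\wh\epsilon_i=\epsilon_i-\{\wc g_n-m_0\}(\wt\bfF_i,X_{i,j})$, expands the square, and sorts terms by the power of $\epsilon$: a no-$\epsilon$ term (controlled by Lemma~\ref{lem:square_deviation}), a linear-in-$\epsilon$ term (controlled by Proposition~\ref{prop:m_n_hat_ratio_rate_master}), a centered empirical process $(\PPn-\PP)\{\epsilon^2\wh\alpha_n^2\}$ — which, because $\epsilon^2\alpha^2$ is only sub-exponential, requires Theorem~1 of \cite{Adamczak2008}, as the paper notes explicitly — and finally the population substitution $\PP\{\epsilon^2\wh\alpha_n^2\}\to\|\epsilon\alphanull\|_{L_2}^2$ via Cauchy--Schwarz. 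Your intermediate target $\PPn\{\wt\epsilon^2(\alphanull)^2\}$ does buy you something real: your middle term is the deviation of a \emph{single fixed} sub-exponential integrand, so a pointwise Bernstein bound suffices where the paper needs a uniform-over-$\calFn$ concentration from \cite{Adamczak2008}. Your third term is essentially the paper's $J_{n,4}$ combined with \eqref{eq:cond_exp_stability}.

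However, the step handling $\PPn\{\wt\epsilon^2(\wh\alpha_n^2-(\alphanull)^2)\}$ as written has a genuine gap. After bounding $|\wh\alpha_n+\alphanull|\le 2M$, Cauchy--Schwarz on $\PPn$ yields $\{\PPn\wt\epsilon^4\}^{1/2}\{\PPn(\wh\alpha_n-\alphanull)^2\}^{1/2}$, not just the second factor, and "since $\wt\epsilon$ is square-integrable" does not justify discarding $\{\PPn\wt\epsilon^4\}^{1/2}$: one needs the empirical fourth moment controlled on the same exponential-probability event, yet $\wt\epsilon^4$ is only $\psi_{1/2}$ (sub-Weibull), so neither Lemma~\ref{lem:square_deviation} nor Proposition~\ref{prop:m_n_hat_ratio_rate_master} applies — both assume uniformly bounded classes. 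The same misattribution recurs when you invoke "a Lemma~\ref{lem:square_deviation}-style inequality" for $\PPn\wt\epsilon^2$, which is likewise unbounded; a Bernstein inequality for sub-exponentials (or Theorem~1 of \cite{Adamczak2008} again) is the right tool. The clean fix, and in fact what the paper's $J_{n,3}$/$J_{n,4}$ split accomplishes, is to first \emph{center} $\PPn\{\wt\epsilon^2(\wh\alpha_n^2-(\alphanull)^2)\}$ at its population mean: the centered piece becomes exactly the paper's Adamczak-type empirical process, and the population piece submits to a Cauchy--Schwarz in $L_2(\PP)$ where $\EE\wt\epsilon^4$ is a fixed constant and the $\rnull$ bound from Theorem~\ref{thm:Riesz_est} comes out directly. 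Your closing Slutsky step and the treatment of the remaining terms are fine.
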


\begin{proof}
We start from the overall decomposition
\begin{align*}
& \textstyle \| \wh\epsilon \wh\alpha_n \|_{L_2(\PPn)}^2 - \| \epsilon \alphanull\|_{L_2}^2 =\frac{1}{n} \sum_{i=1}^n \{m_0(\bfF_i,X_{i,j})+\epsilon_i-\wc g_n(\wt\bfF_i,X_{i,j})\}^2 \wh\alpha_n(\wt\bfF_i,X_{i,j})^2 - \| \epsilon \alphanull\|_{L_2}^2 \\
& \textstyle = \underbrace{ \textstyle \frac{1}{n} \sum_{i=1}^n \{\wc g_n(\wt\bfF_i,X_{i,j})-m_0(\bfF_i,X_{i,j})\}^2 \wh\alpha_n(\wt\bfF_i,X_{i,j})^2}_{J_{n,1}} \\
& \textstyle \quad - 2 \underbrace{ \textstyle  \frac{1}{n} \sum_{i=1}^n \epsilon_i \{\wc g_n(\wt\bfF_i,X_{i,j})-m_0(\bfF_i,X_{i,j})\} \wh\alpha_n(\wt\bfF_i,X_{i,j})^2}_{J_{n,2}} \\
& \textstyle \quad +\underbrace{ \textstyle \frac{1}{n} \sum_{i=1}^n [ \epsilon_i^2 \wh\alpha_n(\wt\bfF_i,X_{i,j})^2 - \PP\{\epsilon^2 \wh\alpha_n^2\} ] }_{J_{n,3}} +\underbrace{ \textstyle \PP\{\epsilon^2 \wh\alpha_n^2\} - \| \epsilon \alphanull\|_{L_2}^2 }_{J_{n,4}}\\
& = J_{n,1} - 2 J_{n,2} + J_{n,3} + J_{n,4} .
\end{align*}
The term $J_{n,1}$ can be treated similarly to how we dealt with the term $I_{n,1}$ in the proof of Theorem~\ref{thm:m_hat_est_master} in Section~\ref{sec:proof_thm:m_hat_est_master}, specifically by invoking Lemma~\ref{lem:square_deviation} and the choice of $s'$ in the proof of Theorem~\ref{thm:m_hat_est_master}.  We conclude that with probability at least $1 - C e^{-\pn}$,
\begin{align*}
|J_{n,1}| & \le |(\PPn-\PP) [\{\wc g_n-m_0\}^2 \wh\alpha_n^2 ]| + \PP [\{\wc g_n-m_0\}^2 \wh\alpha_n^2 ] \\
&\lesssim \PP [\{\wc g_n-m_0\}^2 \wh\alpha_n^2 ] + \nun^2 + \PP [\{\wc g_n-m_0\}^2 \wh\alpha_n^2 ] \lesssim r_{m,n}^2 M^2 \lesssim r_{m,n}^2 .
\end{align*}

The term $J_{n,2}$ can be treated similarly to how we dealt with the term $I_{n,2}$ in the proof of Theorem~\ref{thm:m_hat_est_master}.  We conclude that with probability at least $1 - C e^{-\pn}$,
\begin{align*}
|J_{n,2}| & \lesssim \left\{ \| (\wc g_n-m_0) \wh\alpha_n^2 \|_{L_2} + \nun \right\} \nun + \deltaf \| (\wc g_n-m_0) \wh\alpha_n^2 \|_{L_2} \lesssim r_{m,n}^2 .
\end{align*}

The term $J_{n,3}$ is a centered empirical process $(\PPn-\PP)\{\epsilon^2\alpha^2\}$ evaluated at the random function $\alpha=\wh\alpha_n$.  This time neither Lemma~\ref{lem:square_deviation} nor Proposition~\ref{prop:m_n_hat_ratio_rate_master} directly applies because the factor $\epsilon^2$ involved is sub-exponential instead of sub-Gaussian.  Nevertheless one can still apply Theorem~1 in \cite{Adamczak2008} to conclude that with probability at least $1 - C e^{-\pn}$, $|J_{n,3}|\lesssim \nun$.

Finally, for the term $J_{n,4}$, first using the sub-Gaussianity of $\epsilon$ and then the Cauchy-Schwarz inequality,
\begin{align*}
    &| \PP\{\epsilon^2 \wh\alpha_n^2\} - \| \epsilon \alphanull\|_{L_2}^2 | \lesssim | \PP\{ \wh\alpha_n^2 - (\alphanull)^2\}| = | \PP\{ (\wh\alpha_n+ \alphanull) (\wh\alpha_n - \alphanull) \}| \\
    & \lesssim [\PP\{ (\wh\alpha_n+ \alphanull)^2\} \PP\{ (\wh\alpha_n - \alphanull)^2 \}]^{1/2} \lesssim \rnull
\end{align*}
where the last step holds with probability at least $1 - C e^{-\pn}$ by Theorem~\ref{thm:Riesz_est}.

Collecting terms, we conclude that on an event with probability at least $1 - C e^{-\pn}$, $|\| \wh\epsilon \wh\alpha_n \|_{L_2(\PPn)}^2 - \| \epsilon \alphanull\|_{L_2}^2| \lesssim r_{m,n}^2 + \nun + \rnull$ which is easily $\Co(1)$ under condition~\ref{thm:main_null_master:con_1} in Theorem~\ref{thm:main_null_master}.  This completes the proof of the proposition.
\end{proof}

Now we are ready to put everything together to complete the proof of the component of Theorem~\ref{thm:main_null_master} regarding the fixed-$t$ test \eqref{test:fixed_t}.  Under the null, as argued earlier in the proof of Theorem~\ref{thm:main_null_master}, the sum $I_{t,n,3}+I_{t,n,4}+I_{t,n,5}=0$ in \eqref{eq:test_stat_master_decomp_master}.  Next, (under the null) the leading term in \eqref{eq:I_tn2_rate} in Proposition~\ref{prop:I_tn2_rate} becomes $t \frac{1}{n}\sum_{i=1}^n \wt\epsilon_i \alphanull(\wt\bfF_i,X_{i,j})$, while the term $I_{t,n,1}$ and the remainder terms in \eqref{eq:I_tn2_rate} are bounded by $\Co(n^{-1/2})$ with probability at least $1 - C e^{-\pn}$ by Propositions~\ref{prop:I_tn1_rate} and \ref{prop:I_tn2_rate} and condition~\ref{thm:main_null_master:con_1} of Theorem~\ref{thm:main_null_master}.  This concludes the i.i.d.\,representation in \eqref{eq:H0_iid}.  Next, \eqref{eq:H0_normality_L2} follows from \eqref{eq:H0_iid} by an application of {a central limit theorem} followed by a straightforward replacement of $\| \wt\epsilon \alphanull\|_{L_2}$ by $\| \epsilon \alphanull\|_{L_2}$ based on \eqref{eq:cond_exp_stability}.  Finally, the transition from \eqref{eq:H0_normality_L2} to \eqref{eq:H0_normality_Ln} is covered by Proposition~\ref{prop:H0_var_est} in Section~\ref{sec:H0_var_est}.

Finally, we prove the component of Theorem~\ref{thm:main_null_master} regarding the square test involving $\wh\chi^2$ in \eqref{square}; the proof for the sup test involving $\wh Z$ in \eqref{sup} is analogous.
By the earlier \eqref{eq:H0_iid}, with probability at least $1 - C e^{-\pn}$, uniformly over all $t\in\calT_\delta$ we have
\begin{align*}
\textstyle \frac{n}{t^2} \{\wc\eta_t^\us(\wc g_n)\}^2 = \{ \frac{1}{\sqrt{n}} \sum_{i=1}^n \wt\epsilon_i \alphanull(\wt\bfF_i, X_{i,j}) \}^2 + \{ \frac{1}{\sqrt{n}} \sum_{i=1}^n \wt\epsilon_i \alphanull(\wt\bfF_i, X_{i,j}) \} r(t;\Co(1)) + r(t;\Co(1))^2 .
\end{align*}
Therefore, by the central limit theorem and the Slutsky's theorem,
\begin{align*}
\textstyle \int_{\calT_\delta} \frac{n}{t^2} \{\wc\eta_t^\us(\wc g_n)\}^2 w(t) \rmd t
\textstyle \rightarrow_{d} \left\{ \int_{\calT_\delta} w(t) \rmd t \right\} \| \wt\epsilon \alphanull\|_{L_2}^2 \chi_1^2
\end{align*}
which, after replacing $\wt\epsilon$ by $\epsilon$ as in the proof of the fixed-$t$ case, is equivalent to the last equation in \eqref{eq:H0_normality_L2}.  Again by Proposition~\ref{prop:H0_var_est}, we are further free to replace $\| \epsilon \alphanull\|_{L_2}$ in \eqref{eq:H0_normality_L2} by $\|\wh\epsilon \wh\alpha_n\|_{L_2(\PPn)}$.

\end{proof}

\subsection{Power of the fixed-$t$ test under the alternative}
\label{sec:alter_supp}
In this subsection, our principal Theorem~\ref{thm:main_alternative}
guarantees that the power of our fixed-$t$ test in \eqref{test:fixed_t} approaches one under all local alternatives satisfying conditions~\ref{thm:main_alt:con_1} and \ref{thm:main_alt:con_2} in the theorem that place mild conditions on how fast $|\EE Z_{t,j,h}|$, the signal strength favoring the alternative, can decay.  Similar results for the ensemble tests \eqref{sup} and \eqref{square} can be shown using analogous techniques and we omit the details here.

\begin{theorem}
\label{thm:main_alternative}
Suppose that the conditions of Theorem~\ref{thm:main_null_master} hold.  Furthermore, suppose that at some $t\in\calT_\delta$, for large enough constants $C_{\textup{alt},1}$ and $C_{\textup{alt},2}$,
\begin{enumerate*}[label=(\roman*)]
\item\label{thm:main_alt:con_1} $C_{\textup{alt},1} \sqrt{\log(n)/n} \le |\EE Z_{t,j,h}|$;
\item\label{thm:main_alt:con_2}
$C_{\textup{alt},2} \{ \deltaf + \frac{1}{h^2} (\nun^2+\deltaf^2) + \frac{1}{h} (\nun+\deltaf) \|Z_{t,j,h}\|_{L_2} \} \le |\EE Z_{t,j,h}|$.
\end{enumerate*}
Then, under the alternative hypothesis $H_1$ in \eqref{hypo_alt}, with probability at least $1 - C e^{-\pn} - \frac{1}{n}$ our fixed-$t$ test in \eqref{test:fixed_t} will reject; thus the power of the test tends to one.
\end{theorem}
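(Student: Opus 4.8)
The plan is to piggyback on the master decomposition \eqref{eq:test_stat_master_decomp_master} of $\wc\eta_t^\us(\wc g_n)$, which was derived for \emph{both} $H_0$ and $H_1$, and on Propositions~\ref{prop:I_tn1_rate}, \ref{prop:I_tn2_rate}, and \ref{prop:H0_var_est}, all stated under either hypothesis. The key observation is that the ``signal block'' $I_{t,n,3}+I_{t,n,4}+I_{t,n,5}$ in \eqref{eq:test_stat_master_decomp_master} equals $\wt\eta_t^\us(g_0)=\EE Z_{t,j,h}$, which vanishes under $H_0$ but, under $H_1$, is kept at least $C_{\textup{alt},1}\sqrt{\log(n)/n}$ in magnitude by condition~\ref{thm:main_alt:con_1}. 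I would then show that $\sqrt n$ times the two remaining blocks of \eqref{eq:test_stat_master_decomp_master} is either a mean-zero fluctuation of size $O(\sqrt{\log n})$ or is strictly dominated by the signal, so that $\sqrt n\,|\wc\eta_t^\us(\wc g_n)|$ is bounded below by roughly $\tfrac14\sqrt n\,|\EE Z_{t,j,h}|\gtrsim\sqrt{\log n}$, while the denominator $\|\wh\epsilon\wh\alpha_n\|_{L_2(\PPn)}$ of the test statistic in \eqref{test:fixed_t} stays bounded by Proposition~\ref{prop:H0_var_est}; the diverging numerator then forces rejection.

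Concretely, I would proceed in the following order. First, bound $I_{t,n,1}$ by Proposition~\ref{prop:I_tn1_rate} and expand $I_{t,n,2}$ by Proposition~\ref{prop:I_tn2_rate} into its leading i.i.d.\ term $t\,n^{-1}\sum_{i=1}^n\wt\epsilon_i\alpha_{t,n}^*(\wt\bfF_i,X_{i,j})$ plus the remainders $|t|\,r(t;r_{m,n}r_{\alpha,t,n}+\deltaf)$ and $(|t|+t^2)\,r(t;h^{-2}r_{m,n}^2)$. Second, using $r_{m,n}\asymp\nun+\deltaf$, $\rnull\asymp(h^{-1}\nun+\deltaf)\wedge M$, $r_{\alpha,t,\ub}\asymp(h^{-1}\|Z_{t,j,h}\|_{L_2})\wedge M$ and $h\le b$, I would check that the $I_{t,n,1}$-bound together with these remainders is, up to absolute constants and $|t|+t^2\le T+T^2$, at most $\deltaf+h^{-2}(\nun^2+\deltaf^2)+h^{-1}(\nun+\deltaf)\|Z_{t,j,h}\|_{L_2}$, hence at most $C_{\textup{alt},2}^{-1}|\EE Z_{t,j,h}|$ via condition~\ref{thm:main_alt:con_2}; taking $C_{\textup{alt},2}$ large makes this $\le\tfrac14|\EE Z_{t,j,h}|$ (here I also invoke condition~\ref{thm:main_null_master:con_1}, inherited from Theorem~\ref{thm:main_null_master}, to dispose of the $\Co(n^{-1/2})$ pieces). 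Third, since the leading sum has conditionally mean-zero, uniformly bounded, sub-Gaussian summands (by the definition of $\wt\epsilon_i$ and the boundedness of $\alpha_{t,n}^*$ in Assumption~\ref{ass:u_t_n_misc}), a Hoeffding-type tail bound gives $|t\,n^{-1}\sum_{i=1}^n\wt\epsilon_i\alpha_{t,n}^*(\wt\bfF_i,X_{i,j})|\le C_t\sqrt{\log(n)/n}$ with probability at least $1-1/n$ and $C_t\le CT$; it is essential that $\alpha_{t,n}^*$ is the \emph{true} Riesz representer, so this term is centered \emph{even under} $H_1$ --- precisely the payoff of the centering step of Section~\ref{sec:tweaking}. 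Putting these together on the intersection of the relevant high-probability events,
\begin{align*}
\sqrt n\,\big|\wc\eta_t^\us(\wc g_n)\big| \ \ge\ \sqrt n\,\big|\EE Z_{t,j,h}\big| - \tfrac14\sqrt n\,\big|\EE Z_{t,j,h}\big| - C T\sqrt{\log n} \ \ge\ \tfrac14\sqrt n\,\big|\EE Z_{t,j,h}\big| \ \ge\ \tfrac14 C_{\textup{alt},1}\sqrt{\log n},
\end{align*}
where the second inequality uses condition~\ref{thm:main_alt:con_1} with $C_{\textup{alt},1}$ large enough to absorb $CT\sqrt{\log n}$. Since $\|\wh\epsilon\wh\alpha_n\|_{L_2(\PPn)}\le C_{\mathrm{var}}$ for a constant $C_{\mathrm{var}}$ by Proposition~\ref{prop:H0_var_est}, and $|t|\le T$ as $t\in\calT_\delta$, the right-hand side above eventually exceeds $z_{1-\alpha/2}\,|t|\,\|\wh\epsilon\wh\alpha_n\|_{L_2(\PPn)}$, so the decision rule in \eqref{test:fixed_t} triggers; the total exceptional probability is at most $Ce^{-\pn}+1/n$.

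The main obstacle is the bookkeeping in the second step: one must verify that every error term produced by Propositions~\ref{prop:I_tn1_rate} and \ref{prop:I_tn2_rate} --- after careful tracking of the powers of the bandwidth $h$ --- collapses into the three expressions on the left of condition~\ref{thm:main_alt:con_2}. The only genuinely new, alternative-specific contribution is the cross term $h^{-1}(\nun+\deltaf)\|Z_{t,j,h}\|_{L_2}$, which arises from the product $r_{m,n}\cdot r_{\alpha,t,\ub}$ inside $r_{m,n}r_{\alpha,t,n}$ in Proposition~\ref{prop:I_tn2_rate}, reflecting the bias of $\wh\alpha_n$ relative to $\alpha_{t,n}^*$ under $H_1$; condition~\ref{thm:main_alt:con_2} is calibrated precisely so that the signal $|\EE Z_{t,j,h}|$ dominates this term. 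Everything else is a routine consequence of the already-established propositions together with a single sub-Gaussian tail bound.
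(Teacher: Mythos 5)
Your proposal is correct and takes essentially the same route as the paper: both use the master decomposition~\eqref{eq:test_stat_master_decomp_master}, invoke Propositions~\ref{prop:I_tn1_rate}, \ref{prop:I_tn2_rate} and \ref{prop:H0_var_est} (all stated under either hypothesis), apply Hoeffding's inequality to the leading centered sum to obtain the extra $1/n$ in the exceptional probability, and match conditions~\ref{thm:main_alt:con_1}--\ref{thm:main_alt:con_2} against the i.i.d.\ fluctuation and the remainder terms respectively. Your bookkeeping of how $r_{m,n}r_{\alpha,t,n}$, $\deltaf$, $h^{-2}r_{m,n}^2$ and the Proposition~\ref{prop:I_tn1_rate} bound collapse into the three expressions in condition~\ref{thm:main_alt:con_2} (using $r_{m,n}\asymp\nun+\deltaf$, $\rnull\asymp h^{-1}\nun+\deltaf$, $r_{\alpha,t,\ub}\asymp h^{-1}\|Z_{t,j,h}\|_{L_2}$ and $h\le b$) is accurate and in fact spells out what the paper leaves as ``straightforwardly established''; one small caveat is that the mean-zeroness of the leading sum follows from $\EE[\wt\epsilon\mid\wt\bfF,X_j]=0$ for any function of $(\wt\bfF,X_j)$ --- it is the approximate minimization condition (Proposition~\ref{prop:tweaking}), not the centering of the leading sum per se, that the adjustment step buys you, so the parenthetical remark about ``the payoff of the centering step'' is slightly misplaced but harmless.
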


Condition~\ref{thm:main_alt:con_1} in the theorem is satisfied as long as $|\EE Z_{t,j,h}|$ {is a large enough multiple of $\sqrt{\log(n)/n}$}.  In condition~\ref{thm:main_alt:con_2}, the signal strength $Z_{t,j,h}$ appears on both sides; however, $\|Z_{t,j,h}\|_{L_2}$ on the left-hand side, which admittedly is no smaller than $|\EE Z_{t,j,h}|$ on the right-hand side, is scaled by $\nun+\deltaf$ which we could expect to converge to zero.  Thus, condition~\ref{thm:main_alt:con_2} is mild as well.  It is worth noting that condition~\ref{thm:main_alt:con_1} is itself implied by, for large enough constants $C_{\textup{alt},1}'$ and $C_{\textup{alt},3}$,
\begin{enumerate}[label=(\alph*)]
\item\label{thm:main_alt:con_3} $C_{\textup{alt},1}' \sqrt{\log(n)/n} \le |\int_{\Omega_h} \left[ \exp\{t\,m_{0,j}(\bff, x_j)\} - 1 \right] \rmd\PP|$;
\item\label{thm:main_alt:con_4} $C_{\textup{alt},3} \{\frac{1}{h} \deltaf + r_{\textup{b},m,j} \} \le |\int_{\Omega_h} \left[ \exp\{t\,m_{0,j}(\bff, x_j)\} - 1 \right] \rmd\PP|$ for $r_{\textup{b},m,j}$ introduced above Theorem~\ref{thm:m_hat_est_master}.
\end{enumerate}
Here, condition~\ref{thm:main_alt:con_4} states that the effect of diversification, represented by $\deltaf$, and the effect of smoothing the true derivative, represented by $r_{\textup{b},m,j}$, should be weak compared to the original signal.  We also prove the remarks above on conditions~\ref{thm:main_alt:con_3} and \ref{thm:main_alt:con_4} following the proof of Theorem~\ref{thm:main_alternative}.

\begin{proof}[Proof of Theorem~\ref{thm:main_alternative}]
The original decomposition~\eqref{eq:test_stat_master_decomp_master} in the proof of Theorem~\ref{thm:main_null_master} remains, but under the alternative the treatment of the various constituent terms could be different.
 Now, {$|\EE Z_{t,j,h}|$ is the} signal strength favoring the alternative; thus, in contrast, $I_{t,n,1}+I_{t,n,2}$ becomes the ``noise'' under the alternative.  The sum of the left-hand sides of conditions~\ref{thm:main_alt:con_1} and \ref{thm:main_alt:con_2} in the statement of Theorem~\ref{thm:main_alternative} is, up to a multiplicative factor and  with probability at least $1 - C e^{-\pn}$, a bound on the magnitude of the noise $I_{t,n,1}+I_{t,n,2}$.  The bound can be straightforwardly established by Propositions~\ref{prop:I_tn1_rate} and \ref{prop:I_tn2_rate} which hold under the alternative as well.  We only remark that
\begin{itemize}[leftmargin=*]
    \item[i)]
    The necessity of condition~\ref{thm:main_alt:con_1} is due to the fact that the leading i.i.d.\,term within $I_{t,n,2}$ now also becomes a part of the noise that could be bounded through Hoeffding's inequality by the left-hand side of condition~\ref{thm:main_alt:con_1} with probability at least $1-\frac{1}{n}$;
    \item[ii)]
    The left-hand side of condition~\ref{thm:main_alt:con_2} is similar to the left-hand side in condition~\ref{thm:main_null_master:con_1} in Theorem~\ref{thm:main_null_master} and is a bound on the term $I_{t,n,1}$ and the remainder terms in \eqref{eq:I_tn2_rate}, but note that here, naturally, instead of requiring the bound to be on the order of $\Co(n^{-1/2})$ under the null, under the alternative we require the bound to become insignificant as compared to the signal $|\EE Z_{t,j,h}|$, which is exactly what condition~\ref{thm:main_alt:con_2} requires.
\end{itemize}
Then, if the signal $|I_{t,n,3}+I_{t,n,4}+I_{t,n,5}|=|\EE Z_{t,j,h}|$ is strong enough, as required by conditions~\ref{thm:main_alt:con_1} and \ref{thm:main_alt:con_2}, with probability at least $1 - C e^{-\pn} - \frac{1}{n}$ our test in \eqref{test:fixed_t} will reject, leading to the conclusion of Theorem~\ref{thm:main_alternative}.

Next, we establish how condition~\ref{thm:main_alt:con_1} is implied by conditions~\ref{thm:main_alt:con_3} and \ref{thm:main_alt:con_4}.  To establish condition~\ref{thm:main_alt:con_1}, it suffices that $|I_{t,n,5}|=|\int_{\Omega_h} \left[ \exp\{t\,m_{0,j}(\bff, x_j)\} - 1 \right] \rmd\PP|$ is strong enough, which is implied by condition~\ref{thm:main_alt:con_3}, and moreover $|I_{t,n,5}|$ dominates $I_{t,n,3}$ and $I_{t,n,4}$.  We first consider the term $I_{t,n,3}$, which arises due to the diversified projection.  By a remark in the proof of Lemma~\ref{lem:Riesz_analytic}, $\Psi$ in \eqref{labeltildeg} can be discarded if $g=g_0$, so
\begin{align*}
I_{t,n,3} \textstyle = \int_{\Omega_h} [ \exp\{t\,g_{0,j}^\us(\wt\bff, x_j)\} - \exp\{t\,m_{0,j}^\us(\bff, x_j)\} ] \rmd\PP .
\end{align*}
Then, by Taylor expansion and the boundedness of $m_{0,j}^\us$ and $g_{0,j}^\us$ derived in the same remark, followed by the Cauchy-Schwarz inequality, Lemma~\ref{lemma:derivative_bound_via_original_specific} and then Lemma~\ref{lemma:factor},
\begin{align*}
& \textstyle |I_{t,n,3}| \le C t \int_{\Omega_h} |g_{0,j}^\us(\wt\bff, x_j)-m_{0,j}^\us(\bff, x_j)| \rmd\PP \le C t [ \int_{\Omega_h} \{g_{0,j}^\us(\wt\bff, x_j)-m_{0,j}^\us(\bff, x_j)\}^2 \rmd\PP ]^{1/2} \le C \frac{t}{h} \deltaf,
\end{align*}
where the constant $C$ is not dependent on $t$.

We next consider the term $I_{t,n,4}$.  By arguments similar to those used in establishing the bound for $I_{t,n,3}$ above, but instead using the quantity $r_{\textup{b},m,j}$ introduced above Theorem~\ref{thm:m_hat_est_master} following the Cauchy-Schwarz inequality, we obtain
\begin{align*}
    I_{t,n,4} \le C t r_{\textup{b},m,j}
\end{align*}
(where the constant $C$ again is not dependent on $t$).

Therefore, under condition~\ref{thm:main_alt:con_4}, $|I_{t,n,5}|$ indeed dominates $I_{t,n,3}$ and $I_{t,n,4}$.  This concludes the proof of the remarks following Theorem~\ref{thm:main_alternative}.
\end{proof}

\section{Low-dimensional regime as a special case}
\label{sec:lowd_supp}

\subsection{Conditional screening in low-dimension}
\label{sec:lowd}

Our low-dimensional regime refers to the scenario where the dimension $d$ is small.  This regime can be handled as a special case of our high-dimensional regime through straightforward modifications.  First, (when $d$ is small) in our regression model~\eqref{eq:reg_model} we can simply set $\bfF=\bX_{-j}$ where $\bX_{-j}\in\RR^{d-1}$ denotes the vector $\bX$ but without the $j$-th coordinate.  Then, since we observe all coordinates of $\bX$, the ``latent'' $\bfF$ is in fact observed. Consequently, in model~\eqref{eq:reg_model}, $m_0(\bfF,X_j)=m_0(\bX)=\EE[Y|\bX]$, and $r=d-1$.  Then, if we wish to screen the $j$-th coordinate $X_j$, our conditional screening hypotheses in \eqref{hypo_null} and \eqref{hypo_alt} become accordingly
\begin{gather*}
  H_0: \text{for all}~\bx\in\RR^d,~m_{0,j}(\bx) = 0~\text{identically},~\text{against} \\
  H_A: \text{for some}~\bx\in\RR^d,~\text{we have}~m_{0,j}(\bx) \neq 0 .
\end{gather*}
We summarize the practical implementation of our conditional screening test in the low-dimensional regime in Algorithm~\ref{algorithmlowd}.  Also assume that, in place of Assumption~\ref{ass:DGP}, we have $\bX\in[-b,b]^d$.  Then the corresponding theoretical development can be obtained a special case of our theoretical results for the high-dimensional regime, by setting $\wt\bfF_i$ to $\bX_{i,-j}$, the preliminary regression and adjusted regression estimators $\wh g_n$ and $\wc g_n$ to $\wh m_n$ and $\wc m_n$ in Algorithm~\ref{algorithmlowd} respectively, and $\deltaf=0$.  For brevity, we omit presenting the theoretical results in the low-dimensional regime.

\begin{algorithm}
\hrulefill
\caption{Partial derivative estimation \& conditional screening test by deep neural network in the low dimensional regime}
\SetKwInput{KwData}{Input}
\KwData{Observed samples $Y_i, \bX_i \in \RR^d$, $i\in \{1,\cdots, n\}$; testing if the $j$-th coordinate $\bX_j$ has additional contributions.}
\KwResult{Preliminary and refined regression estimators $\wh m_n(\cdot)$ and $\wc m_n(\cdot)$, their smoothed derivative estimators $\wh m_{n,j}^\us(\cdot)$ and $\wc m_{n,j}^\us(\cdot)$, $\wh\alpha_n(\cdot)$, and test results. }
\begin{itemize}
\item[1)] Estimate the regression function $m_0$ in \eqref{eq:reg_model} by the neural network estimator $\wh m_n$ defined as
\begin{align}
	\label{eq:def_wh_f_n}
\textstyle \wh m_n=\argmin_{m\in \calFn(d)} \frac{1}{n}\sum_{i=1}^n \left\{ Y_i-m(\bX_i) \right\}^2 .
\end{align}
Conduct cross validation (see Remark~\ref{rmk:bandwidth}), or otherwise, to find a smoothing bandwidth $h$.

\item[2)] Define the loss function $\Rnullhat(\cdot)$ as
\begin{align*}
    \textstyle \Rnullhat(\alpha) = \frac{1}{n} \sum_{i=1}^n \alpha^2(\bX_i) - 2 \frac{1}{n} \sum_{i\in\calI_h} \alpha_j^\us(\bX_i) ,
\end{align*}
which can be approximated via Lemma~\ref{lemma:alpha_j_numerical},  and let the estimator $\wh\alpha_n$ of $\alpha_{t,n}^*$ be
\begin{align*}
    \textstyle \wh\alpha_n = \argmin_{\alpha\in\calFn(d)} \Rnullhat(\alpha).
\end{align*}

\item[3)] Return $\wc m_n = \wh m_n + \deltathat \wh\alpha_n$, where
\begin{align*}
	\textstyle \deltathat = \dfrac{1}{\frac{1}{n} \sum_{i=1}^n \wh\alpha_n^2(\bX_i)} \frac{1}{n} \sum_{i=1}^n \left\{ Y_i - \wh m_n(\bX_i) \right\} \wh\alpha_n(\bX_i) ,
\end{align*}
\item[4)] Let $\|\wh\epsilon \wh\alpha_n\|_{L_2(\PPn)} = [ \frac{1}{n} \sum_{i=1}^n \{ \wh\epsilon_i \wh\alpha_n(\bX_i) \}^2 ]^{1/2}$ be the estimated standard deviation.  Then, for a prescribed significance level $0<\alpha<1$, for the fixed $t$ test, reject the null hypothesis \eqref{hypo_null} in our conditional screening hypotheses if
\begin{eqnarray*}
 \textstyle \frac{ n^{1/2} }{ t \|\wh\epsilon \wh\alpha_n\|_{L_2(\PPn)} } |\frac{1}{n}\sum_{i=\calI_n} [ \exp\{t\,\Psi(\wc m_j^\us(\bX_i)) \} - 1] |> z_{1-\alpha/2} ,
\end{eqnarray*}
where $z_{1-\alpha/2}$ is the $1-\alpha/2$ quantile of a standard normal distribution.
The decision rules for the ensemble tests are analogous to those in \eqref{sup} and \eqref{square} for the high-dimensional regime, and are omitted.
\label{algorithmlowd}
\end{itemize}
\hrulefill
\end{algorithm}

\newpage

\section{Additional application results}
\label{sec:app_results_more}

\begin{figure}[!h]
    \includegraphics[height=19cm, width = 1.0\textwidth]{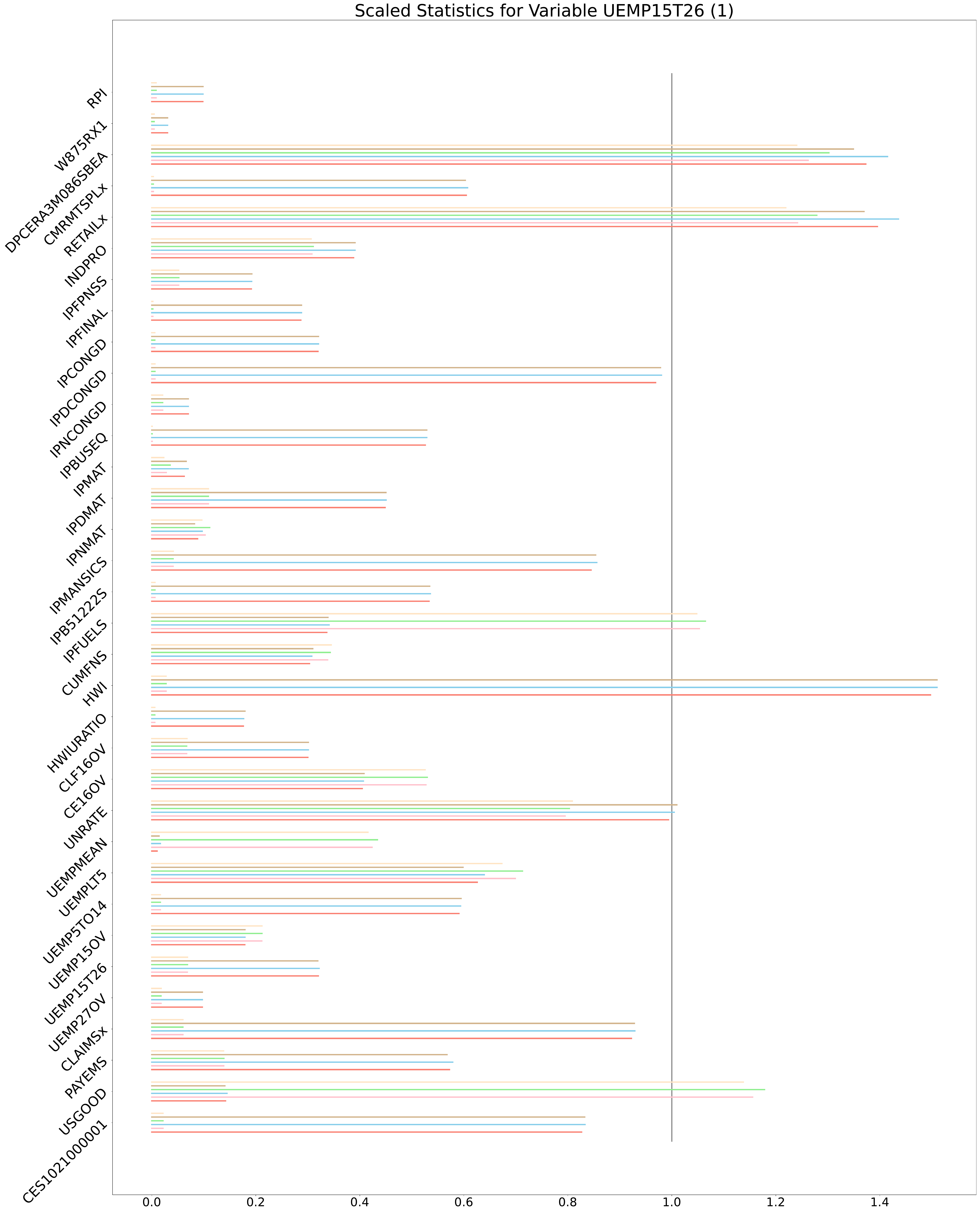}
\caption{FRED-MD dataset: Significance  of variables after neural factor regression at the $5\%$ significance level for the response variable UEMP15T26 (Part 1). Intervals covering one correspond to the significant idiosyncratic variables. Please refer to Figure~\ref{assetpricing} for the description of the intervals.}
\label{FRED10}
\end{figure}

\begin{figure}
    \includegraphics[height=20cm, width = 1.0\textwidth]{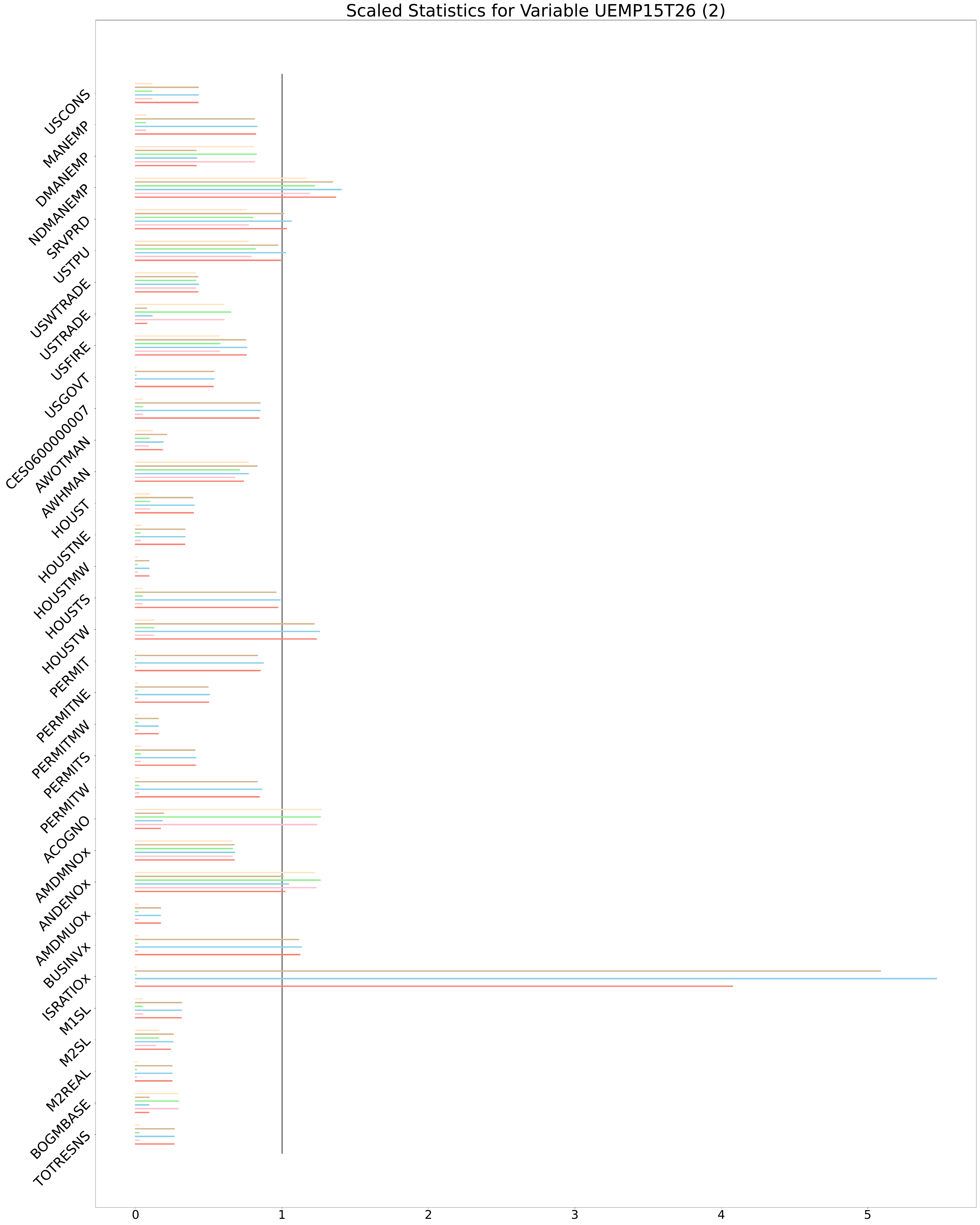}
\caption{FRED-MD dataset: Significance  of variables after neural factor regression at the $5\%$ significance level for the response variable UEMP15T26 (Part 2). Intervals covering one correspond to the significant idiosyncratic variables. Please refer to Figure~\ref{assetpricing} for the description of the intervals.}
\label{FRED11}
\end{figure}

\begin{figure}
    \includegraphics[height=20cm, width = 1.0\textwidth]{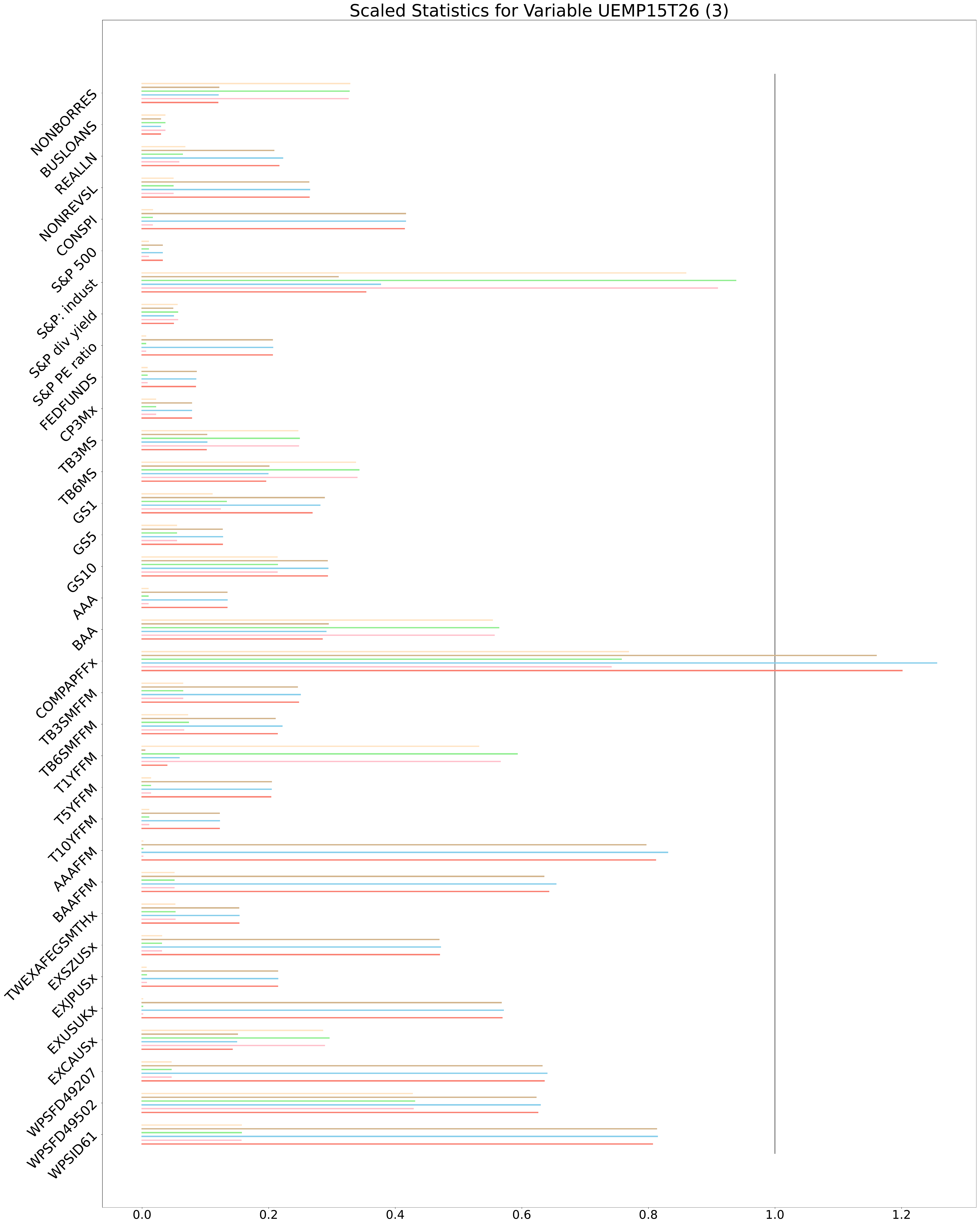}
\caption{FRED-MD dataset: Significance  of variables after neural factor regression at the $5\%$ significance level for the response variable UEMP15T26 (Part 3). Intervals covering one correspond to the significant idiosyncratic variables. Please refer to Figure~\ref{assetpricing} for the description of the intervals.}
\label{FRED12}
\end{figure}

\begin{figure}
    \includegraphics[height=20cm, width = 1.0\textwidth]{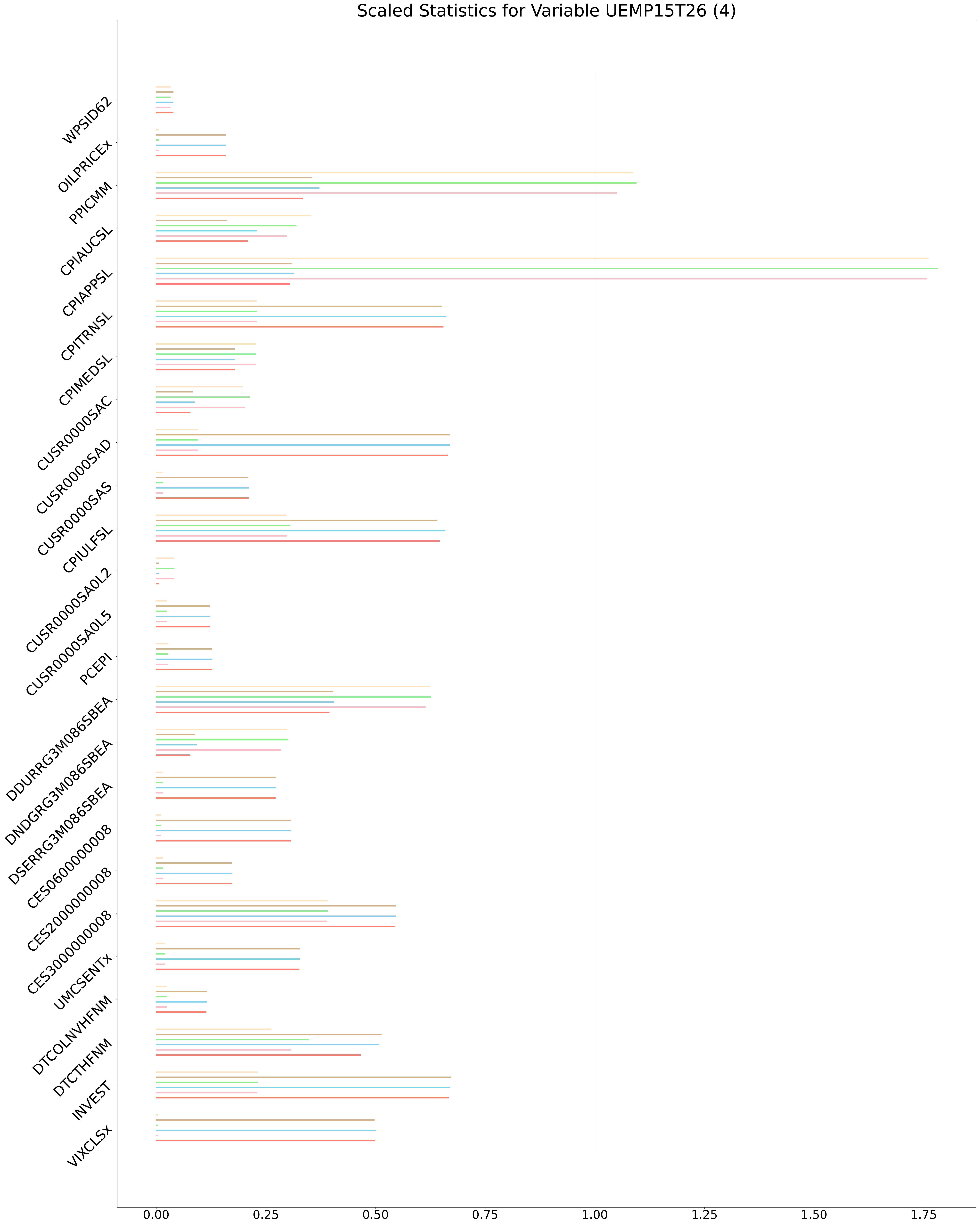}
\caption{FRED-MD dataset: Significance  of variables after neural factor regression at the $5\%$ significance level for the response variable UEMP15T26 (Part 4). Intervals covering one correspond to the significant idiosyncratic variables. Please refer to Figure~\ref{assetpricing} for the description of the intervals.}
\label{FRED13}
\end{figure}

\begin{figure}
    \includegraphics[height=20cm, width = 1.0\textwidth]{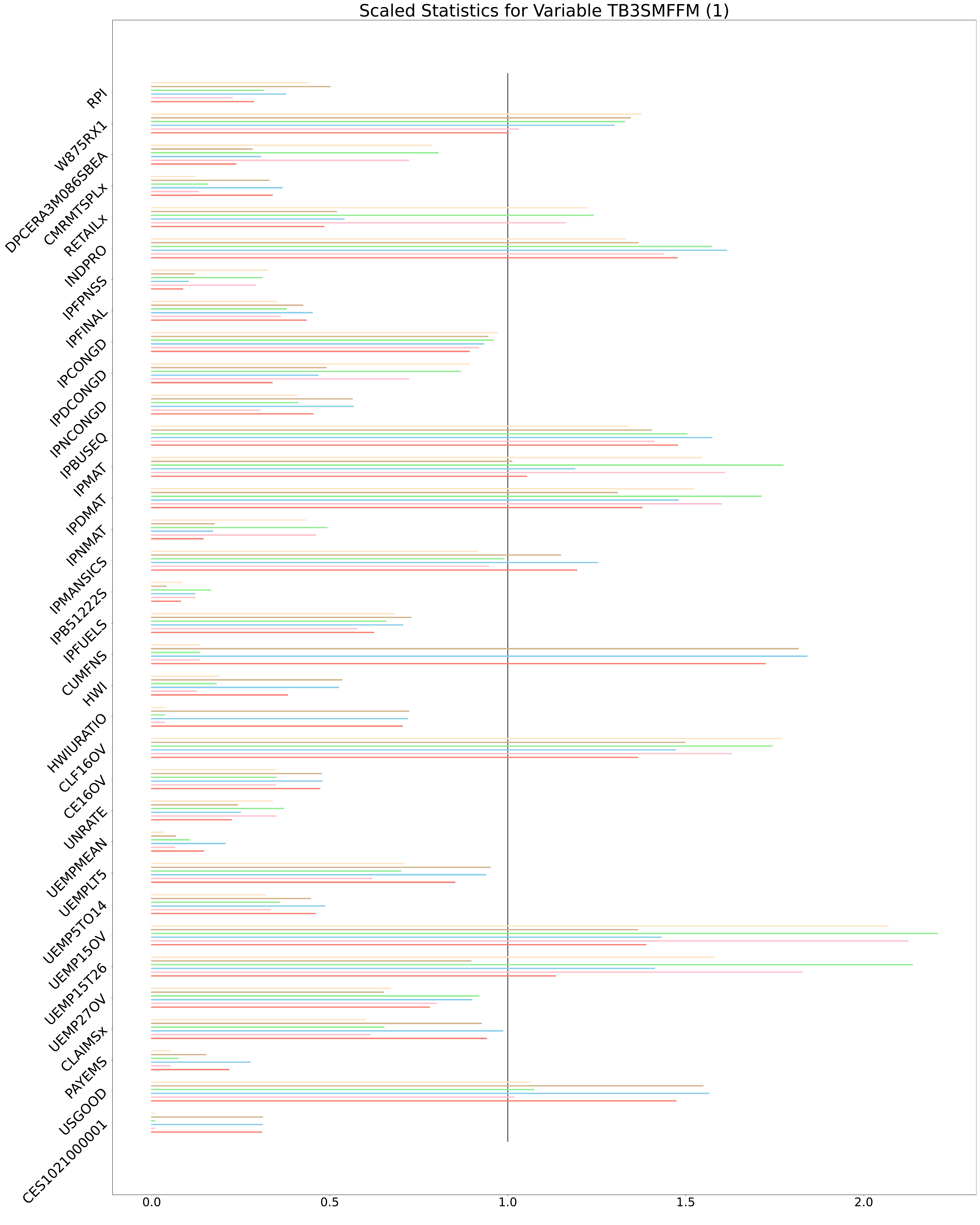}
\caption{FRED-MD dataset: Significance  of variables after neural factor regression at the $5\%$ significance level for the response variable TB3SMFFM (Part 1). Intervals covering one correspond to the significant idiosyncratic variables. Please refer to Figure~\ref{assetpricing} for the description of the intervals.}
\label{FRED1}
\end{figure}

\begin{figure}
    \includegraphics[height=20cm, width = 1.0\textwidth]{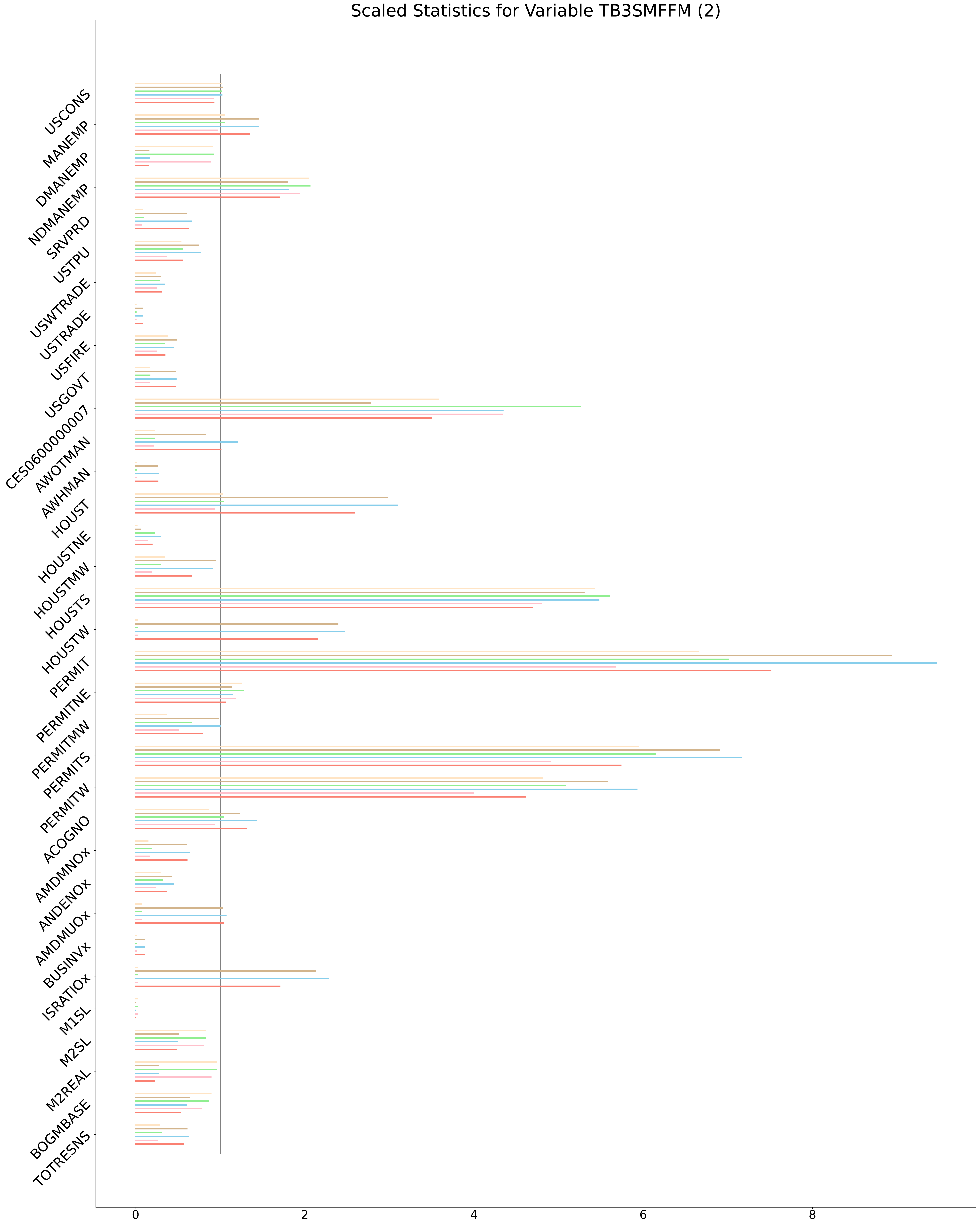}
\caption{FRED-MD dataset: Significance  of variables after neural factor regression at the $5\%$ significance level for the response variable TB3SMFFM (Part 2). Intervals covering one correspond to the significant idiosyncratic variables. Please refer to Figure~\ref{assetpricing} for the description of the intervals.}
\label{FRED2}
\end{figure}

\begin{figure}
    \includegraphics[height=20cm, width = 1.0\textwidth]{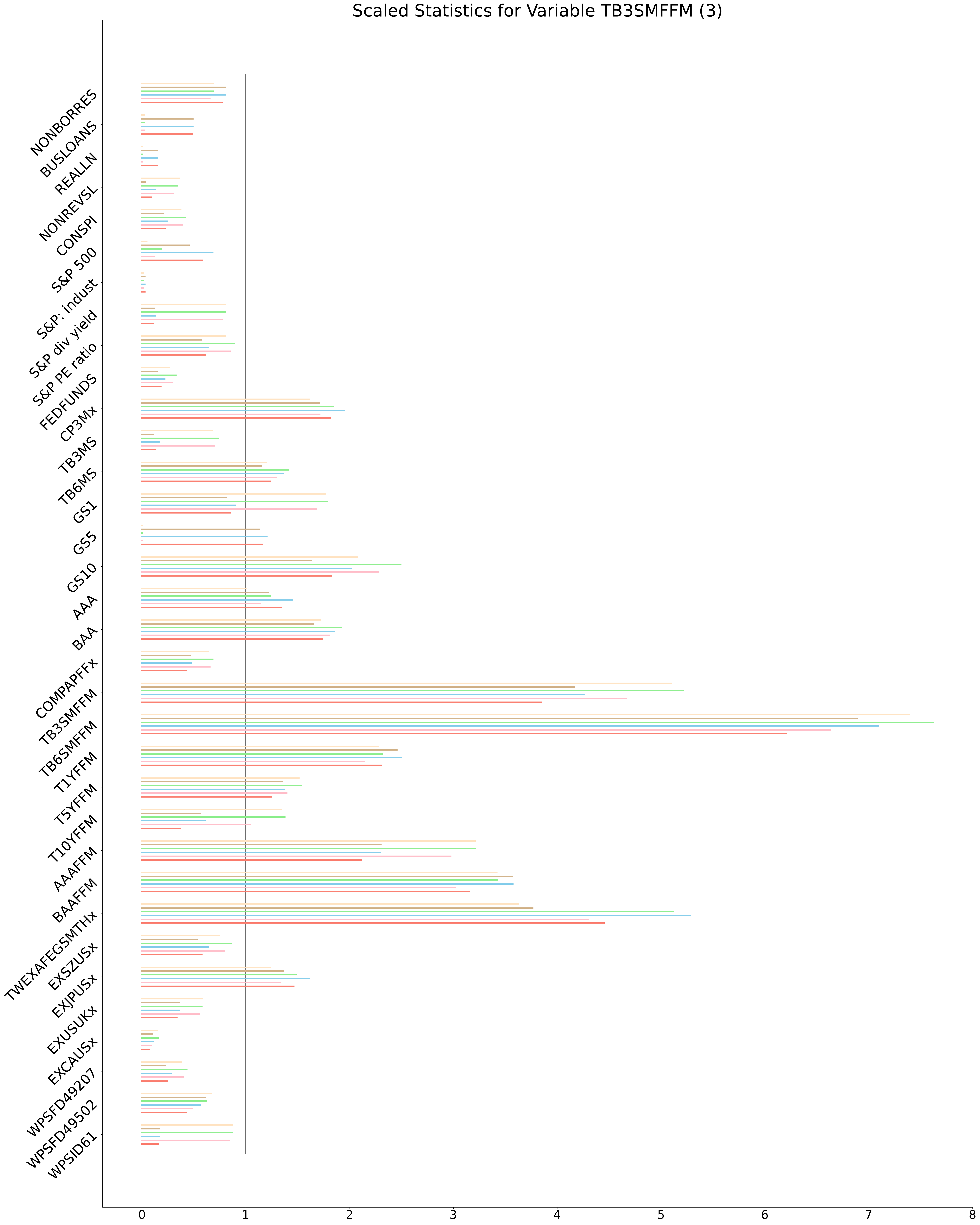}
\caption{FRED-MD dataset: Significance  of variables after neural factor regression at the $5\%$ significance level for the response variable TB3SMFFM (Part 3). Intervals covering one correspond to the significant idiosyncratic variables. Please refer to Figure~\ref{assetpricing} for the description of the intervals.}
\label{FRED3}
\end{figure}

\begin{figure}
    \includegraphics[height=20cm, width = 1.0\textwidth]{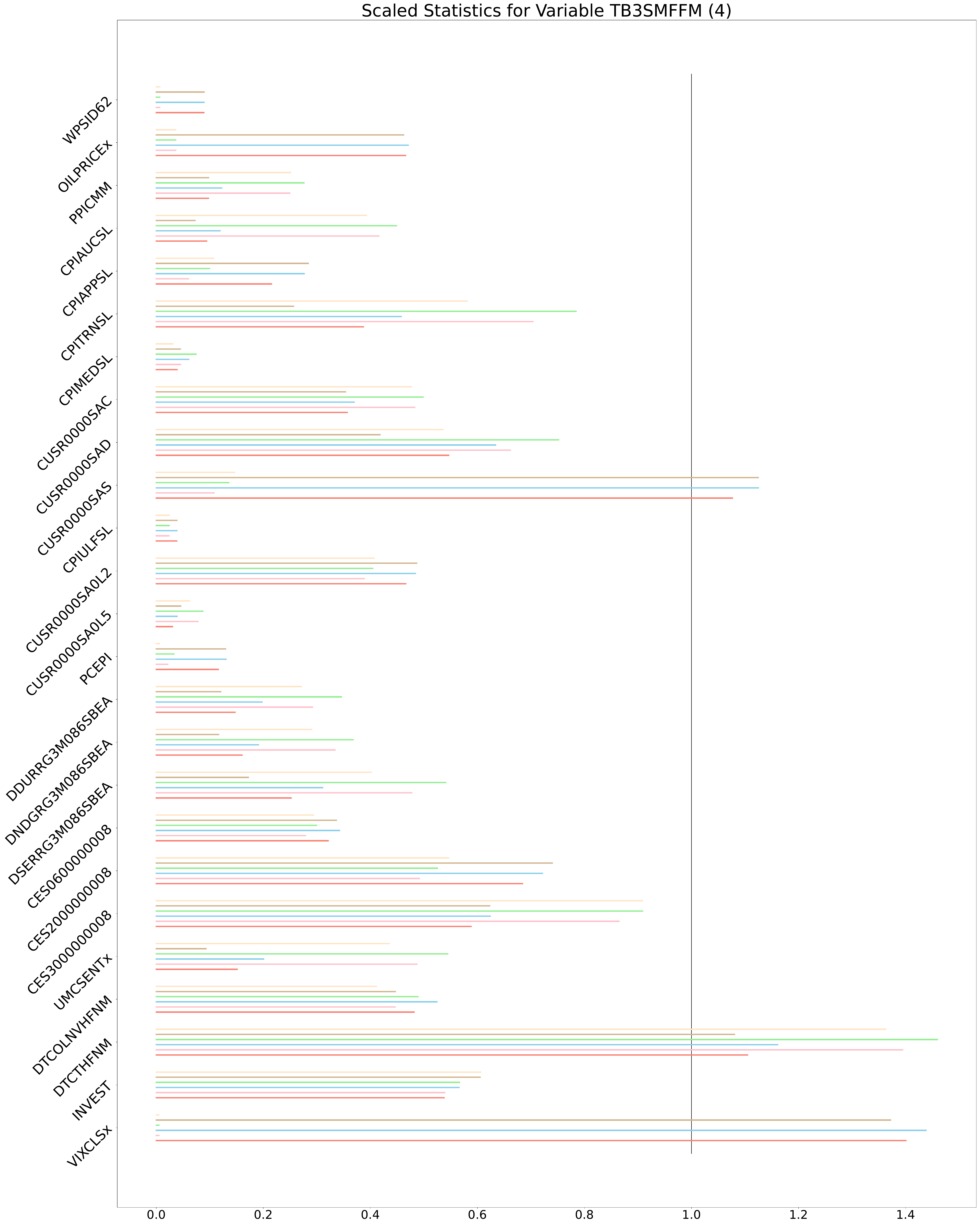}
\caption{FRED-MD dataset: Significance  of variables after neural factor regression at the $5\%$ significance level for the response variable TB3SMFFM (Part 4). Intervals covering one correspond to the significant idiosyncratic variables. Please refer to Figure~\ref{assetpricing} for the description of the intervals.}
\label{FRED4}
\end{figure}

\begin{figure}
    \includegraphics[height=20cm, width = 1.0\textwidth]{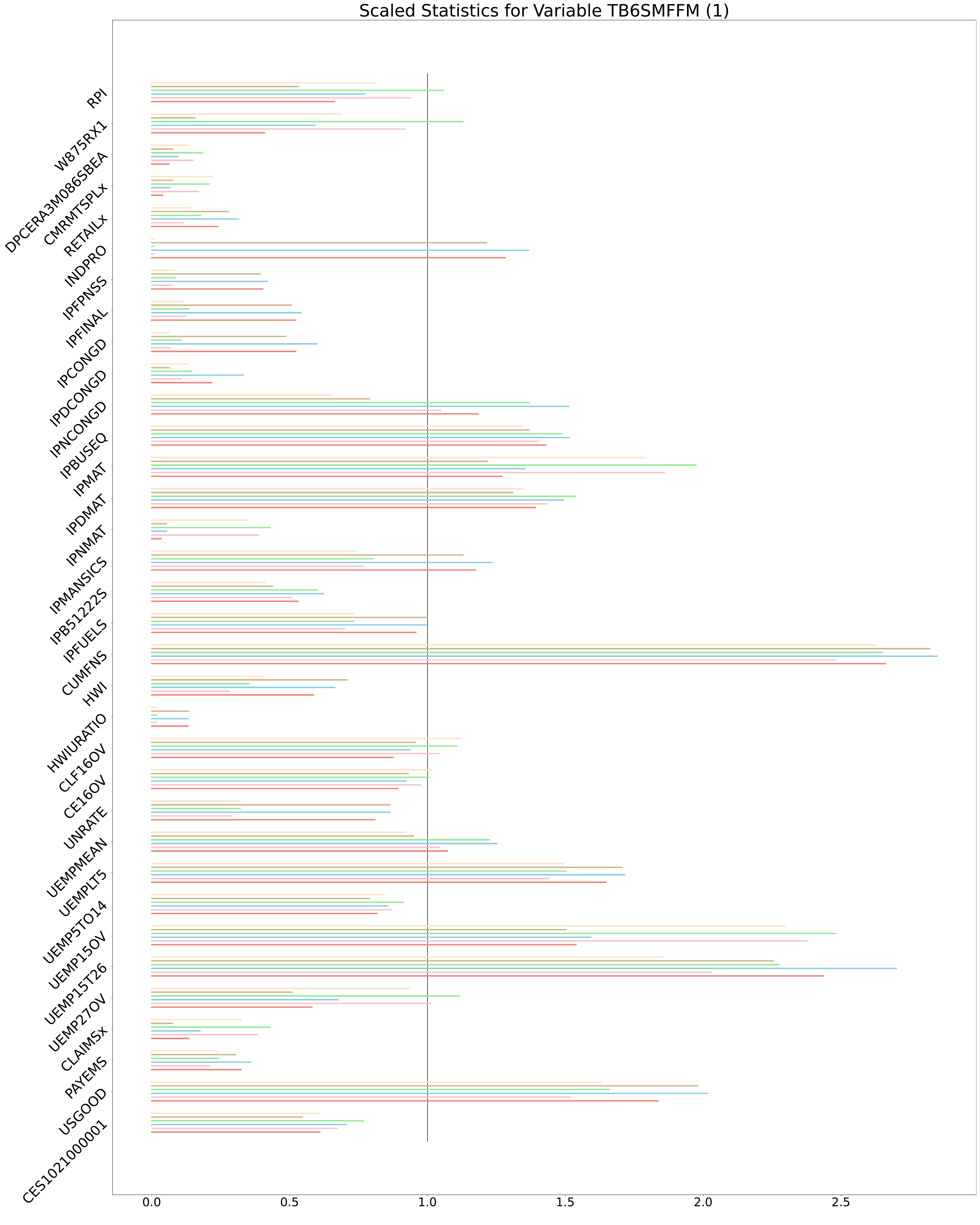}
\caption{FRED-MD dataset: Significance  of variables after neural factor regression at the $5\%$ significance level for the response variable TB6SMFFM (Part 1). Intervals covering one correspond to the significant idiosyncratic variables. Please refer to Figure~\ref{assetpricing} for the description of the intervals.}
\end{figure}

\begin{figure}
    \includegraphics[height=20cm, width = 1.0\textwidth]{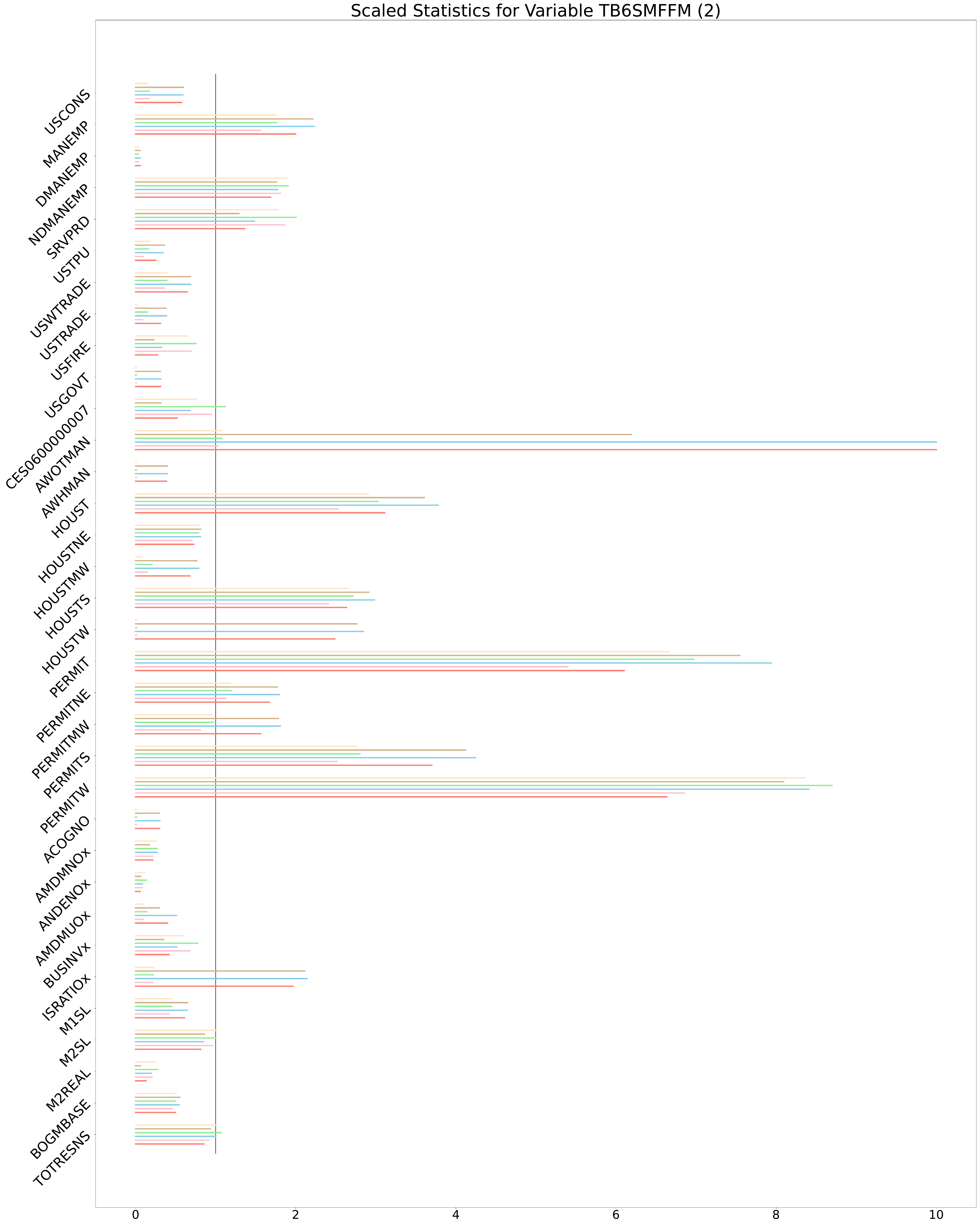}
\caption{FRED-MD dataset: Significance  of variables after neural factor regression at the $5\%$ significance level for the response variable TB6SMFFM (Part 2). Intervals covering one correspond to the significant idiosyncratic variables. Please refer to Figure~\ref{assetpricing} for the description of the intervals.}
\label{FRED6}
\end{figure}

\begin{figure}
    \includegraphics[height=20cm, width = 1.0\textwidth]{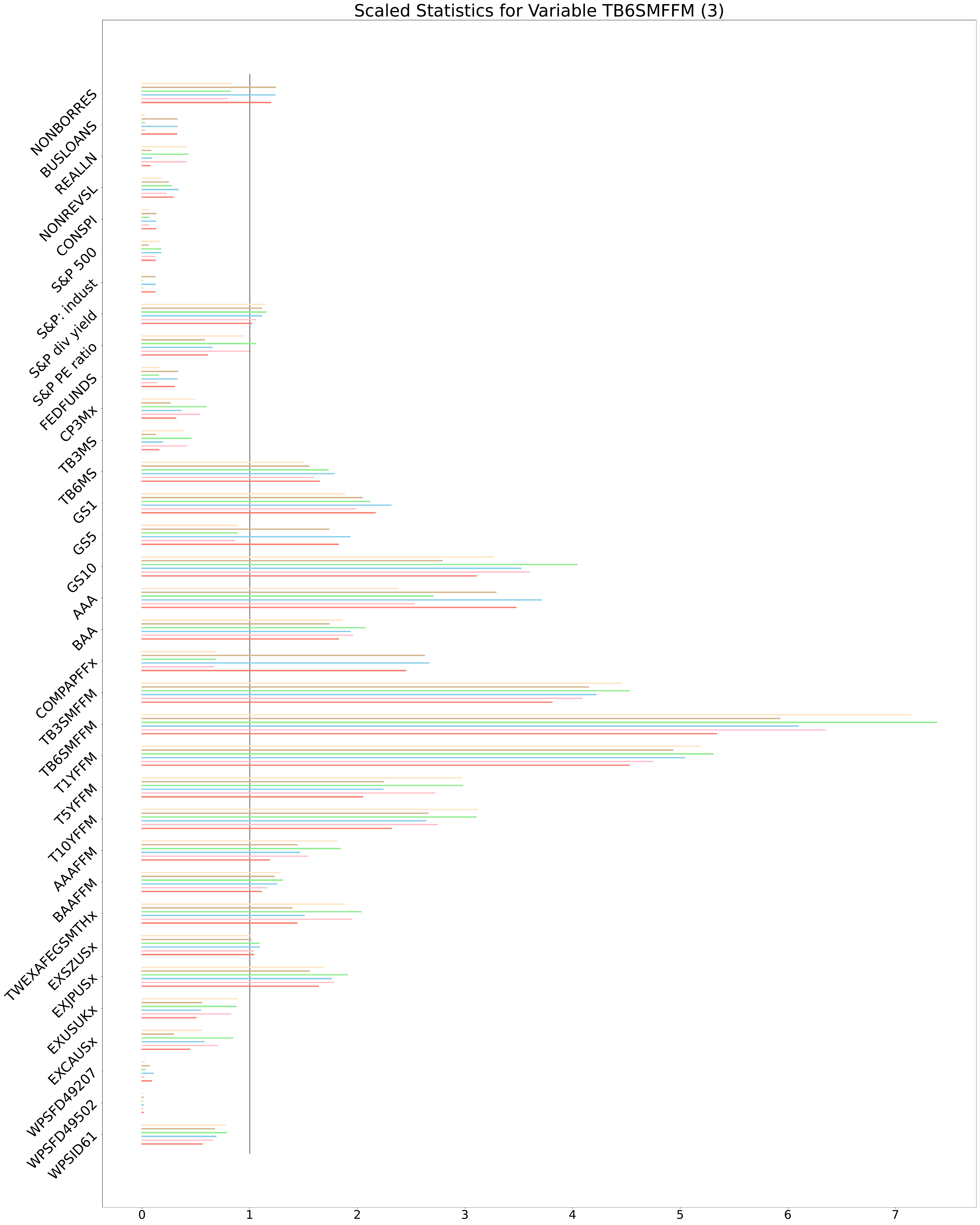}
\caption{FRED-MD dataset: Significance  of variables after neural factor regression at the $5\%$ significance level for the response variable TB6SMFFM (Part 3). Intervals covering one correspond to the significant idiosyncratic variables. Please refer to Figure~\ref{assetpricing} for the description of the intervals.}
\end{figure}

\begin{figure}
    \includegraphics[height=20cm, width = 1.0\textwidth]{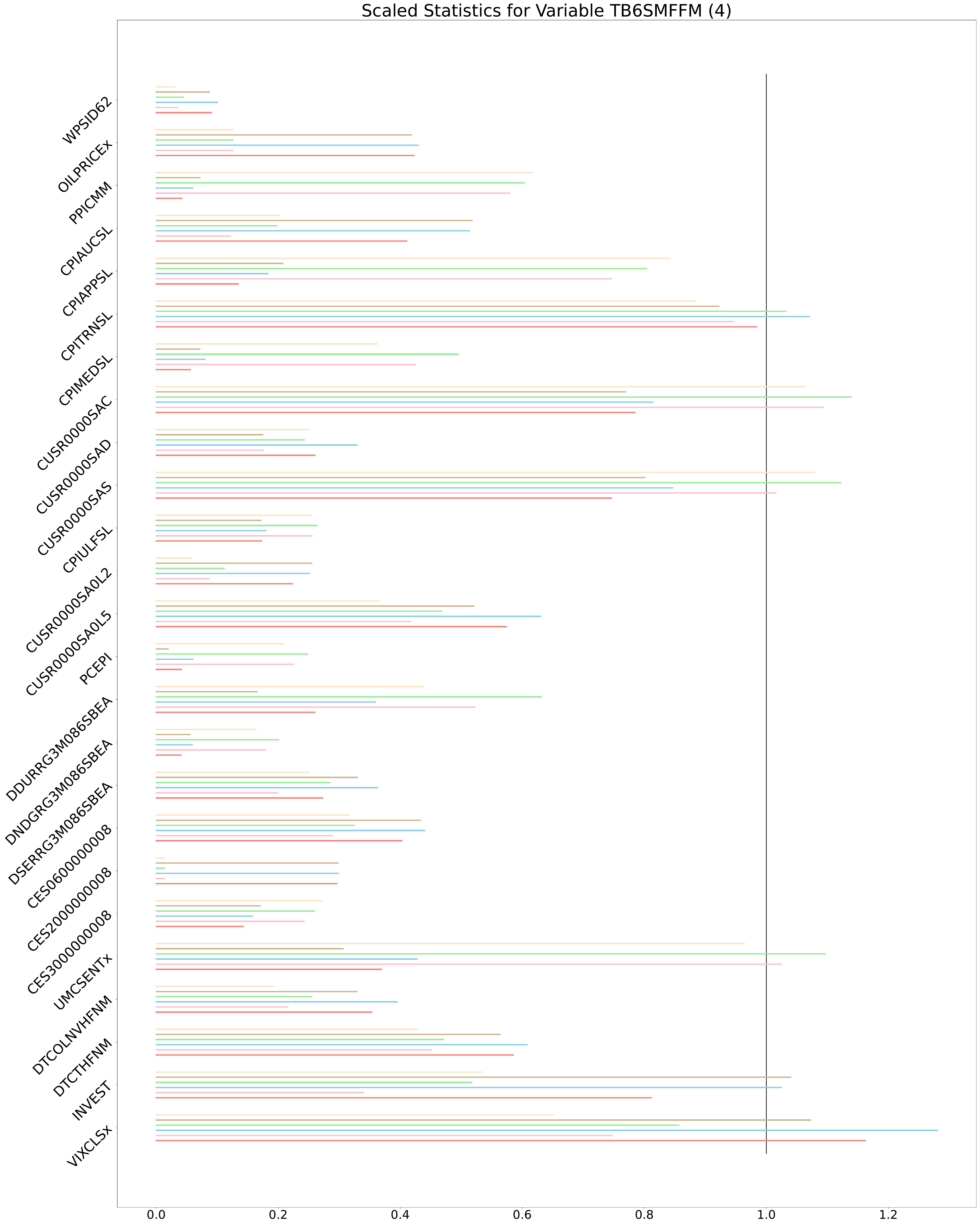}
\caption{FRED-MD dataset: Significance  of variables after neural factor regression at the $5\%$ significance level for the response variable TB6SMFFM (Part 4). Intervals covering one correspond to the significant idiosyncratic variables. Please refer to Figure~\ref{assetpricing} for the description of the intervals.}
\label{FRED8}
\end{figure}

\end{document}